\documentclass[reqno,a4paper,11pt]{amsart}
\usepackage[utf8]{inputenc}

\calclayout

\usepackage[utf8]{inputenc}
\usepackage{verbatim}
\usepackage[english]{babel}
\usepackage{amsmath}
\usepackage{cite}
\makeatletter
\def\thm@space@setup{%
  \thm@preskip=12pt plus 3pt minus 3pt
  \thm@postskip=\thm@preskip 
}
\makeatother

\usepackage{amsfonts,amssymb}
\usepackage{mathtools}
\mathtoolsset{showonlyrefs}
\usepackage{upgreek}
\usepackage{amsthm}
\usepackage{bm}
\usepackage{float}
\usepackage{color}
\usepackage{tikz}
\usetikzlibrary{decorations.pathreplacing,decorations.markings,calc}
\usetikzlibrary{arrows}
\tikzset{
  on each segment/.style={
    decorate,
    decoration={
      show path construction,
      moveto code={},
      lineto code={
        \path [#1]
        (\tikzinputsegmentfirst) -- (\tikzinputsegmentlast);
      },
      curveto code={
        \path [#1] (\tikzinputsegmentfirst)
        .. controls
        (\tikzinputsegmentsupporta) and (\tikzinputsegmentsupportb)
        ..
        (\tikzinputsegmentlast);
      },
      closepath code={
        \path [#1]
        (\tikzinputsegmentfirst) -- (\tikzinputsegmentlast);
      },
    },
  },
  mid arrow/.style={postaction={decorate,decoration={
        markings,
        mark=at position .5 with {\arrow[#1]{stealth}}
      }}},
}

\tikzset{
	on each segment/.style={
		decorate,
		decoration={
			show path construction,
			moveto code={},
			lineto code={
				\path [#1]
				(\tikzinputsegmentfirst) -- (\tikzinputsegmentlast);
			},
			curveto code={
				\path [#1] (\tikzinputsegmentfirst)
				.. controls
				(\tikzinputsegmentsupporta) and (\tikzinputsegmentsupportb)
				..
				(\tikzinputsegmentlast);
			},
			closepath code={
				\path [#1]
				(\tikzinputsegmentfirst) -- (\tikzinputsegmentlast);
			},
		},
	},
	mid arrow/.style={postaction={decorate,decoration={
				markings,
				mark=at position .7 with {\arrow[#1]{stealth}}
	}}},
	rmid arrow/.style={postaction={decorate,decoration={
				markings,
				mark=at position .3 with {\arrowreversed[#1]{stealth}}
	}}},
	vmid arrow/.style 2 args={postaction={decorate,decoration={
				markings,
				mark=at position #2 with {\arrow[#1]{stealth}}
	}}},
	vrmid arrow/.style 2 args={postaction={decorate,decoration={
				markings,
				mark=at position #2 with {\arrowreversed[#1]{stealth}}
	}}},
}

\makeatletter
\def\grd@save@target#1{%
  \def\grd@target{#1}}
\def\grd@save@start#1{%
  \def\grd@start{#1}}
\tikzset{
  grid with coordinates/.style={
    to path={%
      \pgfextra{%
        \edef\grd@@target{(\tikztotarget)}%
        \tikz@scan@one@point\grd@save@target\grd@@target\relax
        \edef\grd@@start{(\tikztostart)}%
        \tikz@scan@one@point\grd@save@start\grd@@start\relax
        \draw[minor help lines] (\tikztostart) grid (\tikztotarget);
        \draw[major help lines] (\tikztostart) grid (\tikztotarget);
        \grd@start
        \pgfmathsetmacro{\grd@xa}{\the\pgf@x/1cm}
        \pgfmathsetmacro{\grd@ya}{\the\pgf@y/1cm}
        \grd@target
        \pgfmathsetmacro{\grd@xb}{\the\pgf@x/1cm}
        \pgfmathsetmacro{\grd@yb}{\the\pgf@y/1cm}
        \pgfmathsetmacro{\grd@xc}{\grd@xa + \pgfkeysvalueof{/tikz/grid with coordinates/major step}}
        \pgfmathsetmacro{\grd@yc}{\grd@ya + \pgfkeysvalueof{/tikz/grid with coordinates/major step}}
        \foreach \x in {\grd@xa,\grd@xc,...,\grd@xb}
        \node[anchor=north] at (\x,\grd@ya) {\pgfmathprintnumber{\x}};
        \foreach \y in {\grd@ya,\grd@yc,...,\grd@yb}
        \node[anchor=east] at (\grd@xa,\y) {\pgfmathprintnumber{\y}};
      }
    }
  },
  minor help lines/.style={
    help lines,
    step=\pgfkeysvalueof{/tikz/grid with coordinates/minor step}
  },
  major help lines/.style={
    help lines,
    line width=\pgfkeysvalueof{/tikz/grid with coordinates/major line width},
    step=\pgfkeysvalueof{/tikz/grid with coordinates/major step}
  },
  grid with coordinates/.cd,
  minor step/.initial=.2,
  major step/.initial=1,
  major line width/.initial=2pt,
}

\usepackage{graphicx}
\usepackage{subcaption}
\usepackage{enumerate}
\usepackage{bm}
\usepackage{seqsplit}

\usepackage[pdftex,colorlinks]{hyperref}
\hypersetup{
	colorlinks=true,
	linkcolor=blue,
	citecolor=red
}

\makeatletter
\def\l@subsection{\@tocline{2}{0pt}{2.5pc}{5pc}{}}

\DeclareMathOperator{\re}{Re}
\DeclareMathOperator{\im}{Im}
\DeclareMathOperator{\ee}{\rm e}
\DeclareMathOperator{\supp}{supp}
\DeclareMathOperator{\tr}{Tr}

\newcommand{\C}{\mathbb{C}}
\newcommand{\R}{\mathbb{R}}
\newcommand{\Z}{\mathbb{Z}}

\newcommand{\E}{\mathbb{E}}
\newcommand{\boh}{\mathit{o}}
\newcommand{\Boh}{\mathcal{O}}

\newcommand{\ii}{\mathrm{i}}
\newcommand{\dd}{\mathrm{d}}
\newcommand*{\deff}{\mathrel{\vcenter{\baselineskip0.5ex \lineskiplimit0pt
                     \hbox{\scriptsize.}\hbox{\scriptsize.}}}%
                     =}
\newcommand*{\revdeff}{=\mathrel{\vcenter{\baselineskip0.5ex \lineskiplimit0pt
                     \hbox{\scriptsize.}\hbox{\scriptsize.}}}%
                     }
\DeclareMathOperator{\Li}{Li}

\renewcommand{\bm}{\mathbf}
\newcommand{\mcal}{\mathcal}

\newcommand{\msf}{\mathsf}

\newcommand{\wh}{\widehat}
\newcommand{\wt}{\widetilde}

\renewcommand{\sp}{\boldsymbol \sigma}

\newcommand{\aad}{\msf a}
\newcommand{\bad}{\msf b}
\newcommand{\cad}{\msf c}

\newcommand{\tad}{\msf t}
\newcommand{\sad}{\msf s}
\newcommand{\uad}{\msf u}
\newcommand{\xad}{\msf x}

\newcommand{\gt}{\tau}

\newcommand{\bes}{\mathrm{(Bes)}}

\usepackage{xcolor}
\definecolor{lightgreen}{RGB}{100,200,100}
 \usepackage[backgroundcolor=lightgreen]{todonotes}

\newtheorem{theorem}{Theorem}[section]
\newtheorem{prop}[theorem]{Proposition}
\newtheorem{lemma}[theorem]{Lemma}
\newtheorem{corollary}[theorem]{Corollary}

\theoremstyle{definition}
\newtheorem{definition}[theorem]{Definition}

\newtheorem{rhp}[theorem]{RHP}
\newtheorem{assumption}[theorem]{Assumption}

\newtheoremstyle{remarkstyle}
  {3pt}{3pt}{\normalfont}{}{\itshape}{.}{.5em}{}

\theoremstyle{remarkstyle}
\newtheorem{remarkinner}[theorem]{Remark}
\numberwithin{equation}{section}

\newenvironment{remark}
  {\begin{remarkinner}}
  {
  \hspace*{\fill}$\triangleright$\end{remarkinner}
  \vspace{5pt}
  }

\newcommand\restr[2]{{
		\left.\kern-\nulldelimiterspace 
		#1 
		\vphantom{\big|} 
		\right|_{#2} 
}}

\begin{document}

\title[Conditional thinning and multiplicative stats of LU-type ensembles]{Conditional thinning and multiplicative statistics of Laguerre-type orthogonal polynomial ensembles}

\author[L.~Molag]{Leslie Molag}
\address[LM]{Carlos III University of Madrid, Spain.}
\email{lmolag@math.uc3m.es}

\author[G.~Silva]{Guilherme L.~F.~Silva}
\address[GS]{Universidade de S\~ao Paulo, Brazil.}
\email{silvag@icmc.usp.br}

\author[L.~Zhang]{Lun Zhang}
\address[LZ]{Fudan University, China}
\email{lunzhang@fudan.edu.cn}

\date{}


\begin{abstract}
We study the local statistics of orthogonal polynomial ensembles near a hard edge, subject to a multiplicative deformation of the measure. Probabilistically, this deformation corresponds to a position-dependent conditional thinning of the particles. We prove that, under critical hard edge  scaling and for a large class of potentials and deformation symbols, the correlation kernel of the conditional ensemble converges to a universal limit, which we identify as the conditional thinned Bessel point process.

We derive an explicit expression for this limiting kernel in terms of the solution to a nonlocal integrable system depending on a parameter. For a special choice of the parameter, this system was recently identified in the study of multiplicative statistics of the Bessel point process. Our results establish that this system governs the full correlation structure of the conditional Bessel point process, extending the classical connection between the standard Bessel kernel and the Painlevé V equation. 
\end{abstract}


\vspace*{-1.6cm}

\maketitle

\setcounter{tocdepth}{2} \tableofcontents 

\section{Introduction}

The universality of spectral statistics is a cornerstone of Random Matrix Theory (RMT). From its origins in nuclear physics to modern applications in number theory, statistical mechanics, and high-dimensional data analysis, the local correlation functions of eigenvalues have proven to be robust invariants across a vast array of models, and a rich ground for the prediction of new critical phenomena.

Among the many striking connections between random particle systems and other fields, the link to integrable systems has proven exceptionally fruitful. This relationship is most celebrated at the ``soft edge" of the spectrum, that is, close to the largest eigenvalue. The fluctuations of the largest eigenvalues in Hermitian random matrix models are described by the Airy$_2$ point process, and the distribution of the largest particle is the Tracy-Widom distribution $F_2$. In turn, $F_2$ is governed by the Hastings-McLeod solution to the Painlevé II equation, a second order nonlinear integrable ODE. Such a connection between extremal eigenvalues and nonlinear differential equations was pioneered by Tracy and Widom \cite{tracy_widom_level_spacing} and Forrester \cite{Forrester_1993} for the Gaussian Unitary Ensemble (GUE). These results were subsequently extended to a wide variety of random matrix models using diverse techniques \cite{pastur_shcherbina_universality, deift_book, deift_gioev_universality_edge, deift_its_zhou_riemann_hilbert_random_matrices_integrable_statistical_mechanics, soshnikov_1999}.

But the connection between strongly correlated particle systems and integrable systems extends well beyond RMT. The distribution $F_2$ describes the fluctuations of the longest increasing subsequences in random permutations \cite{baik_deift_johansson, Romik2015} and, more generally, characterizes the universal scaling limit of growth models in the Kardar-Parisi-Zhang (KPZ) universality class \cite{AmirCorwinQuastel2011, Sasamoto2010a}.

In recent years, attention has expanded from standard gap probabilities to the study of multiplicative statistics of these point processes. It was discovered that the generating functions for multiplicative statistics of the Airy$_2$ process relate directly to the solution of the KPZ equation \cite{AmirCorwinQuastel2011, Sasamoto2010a, BorodinGorin2016, Calabrese2010a, Dotsenko2010}, and that the underlying point process, when weighted by these statistics, forms a new ``conditional" ensemble with a rich integrable structure \cite{ClaeysGlesner2021, CafassoClaeys2022, CafassoClaeysRuzza2021}. Furthermore, this framework has become a powerful tool for extracting fine-grained probabilistic data, such as large deviation estimates and tail asymptotics \cite{CorwinGhosal2020a, CafassoClaeys2022, CafassoClaeysRuzza2021, Tsai2018}, while also motivating recent studies on higher-order soft edge scalings \cite{CafassoTarricone2023, BothnerCafassoTarricone2022, CafassoPinheiro2025, BothnerSheperd2025, ClaeysRuzzaTarricone2024, DeanLeDoussalMajumdarSchehr2019, LeDoussalMajumdarSchehr2018}.

The developments mentioned so far concern RMT fluctuations around a so-called ``soft edge", and a parallel and equally rich structure exists at the ``hard edge" of the spectrum. A hard edge arises when eigenvalues are bounded by a strict physical constraint, most commonly the origin for positive definite matrices. The canonical model for this regime is the Wishart ensemble (or Laguerre Unitary Ensemble - LUE), consisting of matrices of the form $XX^*$, where $X$ is a rectangular matrix with complex Gaussian entries \cite{Wishart28}. As the matrix size grows, the eigenvalues near the origin, when critically scaled, converge to the Bessel point process, a canonical point process whose correlations are described by Bessel functions \cite{Forrester_1993, KonigOConnell2001}.

Fluctuations of eigenvalues near a hard edge also display a rich structure in connection with integrable systems.
Historically, the gap probabilities for the Bessel process were connected to the Painlevé V equation by Tracy and Widom \cite{TracyWidom1994Bessel}, see also \cite{NagaoForrester1995}. This connection was further extended to generating functions and multi-interval statistics over the years \cite{DaiXuZhang2018, AtkinCharlierZohren2018, CharlierDoeraene2019}. However, in light of the recent developments at the soft edge, it is natural to ask: To what extent does the rich integrable structure of multiplicative statistics carry over at a hard edge?

In the present work, we provide an affirmative answer to this question by explicitly constructing the kernel of a limiting point process associated with these multiplicative statistics, and describing it in terms of integrable systems.

We begin by considering the eigenvalues of positive definite matrices of size $n$. We subject these eigenvalues to a conditional thinning procedure, where each eigenvalue is retained or removed independently with a position-dependent probability, and the final process is the resulting ensemble conditioned that all particles have been retained. This procedure defines a finite-$n$ conditional ensemble. Our main result describes the asymptotic behavior, as $n\to \infty$, of the conditional ensemble in the critical regime where eigenvalues accumulate near the hard edge.

We prove that, under the hard-edge scaling, the correlation kernel of the finite conditional ensemble converges to a universal limit. Our results are universal, in the sense that we establish them for a broad class of random matrix models with a hard edge, as well as for a wide variety of thinning profiles. We identify this limit as the conditional thinned Bessel point process, which corresponds to applying the conditional thinning operation directly to the classical Bessel point process. Furthermore, our asymptotic analysis yields an explicit formula for the correlation kernel of this process in terms of the solution to a nonlocal integrable system.

The integrable structure of multiplicative statistics for the Bessel point process was recently investigated by Ruzza \cite{Ruzza2024}, who showed that they are governed by a specific nonlocal integrable equation, which recovers the classical Painlevé V equation in the degenerate limit of gap probabilities. In our analysis, we explicitly identify the nonlocal system characterizing our limiting kernel with the equation derived by Ruzza. This identification is crucial: it shows that the solution to this nonlocal equation describes not just the multiplicative statistics, but the full correlation kernel of the conditional ensemble itself. Thus, the nonlocal Painlevé V transcendent emerges as the fundamental structural invariant of the conditionally thinned Bessel process.

This result completes the description of conditional thinning statistics across the standard spectral regimes. The kernel for the conditional thinning ensemble was related to nonlocal integrable systems at a regular soft edge in \cite{GhosalSilva22}, and at a regular bulk point in \cite{CandidoEtal2026} (see also \cite{ClaeysSilva25} for further discussion, and \cite{CafassoClaeysRuzza2021, ClaeysTarricone2024} for related multiplicative statistics). Our work establishes the analogous theory for regular hard edges.

Our analysis relies on the integrable structure of the ensembles via orthogonal polynomials (OPs) and the associated Riemann-Hilbert Problem (RHP) approach. The conditional thinned point process of an orthogonal polynomial ensemble is itself an orthogonal polynomial ensemble, defined by a multiplicative deformation of the original weight \cite{ClaeysGlesner2021}. This observation serves as the starting point for our RHP formulation of the finite $n$ correlation kernel.

While we apply the Deift-Zhou steepest descent method, our analysis deviates from the standard route in the construction of the local parametrix near the hard edge. Unlike the classical case, the jump matrices for this local parametrix cannot be reduced to piecewise constant matrices. Consequently, the resulting model problem involves nontrivial position-dependent jumps that encode the symbol of the deformation. This model problem requires a separate asymptotic analysis, and ultimately it establishes the connection with the integrable systems that emerge.

Finally, armed with the asymptotics for the conditional correlation kernel, we recover the multiplicative statistics using a recently established general deformation formula \cite{GhosalSilva25}. This approach bypasses the traditional route using differential identities for Hankel determinants.

\section{Statement of results}

The general family of models we consider consists of deformations of the Laguerre-type OP ensembles. More precisely, we consider random points $x_1,\hdots, x_n\in (0,\infty)$ 
with joint distribution 
of the form
\begin{equation}\label{eq:Lagdef}
\frac{1}{\msf Z_n(\sad)}\prod_{1\leq i<j\leq n}|x_i-x_j|^2 \prod_{j=1}^n \omega_n(x_j\mid \sad)\, \dd x_1\cdots \dd x_n,
\end{equation}
where 
\begin{equation}\label{eq:deffweight}
\omega_n(x)=\omega_n(x\mid \sad)\deff \sigma_n(x) x^\alpha \ee^{-nV(x)}, \quad \alpha>-1, \quad x>0,
\end{equation}
with
\begin{equation}\label{def:sigman}
\sigma_n(x)=\sigma_n^Q(x\mid \sad)\deff \frac{1}{1+\ee^{-\sad-n^{2m}Q(x)}},\quad \sad \in \R, \quad m\in \Z_{>0} \text{ fixed}, 
\end{equation}
and the normalization constant
$$
\msf Z_n(\sad)\deff \int_0^\infty\cdots \int_0^\infty \prod_{j<k}(x_k-x_j)^2 \prod_{j=1}^n \omega_n(x_j\mid \sad) \dd x_1\cdots \dd x_n,
$$
also known as the partition function. In the formula above, the function $\sigma_n$ should be seen as a deformation factor, and $\omega_n$ is thus viewed as a deformation of the weight $x^\alpha\ee^{-nV(x)}$. We refer to $\sigma_n$ as the symbol of the deformation, depending parametrically on $\sad \in \R$. Further conditions on $\sigma_n$ will be placed, and it is in fact more convenient to place these conditions on the function $Q$ itself. We comment more on the meaning of these quantities in a moment.

Still about \eqref{eq:deffweight}, we assume that $V$ satisfies the following conditions.
\begin{assumption}\label{deff:condQ} 
The potential $V$ in \eqref{eq:deffweight} is a polynomial with positive leading coefficient, and is such that the equilibrium measure $\dd\mu_V(x)=\mu'_V(x)\dd x$ over the interval $[0,+\infty)$ in the external field $V$ is supported on a single interval $[0,a]$. Moreover, we assume that $\mu_V$ is regular with a (regular) hard edge at $x=0$ and a (regular) soft edge at $x=a$.
\end{assumption}

The measure $\mu_V$ is the limiting global distribution of particles of the model \eqref{eq:Lagdef} as $n\to \infty$. When $\sigma_n\equiv 1$, this is a known result, and for $\sigma_n$ as considered here this result is a consequence of our analysis. 

For the reader unfamiliar with potential theory, we introduce and discuss the equilibrium measure $\mu_V$ in a more detailed manner in Section~\ref{sec:eqmeasure}. The regularity conditions on $\mu_V$ that we work with are the analogue of the traditional one-cut regular assumptions for equilibrium measures on $\R$, and are standard in random matrix theory. We recall them in detail in Assumption~\ref{assump:onecutregular} below. This regularity holds for a large family of potentials $V$, for instance when $V$ is convex, the case $V(x)=x$ (the Laguerre weight) being the prototypical example. The most relevant part for the coming discussion is the regularity assumption at the hard edge $x=0$: this condition means that the density $\mu'_V$ of $\mu_V$ blows up as a square-root at the origin,
\begin{equation}\label{eq:localbehdensityeqmeasintro}
\mu'_V(x)=\kappa_0 |x|^{-1/2}(1+\boh(1)),\quad x\searrow 0,\quad \text{ for some constant }\kappa_0>0.
\end{equation}

Our focus in the current paper is in describing the effect of the deformation $\sigma_n$ on eigenvalues near $x=0$, that is, near the (regular) hard edge. As such, many of our local scaling limits will still hold if we drop the regularity condition on the soft edge $x=a$, or also if we drop the assumption that $\supp\mu_V$ is connected. Nevertheless, we work under such conditions to simplify our analysis, as our main goal is not to work under the most general assumptions, but instead to compute the novel scaling limits that arise at the hard edge, and showcase their universality. 


By \eqref{def:sigman}, it is readily seen that  $\sigma_n(x)\to 1$ as $\sad\to +\infty$, and \eqref{eq:Lagdef} reduces to the classical Laguerre-type OP ensemble. From now on, whenever we refer to the classical Laguerre-type OP ensemble, we always mean \eqref{eq:Lagdef} with the choice $\sad=+\infty$, that is, $\sigma_n\equiv 1$. With this degeneration in mind, the quotient
\begin{equation}\label{eq:MultStats}
\msf L_n^Q(\sad)\deff \frac{\msf Z_n(\sad)}{\msf Z_n(+\infty)}=\E\left[ \prod_{j=1}^n\sigma_n(x_j\mid \sad) \right]
\end{equation}
is a multiplicative statistic of the classical Laguerre-type OP ensemble, associated to the symbol ${\sigma_n(x\mid \sad)}$.

Alongside the interest in the multiplicative statistics $\msf L_n^Q(\sad)$, the model \eqref{eq:Lagdef} may also be interpreted as a deformed ensemble obtained when performing a conditional thinning procedure. Starting from the classical Laguerre-type OP ensemble with $n$ particles $x_1,\hdots,x_n$, each particle receives a mark $1$ with probability $\sigma_n(x_j)$, and a mark $0$ with complementary probability $1-\sigma_n(x_j)$. The mark $1$ is interpreted as if the marked particle is ``seen" in the system, and the mark $0$ as if the particle is ``lost" by noise in the sample. This marking interpretation corresponds to the thinning step. Now construct a new particle system, the conditional thinned ensemble, obtained by conditioning that all the particles received the mark $1$. A general framework developed by Claeys and Glesner \cite{ClaeysGlesner2021} shows that \eqref{eq:Lagdef} is precisely the density of particles in this conditional thinned ensemble starting from the corresponding Laguerre-type OP ensemble.

With the conditional thinned interpretation we just explained in mind, it is natural to expect that statistics of \eqref{eq:Lagdef} may depend on properties of the function $Q$ labelling $\sigma_n$.
We assume the function $Q:[0,\infty)\to \R$ satisfies the following conditions.
\begin{assumption}\label{asu:Q}
The function $Q$ extends to an analytic function in a complex neighborhood of $[0,\infty)$ such that
\begin{equation}\label{eq:behQorigin}
Q(x)>0 , \qquad x>0,
\end{equation}
and 
\begin{equation}\label{eq:asyQ}
    Q(x) =
\begin{cases}
\tad x^m(1+\Boh(x)),& x\to 0, \\
\Boh(x^{\epsilon}), & x\to \infty,
\end{cases}
\end{equation}
for the same integer value $ m\in \Z_{>0}$ appearing in \eqref{def:sigman}, where $\tad$ and $\epsilon$ are some positive numbers. 
\end{assumption}




The scaling $n^{2m}$ in $\sigma_n$ is motivated by the following reasoning. Under our assumptions on $V$, the local statistics of the classical Laguerre-type OP ensemble near the hard edge $z=0$ fluctuate on the scale $\Boh(n^{-2})$. Thus, in a local coordinate $\zeta\approx n^{2} x$ the statistics in the $\zeta$-plane should behave as $\Boh(1)$, that is, the variable $\zeta$ now detects the nontrivial fluctuations. Still with this same scaling, it follows from \eqref{eq:asyQ} that in the variable $\zeta$ we have $Q(x)=\Boh(\zeta^m n^{-2m})$ and therefore
$$
\frac{1}{1+\ee^{-\sad -n^{2m}Q(x)}}\approx \frac{1}{1+\ee^{-\sad -\tad \zeta^m}}.
$$
That is, with the scaling $n^{2m}$ in $\sigma_n$ and the vanishing condition \eqref{eq:asyQ} in place, the factor $\sigma_n$ in the deformed ensemble \eqref{eq:Lagdef} produces a nontrivial change of behavior in the fluctuations at the local scale $\zeta$, when compared to the classical Laguerre-type OP ensemble. 

The positivity condition on $Q$ is motivated by the following reasoning. Pointwise, the fact that $Q>0$ implies that
$$
\sigma_n(x)\to 1,\quad n\to \infty,\quad x>0.
$$
This means that $\sigma_n$ does not affect the global asymptotics of the random matrix model in intervals of the form $[\alpha,\infty)$ with $\alpha>0$, and only in a scale $x=\Boh(n^{-2})$ the statistics of the model should become affected. In principle, this global positivity condition on $(0,\infty)$ could be replaced by positivity enforced only near the hard edge $x=0$, at the cost of more complicated analysis near points outside the hard edge. As we mentioned earlier, since our goal is to showcase novel phenomena near the hard edge, we opt for keeping this condition for the sake of simplicity.

Our main results explain these heuristic interpretations rigorously, computing all the limiting quantities in terms of integrable systems. To state our results, we introduce certain quantities that allow us to decode statistics of \eqref{eq:Lagdef}.

The monic OPs $\msf P_j(x)=\msf P_j^{(n)}(x \mid \sad)=x^j+\cdots$, $j=0,1,2,\ldots$, for the weight  $\omega_n(\cdot\mid \sad)$ are uniquely determined by
$$
\int_0^\infty \msf P_j(x)x^k\omega_n(x\mid \sad)\dd x=0,\quad k=0,\hdots, j-1.
$$
The associated norming constants $\gamma_j(\sad)=\gamma_j^{(n)}(\sad)>0$, determined from
\begin{equation}\label{eq:normingcttdeff}
\frac{1}{\gamma_j(\sad)^2}=\int_0^\infty \msf P_j(x)^2\omega_n(x\mid \sad)\dd x,
\end{equation}
are such that the sequence $\{\gamma_j(\sad) \msf P_j,j=0,1,2,\ldots\}$ is orthonormal in $L^2(\omega_n(x \mid \sad)\dd x, [0,\infty))$. With these quantities in mind, we introduce the Christoffel-Darboux kernel
\begin{equation}\label{deff:CDkernel}
\msf K_n(x,y)=\msf K_n(x,y\mid \sad)\deff
\sum_{j=0}^{n-1}\gamma_j(\sad)^2\msf P_j(x)\msf P_j(y),
\end{equation}
and correlation kernel
\begin{equation}\label{deff:CorrKernel}
\wh{\msf K}_n(x,y)=\wh{\msf K}_n(x,y\mid \sad)\deff \omega_n(x\mid \sad)^{1/2}\msf K_n(x,y\mid \sad )\omega_n(y\mid \sad)^{1/2}.
\end{equation}
All the quantities we just introduced depend smoothly on the parameter $\sad$, although sometimes we do not make explicit mention to it in our notations. Also, as we will show, the limiting quantities 
$$
\gamma_j^{(n)}(\infty)=\lim_{\sad\to +\infty} \gamma_j^{(n)}(\sad)\quad \text{and}\quad \wh{\msf K}_n(\cdot \mid \infty)=\lim_{\sad\to +\infty}\wh{\msf K}_n(\cdot \mid \sad)
$$ 
make sense, and are the norming constants and correlation kernel for the undeformed weight $x^\alpha\ee^{-nV}$ corresponding to the classical Laguerre-type OP ensemble.

The relevance of these quantities is the following. The norming constants relate to the partition function via the identity (e.g., see \cite{mehta_book})
\begin{equation}\label{eq:PartNorming}
\msf Z_n(\sad)=n!\prod_{j=0}^{n-1}\gamma_{j}^{(n)}(\sad)^{-2}.
\end{equation}
The ensemble \eqref{eq:Lagdef} itself forms a determinantal point process: the joint distribution \eqref{eq:Lagdef} may be written in the determinantal form
$$
\det\left( \wh{\msf K}_n(x_i,x_j\mid \sad) \right)_{i,j=1}^n.
$$
From this determinantal structure, statistics of the ensemble \eqref{eq:Lagdef} may be computed using the theory of determinantal point processes \cite{Soshnikov2000a, BaikDeiftSuidan2016Book}. In this regard, the multiplicative statistics $\msf L_n^Q(\sad)$ from \eqref{eq:Lagdef} can be computed in terms of a Fredholm determinant involving the kernel $\msf K_n(\cdot\mid \infty)$ of the original Laguerre-type ensemble. However, such an expression is not appropriate for asymptotics, as one would still need to analyse a full Fredholm determinant. Our starting point will be instead a trace-type formula that allows to compute $\msf L_n^Q(\sad)$ directly in terms of the diagonal of the deformed kernel $\msf K_n(\cdot\mid \sad)$. With $\msf L_n^Q, \msf K_n$ and $\omega_n$ defined in \eqref{eq:MultStats}, \eqref{deff:CDkernel} and \eqref{eq:deffweight}, respectively, 
this formula is
\begin{equation}\label{eq:deffformula}
\log \msf L_n^Q(\sad)=-
\int_{\sad}^{\infty} \int_0^\infty \msf K_n(x,x\mid u) \partial_u\omega_n(x\mid u)\dd x \dd u.
\end{equation}

A variant of \eqref{eq:deffformula} appeared in \cite[Proposition~9.1]{GhosalSilva22}, where it was proven in the particular context of OPs over the full real line. Other variants also appeared in \cite[End of Section~5.2]{ClaeysGlesner2021} and \cite[Theorem~1.10]{CafassoClaeys2024} (see also \cite{CafassoClaeys2022, CafassoClaeysRuzza2021}), which obtain their analogues exploring the Jacobi differentiation identity for Fredholm determinants of so-called IIKS integrable operators. The formula as it appears here is a particular case of \cite[Proposition~4.1]{GhosalSilva25}, which provides a deformation formula valid for general OP ensembles (including discrete ones), deformations of weights beyond the case \eqref{eq:deffweight}, and which relies only on the orthogonality of the polynomials $\msf P_j$ used to construct the kernel $\msf K_n(\cdot\mid \sad)$.

Thanks to \eqref{eq:deffformula}, the first step in understanding $\msf L_n^Q(\sad)$ as $n\to\infty$ relies in obtaining asymptotics for $\msf K_n(\cdot\mid \sad)$. It turns out that in the large $n$ limit, the major contribution to the $x$-integration in \eqref{eq:deffformula} comes from a neighborhood of the origin, precisely where $\msf K_n$ has a novel scaling limit. To describe this scaling limit, let us introduce the function
\begin{equation}\label{deff:sigmaPhi}
\sigma_{\bm \Phi}(\zeta)=\sigma_{\bm \Phi}(\zeta\mid \sad)\deff \frac{1}{1+\ee^{-\sad -(-1)^m \zeta^m}}.
\end{equation}
In this definition, the index $\bm\Phi$ is introduced just for notational convenience, to distinguish between a generic function $\sigma$ which will be used later on.

\begin{theorem}\label{thm:intsysPhi}
    Fix $\alpha>-1$. For $(\zeta,\sad,\xad)\in \R\times \R \times (0,+\infty)$, the nonlocal differential equation
    \begin{equation}\label{eq:nonlocalPDEintro}
    \partial^2_\xad \Phi(\zeta\mid \sad,\xad)=\left( 
     \zeta+\frac{4\alpha^2-1}{4\xad^2}+\frac{2\ee^{-\pi\ii\alpha}}{\pi}\int_\sad^\infty \int_{-\infty}^0 \Phi(\xi\mid u,\xad)\partial_\xad \Phi(\xi\mid u,\xad)\partial_\sad \sigma_{\bm \Phi}(\xi\mid u)\dd\xi\dd u 
     \right)\Phi(\zeta\mid \sad,\xad)
    \end{equation}
    admits a solution $ \Phi=\Phi(\zeta\mid \sad,\xad)$ satisfying for each $\xad>0$ fixed,
    \begin{align*}
& \Phi(\zeta)=\left(1+\Boh(\zeta^{-1})\right)\frac{\zeta^{-1/4}}{\sqrt{2}}\ee^{\xad \zeta^{1/2}},\quad \zeta\to +\infty,
\\
& 
\Phi(\zeta)=\left(1+\Boh(\zeta^{-1})\right)\sqrt{2}\ee^{\pi\ii\alpha/2}|\zeta|^{-1/4}\cos\left( \xad|\zeta|^{1/2}-\frac{\pi}{4}-\frac{\pi\alpha}{2} \right),\quad \zeta\to -\infty.
\end{align*}
    Furthermore, as $\xad\to 0^+$,
    %
%
\begin{equation}\label{eq:asympPhixto0}
\Phi(\zeta\mid \sad,\xad) = \sqrt{\pi}\xad^{1/2} I_\alpha(\xad\zeta^{1/2} ) \left(1 + \Boh(\xad^{2+2\alpha}(1+\xad|\zeta|^{1/2})) \right),
\end{equation}
    uniformly for $\zeta\in \R\setminus \{0\}$ and $s\in\mathbb{R}$,
    where $I_\alpha$ is the modified Bessel function of the first kind with index $\alpha$, and for $\zeta<0$ we must consider the $+$-boundary value of $I_\alpha$.
\end{theorem}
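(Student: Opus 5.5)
We construct $\Phi$ from a model Riemann--Hilbert problem with the same jump structure as the Bessel model problem, but with the symbol $\sigma_{\bm\Phi}$ inserted into the jumps. We then read off the differential equation, the large-$\zeta$ asymptotics and the $\xad\to0^+$ behaviour from this RHP.

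\textbf{The model problem.} Consider a $2\times 2$ matrix-valued function $\bm\Phi(\zeta)=\bm\Phi(\zeta\mid\sad,\xad)$, analytic off $\R$, with the following properties.
\begin{itemize}
\item On $(-\infty,0)$ the jump is the Bessel-type jump, in which the constant entries are twisted by $\ee^{\pm\pi\ii\alpha}$.
\item The multiplicative symbol $\sigma_{\bm\Phi}(\zeta\mid\sad)$ enters through an off-diagonal or upper-triangular factor of the form $\begin{pmatrix}1&\sigma_{\bm\Phi}^{-1}-1\\0&1\end{pmatrix}$, or its equivalent obtained after opening lenses.
\item At infinity, $\bm\Phi(\zeta)=(I+\Boh(\zeta^{-1}))\zeta^{-\sigma_3/4}\tfrac{1}{\sqrt2}\begin{pmatrix}1&\ii\\ \ii&1\end{pmatrix}\ee^{\xad\zeta^{1/2}\sigma_3}$.
\end{itemize}
The sign convention $(-1)^m\zeta^m$ in \eqref{deff:sigmaPhi} is arranged so that the jump decays to the identity exponentially on the side where the lens contours live. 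We then define $\Phi$ to be the $(1,1)$ entry, or a suitable linear combination of first-column entries, extended to $\zeta<0$ by boundary values.

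\textbf{Step 1: solvability.} We would prove solvability for all $\xad>0$ and $\sad\in\R$ by a vanishing lemma. The symbol satisfies $0<\sigma_{\bm\Phi}<1$ on the relevant real line, and the jump matrix has the symmetry required for the Zhou-type argument. This step is where positivity of the symbol matters. Smoothness in $(\sad,\xad)$ then follows from standard analytic Fredholm theory for the associated singular integral equation.

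\textbf{Step 2: the equation.} Because the jumps are independent of $\xad$, the function $\Psi=\partial_\xad\bm\Phi\,\bm\Phi^{-1}$ is entire apart from its behaviour at infinity. The $\zeta$-dependence of the jumps does not prevent a Lax equation in $\xad$, although there is no Lax equation in $\zeta$. Liouville's theorem gives $\partial_\xad\bm\Phi=(\zeta^{1/2}$-type polynomial matrix$)\bm\Phi$, with coefficients built from the residue term $\bm\Phi_1(\sad,\xad)$ in the expansion at infinity. Writing this as a scalar second-order ODE in $\xad$ for $\Phi$ gives
\[
\partial_\xad^2\Phi=(\zeta+q(\sad,\xad))\Phi,
\]
with $q$ expressed through $\partial_\xad$ of an entry of $\bm\Phi_1$. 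The term $(4\alpha^2-1)/(4\xad^2)$ is split off from $q$ by comparison with the Bessel solution, which is the case $\sad=+\infty$.

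To identify the remainder with the nonlocal integral, we express the entry of $\bm\Phi_1$ through a Cauchy integral of the jump. Equivalently, we differentiate in $\sad$: $\partial_\sad\bm\Phi\,\bm\Phi^{-1}$ is a Cauchy transform of $\bm\Phi_+\partial_\sad J\,\bm\Phi_+^{-1}$ supported on $(-\infty,0)$. Its large-$\zeta$ coefficient is $\int_{-\infty}^0\Phi^2\,\partial_\sad\sigma_{\bm\Phi}\,\dd\xi$ up to the constant $\ee^{-\pi\ii\alpha}/\pi$. Differentiating in $\xad$ and integrating from $\sad$ to $+\infty$, where $q\to(4\alpha^2-1)/(4\xad^2)$, yields exactly the double integral in \eqref{eq:nonlocalPDEintro}. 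The factor $2\Phi\partial_\xad\Phi$ arises as $\partial_\xad(\Phi^2)$.

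\textbf{Step 3: the $\zeta\to\pm\infty$ asymptotics.} As $\zeta\to+\infty$ the asymptotics follow directly from the normalization at infinity. As $\zeta\to-\infty$ the jump tends to the pure Bessel jump exponentially fast, since $\sigma_{\bm\Phi}\to$ const there. Taking $\pm$ boundary values of the expansion at infinity then produces the cosine, with phase $-\pi/4-\pi\alpha/2$ and prefactor $\sqrt2\,\ee^{\pi\ii\alpha/2}$.

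\textbf{Step 4: $\xad\to0^+$.} Rescale $\lambda=\xad^2\zeta$. The symbol becomes $\sigma_{\bm\Phi}(\lambda/\xad^2)$, which is confined to the region $|\lambda|\lesssim\xad^2$. Compare $\bm\Phi$ with the explicit Bessel parametrix built from $I_\alpha$ and $K_\alpha$. The ratio solves a small-norm RHP whose jump differs from $I$ only near $\lambda=0$. Near the origin the Bessel functions behave like $\lambda^{\pm\alpha/2}$, and integrating against the symbol over a region of size $\xad^2$ gives an error of order $\xad^{2+2\alpha}$. A further factor $(1+\xad|\zeta|^{1/2})$ appears once the error is transported back to the scalar function, uniformly in $\zeta$. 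Uniformity in $\sad$ requires that $\int|\partial\sigma|$-type bounds be uniform in $\sad$; this holds because $0<\sigma_{\bm\Phi}<1$.

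\textbf{Main obstacle.} The step I expect to be hardest is this small-norm estimate near the origin when $\alpha$ is close to $-1$. There the local behaviour $\lambda^{\alpha}$ is only barely integrable, and the error must be controlled in $L^1\cap L^2$ with weights. I would first try an additional local parametrix, namely the undeformed Bessel solution matched on a disk of radius $\sim\xad^2$, and estimate the mismatch there.
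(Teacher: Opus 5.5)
Your architecture is the paper's: a model RHP carrying $\sigma_{\bm\Phi}$, solvability by a vanishing lemma, and a Lax equation in $\xad$ coming from the $\xad$-independence of the jump. After the gauge $\bm I+\ii\msf p\bm E_{21}$ this gives $\partial_\xad^2\Phi=(\zeta+2\partial_\xad\msf p)\Phi$. Then an identity for $\partial_\sad\msf p$ is integrated down from $\sad=+\infty$, and a comparison with the Bessel parametrix handles $\xad\to0^+$. Reading $\partial_\sad\msf p$ directly off the $1/\zeta$ coefficient of the Cauchy transform $\partial_\sad\bm\Phi_\infty\bm\Phi_\infty^{-1}$ is correct, and slightly more direct than the paper's $\bm V,\bm W_0$ compatibility computation.

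\textbf{The gap: the comparison with Bessel near the origin.} You treat this comparison as a small-norm problem, and it is not one. Write $\zeta$ for the variable of the paper's $\bm\Psi_\infty$, which is your rescaled variable $\xad^2\zeta$ up to a factor $4$. Then $1-\sigma$ is of size one on a region of diameter $\asymp\xad^2$ around $0$. Conjugating the jump discrepancy by $\bm\Psi^\bes_{\alpha,\pm}=\bm\Psi^\bes_{\alpha,0}\zeta^{\alpha\sp_3/2}(\bm I+\msf a\bm E_{12})\cdots$ produces entries of order $|\zeta|^{\alpha}$ and $|\zeta|^{-\alpha}$ (logarithms if $\alpha=0$). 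Consequences:
\begin{itemize}
\item the jump of $\bm\Psi_\infty(\bm\Psi^\bes_\alpha)^{-1}$ minus the identity is unbounded at $0$ for every $\alpha$;
\item the ratio itself carries a $|\zeta|^{-|\alpha|}$ singularity, not even in $L^2_{\mathrm{loc}}$ once $|\alpha|\ge 1/2$.
\end{itemize}
So the difficulty is not confined to $\alpha$ near $-1$. Your proposed fix does not repair it. The undeformed Bessel solution does not satisfy the deformed jumps inside a disk of radius $\sim\xad^2$, where $\sigma$ differs from $1$ by $\Boh(1)$, so matching on such a disk leaves an $\Boh(1)$ mismatch. The same comparison is silently needed in your Step 2, in two places:
\begin{itemize}
\item for $\msf p\to-\frac{4\alpha^2-1}{8\xad}$ as $\sad\to+\infty$;
\item for $\sad$-uniform integrable bounds on $\Phi$ and $\partial_\xad\Phi$ near $\zeta=0$, which justify integrating over $u\in(\sad,\infty)$ and differentiating under the integral.
\end{itemize}

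\textbf{The missing ingredient.} You need an exact local parametrix on a disk $D_\varepsilon$ of fixed radius,
\[
\bm P_0(\zeta)=\bm\Psi^\bes_{\alpha,0}(\zeta)\bigl(\bm I+\lambda(\zeta)\bm E_{12}\bigr)\zeta^{\alpha\sp_3/2}\bigl(\bm I+\msf a(\zeta)\bm E_{12}\bigr)\bigl(\bm I-(\chi_{\mcal S_+}(\zeta)\ee^{\pi\ii\alpha}-\chi_{\mcal S_-}(\zeta)\ee^{-\pi\ii\alpha})\sigma(\zeta)^{-1}\bm E_{21}\bigr),
\]
with $\lambda(\zeta)=\frac{1}{2\pi\ii}\int_{-\infty}^0(\sigma(s)-1)\frac{|s|^\alpha}{s-\zeta}\dd s$.
\begin{itemize}
\item The relations $\lambda_+-\lambda_-=(\sigma-1)|\zeta|^\alpha$ and $\msf a_+\ee^{\pi\ii\alpha}-\msf a_-\ee^{-\pi\ii\alpha}=1$ make $\bm P_0$ satisfy the deformed jumps exactly.
\item Since $\lambda=\Boh(\zeta^{\alpha})$, $\Boh(\log\zeta)$ or $\Boh(1)$ according as $\alpha<0$, $\alpha=0$, $\alpha>0$, the product $\bm\Psi_\infty\bm P_0^{-1}$ has a removable singularity at $0$.
\item On $\partial D_\varepsilon$ the mismatch is $\bm I+\lambda\,\bm\Psi^\bes_{\alpha,0}\bm E_{12}(\bm\Psi^\bes_{\alpha,0})^{-1}$, plus a lens contribution of size $\Boh(\ee^{-\sad-\eta\msf u})$.
\item This mismatch is small in $L^\infty$: $\lambda=\Boh(\msf u^{-(1+\alpha)/m})=\Boh(\xad^{2+2\alpha})$ as $\xad\to0^+$, and $\lambda=\Boh(\ee^{-\sad})$ as $\sad\to+\infty$.
\end{itemize}
This is where the rate $\xad^{2+2\alpha}$ really comes from: only the $|s|^\alpha$-weighted mass of $1-\sigma$ enters. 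A count based on your direct comparison would also see the $|\zeta|^{-\alpha}$ entries, and would give only $\xad^{2-2\alpha}$ when $\alpha>0$.

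\textbf{Two smaller points.} First, the normalization at infinity. If you take the jump on $(-\infty,0)$ in the collapsed form $\ee^{\pi\ii\alpha\sp_3}+\sigma_{\bm\Phi}\bm E_{12}$, the normalization must include the sectorial factor $\bm I+(\ee^{\pi\ii\alpha}\chi_{\mcal S_+}-\ee^{-\pi\ii\alpha}\chi_{\mcal S_-})\bm E_{21}$. Without it, the normalization itself has jump $\bm E_{12}-\bm E_{21}$. It is this factor that produces the cosine with prefactor $\sqrt2\,\ee^{\pi\ii\alpha/2}$; in the lens formulation it corresponds to the first-row (not first-column) combination $(\bm\Psi_\infty)_{11}+\ee^{\pi\ii\alpha}\sigma^{-1}(\bm\Psi_\infty)_{12}$.

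Second, uniformity in $\sad$. The bound $0<\sigma_{\bm\Phi}<1$ does not give it. As $\sad\to-\infty$ the weighted mass of $1-\sigma$ grows and $\sigma^{-1}$ in the lens jumps blows up. The argument yields uniformity only for $\sad\ge\sad_0$, which is what the paper actually proves.
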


With the solution $\Phi$ just discussed, we construct the kernel
\begin{multline}\label{deff:Kalpha}
\msf K_\alpha(u,v)\deff \frac{\ee^{-\pi\ii \alpha}\sqrt{\sigma_{\bm \Phi}\left(-4u/\xad^2 \right)}\sqrt{\sigma_{\bm\Phi}\left(-4v/\xad^2\right)}}{2\pi (u-v)}
\\
\times \left[ \Phi\left(-\frac{4u}{\xad^2}\right)(\partial_\xad \Phi )\left(-\frac{4v}{\xad^2}\right) -\Phi\left(-\frac{4v}{\xad^2}\right)(\partial_\xad \Phi )\left(-\frac{4u}{\xad^2}\right) \right],
\end{multline}
valid for $u\neq v$, and extended by continuity to $u=v$.
Observe that $\Phi=\Phi(\cdot\mid \sad,\xad)$, and therefore $\msf K_\alpha=\msf K_\alpha(\cdot,\cdot\mid \sad,\xad)$. 

For the next result, set
\begin{equation}\label{eq:msfcVintro}
\msf c_V\deff \pi^2\kappa_0^2,\qquad \text{and identify}\qquad \xad = \left(\frac{4\msf c_V}{\tad^{1/m}}\right)^{1/2},
\end{equation}
where $\kappa_0>0$ is as in \eqref{eq:localbehdensityeqmeasintro} and we recall that $\tad>0$ is as in \eqref{eq:asyQ}.

\begin{theorem}\label{thm:kernelintro}
Under Assumptions~\ref{deff:condQ} and \ref{asu:Q} and the identification \eqref{eq:msfcVintro}, the estimate
    $$
    \frac{1}{n^2\msf c_V }\wh{\msf K}_n\left(\frac{u}{\msf c_{V}n^2},\frac{v}{\msf c_V n^2}\right)=\msf K_\alpha(u,v\mid \sad,\xad)+\Boh(\ee^{-\sad}n^{-\kappa} + n^{-2}),\quad n\to\infty,
    $$
    holds uniformly for $u,v$ in compact subsets of $(0,+\infty)$, and uniformly for $\sad\geq \sad_0$ with $\sad_0\in \R$ fixed, and for any fixed $\kappa\in (0,2)$.
\end{theorem}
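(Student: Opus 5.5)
The plan is to prove Theorem~\ref{thm:kernelintro} with the Deift--Zhou steepest descent method applied to the Fokas--Its--Kitaev RHP $Y$ for the monic OPs $\msf P_j$ with weight $\omega_n(\cdot\mid\sad)$ on $[0,\infty)$. In terms of $Y$, the correlation kernel is
$$
\wh{\msf K}_n(x,y)=\frac{\omega_n(x)^{1/2}\omega_n(y)^{1/2}}{2\pi\ii(x-y)}\begin{pmatrix}0&1\end{pmatrix}Y_+(y)^{-1}Y_+(x)\begin{pmatrix}1\\0\end{pmatrix}.
$$

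\textbf{Global transformations.} We normalize at infinity with the $g$-function of the undeformed equilibrium measure $\mu_V$ from Assumption~\ref{deff:condQ}. Then we open lenses around $(0,a)$, using the analytic extension of $\sigma_n$ given by Assumption~\ref{asu:Q}. Away from the origin, \eqref{eq:behQorigin} gives $\sigma_n=1+\Boh(\ee^{-\sad-cn^{2m}\delta})$, so the deformation is exponentially negligible there. The outer parametrix and the Airy parametrix at $x=a$ are then the standard ones. The only caveat is that the lenses must stay inside the region where $\re Q>0$, so that $\sigma_n$ has no poles; the poles of $1/(1+\ee^{-\sad-n^{2m}Q})$ lie at $n^{2m}Q(z)\in -\sad+\pi\ii(2\Z+1)$ and approach the origin only at scale $n^{-2}$.

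\textbf{Local parametrix at the origin.} On a fixed disk $D_\delta$ around $0$, we introduce the conformal coordinate $\zeta=n^2f(z)$, with $f(z)=\msf c_V z(1+\Boh(z))$ coming from $(\phi(z))^2$, so that \eqref{eq:localbehdensityeqmeasintro} fixes the constant $\msf c_V=\pi^2\kappa_0^2$. In this coordinate,
$$
n^{2m}Q(z)=\tad(\zeta/\msf c_V)^m(1+\Boh(\zeta n^{-2})).
$$
After the rescaling $\zeta\mapsto -4\zeta/\xad^2$ dictated by \eqref{eq:msfcVintro}, this becomes $(-1)^m\zeta^m$ plus corrections, which matches the symbol $\sigma_{\bm\Phi}$ in \eqref{deff:sigmaPhi}. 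The local model is therefore a Bessel-type RHP whose jumps carry the non-constant factor $\sigma_{\bm\Phi}$ instead of constant entries.

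I would build this model RHP $\bm\Phi(\zeta\mid\sad,\xad)$ explicitly. It has jump $\begin{pmatrix}1&\sigma\\0&1\end{pmatrix}$ on the negative axis (after the $z\mapsto-\zeta$ orientation), the usual lens jumps weighted by $\sigma^{-1}$, and Bessel asymptotics $\zeta^{-\sigma_3/4}\frac{1}{\sqrt2}(\cdots)\ee^{\xad\zeta^{1/2}\sigma_3}$ at infinity. Its first column is the $\Phi$ of Theorem~\ref{thm:intsysPhi}, and the second row comes from $\partial_\xad$ of it via the Lax pair. I would take the solvability of this model RHP, and the fact that its entries produce $\Phi$ and $\partial_\xad\Phi$, as supplied by Theorem~\ref{thm:intsysPhi} and the surrounding construction. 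Then I would set
$$
P(z)=E_n(z)\,\bm\Phi\!\left(-\tfrac{4}{\xad^2}\cdot(\text{scaled }\zeta)\right)\ee^{\pm n\phi(z)\sigma_3}\omega_n^{-\sigma_3/2},
$$
with $E_n$ analytic and chosen to match the outer parametrix.

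\textbf{Main obstacle.} The hard step is the matching condition on $\partial D_\delta$. Because the jumps of the model are exact only up to the error in $Q$ (namely $Q=\tad x^m(1+\Boh(x))$) and up to the discrepancy between $\sigma_n$ and $\sigma_{\bm\Phi}$, the jumps of $P$ do not match those of $S$ exactly inside the disk. To handle this I would either absorb the $\Boh(z)$ correction into an $n$-dependent symbol, so that the model has parameters $\sigma_{\bm\Phi}$ perturbed by $\Boh(|\zeta|^{m+1}n^{-2})$, or treat the mismatch as a small-norm jump inside the disk. The latter requires controlling $\bm\Phi$ uniformly for large $\zeta$ together with the decay of $1-\sigma$ like $\ee^{-\sad}\ee^{-c|\zeta|^m}$ on the relevant rays. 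The mismatch contributes $\Boh(\ee^{-\sad}n^{-\kappa})$ for any $\kappa<2$: the loss relative to $n^{-2}$ comes from integrating the polynomial growth $|\zeta|^{m+1}$ against $1-\sigma$ up to $|\zeta|\sim n^{2}\delta$, where the region near the edge of the disk is split off. The boundary mismatch $P\,N^{-1}=I+\Boh(n^{-2})$ on $\partial D_\delta$ follows from the large-$\zeta$ expansion of $\bm\Phi$, uniformly in $\sad\ge\sad_0$.

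\textbf{Conclusion.} Small-norm theory then gives $R=I+\Boh(\ee^{-\sad}n^{-\kappa}+n^{-2})$. Unwinding the transformations for $x=u/(\msf c_Vn^2)$ and $y=v/(\msf c_Vn^2)$ with $u,v$ in a compact subset of $(0,\infty)$, the factors $E_n$ cancel in $E_n(y)^{-1}E_n(x)=I+\Boh(n^{-2})$ because $x-y=\Boh(n^{-2})$. The weight factors $\omega_n^{1/2}$ combine with the jump normalisation to give $\sqrt{\sigma_{\bm\Phi}(-4u/\xad^2)\,\sigma_{\bm\Phi}(-4v/\xad^2)}$. The remaining matrix product of the $+$-boundary values of $\bm\Phi$ is exactly the bracket in \eqref{deff:Kalpha}, which yields the kernel $\msf K_\alpha$ with the stated error.
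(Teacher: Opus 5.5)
\textbf{There is a genuine gap at the step you yourself flag as the main obstacle.} Your overall architecture matches the paper's: the Fokas--Its--Kitaev problem, normalisation by $\phi$, lenses, outer and Airy parametrices, a $\sigma$-deformed Bessel model at the origin, small norm, and unwinding at $x=u/(\msf c_Vn^2)$. But you leave the hard-edge comparison as two unexecuted options, and option (b) does not work as stated.

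Suppose you use the $n$-independent model $\bm\Phi$ (the paper's $\bm\Psi_\infty$) and put the mismatch into the jumps of $\bm R$ inside the disk. Then two things fail.
\begin{itemize}
\item \emph{Behaviour at $\zeta=0$.} On the lips the mismatch jump is $\bm I+(\ee^{-\msf h_n}-\ee^{-\msf h_\infty})\ee^{\pm\pi\ii\alpha}\bm\Psi_{\infty,-}\bm E_{21}\bm\Psi_{\infty,-}^{-1}$. By Proposition~\ref{Prop:behmodelproblemorigin} the conjugated matrix contains $\zeta^{-\alpha}$ and $\zeta^{\alpha}$ terms, and similarly on $\Gamma_0$. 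The prefactor $\ee^{-\msf h_n}-\ee^{-\msf h_\infty}$ in general vanishes only like $\ee^{-\sad}|\zeta|^{m+1}n^{-2}$. So the jump behaves like $|\zeta|^{m+1-|\alpha|}$ and is not in $L^2$ once $\alpha\ge m+\tfrac32$. You only discuss large $\zeta$.
\item \emph{Placement in the $z$-plane.} These jumps lie on $(0,\delta)$, where $x,y$ sit, and are conjugated by $\bm E_n=\bm E_1n^{\sp_3/2}$. A small-norm bound on $\bm R-\bm I$ does not control $n^{-\sp_3/2}\bm E_1^{-1}\bm R_+(y)^{-1}\bm R_+(x)\bm E_1n^{\sp_3/2}$. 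Undoing the conjugation costs a factor $n$. Also $\bm R$ is not analytic at $x,y$, so the $\Boh(x-y)$ structure needed near the diagonal is lost.
\end{itemize}
So your claim that small-norm theory yields $\bm R=\bm I+\Boh(\ee^{-\sad}n^{-\kappa}+n^{-2})$, and from it the kernel bound, is unsupported. Also, the matching on $\partial D_\delta$ is only $\bm I+\Boh(n^{-1})$, not $\Boh(n^{-2})$.

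\textbf{How the paper closes the gap.} It uses your option (a), but actually carries out the comparison, in the $\zeta$-plane.
\begin{itemize}
\item The hard-edge parametrix is $\bm E_n\bm\Psi_n(\zeta)$, where $\bm\Psi_n$ solves the model problem with the $n$-dependent admissible data $\msf h_n(\zeta)=\sad+n^{2m}Q(\psi^{-1}(\zeta/n^2))$. Its jumps coincide exactly with those of $\bm S$. Hence $\bm R$ is analytic near the origin, $\bm R=\bm I+\Boh(n^{-1})$, and $\bm R(y)^{-1}\bm R(x)=\bm I+\Boh((x-y)/n)$.
\item The convergence $\bm\Psi_n\to\bm\Psi_\infty$ is a separate steepest-descent analysis (Section~\ref{sec:analysisgtinfty}). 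Inside $D_\varepsilon$ one builds a local parametrix $\bm P_\gt$ from the Cauchy transform $\lambda_\gt$ of $(\sigma_\gt-1)|s|^\alpha$ and an analytic prefactor containing $\bm\Psi_{\infty,0}$. This absorbs exactly the singular mismatch described above.
\item Outside $D_\varepsilon$, the jumps of $\bm\Psi_\gt\bm\Psi_\infty^{-1}$ are estimated by splitting at $|\zeta|=\gt^\nu$ with $\nu=(2-\kappa)/(m+1)$. This splitting, not integrating $|\zeta|^{m+1}$ against $1-\sigma$, is where $\kappa<2$ comes from.
\item That estimate needs the $\sad$-uniform bounds on $\bm\Psi_{\infty,-}$ along $\Gamma$ from Proposition~\ref{prop:estPsiinfty1}. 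Those in turn come from a further comparison of $\bm\Psi_\infty$ with the Bessel parametrix as $\sad\to+\infty$; you simply assume this uniformity.
\end{itemize}
The outcome is $\bm\Psi_n=(\bm I+\Boh(\ee^{-\sad}n^{-\kappa}))\bm\Psi_\infty$, which also gives existence of $\bm\Psi_n$. From this follows Theorem~\ref{thm:kernelconvPsigt}, and inserting it into the unwound kernel formula gives the statement.
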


The kernel $\wh{\msf K}_n$ is the correlation kernel of the deformed matrix model \eqref{eq:Lagdef}, and in the limit $\sad\to +\infty$ it degenerates to the regular (undeformed) matrix model with weight $x^\alpha \ee^{-nV(x)}$. It was shown in \cite{Vanlessen2007} that the latter converges to the Bessel kernel
\begin{equation}\label{def:BesKer}
    \msf J_\alpha(u,v):=\frac{J_\alpha(\sqrt{u})\sqrt{v}J_\alpha'(\sqrt{v})-\sqrt{u}J_\alpha'(\sqrt{u})J_\alpha(\sqrt{v})}{2(u-v)}, \quad u,v>0,
\end{equation}
where $J_\alpha$ is the Bessel function of the first kind of order $\alpha$. The uniformity of the bound \eqref{thm:kernelintro} immediately yields the following corollary.

\begin{corollary}\label{cor:Besselkerneldeg}
    The limit
    $$
    \lim_{\sad\to +\infty}\msf K_\alpha(u,v\mid \sad)= \msf J_\alpha(u,v)
    $$
    holds pointwise for $u,v>0$, where $\msf J_\alpha$ is the Bessel kernel of order $\alpha>-1$ defined in \eqref{def:BesKer}.
\end{corollary}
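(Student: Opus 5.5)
The plan is to compare the two scaling limits of the same finite-$n$ object and then remove $n$. For each fixed $\sad$, Theorem~\ref{thm:kernelintro} gives
$$
\frac{1}{n^2\msf c_V}\wh{\msf K}_n\left(\frac{u}{\msf c_V n^2},\frac{v}{\msf c_V n^2}\,\Big|\,\sad\right)=\msf K_\alpha(u,v\mid \sad)+\Boh(\ee^{-\sad}n^{-\kappa}+n^{-2}).
$$
This holds uniformly for $\sad\geq \sad_0$ and for $u,v$ in compacts of $(0,\infty)$. In particular the implied constant $C$ does not depend on $\sad$. On the undeformed side, \cite{Vanlessen2007} gives the corresponding limit $\msf J_\alpha(u,v)$ for $\wh{\msf K}_n(\cdot\mid\infty)$ under the same scaling; the constant $\msf c_V=\pi^2\kappa_0^2$ is precisely the hard-edge normalization used there. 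The comparison then runs through the triangle inequality:
$$
|\msf K_\alpha(u,v\mid\sad)-\msf J_\alpha(u,v)|\le C(\ee^{-\sad}n^{-\kappa}+n^{-2}) + \Big|\tfrac{1}{n^2\msf c_V}\big(\wh{\msf K}_n(\cdot\mid\sad)-\wh{\msf K}_n(\cdot\mid\infty)\big)\Big| + \Big|\tfrac{1}{n^2\msf c_V}\wh{\msf K}_n(\cdot\mid\infty)-\msf J_\alpha\Big|.
$$

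The order of limits is as follows. Fix $\varepsilon>0$ and take $n$ large enough that the first and third terms are below $\varepsilon$; this uses that $C$ is independent of $\sad$. With $n$ fixed, let $\sad\to+\infty$ in the middle term. For fixed $n$, $\sigma_n(x\mid\sad)\to1$ monotonically. The weights therefore converge by dominated convergence, since $0<\sigma_n\le1$ and $x^{k}x^\alpha\ee^{-nV}$ is integrable. So every moment of $\omega_n(\cdot\mid\sad)$ converges to the corresponding undeformed moment.

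The orthogonal polynomials, the norming constants and hence $\msf K_n$ are rational functions of finitely many moments, through Hankel determinants with a positive limiting denominator. They therefore converge, which is the finite-$n$ statement the paper already asserts: $\wh{\msf K}_n(\cdot\mid\sad)\to\wh{\msf K}_n(\cdot\mid\infty)$ pointwise. Hence $\limsup_{\sad\to\infty}|\msf K_\alpha-\msf J_\alpha|\le 2\varepsilon$, and since $\varepsilon$ is arbitrary the corollary follows.

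A slicker route avoids the middle term: the limit $\sad\to\infty$ of the rescaled kernel holds uniformly in $n$, so the limits may be interchanged. However, this interchange needs exactly the same argument, so I would write the three-term version.

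The main subtlety, rather than a real obstacle, is ensuring that the identification of $\msf J_\alpha$ uses the same scaling and normalization as Theorem~\ref{thm:kernelintro}. For $\sad=+\infty$ the steepest descent of the paper reduces to the classical one of \cite{Vanlessen2007}: the local model becomes the Bessel parametrix, whose kernel in the variable $u$ is $\msf J_\alpha$. With the scaling $\zeta\approx n^2x$ and $\msf c_V=\pi^2\kappa_0^2$ coming from \eqref{eq:localbehdensityeqmeasintro}, the conformal map at the hard edge is $f(z)\sim\msf c_V n^2 z$, matching the rescaling in the theorem. I would state this compatibility explicitly.
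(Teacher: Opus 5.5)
Your argument has the same structure as the paper's one-line justification: the $\sad$-uniform error in Theorem~\ref{thm:kernelintro}, the undeformed hard-edge limit from \cite{Vanlessen2007}, and continuity of the finite-$n$ kernel as $\sad\to+\infty$. The order of limits and the moment/Hankel-determinant argument for fixed $n$ are both fine.

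The gap is the step you called the main subtlety and then asserted without checking. The scaling $x=u/(\msf c_V n^2)$ with $\msf c_V=\pi^2\kappa_0^2$ is \emph{not} the normalization under which the undeformed kernel tends to $\msf J_\alpha(u,v)$. Matching the conformal maps is not enough. With your scaling, $\zeta=n^2\psi(x)=-u+\Boh(n^{-2})$. The Bessel parametrix \eqref{deff:BessParamExplicit} is built from $I_\alpha(2\zeta^{1/2})$, so the local kernel involves $J_\alpha(2\sqrt u)$. By contrast, $\msf J_\alpha(u,v)$ arises when $\zeta\approx -u/4$.

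A density check confirms this. As $u\to+\infty$, $\msf J_\alpha(u,u)\sim(2\pi\sqrt u)^{-1}$, while $\frac{1}{\msf c_V n^2}\wh{\msf K}_n(x,x)\approx\frac{n\kappa_0x^{-1/2}}{\msf c_Vn^2}=\frac{1}{\pi\sqrt u}$. For the Laguerre case $V(x)=x$ one has $\kappa_0=1/\pi$ and $\msf c_V=1$, and the classical result reads $\frac{1}{4n^2}\wh{\msf K}_n\bigl(\frac{u}{4n^2},\frac{v}{4n^2}\mid\infty\bigr)\to\msf J_\alpha(u,v)$. So the normalization of \cite{Vanlessen2007} corresponds to $4\msf c_V$, not $\msf c_V$.

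Hence the third term in your inequality tends to $|4\msf J_\alpha(4u,4v)-\msf J_\alpha(u,v)|$, not to zero. What your argument actually proves is
\[
\lim_{\sad\to+\infty}\msf K_\alpha(u,v\mid\sad)=4\msf J_\alpha(4u,4v)=\frac{J_\alpha(2\sqrt u)\sqrt v\,J_\alpha'(2\sqrt v)-J_\alpha(2\sqrt v)\sqrt u\,J_\alpha'(2\sqrt u)}{u-v}.
\]
You can confirm this without \cite{Vanlessen2007}:
\begin{itemize}
\item The analysis of Section~\ref{sec:boundsPsiinfty} gives $\bm\Psi_\infty\to\bm\Psi_\alpha^\bes$ away from the origin.
\item Hence $\Phi(\zeta\mid\sad,\xad)\to\sqrt{\pi\xad}\,I_\alpha(\xad\zeta^{1/2})$.
\item Insert this into \eqref{deff:Kalpha} at $\zeta=-4u/\xad^2$, where $\xad\zeta_+^{1/2}=2\ii\sqrt u$, and use $I_\alpha(\ii t)=\ee^{\pi\ii\alpha/2}J_\alpha(t)$ for $t>0$. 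This gives exactly the right-hand side above.
\end{itemize}

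So the corollary as stated is off by the rescaling $u\mapsto 4u$ with Jacobian $4$. The paper's one-line justification shares this slip, since it also just cites \cite{Vanlessen2007}. Your compatibility check should have caught it rather than asserted agreement. With the limit corrected to $4\msf J_\alpha(4u,4v)$, your proof goes through verbatim. One small addition: since $\msf K_\alpha$ also depends on $\xad$, note that every $\xad>0$ arises from $Q(x)=\tad x^m$ for a suitable $\tad$.
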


When scaled around the origin, the smallest eigenvalues of the model \eqref{eq:Lagdef} converge to the Bessel point process (in the weak sense), which is the determinantal point process defined by the kernel $\msf J_\alpha$. Corollary~\ref{cor:Besselkerneldeg} is the kernel convergence reflecting this property.

Next, we turn to the asymptotics for $\msf L_n^Q$, for which we introduce
\begin{equation}\label{deff:Halpha}
\msf L_\alpha^\bes(\sad,\xad)\deff \exp\left(-\int_\sad^\infty \int_{0}^\infty \msf K_\alpha(\zeta,\zeta\mid \sad=u)(\partial_\sad \log \sigma_{\bm\Phi})\left(-\frac{4\zeta}{\xad^2}\mid \sad=u\right) \dd \zeta \dd u\right).
\end{equation}
The meaning of the upper index $\bes$ in the definition of $\msf L_\alpha^\bes$ will be explained in a moment.

\begin{theorem}\label{thm:asympLn}
Under Assumptions~\ref{deff:condQ} and \ref{asu:Q} and the identification \eqref{eq:msfcVintro}, the estimate
$$
\log \msf L_n^Q(\sad)=\log\msf L_\alpha^\bes(\sad,\xad)+\Boh\left(\frac{\ee^{-\sad}}{n}\right),\quad n\to\infty,
$$
holds uniformly for $\sad\geq \sad_0$ with any fixed $\sad_0\in \R$.
\end{theorem}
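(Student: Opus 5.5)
Starting from the deformation formula \eqref{eq:deffformula}, I rewrite
$$
\log \msf L_n^Q(\sad)=-\int_\sad^\infty\int_0^\infty \wh{\msf K}_n(x,x\mid u)\,\partial_u\log\sigma_n(x\mid u)\,\dd x\,\dd u ,
$$
using $\partial_u\omega_n=\omega_n\,\partial_u\log\sigma_n$ and $\msf K_n(x,x)\omega_n(x)=\wh{\msf K}_n(x,x)$. Here $\partial_u\log\sigma_n(x\mid u)=\ee^{-u-n^{2m}Q(x)}/(1+\ee^{-u-n^{2m}Q(x)})\le \ee^{-u}\ee^{-n^{2m}Q(x)}$. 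I then split the $x$-integral into a local part $x\in[0,\delta]$ and a global part $x>\delta$.

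\textbf{Global part.} Assumption~\ref{asu:Q} gives $Q\ge c_\delta>0$ on $[\delta,\infty)$, apart from the tail where only the polynomial-type bound $Q=\Boh(x^\epsilon)$ is available. For the tail I also use positivity of $Q$ and the exponential decay of $\wh{\msf K}_n(x,x)$ beyond the support $[0,a]$, which comes from the standard outer and exterior estimates of the steepest descent analysis. Since $\int\wh{\msf K}_n(x,x)\,\dd x=n$, this part is $\Boh(n\,\ee^{-u}\ee^{-c n^{2m}})$. After integrating in $u$ it is $\Boh(\ee^{-\sad}\ee^{-cn})$, which is negligible.

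\textbf{Local part.} I change variables to $x=\zeta/(\msf c_V n^2)$. By Assumption~\ref{asu:Q},
$$
n^{2m}Q(x)=\tad\,\zeta^m\msf c_V^{-m}\bigl(1+\Boh(\zeta/n^2)\bigr).
$$
Under the identification \eqref{eq:msfcVintro} we have $\tad\msf c_V^{-m}=(4/\xad^2)^m$, so this matches $(-1)^m(-4\zeta/\xad^2)^m$ in $\sigma_{\bm\Phi}(-4\zeta/\xad^2)$. Hence $\partial_u\log\sigma_n$ equals $(\partial_\sad\log\sigma_{\bm\Phi})(-4\zeta/\xad^2\mid u)$ up to a relative error $\Boh(\zeta^{m+1}n^{-2}\ee^{-c\zeta^m})$. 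Theorem~\ref{thm:kernelintro} gives the kernel on the diagonal, $\frac{1}{n^2\msf c_V}\wh{\msf K}_n=\msf K_\alpha+\Boh(\ee^{-u}n^{-\kappa}+n^{-2})$. Combining the two, the local part reproduces the integrand of $\log\msf L_\alpha^\bes$ in \eqref{deff:Halpha}. The errors carry the extra factor $\ee^{-u}$ from $\partial_u\log\sigma$, so they integrate to $\Boh(\ee^{-\sad}n^{-1})$ once I take $\kappa\ge1$. The truncation of the $\zeta$-integral at $\zeta=\delta\msf c_Vn^2$ costs only $\ee^{-cn^{2m}}$ on the limiting side.

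\textbf{Main obstacle.} Theorem~\ref{thm:kernelintro} holds only uniformly for $\zeta$ in compact subsets of $(0,\infty)$, whereas here I need control on all of $(0,\delta\msf c_V n^2]$. Near $\zeta=0$ I need uniformity down to the hard edge, where $\msf K_\alpha(\zeta,\zeta)\sim\zeta^\alpha$, integrable since $\alpha>-1$. For large $\zeta$ I need the $\Boh(1)$ bulk-type growth of the kernel, dominated by the factor $\ee^{-c\zeta^m}$. I would revisit the Riemann--Hilbert proof of Theorem~\ref{thm:kernelintro}: the local parametrix built from $\Phi$ is valid in the full disc around the origin, and the error bound $I+\Boh(\ee^{-\sad}n^{-\kappa}+n^{-2})$ on the small-norm residual holds uniformly there. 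Writing the diagonal kernel through the parametrix with the explicit $x^{\alpha}$-type factors taken out gives estimates uniform on $(0,\delta]$. Those factors must be handled carefully near $0$ to get a bound of the form $\zeta^\alpha(1+\zeta)^{C}$ times the error. Dominated convergence then gives the claim, uniformly in $\sad\ge\sad_0$ thanks to the $\ee^{-u}$ factor.
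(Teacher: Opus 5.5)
Your plan is correct. Near the hard edge it is the paper's argument, but away from the hard edge you take a different and simpler route. The paper splits the $x$-integral in \eqref{eq:defformulamain} into hard edge, bulk, soft edge and gap. In the last three regions it bounds $[\bm Y^{-1}\bm Y']_{21,+}$ pointwise by unwrapping the steepest descent: global parametrix in the bulk, Airy parametrix near $a$, outer solution in the gap.

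You instead use $\wh{\msf K}_n(x,x)\geq 0$, $\int_0^\infty\wh{\msf K}_n(x,x)\,\dd x=n$ and $\min_{[\delta,X]}Q>0$. This disposes of $[\delta,X]$ at cost $\Boh(n\ee^{-u}\ee^{-cn^{2m}})$ with no asymptotic input. The lower bound on $Q$ comes from continuity and positivity, not from the condition at infinity in Assumption~\ref{asu:Q}, which is only an upper bound and allows $Q\to0$. That is why the tail $x>X>a$ still needs the decay $\wh{\msf K}_n(x,x)=\Boh(x^\alpha\ee^{-2n\phi(x)})$, which is exactly the paper's gap estimate. Your route avoids the bulk and soft-edge estimates entirely. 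The paper's route gives pointwise control everywhere, which this theorem does not need.

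Near the hard edge, the uniformity you flag is obtained as in the paper, but a few points in your sketch need correcting.

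\emph{The two residuals.} The local parametrix is built from $\bm\Psi_n$, the model problem with $n$-dependent admissible data $\msf h_n$, not from $\Phi$. There are two distinct small-norm problems. One is the OP residual $\bm R=\bm I+\Boh(n^{-1})$ of Theorem~\ref{thm:estRforOP}, which carries no factor $\ee^{-\sad}$. The other is the comparison $\bm\Psi_n\to\bm\Psi_\infty$ of Proposition~\ref{prop:RestmodelPsiinfty}, of size $\Boh(\ee^{-\sad}n^{-\kappa})$.

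\emph{Model-problem part.} The paper works in the exact conformal variable $\zeta=n^2\psi(x)$, avoiding the mismatch from linear scaling, and splits $[\bm Y^{-1}\bm Y']_{21,+}$ as in \eqref{eq:J1h00}. The model-problem part is controlled on the whole growing interval by Theorem~\ref{thm:Itayasympt}. The errors are $\Boh(\ee^{-\sad}|\zeta|^{1/4}n^{-\kappa})$ for $\zeta\leq-\varepsilon$, via Remark~\ref{rm:estexphtauhinfty}, and $\Boh(\ee^{-\sad}|\zeta|^{\alpha}n^{-\kappa})$ near $0$, via the explicit local parametrix $\bm P_\gt$.

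\emph{The term $\bm R_{\rm h}$.} This term is built from $\bm\Delta_x\bm E_1$ and $\bm\Delta_x\bm R$ and conjugated by $n^{\sp_3/2}$ and $\bm\Psi_n$. It is a relative error of order $n^{-1}$, behaving like $|\zeta|^{\alpha}$ near $0$ and growing at most like $|\zeta|^{1/2}$. This bound, not the $n^{-2}$ in Theorem~\ref{thm:kernelintro}, is what you can actually propagate uniformly. It becomes $\Boh(\ee^{-\sad}/n)$ only through the factor $\ee^{-u}$ in $\partial_u\log\sigma_n$, as in your bookkeeping.

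\emph{Rate.} Dominated convergence yields no rate. What you need is the explicit bound of these errors integrated against $\ee^{-u}\ee^{-c|\zeta|^m}$.
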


 As mentioned earlier, the kernel $\wh{\msf K}_n$ is the correlation kernel for the conditional thinning of the eigenvalues of \eqref{eq:Lagdef}. In the recent work \cite{ClaeysSilva25}, Claeys and the second-named author give conditions on the symbol $\sigma_n$ in \eqref{eq:MultStats} under which (1) the conditional thinning of a finite $n$ random particle system converges to the conditional thinning of the large $n$ limit of the particle system, and (2) multiplicative statistics of the finite $n$ random particle system converge to the multiplicative statistics of their large $n$ limit. 

Modulo checking the technical conditions required to apply the results from \cite{ClaeysSilva25}, regarding (1),
Theorem~\ref{thm:kernelintro} establishes that the kernel for the conditional thinning process of the Bessel point process associated to the symbol $\sigma_{\bm\Phi}(-4\zeta\xad^{-2}\mid \sad,\xad)$ is precisely the kernel $\msf K_\alpha$. Regarding (2), Theorem~\ref{thm:asympLn} is saying that $\msf L_\alpha^\bes(\sad)$ coincides with the multiplicative statistics of the Bessel point process for the same symbol, which also explains why we used the upper index $\bes$.

This way, we refer to $\msf K_\alpha$ as the conditional thinning kernel of the Bessel point process, and $\msf L_\alpha^\bes$ as a multiplicative statistic of the Bessel point process.
Checking the requirements for the application of the results in \cite{ClaeysSilva25} is a technical feat that we will not delve into. 

With $\Phi$ as in Theorem~\ref{thm:intsysPhi}, introduce
\begin{equation}\label{eq:deffpintro}
\msf p(\sad,\xad)\deff-\frac{4\alpha^2-1}{8\xad}+\frac{\ee^{-\pi\ii\alpha}}{2\pi}\int_{\sad}^\infty \int_{-\infty}^0 
\Phi(\xi\mid \sad=u,\xad)^2\partial_\sad \sigma_{\bm\Phi}(\xi\mid \sad=u)\dd\xi \dd u.
\end{equation}
We may view \eqref{eq:nonlocalPDEintro} as a linear Schrödinger equation with potential $2\partial_\xad\msf p$, namely
$$
\partial_\xad^2\Phi(\zeta\mid \sad,\xad)=\left( \zeta+2\partial_\xad\msf p(\sad,\xad) \right)\Phi(\zeta\mid \sad,\xad).
$$

It turns out that we can relate the multiplicative statistic $\msf L_\alpha^\bes$ directly to the potential $\msf p$.

\begin{theorem}\label{thm:TWtyperepres}
Fix $\alpha>-1$. The estimate
\begin{equation}\label{eq:asymppTWthm}
\msf p(\sad,\xad)=-\frac{4\alpha^2-1}{8\xad}+\frac{1}{2^{2\alpha+1}\Gamma(\alpha+1)^2} \xad^{2\alpha+1}\int_0^\infty \frac{u^\alpha \ee^{-\sad-u^m}}{1+\ee^{-\sad-u^m}}\dd u+\Boh(\xad^{2\alpha+3}), \quad \xad \to 0^+,
\end{equation}
holds. In addition, the identity
    \begin{equation}\label{eq:TWrepr}
    \partial_{\xad}\left( \log\msf L_\alpha^\bes(\sad,\xad)\right)=-\msf p(\sad,\xad)-\frac{4\alpha^2-1}{8\xad}
    \end{equation}
    is also true.
\end{theorem}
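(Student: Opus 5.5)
Both claims follow from the small-$\xad$ expansion \eqref{eq:asympPhixto0} of Theorem~\ref{thm:intsysPhi} together with the definitions \eqref{deff:Kalpha}, \eqref{deff:Halpha} and \eqref{eq:deffpintro}.

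\textbf{Asymptotics \eqref{eq:asymppTWthm}.} I would substitute \eqref{eq:asympPhixto0} into the double integral in \eqref{eq:deffpintro}.

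First I compute the boundary value. For $\xi<0$, write $\xi=-|\xi|$ and take the $+$-boundary value $\zeta^{1/2}=\ii|\xi|^{1/2}$. Then $I_\alpha(\ii y)=\ee^{\pi\ii\alpha/2}J_\alpha(y)$, so
\[
\Phi(\xi)^2=\pi\xad\,\ee^{\pi\ii\alpha}J_\alpha(\xad|\xi|^{1/2})^2\,(1+\text{error}).
\]
The prefactor $\ee^{-\pi\ii\alpha}$ in \eqref{eq:deffpintro} cancels the phase, leaving $\tfrac{\xad}{2}\int\int J_\alpha(\xad|\xi|^{1/2})^2\partial_\sad\sigma_{\bm\Phi}\,\dd\xi\,\dd u$.

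Next I perform the $u$-integration first. Since $\int_\sad^\infty\partial_u\sigma_{\bm\Phi}(\xi\mid u)\,\dd u=1-\sigma_{\bm\Phi}(\xi\mid\sad)$, the $u$-integral collapses to $\int_{-\infty}^0 J_\alpha(\xad|\xi|^{1/2})^2(1-\sigma_{\bm\Phi}(\xi\mid \sad))\,\dd\xi$. For $\xi=-w$ we have $(-1)^m\xi^m=w^m$, so
\[
1-\sigma_{\bm\Phi}=\frac{\ee^{-\sad-w^m}}{1+\ee^{-\sad-w^m}}.
\]

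Then I expand $J_\alpha(y)^2=(y/2)^{2\alpha}/\Gamma(\alpha+1)^2\,(1+\Boh(y^2))$. This gives the leading term $\xad^{2\alpha+1}2^{-2\alpha-1}\Gamma(\alpha+1)^{-2}\int_0^\infty w^\alpha\frac{\ee^{-\sad-w^m}}{1+\ee^{-\sad-w^m}}\dd w$. The Taylor remainder, carrying $\xad^2 w$, contributes $\Boh(\xad^{2\alpha+3})$.

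The multiplicative error from \eqref{eq:asympPhixto0} is $\Boh(\xad^{2+2\alpha}(1+\xad|\xi|^{1/2}))$. Because $1-\sigma_{\bm\Phi}$ decays superexponentially in $w$, and the expansion is uniform in $\xi$ and $\sad$, this error is $\Boh(\xad^{4\alpha+3})$ relative to the full integral. That is $\Boh(\xad^{2\alpha+3})$ when $\alpha\ge 0$. For $-1<\alpha<0$ I would need a sharper bound, and I would check whether the stated error exponent in \eqref{eq:asympPhixto0} is relative, so that it is absorbed as claimed. Exchanging integration order is justified by the same decay.

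\textbf{Identity \eqref{eq:TWrepr}.} I would differentiate $\log\msf L_\alpha^\bes$ in $\xad$ under \eqref{deff:Halpha} and change variables $\xi=-4\zeta/\xad^2$. This gives
\[
\log\msf L_\alpha^\bes=-\frac{\xad^2}{4}\int_\sad^\infty\int_{-\infty}^0 \msf K_\alpha\!\left(-\tfrac{\xad^2\xi}{4},-\tfrac{\xad^2\xi}{4}\right)\partial_\sad\log\sigma_{\bm\Phi}(\xi\mid u)\,\dd\xi\,\dd u .
\]
By \eqref{deff:Kalpha}, the diagonal value is
\[
\frac{\xad^2}{4}\msf K_\alpha=\frac{\ee^{-\pi\ii\alpha}}{2\pi}\,\sigma_{\bm\Phi}(\xi)\bigl(\partial_\xi\Phi\,\partial_\xad\Phi-\Phi\,\partial_\xi\partial_\xad\Phi\bigr),
\]
using the chain rule $\partial_u=-\tfrac{4}{\xad^2}\partial_\xi$. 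The factor $\sigma_{\bm\Phi}$ combines with $\partial_\sad\log\sigma_{\bm\Phi}$ to give $\partial_\sad\sigma_{\bm\Phi}$.

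So I need $\partial_\xad$ of the Wronskian-type quantity $W=\partial_\xi\Phi\,\partial_\xad\Phi-\Phi\,\partial_\xi\partial_\xad\Phi$. Using the Schrödinger equation $\partial_\xad^2\Phi=(\xi+2\partial_\xad\msf p)\Phi$, I get
\[
\partial_\xad W=\partial_\xi\Phi\,\partial_\xad^2\Phi-\Phi\,\partial_\xi\partial_\xad^2\Phi=-\Phi^2 .
\]
The $2\partial_\xad\msf p$ terms cancel because $\msf p$ does not depend on $\xi$. Hence
\[
\partial_\xad\log\msf L_\alpha^\bes=-\frac{\ee^{-\pi\ii\alpha}}{2\pi}\int\int\Phi^2\,\partial_\sad\sigma_{\bm\Phi}\,\dd\xi\,\dd u .
\]
By \eqref{eq:deffpintro} this equals $-\msf p-\frac{4\alpha^2-1}{8\xad}$, as claimed.

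\textbf{Main obstacle.} The delicate step is justifying differentiation under the integral and the boundary behaviour. This needs decay of $\Phi$ and its $\xad$-, $\xi$-derivatives as $\xi\to-\infty$, which follows from the oscillatory asymptotics (polynomial growth at most) against the rapidly decaying $\partial_\sad\sigma_{\bm\Phi}$. It also needs integrability near $\xi=0$, $u\to\infty$, where $\partial_\sad\sigma_{\bm\Phi}=\Boh(\ee^{-u})$.
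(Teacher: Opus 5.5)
Your proof of \eqref{eq:TWrepr} is essentially the paper's argument. The paper also reduces the diagonal of $\msf K_\alpha$ to the combination $\Phi\,\partial_\xi\partial_\xad\Phi-\partial_\xi\Phi\,\partial_\xad\Phi$, going through the RHP representation of the kernel rather than \eqref{deff:Kalpha} directly. It then uses $\partial_\xad^2\Phi=(\xi+2\partial_\xad\msf p)\Phi$ to turn the $\xad$-derivative into $\Phi^2$.

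Your write-up, however, contains two sign errors that cancel. Since $\xi=-4u/\xad^2$ gives $\partial_u=-\frac{4}{\xad^2}\partial_\xi$, the limit $v\to u$ in \eqref{deff:Kalpha} is
\[
\frac{\xad^2}{4}\msf K_\alpha(u,u)=\frac{\ee^{-\pi\ii\alpha}}{2\pi}\,\sigma_{\bm\Phi}(\xi)\bigl(\Phi\,\partial_\xi\partial_\xad\Phi-\partial_\xi\Phi\,\partial_\xad\Phi\bigr)=-\frac{\ee^{-\pi\ii\alpha}}{2\pi}\,\sigma_{\bm\Phi}(\xi)\,W,
\]
not $+\frac{\ee^{-\pi\ii\alpha}}{2\pi}\sigma_{\bm\Phi}W$. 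With your sign, the diagonal would be negative in the Bessel limit $\sad\to+\infty$. Separately, with your $W$ and $\partial_\xad W=-\Phi^2$, your own chain yields $+\frac{\ee^{-\pi\ii\alpha}}{2\pi}\iint\Phi^2\partial_\sad\sigma_{\bm\Phi}$, not the minus sign you wrote. With the corrected diagonal everything is consistent, since $\partial_\xad\bigl(\Phi\,\partial_\xi\partial_\xad\Phi-\partial_\xi\Phi\,\partial_\xad\Phi\bigr)=\Phi^2$; this is exactly the paper's computation.

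For \eqref{eq:asymppTWthm} you take a genuinely different route. The paper (Proposition~\ref{prop:pPhibehsmallx}) writes $\msf p=-\frac{4\alpha^2-1}{8\xad}+\frac{2\ii}{\xad}(\bm R_1)_{12}$. It extracts the leading part of $\bm R_1$ by a residue at the origin of the first jump correction, which requires $\bm\Psi^\bes_{\alpha,0}(0)$. You instead insert \eqref{eq:asympPhixto0} into \eqref{eq:deffpintro}, collapse the $u$-integral, and expand $J_\alpha^2$. This is correct, and it uses the small-norm analysis only through \eqref{eq:asympPhixto0}. It also makes the coefficient transparent. The Taylor remainder is controlled by the uniform bound $|J_\alpha(y)^2-(y/2)^{2\alpha}\Gamma(\alpha+1)^{-2}|\le Cy^{2\alpha+2}$ for $y>0$. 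The price is that your precision is capped by the relative error in \eqref{eq:asympPhixto0}.

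That cap is exactly your worry for $-1<\alpha<0$, and no sharper bound exists, because the relative error $\xad^{2\alpha+2}$ in \eqref{eq:asympPhixto0} is attained. A first-order computation in the small-norm problem for $\bm R$ gives, for bounded $\xi$,
$\Phi(\xi\mid u,\xad)=\sqrt{\pi}\xad^{1/2}I_\alpha(\xad\xi^{1/2})\bigl(1+A(u)+\cdots\bigr)$, where
$A(u)=\frac{(\xad/2)^{2\alpha+2}}{\Gamma(\alpha+1)\Gamma(\alpha+2)}\mathcal I(u)$ and $\mathcal I(u)=\int_0^\infty\frac{w^\alpha\ee^{-u-w^m}}{1+\ee^{-u-w^m}}\dd w$.
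Feed this into your computation and use $\int_\sad^\infty\mathcal I(u)\bigl(-\mathcal I'(u)\bigr)\dd u=\frac12\mathcal I(\sad)^2$. This produces an extra term in $\msf p$:
\[
\frac{\mathcal I(\sad)^2}{2^{4\alpha+3}\Gamma(\alpha+1)^3\Gamma(\alpha+2)}\,\xad^{4\alpha+3}\neq0 .
\]
In the paper's route this is the quadratic term coming from $(\bm R_--\bm I)(\bm J_{\bm R}-\bm I)$. Via \eqref{eq:TWrepr} it is also consistent with $\log\msf L^\bes_\alpha=-A-\frac{A^2}{2}+\cdots$, with $A=A(\sad)$.

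Since $4\alpha+3<2\alpha+3$ for $\alpha<0$, \eqref{eq:asymppTWthm} with remainder $\Boh(\xad^{2\alpha+3})$ fails in that range. The correct remainder is $\Boh(\xad^{2\alpha+3}+\xad^{4\alpha+3})$. This is what your argument actually proves, and also what the paper's argument proves: the last display in the proof of Proposition~\ref{prop:pPhibehsmallx} ends with $\Boh(\xad^{4\alpha+3})$. So your proof is complete for $\alpha\ge0$. For $-1<\alpha<0$, state and prove the weaker remainder rather than looking for a sharper bound.
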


Using \eqref{eq:asympPhixto0}, it is straightforward to show that $\log \msf L_\alpha^\bes$ goes to $0$ as $\xad \to 0^+$.  In particular, Theorem~\ref{thm:TWtyperepres} then provides the representation
$$
\log\msf L_\alpha^\bes(\sad,\xad)=-\int_0^\xad\left(\msf p(\sad,u)+\frac{4\alpha^2-1}{8u}\right)\dd u.
$$

The celebrated $\beta=2$ Tracy-Widom distribution had its name coined thanks to the work of Tracy and Widom \cite{tracy_widom_level_spacing}, who showed that the second log derivative of a multiplicative statistic of random matrices (a gap probability) at the soft edge could be expressed in terms of the Hastings-McLeod solution to the Painlevé~II equation. The analogue representation for gap probabilities of the Bessel kernel in terms of the Painlevé~V equation was obtained shortly afterwards by Tracy and Widom as well \cite{TracyWidom1994Bessel}. Several different multivariate extensions have been obtained in the past 20 years \cite{TracyWidom2003AiryProcess, adler_vanmoerbeke_2005, Wang2009, BertolaCafasso2012, AmirCorwinQuastel2011, BaikProkhorovSilva2023}.

Theorem~\ref{thm:TWtyperepres} is the analogue representation, showing that the second log derivative of $\msf L_\alpha^\bes$ may be expressed in terms of $\msf p$ which, in turn, is given in terms of the solution to the nonlocal nonlinear equation \eqref{eq:nonlocalPDEintro}. In the case $m=1$, it is possible to obtain a PDE satisfied by $\msf p$ directly.

\begin{theorem}\label{thm:m1PDE}
    Suppose that $m=1$ in \eqref{def:sigman}. Then the function $\msf p$ from \eqref{eq:deffpintro} satisfies the PDE
    $$
   \partial^4_{\sad \xad \xad \xad}\msf p-8\partial_{\xad}\msf p\left(\partial^2_{\sad\xad}\msf p+\frac{1}{2}\right)-4\partial^2_{\xad\xad}\msf p\left(\partial_{\sad}\msf p+\frac{\xad}{2}\right)=0. 
   $$
\end{theorem}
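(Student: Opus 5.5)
The plan is to derive the PDE as the compatibility condition of a Lax pair whose first half is the Schrödinger equation $\partial_\xad^2\Phi=(\zeta+2\partial_\xad\msf p)\Phi$, already recorded after \eqref{eq:deffpintro}. The key point for $m=1$ is that $\sigma_{\bm\Phi}(\zeta\mid\sad)=(1+\ee^{\zeta-\sad})^{-1}$ depends only on $\zeta-\sad$. Hence $\partial_\sad\sigma_{\bm\Phi}=-\partial_\zeta\sigma_{\bm\Phi}$. In the model Riemann--Hilbert problem whose solution produces $\Phi$ (the local hard-edge parametrix problem behind Theorem~\ref{thm:intsysPhi}), the only nonconstant jumps are built from $\sigma_{\bm\Phi}$. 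So $\partial_\sad+\partial_\zeta$ annihilates the jump matrices, except for the explicit $\zeta$-dependence coming from the Bessel-type factors $\ee^{\xad\zeta^{1/2}}$ and $\zeta^{\pm\alpha/2}$.

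\textbf{Step 1 (the $\sad$-equation).} Let $\bm\Psi=(\Phi,\partial_\xad\Phi)^T$, the row of the RHP solution carrying $\Phi$. The $\xad$-equation reads
\[
\partial_\xad\bm\Psi=\begin{pmatrix}0&1\\ \zeta+2\partial_\xad\msf p&0\end{pmatrix}\bm\Psi .
\]
I would then show that $(\partial_\sad+\partial_\zeta)\bm\Psi\,\bm\Psi^{-1}$, computed at the level of the full $2\times2$ RHP solution, has no jumps. The explicit $\zeta$-dependence of the conjugating factors is handled by observing that $\partial_\zeta$ acting on $\ee^{\xad\zeta^{1/2}}$ equals $\tfrac{\xad}{2\zeta}\partial_\xad$ acting on it. This leads to the combined operator $\partial_\sad+\partial_\zeta-\tfrac{\xad}{2\zeta}\partial_\xad$, or equivalently, after multiplying by $\zeta$, the operator $\zeta(\partial_\sad+\partial_\zeta)-\tfrac{\xad}{2}\partial_\xad$. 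Applying this operator to the RHP solution and multiplying by its inverse gives a function with no jumps and at most a simple pole at $\zeta=0$ (from the hard-edge local behaviour). By Liouville, it is a polynomial of degree one in $\zeta$. Reducing back to $\bm\Psi$ yields
\[
(\partial_\sad+\partial_\zeta)\bm\Psi=\Big(B_1+\tfrac{1}{\zeta}B_0\Big)\bm\Psi ,
\]
with $B_0,B_1$ independent of $\zeta$.

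\textbf{Step 2 (identifying the coefficients).} I would expand both sides as $\zeta\to\infty$, using the subleading terms in the asymptotics of $\Phi$ from Theorem~\ref{thm:intsysPhi}. These are refined to order $\zeta^{-1}$ and $\zeta^{-3/2}$, where the residue coefficient equals $\msf p$; this is essentially the content of \eqref{eq:deffpintro}. The entries of $B_0,B_1$ should then come out as explicit expressions in $\xad$, $\msf p$, $\partial_\xad\msf p$ and $\partial_\sad\msf p$. Symmetry of the asymptotics under $\zeta\mapsto$ the other branch, together with the $\det=1$ normalization, should fix the remaining freedom.

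\textbf{Step 3 (zero curvature).} I would impose that $\partial_\xad$ and $\partial_\sad+\partial_\zeta$ commute on $\bm\Psi$, and collect powers of $\zeta$ in the resulting identity. The $\zeta^1$ and $\zeta^0$ terms should express the entries of $B_0,B_1$ through $\partial_\sad\msf p$, $\partial_\xad\msf p$ and their $\xad$-derivatives, with the inhomogeneous terms $\tfrac12$ and $\tfrac{\xad}{2}$ arising from $\partial_\zeta$ hitting the potential $\zeta$ and from the $\xad/(2\zeta)$ correction. The last remaining equation should be exactly
\[
\partial^4_{\sad\xad\xad\xad}\msf p-8\partial_\xad\msf p\Big(\partial^2_{\sad\xad}\msf p+\tfrac12\Big)-4\partial^2_{\xad\xad}\msf p\Big(\partial_\sad\msf p+\tfrac{\xad}{2}\Big)=0 .
\]
As a cross-check, I would compare with the known Painlevé~V reduction in the gap-probability degeneration.

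\textbf{Main obstacle.} I expect the main obstacle to be Step 1 to Step 2: controlling the singularity at $\zeta=0$, where the factor $\zeta^{\alpha/2}$ and the choice of branch for $\zeta<0$ interact with $\partial_\zeta$. The goal there is to justify that the pole is at most simple, and to identify $B_0$ without extra unknown functions. My first attempt would be to use the local behaviour at $\zeta=0$, namely $\bm\Psi=E(\zeta)\zeta^{\alpha\sigma_3/2}\times$(constant), with $E$ analytic, to read off $B_0$ as a conjugate of $\tfrac{\alpha}{2}\sigma_3$. I would then eliminate $E(0)$ using the $\zeta^{0}$ compatibility equations.
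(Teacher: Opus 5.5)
Your proposal is correct and is essentially the paper's argument: the paper realizes your observation that $\partial_\sad+\partial_\zeta$ annihilates the jump by shifting $\zeta\mapsto\zeta+\sad$ (together with the gauge factor $\bm I-\ii\frac{\sad\xad}{2}\bm E_{21}$). It then obtains $\partial_\sad\widehat{\bm\Upsilon}=\widehat{\bm A}\widehat{\bm\Upsilon}$, where $\widehat{\bm A}$ is $\ii(\partial_\sad\msf p+\frac{\xad}{2})\bm E_{21}$ plus a trace-free simple pole at $\zeta=-\sad$, and reads off the PDE from the zero-curvature equation exactly as in your Step~3.

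The detour through $\zeta(\partial_\sad+\partial_\zeta)-\frac{\xad}{2}\partial_\xad$ is unnecessary: $(\partial_\sad+\partial_\zeta)\bm\Phi_\infty\bm\Phi_\infty^{-1}=\frac{\ii\xad}{2}\bm E_{21}+\Boh(\zeta^{-1})$ at infinity already, so Liouville gives $B_1+B_0/\zeta$ directly, and the informative orders in the compatibility equation are $\zeta^0$ and $\zeta^{-1}$ rather than $\zeta^1$ and $\zeta^0$.
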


As said earlier, some of our results are in direct comparison with the recent work \cite{Ruzza2024} by Ruzza. Therein, and as mentioned earlier, Ruzza considers a class of multiplicative statistics for the Bessel point process. His multiplicative statistics are coming from linear scalings of fixed functions which are not necessarily continuous, and our family of statistics $\msf L_\alpha^\bes$ reduces to a particular instance of his setup\footnote{With the identification of variables $(\xad,\sad)=(x,-t)$, our function $\msf p(\xad,\sad)$ is comparable with the function $v(x,t)$ in \cite{Ruzza2024} through the relation%
$$
\msf p(\xad,\sad)=-\ii\left(v(x,t)+\frac{xt}{2}\right).
$$
} only when $m=1$. This way, Theorem~\ref{thm:m1PDE} is a rediscovery of \cite[Theorem~1.1--(ii)]{Ruzza2024}, the case $m=1$ of \eqref{eq:nonlocalPDEintro} with boundary condition \eqref{eq:asympPhixto0} is equivalent to \cite[Theorem~1.1--(iii)]{Ruzza2024}, and the case $m=1$ of \eqref{eq:asymppTWthm} is the equivalent of \cite[Theorem~1.6]{Ruzza2024}.

\subsection*{Organization of the paper}\hfill 

The remainder of this paper is organized as follows.

Section \ref{sec:modelproblemfull} is devoted to the construction and analysis of a model RHP, which plays a central role in the asymptotic analysis of OPs carried out later. This model problem governs the hard edge statistics. We establish the solvability of this model problem and derive the nonlocal differential equations satisfied by its solution.

In Section \ref{sec:modelproblemasymptotics}, we perform a detailed asymptotic analysis of the model RHP with respect to a large parameter (which corresponds to the system size $n$), and analyze its limiting behavior in relevant regimes.

Section \ref{sec:conseqmodelproblem} wraps up the asymptotic consequences from Section \ref{sec:modelproblemasymptotics}, in particular establishing certain asymptotic properties of the correlation kernel, and related integrals that will later appear in the analysis of multiplicative statistics. We also prove main theorems on integrable equations in this section. 

Finally, Sections \ref{sec:RHPOPsanalysis} and \ref{sec:conclusionmainresults} contain the asymptotic analysis of the orthogonal polynomials via the Deift-Zhou nonlinear steepest descent method, where the model problem and its properties established in the previous sections play a fundamental role.

\subsection*{About the notation}\label{sec:notation}  
\hfill

We use $D_r(z_0)$ to denote the disk on the complex plane centered at $z_0$ with radius $r>0$, and $D_r=D_r(0)$ for the particular case when $z_0=0$ is the origin. In general, we use bold capital letters $\bm Y,\bm \Psi$ etc. to denote matrix-valued functions. The letters $\varepsilon,\delta,\eta$ always denote positive constants that can be made arbitrarily small but are kept fixed, and we always emphasize when they may depend on external parameters. These small constants may have different values for different occurrences in the text. The set of strictly positive integers is $\Z_{>0}$, and we also use the notation $\Z_{\geq 0}\deff \Z_{>0}\cup \{0\}$.

When we write that $x\to \infty$ for some variable $x$, we mean that $x\to +\infty$ when $x$ is real, or $x\to \infty$ along any direction of the complex plane in case $x$ is allowed to assume values in $\C\setminus \R$ as well. These two distinct cases will always be clear from the context and meaning of the variables involved.

We also use the following matrix notation. We denote by $\bm I$ and $\bm 0$ the identity matrix and the null matrix, respectively, and by $\bm E_{ij}$ the $2\times 2$ matrix with $1$ in the $(i,j)$-entry and $0$ in the remaining entries. For convenience, we set 
\begin{equation}\label{def:U0}
 \bm U_0\deff \frac{1}{\sqrt{2}}
\begin{pmatrix}
    1 & \ii \\ \ii & 1
\end{pmatrix}
=
\frac{1}{\sqrt{2}}
\left(\bm I+\ii \bm E_{12}+\ii \bm E_{21}\right)
,\quad 
\sp_3\deff 
\begin{pmatrix}
    1 & 0 \\ 0 & -1
\end{pmatrix}=
\bm E_{11}-\bm E_{22}.   
\end{equation}
In the course of the Riemann-Hilbert analysis, we will use matrix norm notation. For a matrix-valued function $M:U\subset \C\to \C^{2\times 2}$, we denote
$$
|M(z)|\deff \max_{i,j=1,2} |M_{ij}(z)|,
$$
where  $M_{ij}$ stands for the $(i,j)$-th entry of $M$. 
For $p\in [1,\infty]$ and a curve $\Gamma\subset U$, we also use the corresponding $L^p(\Gamma)=L^p(\Gamma,|\dd s|)$ norm with respect to the arc length measure $|\dd s|$,
$$
\|M\|_{L^p(\Gamma)}\deff \max_{i,j=1,2}\|M_{ij}\|_{L^p(\Gamma)}.
$$

For a matrix-valued function $M$ depending on a variable $x\in \R$, it is convenient to introduce the notation
\begin{equation}\label{deff:Deltax}
\bm\Delta_x M(x)\deff M(x)^{-1}M'(x).
\end{equation}
Under a change of variables $x\mapsto \zeta=\zeta(x)$, this operator transforms as
\begin{equation}\label{deff:Deltaxzeta}
\bm\Delta_x(M(\zeta(x))=\zeta'(x) M(\zeta(x))^{-1}\frac{\dd M}{\dd \zeta}(\zeta(x))=\zeta ' \bm\Delta_\zeta M(\zeta).
\end{equation}

\subsection*{Acknowledgments}\hfill 

G.S. thanks Mattia Cafasso, Tom Claeys, Thomas Chouteau, Alfredo Deaño, and Giulio Ruzza, for various discussions related to this work. Parts of this project were carried out during academic visits of G.S. to Fudan University. He acknowledges the hospitality during these visits.

This project was finalized while the three authors were taking part in the workshop {\it Integrable Systems and Random Matrix Theory}, at Great Bay University, Dongguan, China, in January 2026. They appreciate the hospitality and encouraging environment during the workshop.

L.M. is supported by the UC3M grant 2024/00002/007/001/023 ``Local and global limits of complex-dimensional DPPs" and the grant ID2024-155133NB-I00,
``Orthogonality, Approximation, and Integrability:
Applications in Classical and Quantum Stochastic Processes (ORTH-CQ)" by the Agencia Estatal de Investigación.

G.S. acknowledges support by the São Paulo Research Foundation (FAPESP),
Brazil, Process Number \# 2019/16062-1, and by the Brazilian National Council for Scientific and Technological Development (CNPq) under Grant \# 306183/2023-4. 

L.Z. acknowledges support by National Natural
Science Foundation of China under Grant \# 12271105 and “Shuguang Program”
supported by Shanghai Education Development Foundation and Shanghai Municipal Education Commission.

\section{The model RHP}\label{sec:modelproblemfull}

As mentioned earlier, the main technical tool we utilize is the RHP formulation of OPs and associated quantities. When we perform the asymptotic analysis of the OPs via the RHP method, at one of the core steps we need to construct an approximation to them near the origin, the so-called model RHP for the local parametrix. For usual Laguerre-type OPs, such approximation is constructed using Bessel functions as originally developed in \cite{KuijlaarsMcLaughlinVanAsscheVanlessen2004}; see  \cite{Vanlessen2007} for details. Due to the nontrivial scaling of $\sigma_n$ near the origin, we need to use a different model problem, which we introduce and study in this section.

This section is structured as follows. In Section~\ref{sec:modelproblem} we introduce the model problem itself. In Section~\ref{sec:solvability} we establish the solvability of the model RHP, by means of vanishing-lemma-like arguments. In Section~\ref{sec:modelIntSys}, we connect the model problem with integrable systems; the content of the latter section ultimately leads to the differential equations from our main results.

\subsection{The model problem}\label{sec:modelproblem}\hfill 

Fix $m\in \Z_{>0}$, which will have the same meaning as in \eqref{eq:asyQ}, and set
\begin{equation}\label{deff:thetambasiccontours}
\theta_m\deff \pi\left(1-\frac{1}{3m}\right),\quad \Gamma_\pm \deff (\ee^{\pm \ii \theta_m}\infty, 0], \quad \Gamma_0\deff (-\infty,0],\quad \Gamma\deff \Gamma_+\cup\Gamma_0\cup\Gamma_-,
\end{equation}
where the orientations of the arcs $\Gamma_\pm,\Gamma_0$ are taken from $\infty$ to the origin. Also, introduce the sectors
\begin{equation}\label{def:mcalSpm}
\begin{aligned}
& \mcal S_0\deff \{\zeta\in \C\mid -\theta_m<\arg\zeta<\theta_m\}, \\
& \mcal S_+ \deff \{\zeta\in \C\mid \theta_m<\arg\zeta<\pi\}, \\
& \mcal S_- \deff \{\zeta\in \C\mid -\pi<\arg\zeta<-\theta_m\},
\end{aligned}
\end{equation}
with principal value of the argument, that is, $\arg\zeta\in (-\pi,\pi)$. We refer to Figure \ref{fig:Gammapm} for an illustration of these arcs and sectors.

\begin{figure}[t]
\begin{center}
   \setlength{\unitlength}{1truemm}
   \begin{picture}(100,70)(-5,2)       
       \put(40,40){\line(-2,-3){16}}
       \put(40,40){\line(-2,3){16}}
       \put(40,40){\line(-1,0){30}}

       \put(30,55){\thicklines\vector(2,-3){1}}
       \put(30,40){\thicklines\vector(1,0){1}}
       \put(30,25){\thicklines\vector(2,3){1}}

       \put(40,36.3){$0$}
       \put(20,11){$\Gamma_-$}
       \put(20,67){$\Gamma_+$}
       \put(3,40){$\Gamma_0$}
       \put(20,49){$\mcal S_+$}
       \put(20,29){$\mcal S_-$}
       \put(55,40){$\mcal S_0$}

       \put(40,40){\thicklines\circle*{1}}
   \end{picture}
   \caption{The contours $\Gamma_{\pm}$, $\Gamma_0$, and the sectors $\mcal S_{\pm}$, $\mcal S_{0}$.}
   \label{fig:Gammapm}
\end{center}
\end{figure}
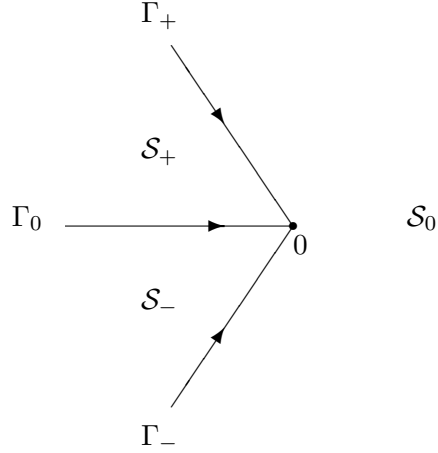

Observe that
$$
\partial\mcal S_0=\Gamma_+\cup\Gamma_-,\quad \partial\mcal S_+=\Gamma_+\cup\Gamma_0,\quad \partial\mcal S_-=\Gamma_-\cup\Gamma_0.
$$
The value $\theta_m$ is chosen so that
\begin{equation}\label{eq:ineqzetam}
\frac{1}{2}|\zeta|^m\leq (-1)^m\re(\zeta^m)\leq |\zeta|^m \quad \text{for}\quad \zeta\in \mcal S_+\cup \mcal S_-,
\end{equation}
with the upper bound being attained only along $\partial\mcal S_+\cap\partial\mcal S_-=\Gamma_0$, and the lower bound being attained only along $(\partial\mcal S_+\cup \partial\mcal S_-)\setminus \Gamma_0=\Gamma_+\cup\Gamma_-$. 

Fix $\alpha>-1$ (which has the same meaning as in \eqref{eq:deffweight}) and a function $\msf h:\Gamma\to \C$. We are interested in the following RHP.
\begin{rhp}\label{RHP:model}
Find a $2\times 2$ matrix-valued function $\bm\Psi:\C \setminus \Gamma\to \C^{2\times 2}$ with the following properties.
\begin{enumerate}[(i)]
\item $\bm \Psi$ is analytic on $\C\setminus \Gamma$.
\item The entries of $\bm \Psi_\pm$ are continuous except possibly at the origin, and for $\zeta\in \Gamma\setminus \{0\}$ they satisfy the jump relation 
$$
\bm \Psi_+(\zeta)=\bm\Psi_-(\zeta)\bm J_{\bm \Psi}(\zeta),
$$
with jump matrix 
\begin{equation}\label{eq:jumpPsi}
\bm J_{\bm \Psi}(\zeta)\deff
\begin{dcases}
\bm I+(1+\ee^{-\msf h(\zeta)})\ee^{\pm \pi \ii \alpha}\bm E_{21}, & \zeta\in \Gamma_\pm, \\
\frac{1}{1+\ee^{-\msf h(\zeta)}}\bm E_{12}-(1+\ee^{-\msf h(\zeta)})\bm E_{21}, & \zeta\in \Gamma_0.
\end{dcases}
\end{equation}
\item For some matrix-valued function $\bm \Psi_{1}$ which is independent of $\zeta$, the following expansion is valid,
\begin{equation}\label{eq:asymptRHPpsimodel}
\bm \Psi(\zeta)=\left(\bm I+\frac{\bm \Psi_{1}}{\zeta}+\Boh(\zeta^{-2})\right)\zeta^{-\sp_3/4}\bm U_0\ee^{2\zeta^{1/2}\sp_3},\quad \zeta\to \infty,
\end{equation}
  with the principal branches of the fractional powers, and where $\bm U_0$ and $\sp_3$ are defined in \eqref{def:U0}. 
\item The entries of $\bm \Psi_\pm$ belong to $L^2_{\mathrm{loc}}(\Gamma,|\dd s|)$. 
\end{enumerate}
\end{rhp}

Condition (ii) asks for continuity of the boundary values $\bm \Psi_\pm$ except possibly at the origin, so Condition (iv) is an additional requirement only about the behavior of $\bm \Psi$ near the origin.

In the statement of RHP~\ref{RHP:model} above, we have used the function $\msf h$, but in fact the jump may be written simply in terms of the function $\sigma$ given by
\begin{equation}\label{eq:sigmafactor}
\sigma:\Gamma\to \C,\quad \sigma(\zeta)\deff \frac{1}{1+\ee^{-\msf h(\zeta)}},
\end{equation}
namely as
$$
\bm J_{\bm\Psi}(\zeta)=
\begin{dcases}
    \bm I+\frac{\ee^{\pm\pi\ii\alpha}}{\sigma(\zeta)}\bm E_{21},& \zeta\in \Gamma_\pm, \\ 
    \sigma(\zeta)\bm E_{12}-\frac{1}{\sigma(\zeta)}\bm E_{21}, & \zeta\in \Gamma_0.
\end{dcases}
$$
Historically, factors of the form $(1+\ee^{-\msf h})^{-1}$ have appeared in the so-called finite temperature deformations of random matrix models, and partly for that reason oftentimes it is convenient to state conditions/results in terms of $\msf h$. However, in what follows we may use either $\sigma$ or $\msf h$, depending on the convenience of the situation.

Obviously, $\bm\Psi$ depends on $\msf h,\alpha$, and when we need to make this dependence explicit we write $\bm \Psi=\bm \Psi(\cdot\mid \msf h),\bm\Psi(\cdot\mid \alpha),\bm \Psi(\cdot\mid \msf h,\alpha)$, etc. For the RHP above to be well-posed, we should impose conditions on $\msf h$. Global conditions on $\msf h$ will be imposed in a moment, but it is convenient to write the condition (iv) in a more explicit manner when $\msf h$ is analytic at the origin, as given by the next result.  To proceed, we introduce
\begin{equation}\label{eq:hatUpsilonasyfactor}
\msf a(\zeta)=\msf a_\alpha(\zeta)\deff 
\begin{dcases}
\frac{1}{2\ii \sin(\alpha \pi)}, & \alpha \notin \mathbb{Z}, \alpha>-1, \\
\frac{\ee^{ \pi \ii \alpha}}{2 \pi  \ii}\log \zeta, & \alpha\in \Z_{\geq 0},
\end{dcases}
\end{equation}
for $\zeta\in \C \setminus (-\infty,0]$. It is easily seen that $\msf a$ satisfies the jump condition
\begin{equation}\label{eq:jumpafactor}
\msf a_+(\zeta)\ee^{\pi\ii\alpha}-\msf a_-(\zeta)\ee^{-\pi \ii\alpha}=1,\qquad \zeta<0,
\end{equation}
for any $\alpha>-1$. This function will be used several times throughout the text.

\begin{prop}\label{Prop:behmodelproblemorigin}
Suppose that $\msf h$ is analytic in a neighborhood of the origin and $1+\ee^{-\msf h}$ never vanishes near the origin. Then Condition (iv) in RHP \ref{RHP:model} implies that for some $\varepsilon>0$, the identities
\begin{equation}\label{eq:behPsiorigin}
\begin{aligned}[b]
\bm \Psi(\zeta)
& =\bm \Psi_0(\zeta)\zeta^{\alpha\sp_3/2}(\bm I+\msf a(\zeta)\bm E_{12})\sigma(\zeta)^{-\sp_3/2}
(\bm I-\sigma(\zeta)^{-1}(\chi_{\mcal S_+}(\zeta)\ee^{\pi \ii \alpha}-\chi_{\mcal S_-}(\zeta)\ee^{-\pi \ii \alpha})\bm E_{21}) \\
& =\bm \Psi_0(\zeta)\zeta^{\alpha\sp_3/2}(\bm I+\msf a(\zeta)\bm E_{12})(\bm I-(\chi_{\mcal S_+}(\zeta)\ee^{\pi \ii \alpha}-\chi_{\mcal S_-}(\zeta)\ee^{-\pi \ii \alpha})\bm E_{21}) \sigma(\zeta)^{-\sp_3/2}
\end{aligned}
\end{equation}
hold for $\zeta\in D_\varepsilon$, where $\bm \Psi_0(\zeta)$ is analytic on $D_\varepsilon$, all the roots are principal branches, and $\chi_{\mcal S_\pm}$ is the characteristic function of the set $\mcal S_\pm$.
\end{prop}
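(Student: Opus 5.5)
The plan is to define $\bm\Psi_0$ by stripping off the explicit local factor, to check that it has no jumps in a small disk, and then to show that the resulting isolated singularity at the origin is removable by using Condition (iv).

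\textbf{Step 1: the candidate $\bm\Psi_0$.} Set
$$
\bm E(\zeta)\deff\zeta^{\alpha\sp_3/2}(\bm I+\msf a(\zeta)\bm E_{12})(\bm I-c(\zeta)\bm E_{21})\sigma(\zeta)^{-\sp_3/2},\qquad c\deff\chi_{\mcal S_+}\ee^{\pi\ii\alpha}-\chi_{\mcal S_-}\ee^{-\pi\ii\alpha},
$$
and $\bm\Psi_0\deff\bm\Psi\bm E^{-1}$ on $D_\varepsilon\setminus\Gamma$. Since $\msf h$ is analytic near $0$ and $1+\ee^{-\msf h}\neq0$ there, $\sigma$ is analytic and non-vanishing on a small disk, so $\sigma^{\pm1/2}$ is analytic there. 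If $\sigma(0)$ happens to be negative real, I would fix a different analytic branch; the statement is insensitive to this choice. The equality of the two expressions in \eqref{eq:behPsiorigin} follows from the conjugation identity $\sigma^{-\sp_3/2}\bm E_{21}\sigma^{\sp_3/2}=\sigma\bm E_{21}$, which gives $\sigma^{-\sp_3/2}(\bm I-\sigma^{-1}c\bm E_{21})=(\bm I-c\bm E_{21})\sigma^{-\sp_3/2}$.

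\textbf{Step 2: $\bm E$ has the same jumps as $\bm\Psi$.} I will verify $\bm E_+=\bm E_-\bm J_{\bm\Psi}$ on $\Gamma\cap D_\varepsilon$.
\begin{itemize}
\item On $\Gamma_\pm$ only $c$ jumps, by $\pm\ee^{\pm\pi\ii\alpha}$ with the orientation toward the origin. The identity $(\bm I+\ee^{\pm\pi\ii\alpha}\bm E_{21})\sigma^{-\sp_3/2}=\sigma^{-\sp_3/2}(\bm I+\ee^{\pm\pi\ii\alpha}\sigma^{-1}\bm E_{21})$ then reproduces $\bm J_{\bm\Psi}$.
\item On $\Gamma_0$ the factors $\zeta^{\alpha\sp_3/2}$, $\msf a$ and $c$ all jump. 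The $c$-jump changes the lower-left entry from $-\ee^{\pi\ii\alpha}$ to $\ee^{-\pi\ii\alpha}$. The scalar relation \eqref{eq:jumpafactor} is exactly what makes the product $\zeta_+^{\alpha\sp_3/2}(\bm I+\msf a_+\bm E_{12})(\bm I-\ee^{\pi\ii\alpha}\bm E_{21})$ equal to $\zeta_-^{\alpha\sp_3/2}(\bm I+\msf a_-\bm E_{12})(\bm I+\ee^{-\pi\ii\alpha}\bm E_{21})(\bm E_{12}-\bm E_{21})$.
\item Conjugating $\bm E_{12}-\bm E_{21}$ by $\sigma^{-\sp_3/2}$ yields $\sigma\bm E_{12}-\sigma^{-1}\bm E_{21}$.
\end{itemize}
This is a routine $2\times2$ computation, which I would carry out separately for $\alpha\notin\Z$ and for $\alpha\in\Z_{\geq0}$. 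It follows that $\bm\Psi_0$ has continuous boundary values with no jump on $\Gamma\setminus\{0\}$, so it is analytic in $D_\varepsilon\setminus\{0\}$.

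\textbf{Step 3: removability (main obstacle).} Since $\det\bm E\equiv1$, we have $\bm E^{-1}=\sigma^{\sp_3/2}(\bm I+c\bm E_{21})(\bm I-\msf a\bm E_{12})\zeta^{-\alpha\sp_3/2}$. Its entries are $\Boh(|\zeta|^{-|\alpha|/2}\log|\zeta|)$ near the origin. The difficulty is that Condition (iv) controls only the $L^2$ norms of the boundary values on the rays, not pointwise growth, and for large $\alpha$ the factor $\zeta^{-\alpha\sp_3/2}$ can produce a genuine pole.

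My first attempt would argue by contradiction through the Laurent expansion. Suppose $\bm\Psi_0$ had a pole or an essential singularity at $0$. Multiplying back, $\bm\Psi=\bm\Psi_0\bm E$. I would examine the columns separately: the first column of $\bm\Psi$ is $\bm\Psi_0$ applied to the first column of $\bm E$, which carries the power $\zeta^{\alpha/2}$. Using $\det\bm\Psi_0=\det\bm\Psi\equiv$ const and the $L^2$-integrability of $\bm\Psi_\pm$ along $\Gamma_0$ and $\Gamma_\pm$, I would show that any negative power $\zeta^{-k}$, $k\geq1$, in $\bm\Psi_0$ forces some entry of $\bm\Psi_\pm$ to behave like $|\zeta|^{-k\pm\alpha/2}$, possibly with a logarithm. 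Such behavior is not locally $L^2$ on a ray, unless it is cancelled between columns, and that cancellation is ruled out by comparing the two sides of $\Gamma_0$ through \eqref{eq:jumpafactor}.

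To control $\bm\Psi_0$ a priori, before this comparison is possible, I would first show that the singularity is at worst a pole. For that I would represent $\bm\Psi$ in each sector through Cauchy integrals of its $L^2$ boundary values, which gives a polynomial bound $\bm\Psi(\zeta)=\Boh(|\zeta|^{-N})$. The case analysis between $\alpha\notin\Z$ and $\alpha\in\Z_{\geq0}$, where the $\log\zeta$ in $\msf a$ appears, is where I expect the delicate bookkeeping to lie.
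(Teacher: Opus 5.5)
Your Steps~1 and~2 follow the paper's proof: it also defines $\bm\Psi_0$ through the first identity in \eqref{eq:behPsiorigin} and uses \eqref{eq:jumpafactor} on $\Gamma_0$ to check that $\bm\Psi_0$ has no jumps, and your matrix identities are correct. The gap is in Step~3, and neither of its two stages can be carried out from Condition~(iv) as stated.

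\emph{First}, the a priori polynomial bound cannot be derived. Condition (iv) constrains only boundary values, and $\Gamma\subset\{\re\zeta\leq 0\}$, where $|\ee^{1/\zeta}|\leq 1$. So if $\bm\Psi$ is a solution of the claimed form, then $\wt{\bm\Psi}\deff(\bm I+(\ee^{1/\zeta}-1)\bm E_{21})\bm\Psi$ is again a solution of RHP~\ref{RHP:model}:
\begin{itemize}
\item the jumps are unchanged;
\item $\ee^{1/\zeta}-1=\zeta^{-1}+\Boh(\zeta^{-2})$ preserves the form of (iii);
\item $\wt{\bm\Psi}_\pm$ stays in $L^2_{\rm loc}$.
\end{itemize}
Yet $\wt{\bm\Psi}_0=(\bm I+(\ee^{1/\zeta}-1)\bm E_{21})\bm\Psi_0$ has an essential singularity, because $\det\bm\Psi_0\equiv1$ prevents the first row of $\bm\Psi_0$ from vanishing identically. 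Hence representing $\bm\Psi$ in the sectors by Cauchy integrals of its boundary values (a Smirnov-class reading of (iv)) is an extra hypothesis, not a consequence of (iv).

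\emph{Second}, even if the singularity is known to be at most a pole, your Laurent count fails for $\alpha>1$. Relation \eqref{eq:jumpafactor} is already built into $\bm E$, so it puts no further constraint on the single-valued $\bm\Psi_0$. All you have is $\det\bm\Psi_0\equiv1$ and square integrability on the rays of $\zeta^{\alpha/2}(\bm\Psi_0)_{j1}$ and $\zeta^{-\alpha/2}(\bm\Psi_0)_{j2}$.
\begin{itemize}
\item For $1<\alpha<3$, the matrix $\bm\Psi_0\zeta^{-\sp_3}$ passes all of these tests: the entries of $\bm\Psi_0\zeta^{-\sp_3}\bm E$ are $\Boh(|\zeta|^{-|\alpha/2-1|})$, times $\log|\zeta|$ if $\alpha=2$.
\item This is the shift $\alpha\mapsto\alpha-2$, which leaves $\ee^{\pm\pi\ii\alpha}$ and $\msf a$, hence all of RHP~\ref{RHP:model}, unchanged. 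For example, with $\msf h\equiv+\infty$, $\bm\Psi^\bes_{\alpha-2}$ satisfies (i)--(iv) for the parameter $\alpha$, while its $\bm\Psi_0=\bm\Psi^\bes_{\alpha-2,0}\zeta^{-\sp_3}$ has a pole.
\item Conversely, for $\alpha\geq1$ every $\bm\Psi$ of the claimed form has a nonzero second column of $\bm\Psi_0(0)$, again because $\det\bm\Psi_0\equiv1$. So its second column has an entry of size $\asymp|\zeta|^{-\alpha/2}$ on $\Gamma$, which violates (iv).
\end{itemize}

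Thus, with (iv) as written, the statement is provable only for $-1<\alpha<1$. In that range, once essential singularities are excluded, either your count (both columns get order $\geq0$) or the paper's one-line remark finishes the proof. That remark is that $\bm E^{-1}$ then has $L^2_{\rm loc}$ entries, so $\bm\Psi_{0,\pm}\in L^1$ and no pole is possible.

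For general $\alpha>-1$, (iv) must be replaced by Bessel-type pointwise bounds as in \cite{KuijlaarsMcLaughlinVanAsscheVanlessen2004}. For $\alpha>0$ these are: first column $\Boh(|\zeta|^{\alpha/2})$ and second column $\Boh(|\zeta|^{-\alpha/2})$ in $\mcal S_0$, and $\Boh(|\zeta|^{-\alpha/2})$ in $\mcal S_\pm$. Then $\bm\Psi_0=\bm\Psi\bm E^{-1}$ is bounded in $\mcal S_0$ and $\Boh(|\zeta|^{-\alpha})$ in $\mcal S_\pm$. So it has at most a pole, and boundedness in $\mcal S_0$ excludes that.
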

\begin{proof}
The second equality in \eqref{eq:behPsiorigin} follows from a simple algebraic manipulation, which we now use without further justification.

We use the first identity in \eqref{eq:behPsiorigin} as a defining relation for $\bm \Psi_0$ and prove that it is an analytic function near the origin. Note that the factors $\zeta^{\alpha/2}, \sigma^{-1/2}$ and $\msf a$ are analytic and non-vanishing on a set of the form $D_\varepsilon\setminus (-\varepsilon,0]$, from which it follows that $\bm \Psi_0$ is piecewise analytic, with jumps along $\Gamma$. A direct calculation, also using the analyticity of $\zeta^{\alpha/2}, \sigma^{-1/2},\msf a$ away from the negative axis, shows that $\bm \Psi_0$ is analytic across $\Gamma_+$ and $\Gamma_-$ as well. For $\zeta\in \Gamma_0\setminus \{0\}$, we compute
$$
\bm \Psi_{0,-}(\zeta)^{-1}\bm \Psi_{0,+}(\zeta)=\bm I+|\zeta|^{\alpha}\left(1-\msf a_+(\xi)\ee^{\pi \ii \alpha}+\msf a_-(\zeta)\ee^{-\pi \ii\alpha}\right)\bm E_{12}.
$$
In virtue of \eqref{eq:jumpafactor}, the right-hand side above is equal to $\bm I$, and we conclude that $\bm \Psi_0$ is analytic across $\Gamma_0$ as well. 

The arguments above show that $\zeta=0$ is an isolated singularity of $\bm\Psi_0$. Because we are assuming that the singularities of the entries of $\bm \Psi$ at the origin are square-integrable and $\alpha>-1$, we see that the entries of $\bm \Psi_{0,\pm}$ belong to $L^1$, implying that $\bm \Psi_0$ has a removable singularity at $\zeta=0$.
\end{proof}

\begin{remark}
The previous proposition requires that $\sigma^{-1}=1+\ee^{-\msf h}$ is analytic and non-vanishing near the origin, but not necessarily $\msf h$ itself. It is thus immediate to see that its conclusion also holds for the choice $\msf h=+\infty$, which leads to $1+\ee^{-\msf h}\equiv 1$; for this choice the RHP for $\bm \Psi$ reduces to a RHP for Bessel functions (see Appendix \ref{sec:BesselParametrix} below), which is known and will be used later.
\end{remark}

For the remainder of this section we are interested in two independent aspects. First of all, under some additional conditions on $\msf h$ we will prove that the model problem is solvable. Second, for certain particular choices of $\msf h$, we explain how our model problem is connected to a family of integrable systems.

\subsection{Solvability of the model problem}\label{sec:solvability}\hfill 

Our goal in this section is to prove that under some conditions on $\msf h$, the RHP~\ref{RHP:model} for $\bm \Psi$ always admits a solution. Recall the contours $\Gamma_\pm,\Gamma_0$ and the sectors $\mcal S_0,\mcal S_\pm$ defined in \eqref{deff:thetambasiccontours} and \eqref{def:mcalSpm}, respectively, we 
assume the following.
\begin{assumption}\label{assumpt:hsolutionrhp}
The function $\msf h$ in \eqref{eq:jumpPsi} satisfies the following conditions.
\begin{enumerate}[(i)]
    \item $\msf h:G\to \C$ is well-defined and analytic in some open set $G$ that contains the closure of the union of sectors ${\mcal S}_+\cup {\mcal S}_-$.
    \item $\msf h$ is real-valued along $(-\infty,0)$.
    \item The associated function $\sigma$ from \eqref{eq:sigmafactor} extends to $G$ with a bounded control. More precisely,
    $$
    \sigma:G\to \C,\quad \sigma(\zeta)\deff \frac{1}{1+\ee^{-\msf h(\zeta)}},
    $$
    remains bounded as $\zeta \to \infty$ along $G$, and furthermore
    $$
    \sigma(\zeta)=1+\Boh(\zeta^{-2}),
    $$
    as $\zeta\to \infty$  in a neighborhood of  $(-\infty,0)$.
\end{enumerate}
\end{assumption}

The final goal of this section is to prove the following result.

\begin{theorem}\label{thm:solvability}
    Under Assumption~\ref{assumpt:hsolutionrhp},  RHP \ref{RHP:model} for $\bm \Psi$ admits a unique solution.
\end{theorem}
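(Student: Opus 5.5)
Uniqueness is the easy half and I would dispose of it first. All jump matrices have determinant $1$: on $\Gamma_\pm$ they are unipotent, and on $\Gamma_0$ the determinant is $\sigma\cdot\sigma^{-1}=1$. Hence $\det\bm\Psi$ has no jumps. By Proposition~\ref{Prop:behmodelproblemorigin} (or directly from the $L^2_{\mathrm{loc}}$ condition together with $\alpha>-1$), the singularity at the origin is at worst $\Boh(|\zeta|^{-1/2-\delta})$-type, hence removable, and $\det\bm\Psi\to 1$ at infinity. By Liouville, $\det\bm\Psi\equiv1$. For two solutions $\bm\Psi,\wt{\bm\Psi}$, the ratio $\wt{\bm\Psi}\bm\Psi^{-1}$ has no jumps. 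Its entries are products of $L^2_{\mathrm{loc}}$ boundary values, so they are $L^1$ near the origin, and the singularity there is removable. At infinity the ratio tends to $\bm I$, since the $\zeta^{-\sp_3/4}\bm U_0\ee^{2\zeta^{1/2}\sp_3}$ factors cancel. Hence the ratio is $\equiv\bm I$.

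For existence I would follow the standard route: Fredholm theory plus a vanishing lemma. First normalize at infinity. Outside a disk I set $\bm\Phi=\bm\Psi\,\ee^{-2\zeta^{1/2}\sp_3}$. Inside the disk I use the Bessel parametrix from Appendix~\ref{sec:BesselParametrix}, which is the $\msf h=+\infty$ solution. Call it $\bm\Psi^{\bes}$, and consider $\bm R=\bm\Psi(\bm\Psi^{\bes})^{-1}$. The jumps of $\bm R$ on $\Gamma$ are $\bm\Psi^{\bes}_-(\bm J_{\bm\Psi}\bm J_{\bes}^{-1})(\bm\Psi^{\bes}_-)^{-1}$.

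On $\Gamma_\pm$ the relative jump is $\bm I+(\sigma^{-1}-1)\ee^{\pm\pi\ii\alpha}\bm E_{21}$ conjugated by $\bm\Psi^{\bes}$, which carries the factor $\ee^{4\zeta^{1/2}}$. On $\Gamma_\pm$ we have $\re\zeta^{1/2}<0$, since $\theta_m>\pi/2$, so this is exponentially close to $\bm I$ at infinity. On $\Gamma_0$, condition~(iii) of Assumption~\ref{assumpt:hsolutionrhp} gives $\sigma=1+\Boh(\zeta^{-2})$, and the relative jump is $\bm I+\Boh(|\zeta|^{-2}\cdot|\zeta|^{1/2})$ after conjugation, which is integrable. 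Hence the jump of $\bm R$ minus $\bm I$ lies in $L^1\cap L^2\cap L^\infty(\Gamma)$.

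The jump is also continuous at the origin up to the standard local factors. For that I would use the local form from Proposition~\ref{Prop:behmodelproblemorigin} and remove the factors $\sigma^{-\sp_3/2}$ and $\msf a$ by a further local transformation. Then the associated singular integral operator $1-C_{\bm J}$ on $L^2(\Gamma)$ is Fredholm of index zero, because the jump is a continuous perturbation of $\bm I$ vanishing at infinity, and analytic in $\msf h$-type parameters. So solvability reduces to showing that the homogeneous problem has only the trivial solution.

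The vanishing lemma is the main step. Suppose $\bm\Psi^{(0)}$ solves the problem with asymptotics $\Boh(\zeta^{-1})\zeta^{-\sp_3/4}\bm U_0\ee^{2\zeta^{1/2}\sp_3}$. I would first collapse the jumps on $\Gamma_\pm$ onto the real axis by multiplying in $\mcal S_\pm$ by $\bm I\mp\sigma^{-1}\ee^{\pm\pi\ii\alpha}\bm E_{21}$; this is legitimate since $\sigma$ is analytic and bounded in $G\supset\overline{\mcal S_+\cup\mcal S_-}$. Then I would map $\zeta\mapsto\zeta^{1/2}$ or rotate so that only the jump on $(-\infty,0)$ remains, together with a jump on $(0,\infty)$ coming from $\zeta^{\alpha\sp_3/2}$. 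I expect the resulting jump on the negative axis to be $\begin{pmatrix}\sigma&\ast\\ \ast&\ast\end{pmatrix}$ with a positive definite Hermitian part. This is exactly where the real-valuedness of $\msf h$ on $(-\infty,0)$ is used, since it gives $0<\sigma<1$.

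Next I would set $\bm M(\zeta)=\bm\Psi^{(0)}(\zeta)\,\bm\Psi^{(0)}(\bar\zeta)^*$ on the upper half-plane (in the appropriate variable). Cauchy's theorem, using the decay $\Boh(\zeta^{-3/4})$ squared, integrable at infinity, and the $L^2_{\mathrm{loc}}$ behaviour at $0$ (which relies on $\alpha>-1$), gives $\int_{\R}\bm M_+\,\dd\zeta=0$. Adding the conjugate transpose yields $\int \bm\Psi^{(0)}_-(\bm J+\bm J^*)(\bm\Psi^{(0)}_-)^*=0$. Positivity then forces the first column of $\bm\Psi^{(0)}_-$ to vanish on $\R_-$. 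Carlson/Liouville-type arguments on the scalar functions, which have controlled exponential growth $\ee^{\pm2\zeta^{1/2}}$ of order $1/2$, then give $\bm\Psi^{(0)}\equiv0$.

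The delicate point is arranging the $\alpha$-dependent jump on $(0,\infty)$ and the phases $\ee^{\pm\pi\ii\alpha}$ so that the Hermitian form is positive. I would try the transformation $\zeta^{-\alpha\sp_3/2}$ in the upper and lower half-planes separately, mirroring the classical Bessel vanishing lemma of Kuijlaars–McLaughlin–Van Assche–Vanlessen with $\sigma$ inserted.
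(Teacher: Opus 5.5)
Your overall architecture matches the paper's: uniqueness by Liouville, and existence by reduction to a vanishing lemma. That lemma is proved by collapsing the jumps on $\Gamma_\pm$ onto $(-\infty,0)$, stripping $\ee^{2\zeta^{1/2}\sp_3}$, applying Cauchy's theorem to $\bm X(\zeta)\bm X(\overline{\zeta})^*$, and finishing with positivity and a growth argument. The gap is the decisive step, which you leave as an expectation. The route you propose for it does not lead to positivity.

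\textbf{Why your route stalls.} The correct collapsing factors are $\bm I+\sigma^{-1}\ee^{\pi\ii\alpha}\bm E_{21}$ in $\mcal S_+$ and $\bm I-\sigma^{-1}\ee^{-\pi\ii\alpha}\bm E_{21}$ in $\mcal S_-$; your signs are reversed. After the collapse and multiplication by $\ee^{-2\zeta^{1/2}\sp_3}$, there is no jump on $(0,\infty)$ at all. On $(-\infty,0)$ the jump is $\ee^{\ii\vartheta\sp_3}+\sigma\bm E_{12}$ with $\vartheta=\pi\alpha-4|\zeta|^{1/2}$. Its Hermitian part $\begin{pmatrix}2\cos\vartheta&\sigma\\ \sigma&2\cos\vartheta\end{pmatrix}$ is indefinite even for $\alpha=0$. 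So the obstruction is the oscillating phase $\ee^{\mp 4\ii|\zeta|^{1/2}}$ combined with the triangular structure, not the factors $\ee^{\pm\pi\ii\alpha}$.

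Diagonal right factors such as $\zeta^{\mp\alpha\sp_3/2}$, in either half-plane, keep the jump upper triangular. Positivity would then require compensating the diagonal oscillation, which essentially means undoing the removal of $\ee^{2\zeta^{1/2}\sp_3}$ and ruins integrability at infinity. Even then, with $\zeta^{-\alpha\sp_3/2}\ee^{2\zeta^{1/2}\sp_3}$ the jump becomes $\bm I+\sigma|\zeta|^{\alpha}\bm E_{12}$. Its Hermitian part fails to be semidefinite wherever $\sigma|\zeta|^\alpha>2$.

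\textbf{The missing idea.} Use an \emph{off-diagonal} factor in one half-plane only: set $\bm X=\bm\Phi\begin{pmatrix}0&-1\\1&0\end{pmatrix}$ for $\im\zeta>0$ and $\bm X=\bm\Phi$ for $\im\zeta<0$.
\begin{itemize}
\item On $(0,\infty)$ the jump is $\begin{pmatrix}0&-1\\1&0\end{pmatrix}$, with $\bm J+\bm J^*=\bm 0$.
\item On $(-\infty,0)$ it is $\begin{pmatrix}\sigma&-\ee^{\ii\vartheta}\\ \ee^{-\ii\vartheta}&0\end{pmatrix}$, with $\bm J+\bm J^*=2\sigma\bm E_{11}$.
\end{itemize}
The phases cancel automatically and no power of $\zeta$ is needed. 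This is only semidefinite, not definite as you anticipated. So Cauchy's theorem yields just $\bm X_{11,-}=\bm X_{21,-}=0$ on $(-\infty,0)$, using $\sigma>0$ there.

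\textbf{Steps your last sentence skips.}
\begin{enumerate}
\item By the jump relation, $\bm X_{j2}$ (upper half-plane) and $-\ee^{\pi\ii\alpha+4\zeta^{1/2}}\bm X_{j1}$ (lower half-plane) glue to a function analytic across $(-\infty,0)$ that vanishes there. Hence $\bm X_{j2}\equiv0$ for $\im\zeta>0$ and $\bm X_{j1}\equiv0$ for $\im\zeta<0$.
\item The surviving entries glue into scalar functions $\msf g_j$, analytic off $(-\infty,0]$, with $\msf g_{j,+}=\msf g_{j,-}\ee^{-\pi\ii\alpha+4\ii|\zeta|^{1/2}}$ and $\msf g_j=\Boh(\zeta^{-3/4})$.
\item Only at this point does a growth argument apply. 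The paper uses Phragm\'en--Lindel\"of in the variable $\zeta^{1/2}$. Alternatively, $\msf g_j(\zeta)\zeta^{\alpha/2}\ee^{-2\zeta^{1/2}}$ is entire, polynomially bounded, and decays along $(0,\infty)$, hence vanishes.
\end{enumerate}

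A minor point in your Fredholm sketch: on $\Gamma_\pm$ one has $\re\zeta^{1/2}>0$ because $\theta_m<\pi$, and the conjugated relative jump carries $\ee^{-4\zeta^{1/2}}$. Your two sign errors there cancel.
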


The uniqueness claim in Theorem~\ref{thm:solvability} follows in a simple standard way in RHP theory \cite{deift_book}. We will establish the existence of the solution $\bm \Psi$ by means of the technique of a vanishing lemma. Following the general scheme laid out in \cite{DKMVZ1,FZ92,Zhou89}, a key step in the argument is to show that the “homogeneous” version of the RHP has only the trivial solution. Over here, the relevant homogeneous version of the RHP~\ref{RHP:model} is the following.

\begin{rhp}\label{rhp:homogeneous}
Find a $2\times 2$ matrix-valued function $\bm\Psi^{\rm H}:\C\setminus \Gamma\to \C^{2\times 2}$ with the following properties.
\begin{enumerate}[(i)]
\item $\bm \Psi^{\rm H}$ is analytic on $\C\setminus \Gamma$.
\item The entries of $\bm \Psi_\pm^{\rm H}$ are continuous except possibly at $\zeta=0$, and for $\zeta\in \Gamma\setminus \{0\}$ they satisfy the jump relation 
$$
\bm \Psi^{\rm H}_+(\zeta)=\bm\Psi^{\rm H}_-(\zeta)\bm J_{\bm \Psi}(\zeta),
$$
with the same jump matrix $\bm J_{\bm \Psi}$ as in \eqref{eq:jumpPsi}.

\item As $\zeta\to \infty$,
\begin{equation}
\bm \Psi^{\rm H}(\zeta)=\Boh(\zeta^{-1})\zeta^{-\sp_3/4}\bm U_0\ee^{2\zeta^{1/2}\sp_3}.
\end{equation}
\item As $\zeta\to 0$, 
it behaves as
\begin{equation}\label{eq:PsiHzero}
 \bm \Psi^{\rm H}(\zeta)=\bm \Psi_0^{\rm H}(\zeta)\zeta^{\alpha\sp_3/2}(\bm I+\msf a(\zeta)\bm E_{12})(\bm I-( \chi_{{\mcal S}_+}(\zeta)\ee^{\pi\ii\alpha}- \chi_{{\mcal S}_-}(\zeta)\ee^{-\pi\ii\alpha})\bm E_{21})\sigma(\zeta)^{-\sp_3/2},   
\end{equation}
where $\bm \Psi_0^{\rm H}$ is analytic in a neighborhood of the origin, $\msf a$ and $\sigma$ are defined in \eqref{eq:hatUpsilonasyfactor} and \eqref{eq:sigmafactor}, respectively. 
\end{enumerate}
\end{rhp}

As mentioned, from standard RHP theory it follows that Theorem~\ref{thm:solvability} is proved once we establish the next lemma, which is usually referred to as a vanishing lemma.

\begin{lemma}\label{lem:vanishing}
    Under Assumption~\ref{assumpt:hsolutionrhp}, the unique solution to RHP \ref{rhp:homogeneous} is the trivial solution $\bm \Psi^{\rm H}\equiv 0$.
\end{lemma}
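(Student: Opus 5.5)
The plan is the classical vanishing-lemma argument, as for the Bessel model problem in \cite{KuijlaarsMcLaughlinVanAsscheVanlessen2004} and its deformed variants in \cite{DKMVZ1,FZ92,Zhou89}. First, collapse the jumps on $\Gamma_\pm$ onto $\Gamma_0$ by undoing the triangular factors inside the sectors $\mcal S_\pm$. Then reduce to a problem whose jump has a definite Hermitian part, and conclude with a Cauchy-type integral identity followed by a Liouville/Carlson argument.

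\textbf{Step 1: collapse the jumps.} By Assumption~\ref{assumpt:hsolutionrhp}(i), $\sigma^{-1}=1+\ee^{-\msf h}$ is analytic in $G\supset\overline{\mcal S_+\cup\mcal S_-}$. Set
$$
\bm X\deff\bm\Psi^{\rm H}\bigl(\bm I-\ee^{\pm\pi\ii\alpha}\sigma^{-1}\bm E_{21}\bigr)\quad\text{in }\mcal S_\pm,\qquad \bm X\deff\bm\Psi^{\rm H}\quad\text{in }\mcal S_0 .
$$
Then $\bm X$ has no jump on $\Gamma_\pm$. A direct computation of $(\bm I+\ee^{-\pi\ii\alpha}\sigma^{-1}\bm E_{21})^{-1}\bm J_{\bm\Psi}|_{\Gamma_0}(\bm I-\ee^{\pi\ii\alpha}\sigma^{-1}\bm E_{21})$ gives the upper triangular jump
$$
\bm X_+=\bm X_-\begin{pmatrix}-\ee^{\pi\ii\alpha}&\sigma\\0&-\ee^{-\pi\ii\alpha}\end{pmatrix},\qquad \zeta<0 .
$$
The conditions to check here are twofold. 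At $\zeta\to0$, \eqref{eq:PsiHzero} shows that the modification exactly cancels the piecewise factor, so $\bm X=\bm\Psi_0^{\rm H}\zeta^{\alpha\sp_3/2}(\bm I+\msf a\bm E_{12})\sigma^{-\sp_3/2}$. At $\zeta\to\infty$ in $\mcal S_\pm$, the added multiple of the second column carries $\ee^{-2\zeta^{1/2}}$ times a factor $\sigma^{-1}$. This is where one needs control of $\sigma^{-1}$ at infinity in $G$, i.e.\ $\ee^{-\msf h}$ should not grow faster than $\ee^{c|\zeta|^{1/2}}$. For the intended $\msf h=\sad+(-1)^m\zeta^m$ this holds by \eqref{eq:ineqzetam}. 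In general, I would use the boundedness of $\sigma$ together with $\sigma=1+\Boh(\zeta^{-2})$ near $(-\infty,0)$, and if necessary shrink the sectors by a small angle.

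\textbf{Step 2: definite jump.} Remove the diagonal phases by $\bm Y\deff\bm X\,\zeta^{-\alpha\sp_3/2}$, possibly combined with a scalar factor $\zeta^{1/2}$ to absorb the overall sign. On $(-\infty,0)$ this leaves a jump of the form $\bm I+\sigma(\zeta)|\zeta|^{\alpha}\bm E_{12}$ up to a constant conjugation. The key point is that $\sigma>0$ on $(-\infty,0)$ by Assumption~\ref{assumpt:hsolutionrhp}(ii). I then map the slit plane to a half plane, $\zeta=-z^2$ with $\im z>0$, or equivalently rotate so that $(-\infty,0)$ is unfolded onto $\R$. There I consider
$$
\bm H(z)=\bm Y(z)\,\bm Y(\bar z)^*\quad\text{for }\im z>0 .
$$
Cauchy's theorem on a large semicircle, using the $\Boh(\zeta^{-1})$ decay from condition (iii) of RHP~\ref{rhp:homogeneous} and the local behavior at the origin (square integrable because $\alpha>-1$), gives $\int_\R\bm Y_+(\bm J+\bm J^*)\bm Y_+^*=0$. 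Positivity of $\sigma|\zeta|^\alpha$ makes the Hermitian part of the jump, after suitable normalization, positive semidefinite with a definite diagonal entry. Hence one column of $\bm Y_+$ vanishes identically on the contour. Using the jump relation, the corresponding column of $\bm Y_-$ vanishes as well.

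\textbf{Step 3: conclude.} With one column gone, the remaining entries satisfy scalar problems: they are analytic off the half-line, have trivial or triangular scalar jumps, and have controlled exponential growth of type $\ee^{2|\zeta|^{1/2}}$. Rewritten in the $z=\zeta^{1/2}$ variable, these are bounded analytic functions in a half plane that decay along the boundary faster than any exponential allowed by the growth bound. Carlson's theorem, or Phragmén–Lindelöf applied to a product, then forces them to vanish. Hence $\bm\Psi^{\rm H}\equiv0$.

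I expect the main obstacle to be Step 2, together with the growth control in Step 1. One must find the exact normalization, namely the choice of branch of $\zeta^{\pm\alpha/2}$, the sign, and the map to the half plane, for which the Hermitian part of the transformed jump is genuinely semidefinite for all $\alpha>-1$, including integer $\alpha$ where $\msf a$ carries a logarithm. At the same time, the boundary integrals near $0$ and $\infty$ must vanish. I would first carry out the computation for $\sigma\equiv1$, where it must reduce to the known Bessel vanishing lemma, and then verify that $\sigma>0$ on $(-\infty,0)$ preserves each inequality.
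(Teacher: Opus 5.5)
\paragraph{Verdict.} Your overall architecture matches the paper's: collapse the $\Gamma_\pm$ jumps, use a positivity identity to kill one column on $(-\infty,0)$, then finish with a scalar problem. However, Step~2 as you set it up fails on two counts, and that step is the heart of the lemma.

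\paragraph{The gap in Step 2.} First, you never divide out $\ee^{2\zeta^{1/2}\sp_3}$. Condition (iii) of RHP~\ref{rhp:homogeneous} reads $\Boh(\zeta^{-1})\zeta^{-\sp_3/4}\bm U_0\ee^{2\zeta^{1/2}\sp_3}$. Since $\overline{(\bar\zeta)^{1/2}}=\zeta^{1/2}$, your $\bm Y(\zeta)\bm Y(\bar\zeta)^*$ has the form $\bm A(\zeta)\ee^{4\zeta^{1/2}\sp_3}\bm A(\bar\zeta)^*$ with $\bm A$ of polynomial size, so it grows like $\ee^{4\re\zeta^{1/2}}$. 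The $\Boh(\zeta^{-1})$ you invoke is only the prefactor, so Cauchy's theorem on large arcs is unavailable. The substitution $\zeta=-z^2$ only reparametrizes the same function and changes nothing.

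Second, conjugating the diagonal phases away with $\zeta^{-\alpha\sp_3/2}$ destroys the positivity you need. For $\bm J=\bm I+\sigma|\zeta|^\alpha\bm E_{12}$ one gets $\bm J+\bm J^*=2\bm I+\sigma|\zeta|^\alpha(\bm E_{12}+\bm E_{21})$, whose eigenvalues are $2\pm\sigma|\zeta|^\alpha$. This is indefinite wherever $\sigma|\zeta|^\alpha>2$, e.g.\ for large $|\zeta|$ when $\alpha>0$, or small $|\zeta|$ when $\alpha<0$. So no column can be forced to vanish.

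There is also a sign slip in Step~1. With the paper's orientation, $\mcal S_0$ is the $+$ side of $\Gamma_\pm$. The collapsing factors must then be $\bm I\pm\ee^{\pm\pi\ii\alpha}\sigma^{-1}\bm E_{21}$ in $\mcal S_\pm$, and the collapsed jump is $\ee^{\pi\ii\alpha\sp_3}+\sigma\bm E_{12}$. Your definition and your displayed conjugation use mutually inconsistent signs.

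\paragraph{The missing idea.} You need the opposite normalization: keep the unimodular phases, remove the exponential, and move the diagonal off the diagonal.

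\begin{enumerate}
\item Right-multiply by $\ee^{-2\zeta^{1/2}\sp_3}$. The result is $\Boh(\zeta^{-3/4})$ at infinity, with jump $\ee^{\ii\theta\sp_3}+\sigma\bm E_{12}$ on $(-\infty,0)$, where $\theta=\pi\alpha-4|\zeta|^{1/2}\in\R$. This decay needs $\sigma^{-1}\ee^{-4\zeta^{1/2}}$ to stay bounded in $\mcal S_\pm$. Your concern there is legitimate, and the paper passes over it silently.
\item Right-multiply by $\begin{pmatrix}0&-1\\1&0\end{pmatrix}$ in $\im\zeta>0$ only. The jump contour becomes all of $\R$. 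The jump is $\begin{pmatrix}0&-1\\1&0\end{pmatrix}$ on $(0,\infty)$ and $\begin{pmatrix}\sigma&-\ee^{\ii\theta}\\ \ee^{-\ii\theta}&0\end{pmatrix}$ on $(-\infty,0)$. Hence $\bm J+\bm J^*$ equals $\bm 0$ and $2\sigma\bm E_{11}$ respectively.
\item Now $\bm X(\zeta)\bm X(\bar\zeta)^*$ is $\Boh(\zeta^{-3/2})$ at infinity and integrable at $0$ because $\alpha>-1$. Cauchy's theorem in the $\zeta$-plane gives $\int_{-\infty}^0\sigma\,\bm X_-\bm E_{11}\bm X_-^*\,\dd\zeta=\bm 0$.
\item Since $\sigma>0$ (because $\msf h$ is real on $(-\infty,0)$), the first column of $\bm X_-$ vanishes there.
\item The jump relation and analytic continuation then give $\bm X_{j1}\equiv0$ in $\im\zeta<0$ and $\bm X_{j2}\equiv0$ in $\im\zeta>0$. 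What remains is a scalar function $\msf g$ with jump $\ee^{-\pi\ii\alpha+4\ii|\zeta|^{1/2}}$.
\end{enumerate}

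This last scalar problem is what your Step~3 should address. Note that $\msf g\,\ee^{-2\zeta^{1/2}}\zeta^{\alpha/2}$ has no jump, a removable singularity at $0$, and polynomial growth. It is therefore a polynomial that decays along $(0,\infty)$, hence zero.
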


This way, for the remainder of this section our work will be focused towards proving Lemma~\ref{lem:vanishing}, which will be established following a standard route of making certain transformations of the homogeneous RHP, and then applying the theory of analytic functions to prove that the entries of a transformed version of $\bm \Psi^{\rm H}$ are all identically zero.

As a first step, we explore that the function $\msf h$ is analytic on a neighborhood of ${\mcal S}_+\cup {\mcal S}_-$, and transform
$$
\bm \Phi^{\rm H}(\zeta)\deff
\bm\Psi^{\rm H}(\zeta)\left( \bm I+\sigma(\zeta)^{-1} \left( \ee^{\pi \ii\alpha}\chi_{{\mcal S}_+}(\zeta)-\ee^{-\pi \ii \alpha}\chi_{{\mcal S}_-}(\zeta)\right)\bm E_{21}\right)\ee^{-2\zeta^{1/2}\sp_3}.
$$
This transformation has the effect of removing the jumps outside the real axis, and also to remove the exponential behavior from the asymptotics at $\infty$. More precisely, a direct calculation shows that the function $\bm \Phi^{\rm H}$ solves the following RHP.
\begin{rhp}Find a $2\times 2$ matrix-valued function $\bm\Phi^{\rm H}:\C\setminus (-\infty,0]\to \C^{2\times 2}$ with the following properties.
\begin{enumerate}[(i)] 
\item $\bm \Phi^{\rm H}$ is analytic on $\C\setminus (-\infty,0]$.
\item The entries of $\bm \Phi_\pm^{\rm H}$ are continuous except possibly at the origin, and for $\zeta\in (-\infty,0)$ they satisfy the jump relation 
%
\begin{equation}\label{eq:PhiHjump}
 \bm \Phi^{\rm H}_+(\zeta)=\bm\Phi^{\rm H}_-(\zeta) \left( \ee^{(\pi \ii \alpha-4\ii|\zeta|^{1/2})\sp_3}+\sigma(\zeta)\bm E_{12}\right ).    
\end{equation}

\item As $\zeta\to \infty$,
\begin{equation}
\bm \Phi^{\rm H}(\zeta)=\Boh(\zeta^{-1})\zeta^{-\sp_3/4}\bm U_0\Boh(1)=\Boh(\zeta^{-3/4}).
\end{equation}
\item As $\zeta\to 0$, it behaves as
$$
\bm \Phi^{\rm H}(\zeta)=\bm \Psi_0^{\rm H}(\zeta)\zeta^{\alpha\sp_3/2}(\bm I+\msf a(\zeta)\bm E_{12})\Boh(1),
$$
where $\bm \Psi_0^{\rm H}$ is analytic in a neighborhood of the origin as given in \eqref{eq:PsiHzero}.
\end{enumerate}
\end{rhp}

Next, set
$$
\bm X^{\rm H}(\zeta)\deff 
\bm \Phi^{\rm H}(\zeta)  
\begin{cases}
\begin{pmatrix}
    0 & -1 \\ 1 & 0
\end{pmatrix}, & \im \zeta>0, \\
\bm I, & \im \zeta <0.
\end{cases}
$$

With this transformation, we move the diagonal entries in \eqref{eq:PhiHjump} to off-diagonal entries, and the matrix $\bm X^{\rm H}$ is a solution to the following RHP.

\begin{rhp}Find a $2\times 2$ matrix-valued function $\bm X^{\rm H}:\C\setminus \R\to \C^{2\times 2}$ with the following properties.
\begin{enumerate}[(i)]
\item $\bm X^{\rm H}$ is analytic on $\C\setminus \R$.
\item The entries of $\bm X_\pm^{\rm H}$ are continuous except possibly at the origin, and for $\zeta \in \R$ they satisfy the jump relation 
\begin{equation}\label{eq:XHjump}
\bm X^{\rm H}_+(\zeta)=\bm X^{\rm H}_-(\zeta)\bm J_{\bm X^{\rm H}}(\zeta),
\end{equation}
with jump matrix 
\begin{equation}\label{def:JXH}
    \bm J_{\bm X^{\rm H}}(\zeta)=
\begin{cases}
    \begin{pmatrix}
        0 & -1 \\ 1 & 0 
    \end{pmatrix}, & \zeta>0, \\
    \begin{pmatrix}
        \sigma(\zeta) & -\ee^{\pi \ii\alpha - 4\ii |\zeta|^{1/2}} \\ \ee^{-\pi \ii\alpha +4\ii |\zeta|^{1/2}} & 0
    \end{pmatrix}, & \zeta<0.
\end{cases}
\end{equation}

\item As $\zeta\to \infty$,
\begin{equation}
\bm X^{\rm H}(\zeta)=\Boh(\zeta^{-3/4}). 
\end{equation}
\item As $\zeta\to 0$, it behaves as
$$
\bm X^{\rm H}(\zeta)=\bm \Psi_0^{\rm H}(\zeta)\zeta^{\alpha\sp_3/2}(\bm I+\msf a(\zeta)\bm E_{12})
\Boh(1),
$$
where $\bm \Psi_0^{\rm H}$ is analytic in a neighborhood of the origin as given in \eqref{eq:PsiHzero}.
\end{enumerate}
\end{rhp}

We are now ready to prove Lemma~\ref{lem:vanishing} and thus conclude Theorem~\ref{thm:solvability}.

\begin{proof}[Proof of Lemma~\ref{lem:vanishing}]

Denote by $M^*$ the adjoint (complex transpose) of a matrix $M$. The jump matrix $\bm J_{\bm X^{\rm H}}$ enjoys the symmetry properties
\begin{equation}\label{eq:JXH}
\bm J_{\bm X^{\rm H}}(\zeta)+\bm J_{\bm X^{\rm H}}(\zeta)^*=\begin{cases}
 \bm 0, & \zeta>0,
 \\
2\sigma(\zeta)\bm E_{11}, & \zeta<0.
\end{cases}
\end{equation}

To explore these symmetries, let us set
$$
\bm H(\zeta)\deff \bm X^{\rm H}(\zeta)\bm X^{\rm H}(\overline{\zeta})^*,\quad \zeta\in \C\setminus \R.
$$
Then $\bm H$ is analytic on $\C\setminus \R$ and has the behavior
$$
\bm H(\zeta)=\Boh(\zeta^{-3/2}),\quad \zeta\to \infty.
$$
Using \eqref{eq:hatUpsilonasyfactor} and the definitions of $\bm H$ and $\bm X^{\rm H}$, it also follows that 
$$
\bm H(\zeta)=\Boh(\zeta^\alpha (1+\msf c_\alpha \log \zeta) ),\quad \zeta\to 0,
$$
for some constant  $\msf c_\alpha$ depending on $\alpha$ with $\msf c_\alpha=0$ if $\alpha\notin \Z$. In particular, this transformation regularizes the behavior at the origin and at $\infty$, to an integrable behavior. Using Cauchy's Theorem, we then obtain
$$
\bm 0=\int_\R \bm H_+(\zeta)\, \dd\zeta=\int_\R \bm X_-^{\rm H}(\zeta)\bm J_{\bm X^{\rm H}}(\zeta)(\bm X^{\rm H}(\overline{\zeta})^*)_+ \, \dd \zeta
=\int_\R \bm X_-^{\rm H}(\zeta)\bm J_{\bm X^{\rm H}}(\zeta)\bm X_-^{\rm H}(\zeta)^* \, \dd \zeta. 
$$
Adding the integral on the right-hand side with its adjoint and then using \eqref{eq:JXH} we obtain the identity
$$
\int_{-\infty}^0 \bm X_-^{\rm H}(\zeta)\bm E_{11}\bm X_-^{\rm H}(\zeta)^* \sigma(\zeta) \, \dd \zeta= \bm 0.
$$
Writing entry-wise, and using that $\sigma>0$ along the real axis, we have
$$
\bm X^{\rm H}_{11,-}(\zeta)=\bm X^{\rm H}_{21,-}(\zeta)=0,\quad \zeta<0.
$$
With the aid of the jump relation \eqref{eq:XHjump} and \eqref{def:JXH} satisfied by $\bm X^{\rm H}$, we further conclude that
$$
\bm X^{\rm H}_{12,+}(\zeta)=\bm X^{\rm H}_{22,+}(\zeta)=0,\quad \zeta<0.
$$
Moreover, with the principal branch of $\zeta^{1/2}$,
the jump relation of $\bm X^{\rm H}$ also implies that for $j=1,2$, the function
$$
\zeta\mapsto 
\begin{cases}
    \bm X_{j2}^{\rm H}(\zeta),& \zeta \in \mcal S_+, \\ 
    -\ee^{\pi \ii\alpha +4\zeta^{1/2}}\bm X_{j1}^{\rm H}(\zeta),& \zeta \in \mcal S_-,
\end{cases}
$$
is analytic on a neighborhood of the interval $(-\infty,0)$, and by the identities we just obtained it vanishes on this interval. Hence, these functions are identically zero on $S_\pm$, and by analytic continuation we conclude that
\begin{equation}\label{eq:vanishingX1}
\bm X_{11}^{\rm H}(\zeta)=\bm X_{21}^{\rm H}(\zeta)=0, \, \im \zeta<0,\quad \text{and}\quad \bm X_{12}^{\rm H}(\zeta)=\bm X_{22}^{\rm H}(\zeta)=0, \, \im \zeta>0.
\end{equation}

As a final step, we now improve these identities to the vanishing of $\bm X^{\rm H}$ in the whole complex plane. For that, we account for \eqref{eq:vanishingX1} and write the jump condition \eqref{eq:XHjump} and \eqref{def:JXH}  for $\bm X^{\rm H}$ as
\begin{alignat*}{2}
& \bm X^{\rm H}_{j1,+}(\zeta)=\left(\bm X^{\rm H}_{j2}(\zeta) \ee^{-\pi \ii \alpha-4\zeta^{1/2}}\right)_-, \quad && \zeta<0, \\
& \bm X^{\rm H}_{j1,+}(\zeta)=\bm X^{\rm H}_{j2,-}(\zeta), && \zeta>0,
\end{alignat*}
where in the first line we use the principal branch of the square-root and $j=1,2$. Such jump conditions motivate us to define the scalar functions
\begin{equation}\label{def:gjH}
\msf g_j^{\rm H}(\zeta)\deff
\begin{cases}
\bm X_{j1}^{\rm H}(\zeta), & \im \zeta>0, \\
\bm X_{j2}^{\rm H}(\zeta), & \im \zeta<0.
\end{cases} 
\end{equation}

The properties of $\bm X^{\rm H}$ we just discussed then imply that $\msf g_j^{\rm H}$ is a solution to the following scalar RHP.

\begin{rhp} For $j=1,2$, the scalar function $\msf g_j^{\rm H}$  defined in \eqref{def:gjH} satisfies the following conditions.
\begin{enumerate}[(i)]
\item $\msf g_j^{\rm H}$ is analytic on $\C\setminus (-\infty,0]$.
\item For $\zeta<0$, it satisfies the jump relation
\begin{equation}\label{eq:jumpgjH}
    \msf g_{j,+}^{\rm H}(\zeta)=\msf g_{j,-}^{\rm H}(\zeta)\ee^{-\pi \ii \alpha +4\ii |\zeta|^{1/2}}.
\end{equation}
\item As $\zeta \to \infty$,
$$
\msf g_j^{\rm H}(\zeta)=\Boh(\zeta^{-3/4}).
$$
\item As $\zeta\to 0$,
$$
\msf g_1^{\rm H}(\zeta)=\Boh(\zeta^{\alpha/2}(1+\msf c_\alpha\log\zeta)), \quad \msf g_2^{\rm H}(\zeta)=\Boh(\zeta^{-\alpha/2}(1+\msf c_\alpha\log\zeta)),
$$
where $\msf c_\alpha\neq 0$ only for $\alpha\in \Z_{\geq 0}$.
\end{enumerate}
\end{rhp}
Let us now redefine
$$
\wh{\msf g}_j^{\rm H}(\zeta)=\msf g_j^{\rm H}(\zeta^2),\quad \re \zeta>0.
$$
Then $\wh{\msf g}_j^{\rm H}$ is analytic on $\re\zeta>0$ and it admits an analytic continuation to a sector containing the imaginary axis. In fact, the jump condition \eqref{eq:jumpgjH} for $\msf g_j^{\rm H}$ along $(-\infty,0)$ implies that for any $\theta \in (0,\pi/2)$, the function
$$
\wh{\msf g}_j^{\rm H}(\zeta)=
\begin{cases}
\msf g_j^{\rm H}(\zeta^2),&  \re \zeta>0, \\
\msf g_j^{\rm H}(\zeta^2)\ee^{-\pi \ii \alpha + 4\zeta },&  \frac{\pi}{2}< \arg\zeta <\frac{\pi}{2}+\theta, \\
\msf g_j^{\rm H}(\zeta^2)\ee^{\pi \ii \alpha + 4\zeta },&  -\frac{\pi}{2}-\theta < \arg\zeta <-\frac{\pi}{2}, 
\end{cases}
$$
is analytic across $\ii\R\setminus \{0\}$, and satisfies
$$
\wh{\msf g}_j^{\rm H}(\zeta)=\Boh(\zeta^{-3/2}), \; \zeta\to \infty \text{ with } -\frac{\pi}{2}-\theta<\arg\zeta<\frac{\pi}{2}+\theta,
$$
as well as
$$
\wh{\msf g}_1^{\rm H}(\zeta)=\Boh(\zeta^{\alpha}(1+\msf c_\alpha\log \zeta)),\quad \wh{\msf g}_2^{\rm H}(\zeta)=\Boh(\zeta^{-\alpha}(1+\msf c_\alpha\log \zeta)),\quad \zeta\to 0.
$$
Finally, let us further introduce
$$
\wt{\msf g}_j^{\rm H}(\zeta)=\left(\frac{\zeta}{1+\zeta}\right)^{|\alpha|+1}\wh{\msf g}_j^{\rm H}(\zeta).
$$
It follows that $\wt {\msf g}_j^{\rm H}$ is analytic on $-\frac{\pi}{2}-\theta<\arg\zeta<\frac{\pi}{2}+\theta$, it goes to $0$ as $\zeta\to \infty$ along the same sector, and it is also bounded as $\zeta\to 0$. Applying Phragmén-Lindel\"of principle \cite[Section~5.1, in particular Exercise~5.1.2]{Simon2015}, it follows that $\wt{\msf g}_j^{\rm H}\equiv 0$.

Tracing back the transformations, this shows that the entries of $\bm X^{\rm H}$ are all identically zero, as we wanted.
\end{proof}

\subsection{Differential equations for the model problem with some particular data}\label{sec:modelIntSys}\hfill 

Consider the model problem with the particular choice
\begin{equation}\label{def:hinfty}
\msf h(\zeta)=\msf h_\infty(\zeta\mid \sad,\uad)\deff \sad +\msf u (-1)^m \zeta^m,\quad \sad\in \R,\quad \msf u>0.
\end{equation}
The index $\infty$ is used because later this choice $\msf h_\infty$ will appear as a limiting data.

Let
\begin{equation}\label{def:Psiinfty}
    \bm \Psi_\infty(\zeta)\deff \bm \Psi(\zeta\mid \msf h=\msf h_\infty,\alpha)
\end{equation}
be the associated solution to the RHP~\ref{RHP:model}. Thanks to Theorem~\ref{thm:solvability}, the solution $\bm \Psi_\infty$  exists uniquely. The goal of this subsection is to connect certain quantities associated to $\bm \Psi_\infty$ to integrable systems.
Recall that $\sigma_{\bm \Phi}$ was introduced in \eqref{deff:sigmaPhi}. It may be alternatively expressed as
\begin{equation}\label{eq:sigmaphiinfity}
    \sigma_{\bm\Phi}(\zeta)=\frac{1}{1+\ee^{-\msf h_\infty(\zeta/\msf u^{1/m})}}.
\end{equation}

For the following calculations, it turns out to be convenient to collapse the jumps outside the real axis down to the negative axis. This process is performed through the modification of $\bm \Psi_\infty$ given by
\begin{equation}\label{eq:PsiinftytoPhiinfty}
{\bm \Phi}_\infty(\zeta)\deff \msf u^{-\sp_3/(4m)}\bm \Psi_\infty(\zeta/\msf u^{1/m})\left( \bm I+\sigma_{\bm \Phi}(\zeta)^{-1}\left(\ee^{\pi \ii \alpha}\chi_{\mcal S_+}(\zeta)-\ee^{-\pi \ii \alpha}\chi_{\mcal S_-}(\zeta)\right)\bm E_{21} \right),\quad \zeta\in \C\setminus \Gamma,
\end{equation}
where we recall that the sectors $\mcal S_\pm$ are displayed in Figure~\ref{fig:Gammapm}. It then follows from RHP \ref{RHP:model} for $\bm \Psi_\infty$ and \eqref{def:Psiinfty} that $\bm \Phi_\infty$ satisfies the following RHP.

\begin{rhp}\label{rhp:Phiinfty}
Find a $2\times 2$ matrix-valued function $\bm \Phi_\infty:\C\setminus (-\infty,0]\to \C^{2\times 2}$ with the following properties.
\begin{enumerate}[(i)]
\item $\bm \Phi_\infty$ is analytic on $\C\setminus (-\infty,0]$.
\item The entries of $\bm \Phi_{\infty,\pm}$ are continuous along $(-\infty,0)$ and they are related by
\begin{equation}\label{eq:jumpPhiInfty}
\bm \Phi_{\infty,+}(\zeta)=\bm \Phi_{\infty,-}(\zeta)\bm J_{\bm \Phi_\infty}(\zeta), \quad \textrm{with} \quad \bm J_{\bm \Phi_\infty}(\zeta)\deff \ee^{\pi\ii\alpha\sp_3}+\sigma_{\bm \Phi}(\zeta)\bm E_{12}.
\end{equation}
\item There exists a matrix $\bm \Phi_{\infty,1}$, independent of $\zeta$, such that as $\zeta\to \infty$,
\begin{equation}\label{eq:Phiinftyinfty}
    \bm \Phi_\infty(\zeta)=\left( \bm I+\frac{\bm \Phi_{\infty,1}}{\zeta}+\Boh(\zeta^{-2}) \right)\zeta^{-\sp_3/4}\bm U_0\ee^{\xad\zeta^{1/2}\sp_3}\left( \bm I+(\ee^{\pi \ii \alpha}\chi_{\mcal S_+}(\zeta)-\ee^{-\pi \ii \alpha}\chi_{\mcal S_-}(\zeta))\bm E_{21} \right),
\end{equation}
where we have set
\begin{equation}\label{eq:uxrelation}
\xad\deff 2\msf u^{-\frac{1}{2m}}>0.
\end{equation}

\item There exists a matrix-valued function $\bm \Phi_{\infty,0}$ which is analytic near $\zeta=0$ for which
\begin{equation}\label{eq:Phiinfty0}
    \bm \Phi_{\infty}(\zeta)=\bm \Phi_{\infty,0}(\zeta)\zeta^{\alpha\sp_3/2}\left(\bm I+\msf a(\zeta)\bm E_{12}\right)\sigma_{\bm\Phi}(\zeta)^{-\sp_3/2},\quad \zeta\to 0,
\end{equation}
where $\msf a$ is as in \eqref{eq:hatUpsilonasyfactor}.
\end{enumerate}
\end{rhp}

For later use, we record that the coefficient $\bm \Phi_{\infty,1}$ is related with the coefficient $\bm\Psi_{1}=\bm\Psi_{\infty,1}(\msf h=\msf h_\infty)$ appearing in \eqref{eq:asymptRHPpsimodel} through the identity
\begin{equation}\label{eq:relationPhi1inftyPsi1infty}
\bm\Phi_{\infty,1}=\msf u^{1/m} \msf u^{-\sp_3/(4m)}\bm\Psi_{\infty,1}\msf u^{\sp_3/(4m)}=\frac{4}{\xad^2} 2^{-\sp_3/2}\xad^{\sp_3/2}\bm\Psi_{\infty,1}\xad^{-\sp_3/2}2^{\sp_3/2}.
\end{equation}

There are two major reasons for the change $\bm \Psi_\infty\mapsto \bm \Phi_\infty$. First, with the introduction of the characteristic functions in the right-most side of \eqref{eq:PsiinftytoPhiinfty} the matrix $\bm \Phi_\infty$ has jumps only along $(-\infty,0)$. This structure will be useful later when deriving the mentioned nonlocal equation. Second, as a consequence of the change of variables $\zeta\mapsto \zeta/\msf u^{1/m}$ in \eqref{eq:PsiinftytoPhiinfty}, we obtain that the jump matrix $\bm J_{\bm\Phi_\infty}$ for $\bm \Phi_\infty$ is independent of the external parameter $\xad$. 

As a consequence of the latter property, we obtain a differential equation for $\bm \Phi_\infty$ in the variable $\xad$. The argument is standard and goes as follows. Both $\bm \Phi_\infty$ and its derivative $\partial_\xad\bm\Phi_\infty$ have the same jump relations, and therefore the matrix $\bm A(\zeta)\deff \partial_\xad \bm \Phi_\infty(\zeta)\bm \Phi_\infty(\zeta)^{-1}$ is analytic across $(-\infty,0)$. In other words, $\bm A$ has an isolated singularity at $\zeta=0$, and it is analytic elsewhere in $\C$. From the explicit behavior of $\bm \Phi_\infty$ at $\zeta=0$ in \eqref{eq:Phiinfty0} we see that the singularity of $\bm A$ at $\zeta=0$ is removable, that is, all the entries of $\bm A$ are entire functions. From the behavior of $\bm \Phi_\infty$ at $\zeta=\infty$ in \eqref{eq:Phiinftyinfty} we obtain an explicit expression for $\bm A$ in terms of $\bm \Phi_{\infty,1}$. Writing
\begin{equation}\label{deff:pqr}
\bm \Phi_{\infty,1}=
\begin{pmatrix}
    \msf q & -\ii \msf p \\ \ii \msf r & -\msf q
\end{pmatrix}, \quad \text{with} \quad  \msf q=\msf q(\sad,\xad), \msf r=\msf r(\sad,\xad) \quad \text{and} \quad \msf p=\msf p(\sad,\xad)=\frac{2\ii }{\xad }(\bm \Psi_{\infty,1})_{12},
\end{equation}
we obtain
\begin{equation}\label{eq:LaxeqtionPsi}
\partial_\xad \bm \Phi_{\infty}(\zeta)=\bm A(\zeta)\bm \Phi_{\infty}(\zeta),\quad \text{with}\quad 
\bm A(\zeta)\deff 
\begin{pmatrix}
    \msf p & -\ii \\ \ii (\zeta -2\msf q) & -\msf p
\end{pmatrix}.
\end{equation}
Moreover, the $1/\zeta$ term in the large $\zeta$-expansion of $\partial_\xad \bm \Phi_{\infty}(\zeta) \bm \Phi_{\infty}(\zeta)^{-1}$
must vanish, and the $(1,2)$-entry therein gives us 
\begin{equation}\label{eq:psquare}
    \msf p^2=2\msf q+\partial_\xad \msf p.
\end{equation}

Although the matrix $\bm A$ is rather simple-looking, it is convenient to transform $\bm \Phi_{\infty}$ a little further, in order to obtain an ODE with a matrix coefficient with vanishing diagonal. To that end, we transform
\begin{equation}\label{eq:UpsilonPhi}
\bm \Upsilon(\zeta)\deff (\bm I+\ii \msf p \bm E_{21})\bm \Phi_\infty(\zeta).
\end{equation}
A direct calculation using \eqref{eq:LaxeqtionPsi} and \eqref{eq:psquare} then shows that $\bm \Upsilon$ satisfies the ODE
\begin{equation}\label{eq:LaxeqtionUpsilon}
\partial_\xad \bm \Upsilon(\zeta)=
\bm B(\zeta)\bm\Upsilon(\zeta),\qquad \bm B(\zeta)\deff \bm B_1\zeta+\bm B_0, \quad \bm B_1\deff \ii \bm E_{21},\quad \bm B_0\deff 2\ii \partial_\xad\msf p \bm E_{21}-\ii\bm E_{12}.
\end{equation}

This new ODE implies in particular the relation
\begin{equation}\label{eq:Upsilonrowrelation}
\bm \Upsilon_{2k}(\zeta)=\ii \partial_\xad\bm \Upsilon_{1k}(\zeta),\quad k=1,2.
\end{equation}
Furthermore, taking one more $\partial_\xad$-derivative in \eqref{eq:LaxeqtionUpsilon}, looking entrywise, and setting
\begin{equation}\label{eq:PhiPhikUpsilonk}
\bm \Upsilon_{1k}(\zeta) =\Phi_k(\zeta),\quad \Phi\deff \Phi_1=(\bm \Phi_\infty)_{11},
\end{equation}
we obtain
\begin{equation}\label{eq:PhikpartialPhik}
\partial^2_\xad \Phi_{k}(\zeta)=\left(\zeta+ 2\partial_\xad\msf p\right)\Phi_k(\zeta),\quad k=1,2.
\end{equation}

The function $\Phi$ plays a major role in our results. For all the calculations that follows, we understand
$$
\Phi(\zeta)=\Phi_+(\zeta)\quad \text{for }\zeta<0,
$$
having in mind also that by the jump condition \eqref{eq:jumpPhiInfty},
\begin{equation}\label{eq:jumpphiffction}
\Phi(\zeta)=\Phi_+(\zeta)=\ee^{\pi\ii\alpha}\Phi_-(\zeta),\quad \zeta<0.
\end{equation}
Observe that the very definition of $\Phi$ in \eqref{eq:PhiPhikUpsilonk} in terms of $\bm\Phi_\infty$, when combined with \eqref{eq:Phiinftyinfty}, yields the asymptotic behaviors
\begin{equation}\label{eq:asymptPhizetainftyplus}
\Phi(\zeta)=\left(1+\Boh(\zeta^{-1})\right)\frac{\zeta^{-1/4}}{\sqrt{2}}\ee^{\xad \zeta^{1/2}},\quad \zeta\to +\infty,
\end{equation}
and
\begin{equation}\label{eq:asymptPhizetainftyminus}
\Phi(\zeta)=\left(1+\Boh(\zeta^{-1})\right)\sqrt{2}\ee^{\pi\ii\alpha/2+\pi\ii/4}\zeta_+^{-1/4}\cos\left( \xad|\zeta|^{1/2}-\frac{\pi}{4}-\frac{\pi\alpha}{2} \right),\quad \zeta\to -\infty.
\end{equation}

In summary, equation~\eqref{eq:LaxeqtionUpsilon} is a first order ODE for $\bm \Upsilon$, which yields a second order ODE for its entries, given by \eqref{eq:PhikpartialPhik} and the relation \eqref{eq:Upsilonrowrelation}. Following the usual dressing method for RHPs, one would be tempted in obtaining a similar ODE for $\bm \Upsilon$ in the $\zeta$ variable, and from which a Lax pair whose compatibility condition should yield scalar ODEs/PDEs of interest. However, the key feature for obtaining \eqref{eq:LaxeqtionUpsilon} was the fact that the jump matrix $\bm J_{\bm \Upsilon}=\bm J_{\bm \Phi_\infty}$ for $\bm \Upsilon$ is independent of $\xad$, and because it is not constant as a function of $\zeta$ or $\sad$ it is not possible to obtain the ODE in the $\zeta$ or $\sad$ variables.

To overcome this issue, we follow a modification of the dressing method\footnote{We thank Thomas Chouteau for introducing this technique to us.}.
Start by defining
$$
\bm W_0=\ii \partial_\sad \msf p \bm E_{21},
$$
which is introduced so that
\begin{equation}
\partial_\sad \bm \Upsilon(\zeta)\bm \Upsilon(\zeta)^{-1}-\bm W_0=\Boh(\zeta^{-1}),\quad \zeta\to \infty.
\end{equation}
For later reference, we state the expansion
\begin{equation}\label{eq:behpreVinfty}
\partial_\sad \bm \Upsilon(\zeta)\bm \Upsilon(\zeta)^{-1}=\bm W_0+ \frac{1}{\zeta}\bm W_{-1}+ \Boh(\zeta^{-2}),\quad \zeta\to \infty,
\end{equation}
with
$$
\bm W_{-1}\deff 
\begin{pmatrix}
    \partial_\sad \msf q-\msf p\partial_\sad \msf p & -\ii \partial_\sad \msf p \\ 
    \ii \left( \msf p^2\partial_\sad \msf p-2\msf p\partial_\sad \msf q-\partial_\sad \msf r \right) & -\partial_\sad \msf q+\msf p\partial_\sad \msf p
\end{pmatrix},
$$
as well as
\begin{multline}\label{eq:behpreVorigin}
\partial_\sad \bm \Upsilon(\zeta)\bm \Upsilon(\zeta)^{-1}-\bm W_0
\\
=\msf a(\zeta)\zeta^\alpha\partial_\sad \log\sigma_{\bm \Phi}(\zeta) (\bm I+\ii\msf p\bm E_{21})\bm \Phi_{\infty,0}(0)\bm E_{12}\bm \Phi_{\infty,0}(0)^{-1}(\bm I-\ii\msf p\bm E_{21})+\Boh(1),\quad \zeta\to 0.
\end{multline}

Moving further, introduce
\begin{equation}\label{eq:deffVlaseqtion}
\bm V(\zeta)\deff \bm \Upsilon(\zeta)^{-1}\partial_\sad\bm \Upsilon(\zeta)-\bm \Upsilon(\zeta)^{-1}\bm W_0\bm \Upsilon (\zeta),\quad \zeta\in \C\setminus (-\infty,0],
\end{equation}
so that
\begin{equation}\label{eq:LaxeqtionUpsilonVs}
\partial_\sad \bm\Upsilon(\zeta)=\bm W_0\bm \Upsilon(\zeta)+\bm\Upsilon(\zeta)\bm V(\zeta),\quad \zeta\in \C\setminus (-\infty,0].
\end{equation}
Then, the product $\bm \Upsilon\bm V\bm\Upsilon^{-1}$ is analytic on $\C\setminus (-\infty,0]$, and thanks to \eqref{eq:behpreVinfty}--\eqref{eq:behpreVorigin},
\begin{equation}\label{eq:behUpsVUps}
\bm \Upsilon(\zeta)\bm V(\zeta)\bm\Upsilon(\zeta)^{-1}=\partial_\sad \bm \Upsilon(\zeta)\bm \Upsilon(\zeta)^{-1}-\bm W_0=\begin{cases}
\Boh(\zeta^{-1}),&  \zeta\to \infty, \\
\Boh(1+\msf a(\zeta)\zeta^\alpha),&  \zeta\to 0.
\end{cases}
\end{equation}
We now compute additive jump relations for $\bm \Upsilon\bm V\bm\Upsilon^{-1}$ along $(-\infty,0)$. The matrix $\bm W_0$ is constant and therefore it has no jumps, so that
\begin{align*}
(\bm \Upsilon(\zeta)\bm V(\zeta)\bm\Upsilon(\zeta)^{-1})_+-(\bm \Upsilon(\zeta)\bm V(\zeta)\bm\Upsilon(\zeta)^{-1})_- & =(\partial_\sad\bm\Upsilon(\zeta)\bm\Upsilon(\zeta)^{-1})_+-(\partial_\sad\bm\Upsilon(\zeta)\bm\Upsilon(\zeta)^{-1})_- \\ 
& = \bm\Upsilon_+(\zeta)\bm J_{\bm \Upsilon}(\zeta)^{-1}\partial_\sad \bm J_{\bm \Upsilon}(\zeta)\bm \Upsilon_+(\zeta)^{-1}\\
& = \ee^{-\pi\ii\alpha}\partial_\sad \sigma_{\bm \Phi}(\zeta)\bm \Upsilon_{+}(\zeta)\bm E_{12}\bm \Upsilon_+(\zeta)^{-1}.
\end{align*}

From \eqref{eq:Upsilonrowrelation}, \eqref{eq:PhiPhikUpsilonk} and the identity $\det\bm\Upsilon\equiv 1$ we compute $\bm \Upsilon\bm E_{12}\bm \Upsilon^{-1}$, finally obtaining
\begin{equation}\label{eq:jumpsUpsVUps}
(\bm \Upsilon(\zeta)\bm V(\zeta)\bm\Upsilon(\zeta)^{-1})_+-(\bm \Upsilon(\zeta)\bm V(\zeta)\bm\Upsilon(\zeta)^{-1})_-= \ee^{-\pi \ii \alpha}\partial_\sad \sigma_{\bm \Phi}(\zeta)
\begin{pmatrix}
    -\ii \Phi(\zeta)\partial_\xad \Phi(\zeta) & \Phi(\zeta)^2 \\ (\partial_\xad \Phi(\zeta))^2 & \ii \Phi(\zeta)\partial_\xad \Phi(\zeta)
\end{pmatrix},
\end{equation}
where in the above and in what follows we recall that we understand $\Phi(\zeta)=\Phi_+(\zeta)$ for $\zeta<0$.

The jump relation \eqref{eq:jumpsUpsVUps} yields a scalar additive RHP for $\bm \Upsilon\bm V\bm\Upsilon^{-1}$ along the jump contour $(-\infty,0)$. When coupled with the boundary conditions \eqref{eq:behUpsVUps} on the endpoints $\infty,0$, it can be uniquely solved using Plemelj's formula. In order to verify such solution, we still need to understand the behavior of its jump, given by the right-hand side of \eqref{eq:jumpsUpsVUps}, near the endpoints $\infty,0$, which we do next. 

With $\bm \Phi_{\infty,0}$ being the analytic function on the right-hand side of \eqref{eq:Phiinfty0}, denote
$$
\Phi_0(\zeta)\deff (\bm \Phi_{\infty,0})_{11}.
$$
A direct calculation using \eqref{eq:Phiinfty0} and \eqref{eq:UpsilonPhi} shows that
$$
\Phi(\zeta)\sim \Phi_0(\zeta)\zeta^{\alpha/2}\sigma_{\bm \Phi}(\zeta)^{-1/2},\quad \zeta\to 0.
$$
Therefore,
$$
\partial_\sad \sigma_{\bm \Phi}(\zeta)
\begin{pmatrix}
    -\ii \Phi(\zeta)\partial_\xad \Phi(\zeta) & \Phi(\zeta)^2 \\ (\partial_\xad \Phi(\zeta))^2 & \ii \Phi(\zeta)\partial_\xad \Phi(\zeta)
\end{pmatrix}\sim
\zeta^\alpha \frac{\partial_\sad \sigma_{\bm \Phi}(\zeta)}{\sigma_{\bm \Phi}(\zeta)}
\begin{pmatrix}
    -\ii \Phi_0(\zeta)\partial_\xad \Phi_0(\zeta) & \Phi_0(\zeta)^2 \\ (\partial_\xad \Phi_0(\zeta))^2 & \ii \Phi_0(\zeta)\partial_\xad \Phi_0(\zeta)
\end{pmatrix},
$$
and together with the fact that $\zeta\mapsto \partial_\sad\log\sigma_{\bm\Phi}(\zeta)$ is bounded along the real axis (see \eqref{deff:sigmaPhi}), we learn
\begin{equation}\label{eq:JumpYVYbehorigin}
\partial_\sad \sigma_{\bm \Phi}(\zeta)
\begin{pmatrix}
    -\ii \Phi(\zeta)\partial_\xad \Phi(\zeta) & \Phi(\zeta)^2 \\ (\partial_\xad \Phi(\zeta))^2 & \ii \Phi(\zeta)\partial_\xad \Phi(\zeta)
\end{pmatrix}
=\Boh(1+\zeta^\alpha),\quad \zeta\to 0.
\end{equation}

Likewise, from the behavior \eqref{eq:Phiinftyinfty} we get the rough estimate $\Phi(\zeta)=\Boh(\zeta^{-1/4})$ as $\zeta\to -\infty$, and from the explicit expression for $\sigma_{\bm\Phi}$ in \eqref{deff:sigmaPhi} we learn that
$$
\partial_\sad \sigma_{\bm \Phi}(\zeta)
\begin{pmatrix}
    -\ii \Phi(\zeta)\partial_\xad \Phi(\zeta) & \Phi(\zeta)^2 \\ (\partial_\xad \Phi(\zeta))^2 & \ii \Phi(\zeta)\partial_\xad \Phi(\zeta)
\end{pmatrix}
=\Boh(\ee^{-|\zeta|^m}),\quad \zeta\to -\infty.
$$
This estimate is not necessarily uniform in $\sad$ but it is sufficient for our purposes.

These last two estimates show that the right-hand side of \eqref{eq:jumpsUpsVUps} is integrable over the negative axis, so its Cauchy transform is well-defined. They also ensure that this Cauchy transform satisfies both local behaviors \eqref{eq:behUpsVUps}. As a consequence, we finally learn that
$$
\bm \Upsilon(\zeta)\bm V(\zeta)\bm\Upsilon^{-1}(\zeta)=\frac{\ee^{-\pi \ii \alpha}}{2\pi \ii}\int_{-\infty}^0 
\begin{pmatrix}
    -\ii \Phi(\xi)\partial_\xad \Phi(\xi) & \Phi(\xi)^2 \\ (\partial_\xad \Phi(\xi))^2 & \ii \Phi(\xi)\partial_\xad \Phi(\xi) 
\end{pmatrix}
\frac{\partial_\sad \sigma_{\bm \Phi}(\xi)\dd \xi}{\xi-\zeta},\quad \zeta\in \C\setminus (-\infty,0].
$$

Let us combine what we have so far. Equations~\eqref{eq:LaxeqtionUpsilon} and \eqref{eq:LaxeqtionUpsilonVs} form a Lax pair,
$$
\partial_\xad \bm \Upsilon (\zeta)=\bm B(\zeta)\bm \Upsilon(\zeta)\quad \text{and}\quad \partial_\sad \bm \Upsilon (\zeta)=\bm \Upsilon(\zeta)\bm V(\zeta) + \bm W_0\bm \Upsilon(\zeta).
$$
The compatibility between these two equations reads
\begin{equation}\label{eq:comptBWV}
\partial_\sad \bm B(\zeta)-\partial_\xad \bm W_0+[\bm B(\zeta),\bm W_0]=[\bm \Upsilon(\zeta)\bm V(\zeta)\bm \Upsilon(\zeta)^{-1},\bm B(\zeta)]+\partial_\xad (\bm \Upsilon(\zeta)\bm V(\zeta)\bm \Upsilon(\zeta)^{-1}).
\end{equation}
The terms $\bm B$ and $\bm W_0$ are matrix-valued rational functions in $\zeta$, whereas $\bm \Upsilon\bm V\bm \Upsilon^{-1}$ - albeit more complicated - admits an asymptotic expansion as $\zeta\to \infty$ of the form
\begin{equation}\label{eq:asympexpUpsV}
\bm \Upsilon(\zeta)\bm V(\zeta)\bm\Upsilon^{-1}(\zeta)\sim -\sum_{k=1}^\infty \frac{1}{\zeta^k}\bm V_{-k},
\end{equation}
with coefficients
$$
\bm V_{-k}\deff \frac{\ee^{-\pi \ii \alpha}}{2\pi \ii} \int_{-\infty}^0 
\begin{pmatrix}
    -\ii \Phi(\xi)\partial_\xad \Phi(\xi) & \Phi(\xi)^2 \\ (\partial_\xad \Phi(\xi))^2 & \ii \Phi(\xi)\partial_\xad \Phi(\xi) 
\end{pmatrix}
\xi^{k-1} \partial_\sad\sigma_{\bm \Phi}(\xi)\dd\xi,\qquad k\in\mathbb{Z}_{>0}. 
$$
These coefficients clearly satisfy $\tr\bm V_{-k}=0$ for every $k$.

Expanding \eqref{eq:comptBWV} at $\zeta=\infty$ in powers of $\zeta$, we obtain a system of equations between the coefficients of $\bm B,\bm W_0$ and the coefficients $\bm V_{-k}$. The first two of such nontrivial identities read
\begin{equation}\label{eq:LPVBW}
\left\{
\begin{aligned}
& [\bm V_{-1},\bm B_1]+[\bm B_0,\bm W_0]+\partial_\sad \bm B_0-\partial_\xad \bm W_0=0,\\
& [\bm V_{-1},\bm B_0]+[\bm V_{-2},\bm B_1]+\partial_\xad \bm V_{-1}=0.
\end{aligned}
\right.
\end{equation}

Looking at the $(1,1)$-entry of the first relation in \eqref{eq:LPVBW}, we obtain the identity
\begin{equation}\label{eq:identitypartialspphi}
\partial_\sad\msf p=-\ii (\bm V_{-1})_{12}=-\frac{\ee^{-\pi \ii \alpha}}{2\pi}\int_{-\infty}^0\Phi(\xi)^2\partial_\sad \sigma_{\bm \Phi}(\xi)\dd\xi.
\end{equation}
The next step is to integrate this identity over the interval $(\sad,\infty)$, and then take an $x$-derivative of the result. As a consequence of Proposition~\ref{prop:estimatesBessellimitmodelpphi} that will be proved later,
$$
\msf p\to -\frac{4\alpha^2-1}{8\xad } \quad \text{as}\quad \sad\to +\infty.
$$
Also thanks to Proposition~\ref{prop:estimatesBessellimitmodelpphi}, we know that for some function $f$ which is integrable in compacts of $(-\infty,0]$, grows with $\Boh(|\zeta|^{1/4})$ as $\zeta\to -\infty$, and is independent of $\sad$ and $\xad$, the bound
$$
|\Phi(\zeta)|+|\partial_\xad \Phi(\zeta)|\leq f(\zeta)
$$
is valid. From the explicit form of $\sigma_\Phi$ in \eqref{deff:sigmaPhi},
$$
|\partial_\sad\sigma_\Phi(\zeta)|= \frac{\ee^{-\sad -(-1)^m\zeta^m}}{\left(1+\ee^{-\sad -(-1)^m\zeta^m}\right)^2}\leq \ee^{-\sad -(-1)^m\zeta^m},\qquad \zeta<0.
$$
Hence, $(|\Phi|+|\partial_\xad\Phi|)\partial_\sad\sigma_{\bm\Phi}$ is bounded by an $(\zeta,\sad)$-integrable function independent of $\xad$, and we can indeed integrate \eqref{eq:identitypartialspphi} in the variable $\sad$, obtaining the identity
\begin{equation}\label{deff:pcoreint}
\msf p(\sad,\xad)=-\frac{4\alpha^2-1}{8\xad}+\frac{\ee^{-\pi\ii\alpha}}{2\pi}\int_{\sad}^\infty \int_{-\infty}^0 
\Phi(\xi\mid \sad=u,\xad)^2\partial_\sad \sigma_{\bm\Phi}(\xi\mid \sad=u)\dd\xi \dd u.
\end{equation}

Differentiating now in $\xad$, we obtain
$$
\partial_\xad \msf p(\sad,\xad)=\frac{4\alpha^2-1}{8\xad^2}+\frac{\ee^{-\pi\ii\alpha}}{\pi}\int_\sad^\infty \int_{-\infty}^0 \Phi(\xi\mid \sad=u,\xad)\partial_\xad \Phi(\xi\mid \sad=u,\xad)\partial_\sad \sigma_{\bm \Phi}(\xi\mid \sad=u)\dd\xi\dd u.
$$

Thus, combining this relation with \eqref{eq:PhikpartialPhik}, we finally conclude that
\begin{theorem}\label{thm:Phinonlocalcore}
    For $(\zeta,\sad,\xad)\in \R\times \R\times (0,+\infty)$, the function $\Phi=\Phi(\zeta\mid \sad,\xad)$ satisfies the nonlocal PDE
    \begin{equation}\label{eq:PDEPhi}
    \partial_\xad^2\Phi(\zeta\mid \sad,\xad)=
     \left( 
     \zeta+\frac{4\alpha^2-1}{4\xad^2}+\frac{2\ee^{-\pi\ii\alpha}}{\pi}\int_\sad^\infty \int_{-\infty}^0 \Phi(\xi\mid u,\xad)\partial_\xad \Phi(\xi\mid u,\xad)\partial_\sad \sigma_{\bm \Phi}(\xi\mid u)\dd\xi\dd u 
     \right)\Phi(\zeta\mid \sad,\xad).
    \end{equation}
\end{theorem}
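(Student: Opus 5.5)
The statement is essentially the assembly of the preceding computations, so the plan is to chain them carefully and justify each analytic step. Start from the Lax equation \eqref{eq:LaxeqtionUpsilon} for $\bm\Upsilon=(\bm I+\ii\msf p\bm E_{21})\bm\Phi_\infty$. It rests on two facts. First, the jump $\bm J_{\bm\Phi_\infty}$ is independent of $\xad$, because the rescaling $\zeta\mapsto\zeta/\msf u^{1/m}$ with $\xad=2\msf u^{-1/(2m)}$ absorbs $\xad$. Second, the behaviours \eqref{eq:Phiinftyinfty} and \eqref{eq:Phiinfty0} make $\partial_\xad\bm\Phi_\infty\bm\Phi_\infty^{-1}$ entire and linear in $\zeta$. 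Reading off the first row of $\partial_\xad\bm\Upsilon=\bm B\bm\Upsilon$ gives \eqref{eq:Upsilonrowrelation}. Differentiating once more yields \eqref{eq:PhikpartialPhik}, namely $\partial_\xad^2\Phi=(\zeta+2\partial_\xad\msf p)\Phi$. For $\zeta<0$ this holds for the $+$ boundary value, since the jump does not depend on $\xad$ and the relation passes to boundary values by continuity.

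It then remains to identify $2\partial_\xad\msf p$ with the bracketed potential. I will use the $\sad$-Lax equation \eqref{eq:LaxeqtionUpsilonVs}. The function $\bm\Upsilon\bm V\bm\Upsilon^{-1}$ is a Cauchy transform of the additive jump \eqref{eq:jumpsUpsVUps}. This representation is justified by the endpoint bounds \eqref{eq:JumpYVYbehorigin} and the $\Boh(\ee^{-|\zeta|^m})$ decay at $-\infty$, and by Liouville applied to the difference: that difference has no jump, is $\Boh(\zeta^{-1})$ at infinity, and has an at most $\Boh(1+\msf a\zeta^\alpha)$ singularity at the origin, which is removable because $\alpha>-1$. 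Expanding the compatibility relation \eqref{eq:comptBWV} at $\zeta=\infty$ and taking the $(1,1)$ entry of the first identity in \eqref{eq:LPVBW} gives \eqref{eq:identitypartialspphi}. This is a direct commutator computation with $\bm B_1=\ii\bm E_{21}$, $\bm B_0=2\ii\partial_\xad\msf p\bm E_{21}-\ii\bm E_{12}$ and $\bm W_0=\ii\partial_\sad\msf p\bm E_{21}$.

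Next, integrate \eqref{eq:identitypartialspphi} in $\sad$ over $(\sad,\infty)$. This requires the boundary value $\msf p\to-(4\alpha^2-1)/(8\xad)$ as $\sad\to+\infty$, which is the Bessel value, since $\bm\Psi_\infty$ tends to the Bessel parametrix. It also requires a majorant $|\Phi|+|\partial_\xad\Phi|\le f(\zeta)$ that is uniform in $\sad$ and locally uniform in $\xad$, combined with $|\partial_\sad\sigma_{\bm\Phi}|\le \ee^{-\sad-(-1)^m\zeta^m}$. Both inputs are supplied by Proposition~\ref{prop:estimatesBessellimitmodelpphi}, which I take as given. The result is \eqref{deff:pcoreint}. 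The same majorant, together with a local bound on $\partial_\xad^2\Phi$ obtained from the ODE itself, allows differentiation under the double integral in $\xad$. This gives $2\partial_\xad\msf p$ equal to the bracket minus $\zeta$, and substituting into \eqref{eq:PhikpartialPhik} proves \eqref{eq:PDEPhi}.

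The main obstacle is the analytic justification rather than the algebra. Two points need care. The first is the limit of $\msf p$ as $\sad\to+\infty$. The second is the dominated-convergence majorant uniform in $\sad$ near $\zeta=0$, where $\Phi\sim\Phi_0\zeta^{\alpha/2}\sigma_{\bm\Phi}^{-1/2}$ and $\alpha$ may be negative. I would handle both with a small-norm comparison of $\bm\Psi_\infty$ with the Bessel model problem as $\sad\to\infty$. The jump difference is $\Boh(\ee^{-\sad})$ times the weight $\ee^{-(-1)^m\re\zeta^m}$, using \eqref{eq:ineqzetam}. Together with the explicit form in Proposition~\ref{Prop:behmodelproblemorigin} near the origin, this gives bounds uniform in $\sad\ge\sad_0$.
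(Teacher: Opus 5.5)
Your proposal is correct and follows the paper's own derivation step by step. The chain is the same: the $\xad$-Lax equation from the $\xad$-independent jump, which gives $\partial_\xad^2\Phi=(\zeta+2\partial_\xad\msf p)\Phi$; the Cauchy-transform representation of $\bm\Upsilon\bm V\bm\Upsilon^{-1}$; the $(1,1)$ entry of the first compatibility identity, which gives \eqref{eq:identitypartialspphi}; and then integration in $\sad$ via Proposition~\ref{prop:estimatesBessellimitmodelpphi} followed by differentiation in $\xad$. The extra justifications you add make explicit what the paper leaves implicit: the Liouville argument for the additive problem, and dominated convergence based on the local form $\Phi=\Phi_0\zeta^{\alpha/2}\sigma_{\bm\Phi}^{-1/2}$.
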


For reference, we now obtain a formula that will later be used to relate the limiting kernel for the OP ensemble \eqref{eq:Lagdef} with the function $\Phi$ just constructed. Using the relation \eqref{eq:PsiinftytoPhiinfty}, we obtain the identity
\begin{multline*}
\left[
\left(\bm I+\ee^{-\pi \ii \alpha}(1+\ee^{-\msf h_\infty(\xi)})\chi_{\mcal S_-}(\xi)\bm E_{21}\right)
\bm \Psi_{\infty}(\xi)^{-1}\bm\Psi_{\infty}(\zeta)
\left(\bm I-\ee^{-\pi \ii \alpha}(1+\ee^{-\msf h_\infty(\zeta)})\chi_{\mcal S_-}(\zeta)\bm E_{21}\right)
\right]_{21,-}=
\\
\left[
\bm \Phi_{\infty}(\msf u^{1/m}\xi)^{-1}\bm\Phi_{\infty}(\msf u^{1/m}\zeta)
\right]_{21,-},
\end{multline*}
valid for $\zeta,\xi\in \R\setminus\{0\}$. Using \eqref{eq:uxrelation}, \eqref{eq:UpsilonPhi}, \eqref{eq:Upsilonrowrelation} and \eqref{eq:PhiPhikUpsilonk}, we obtain
\begin{multline}\label{eq:kernelPsiinftyPhi}
\left[
\left(\bm I+\ee^{-\pi \ii \alpha}(1+\ee^{-\msf h_\infty(\xi)})\chi_{\mcal S_-}(\xi)\bm E_{21}\right)
\bm \Psi_{\infty}(\xi)^{-1}\bm\Psi_{\infty}(\zeta)
\left(\bm I-\ee^{-\pi \ii \alpha}(1+\ee^{-\msf h_\infty(\zeta)})\chi_{\mcal S_-}(\zeta)\bm E_{21}\right)
\right]_{21,-}=
\\
\ii \ee^{-2\pi\ii\alpha} \left[  \Phi(4\xi/\xad^2)(\partial_\xad \Phi )(4\zeta/\xad^2)-\Phi(4\zeta/\xad^2)(\partial_\xad \Phi )(4\xi/\xad^2)  \right],
\end{multline}
where $\Phi$ is the same function appearing in \eqref{eq:PDEPhi}, and we used \eqref{eq:jumpphiffction}.

In particular, comparing with \eqref{deff:Kalpha} we obtain the RHP representation of the kernel $\msf K_\alpha$,
\begin{multline}\label{eq:KalphaRHPrepr}
\msf K_\alpha(\zeta,\xi)=\frac{\ee^{\pi\ii \alpha}\sqrt{\sigma_{\bm \Phi}\left(-4\xi/\xad^2 \right)}\sqrt{\sigma_{\bm \Phi}\left(-4\zeta /\xad^2 \right)}}{2\pi\ii(\xi-\zeta)} \\
\times 
\left[
\left(\bm I+\frac{\ee^{-\pi \ii \alpha}}{\sigma_{\bm \Phi}\left(-4\xi/\xad^2 \right)}\bm E_{21}\right)
\bm \Psi_{\infty}(-\xi)^{-1}\bm\Psi_{\infty}(-\zeta)
\left(\bm I-\frac{\ee^{-\pi \ii \alpha}}{\sigma_{\bm \Phi}\left(-4\zeta /\xad^2 \right)}\bm E_{21}\right)
\right]_{21,-},
\end{multline}
which is valid for $\xi,\zeta \in (0,\infty)$.

Recall the notation \eqref{deff:Deltax}. Sending $\xi\to \zeta$, we also obtain the identity
\begin{equation}\label{eq:KalphaRHPreprdiag}
\left[
\bm\Delta_\zeta\left[
\bm\Psi_\infty(\zeta)\left(\bm I-\frac{\ee^{-\pi\ii\alpha}}{\sigma_{\bm \Phi}\left(4\zeta/\xad^2 \right)}\bm E_{21}\right)
\right]
\right]_{21,-}=
\frac{2\pi\ii \ee^{-\pi\ii\alpha} }{\sigma_{\bm \Phi}\left(4\zeta /\xad^2 \right)} \msf K_\alpha(-\zeta,-\zeta),\quad \zeta<0.
\end{equation}
These representations will be useful later. We now move on to computing relevant asymptotics for the model problem.

\section{The model RHP: admissible data and asymptotics}\label{sec:modelproblemasymptotics}

In the course of the analysis of OPs, the model problem needed will involve a function $\msf h=\msf h_\gt$ which depends on a large parameter $\gt>0$. Later on, we will set this parameter $\gt$ to be the number of particles $n$, but during this section we keep $\gt$ as a free parameter. In the limit $\gt\to\infty$, the function $\msf h_\gt$ will converge to $\msf h_\infty$ from \eqref{def:hinfty}. The goal of the current section is to perform the required asymptotic analysis of the corresponding $\gt$-dependent model problem $\bm \Psi_\gt=\bm\Psi(\cdot\mid \msf h=\msf h_\gt)$, showing rigorously that it converges to the model problem $\bm\Psi_\infty=\bm\Psi_\infty(\cdot\mid \msf h=\msf h_\infty)$ from \eqref{def:Psiinfty}.

The remainder of this section is structured as follows. In Section~\ref{sec:admmodel} we introduce the conditions on the function $\msf h=\msf h_\gt$ that we will be working with. This function $\msf h_\gt$ will also depend on the two additional parameters $\sad\in \R$ and $\xad>0$. In Section~\ref{sec:boundsPsiinfty} we analyze the model problem $\bm\Psi_\infty$ in the degenerate limit $\sad \to +\infty$. This will be needed in order to integrate $\bm\Psi_\infty$ in the variable $\sad$, to cope with the limit of the deformation formula \eqref{eq:deffformula}. This analysis in the limit $\sad\to +\infty$ will also provide the major steps for the asymptotic analysis when $\xad\to 0^+$; the latter will be completed in Section~\ref{sec:Psiinftylargeuanalysis}. Finally, in Section~\ref{sec:analysisgtinfty} we complete the third asymptotic analysis necessary for the model problem, namely the one in the limit $\gt\to +\infty$.

\subsection{The class of admissible $\msf h$}\label{sec:admmodel}  \hfill 

We now consider the model problem $\bm \Psi$ associated to a function $\msf h$ which depends on a (large) parameter $\gt$ in a certain structured manner, as we introduce next. For the next statement, recall that the contour $\Gamma$ was introduced in \eqref{deff:thetambasiccontours} and also Figure~\ref{fig:Gammapm}.

\begin{definition}\label{def:admissibledata}
A function $\msf h:\Gamma\to \C$ is {\it admissible} if it is of the form
$$
\msf h(\zeta)=\msf h_\gt(\zeta\mid \sad)\deff \sad +\gt^{2m}\msf H\left(\frac{\zeta}{\gt^2}\right),
$$
where $\sad\in \R$, $\gt>0$, $m\in \Z_{>0}$ are parameters, and $\msf H:\Gamma\to \C$ is a fixed function with the following properties.
\begin{enumerate}[(i)]
\item The function $\msf H$ is $C^\infty$ in a neighborhood of $\Gamma$ and real-valued along $\Gamma_0$.
\item The function $\msf H$ is analytic on a disk $D_\delta$ with
$$
\msf H(\zeta)=(-1)^m \msf u \zeta^m+\Boh(\zeta^{m+1}),\quad \zeta\to 0,
$$
where $\msf u>0$.
\item There exists $\Lambda>0$ for which
$$
\re \msf H(\zeta)\geq \Lambda |\zeta|^m,\quad \zeta \in \Gamma.
$$
\end{enumerate}
\end{definition}

For us we are interested in the choice of parameter $\gt=n$, with $n$ as in \eqref{eq:Lagdef}, but in Definition~\ref{def:admissibledata} and for the rest of the present section we opt for introducing the new dummy parameter $\gt$, to emphasize that the model problem also has an interest on its own.

With $\dd s$ being the complex line element in $\Gamma$, Definition~\ref{def:admissibledata}--(iii) implies, in particular, that
$$
|v|^k\ee^{-\msf H}\in L^1(\Gamma,|\dd v|)\cap L^\infty(\Gamma,|\dd v|),\quad \text{for any } k>0.
$$


Any admissible $\msf h_\gt$ is analytic on a disk centered at the origin with growing radius. Moreover, the expansion
\begin{equation}
\msf h_\gt(\zeta)=\sad +(-1)^m\msf u\zeta^m (1+\Boh(\gt^{-2}))
\end{equation}
holds uniformly for $\zeta$ in compacts of $\Gamma$, with an error term which is independent of $\sad$. With $\msf h_\infty$ as in \eqref{def:hinfty}, this shows in particular that
\begin{equation}\label{eq:httohinfty}
\msf h_\gt(\zeta)=\msf h_\infty(\zeta)+\Boh(\gt^{-2}), \quad \gt\to \infty,
\end{equation}
again with the convergence being uniform in compacts of $\Gamma$, with an error term which is independent of $\sad\in \R$. Observe, in particular, that the right-hand side coincides with the data studied in Section~\ref{sec:modelIntSys}. Since we are assuming $\msf s\in \R$, the convergence above also shows that $1+\ee^{-\msf h}$ is analytic and non-vanishing on a neighborhood of the origin, with this neighborhood being independent of $\gt$. Therefore, Proposition~\ref{Prop:behmodelproblemorigin} applies for $\bm \Psi$ with admissible $\msf h=\msf h_\gt$, for any $\gt$ sufficiently large, a fact which will be used for the rest of the paper.

From now on, we fix $\msf h=\msf h_\gt$ to always be an admissible function in the sense of Definition~\ref{def:admissibledata} and denote
\begin{equation}\label{deff:Psitauadmissibleh}
\bm \Psi_\gt(\zeta)\deff \bm \Psi(\zeta\mid \msf h_\gt,\alpha),\quad \bm \Psi_\infty(\zeta)\deff \bm \Psi(\zeta\mid \msf h_\infty,\alpha),
\end{equation}
with corresponding jump matrices
$$
\bm J_\gt=\bm J_{\bm \Psi_\gt},\quad \bm J_\infty=\bm J_{\bm \Psi_\infty}.
$$
This notation is consistent with \eqref{def:Psiinfty}. The convergence \eqref{eq:httohinfty} indicates that $\bm \Psi_\gt\to \bm \Psi_\infty$. The main goal of the remainder of this section is to perform the asymptotic analysis to justify this convergence, and which should be valid uniformly for $\sad \geq \sad_0$ with any fixed $\sad_0\in \R$. 

\subsection{Asymptotic analysis for $\bm \Psi_\infty$ as $\sad\to +\infty$}\label{sec:boundsPsiinfty}\hfill

As said, we need to compare $\bm \Psi_{\gt}$ with $\bm \Psi_{\infty}$ with bounds valid uniformly for $\sad \in [\sad_0,\infty)$, where $\sad_0\in \R$ fixed. In that analysis, we will also require bounds on appropriate norms of $\bm \Psi_\infty$ valid uniformly for $\sad\geq \sad_0$ and also with some rough control as $\zeta\to \infty$ along $\Gamma$. This section is devoted to obtain such bounds.

We start analyzing $\bm \Psi_\infty$ when $\sad\to +\infty$. To that end, set
$$
\bm J^\bes(\zeta)\deff
\begin{cases}
\bm I+\ee^{\pm\pi \ii \alpha}\bm E_{21},& \zeta\in \Gamma_\pm, \\
\bm E_{12}-\bm E_{21}, & \zeta\in \Gamma_0.
\end{cases}
$$
That is, $\bm J^\bes$ is the jump matrix for the model problem obtained with data $\ee^{-\msf h}\equiv 0$, corresponding to setting $\msf h=+\infty$. Let $\bm \Psi^\bes_\alpha$ be the solution to the corresponding RHP, such solution is standard and can be constructed explicitly using Bessel functions, see Appendix~\ref{sec:BesselParametrix} for more details.

The pointwise convergence
$$
\bm J_{\infty}\to \bm J^\bes,\quad \sad\to +\infty,
$$
is straightforward, and leads us to compare $\bm \Psi_\infty$ with $\bm \Psi^\bes_\alpha$. In fact, for $\alpha>-1/2$ one could prove directly that the jump matrix for $\bm \Psi_\infty(\bm\Psi^\bes_\alpha)^{-1}$ converges to the identity matrix in $L^2\cap L^\infty$ as $\sad\to +\infty$, which would suffice to apply small norm theory to estimate norms of $\bm \Psi_\infty$ and of its boundary values. However, we could not improve such a direct argument also to the range $-1<\alpha\leq -1/2$. The alternate route still involves applying the steepest descent method, but the construction of a local parametrix is needed, as we discuss next.

 Let $\bm \Psi_0=\bm \Psi_{\infty,0}$ be the analytic function at the origin obtained when we apply Proposition~\ref{Prop:behmodelproblemorigin} to $\bm \Psi_\infty$, and $\bm \Psi_{0}=\bm\Psi_{\alpha,0}^\bes$ the one when we apply the same proposition to $\bm \Psi^\bes_\alpha$. For a sufficiently small $\varepsilon>0$ which we keep fixed from now on, the function $\bm \Psi_{\infty,0}$ is analytic on $D_\varepsilon$. 

 The function
\begin{equation}\label{eq:lambdainfty}
\lambda(\zeta)\deff \frac{1}{2\pi \ii}\int_{-\infty}^0 \left(\frac{1}{1+\ee^{-\msf h_\infty(s)}}-1\right)\frac{|s|^{\alpha}}{s-\zeta}\dd s,\quad \zeta\in \C\setminus \Gamma_0,
\end{equation}
is analytic on $\C\setminus \Gamma_0$, and satisfies
\begin{equation}\label{eq:jumpslambda}
\lambda_+(\zeta)-\lambda_-(\zeta)=\left(\frac{1}{1+\ee^{-\msf h_\infty(\zeta)}}-1\right)|\zeta|^{\alpha},\quad \zeta<0.
\end{equation}
Also, from basic properties of Cauchy transforms, we obtain the behavior
\begin{equation}\label{eq:lambdabeh}
\lambda(\zeta)=
\begin{cases}
\Boh(\zeta^{\alpha}), & \text{if } -1<\alpha<0, \\
\Boh(\log \zeta), & \text{if } \alpha= 0, \\
\Boh(1), & \text{if } \alpha>0,
\end{cases}
\qquad \zeta\to 0.
\end{equation}

We will use $\lambda$ to construct the local parametrix, and some estimates on $\lambda$ as $\sad\to+\infty$ will be needed. Using the inequality
$$
\left| \frac{1}{1+\ee^{-\msf h_\infty(\zeta)}}-1 \right|=\left|\frac{\ee^{-\msf h_\infty(\zeta)}}{1+\ee^{-\msf h_\infty(\zeta)}}\right|\leq \ee^{-\msf h_\infty(\zeta)}=\ee^{-\sad} \ee^{-|\zeta|^m\msf u}, \quad \zeta<0,
$$
we obtain that 
\begin{equation}\label{eq:boundlambdabessanalysis}
|\lambda(\zeta)|\leq \frac{\ee^{-\sad}}{2\pi}\int_{-\infty}^0 \frac{|s|^\alpha\ee^{-\msf u |s|^m}}{|\zeta-s|}\dd s,\quad \zeta\in \C\setminus \Gamma_0.
\end{equation}
Recalling that $\msf u>0$, we learn that $\lambda \to 0$ uniformly for $\zeta$ in compacts of $\C\setminus \Gamma_0$ as $\sad\to +\infty$. Deforming contours in the integral defining $\lambda$, we see that this convergence also extends uniformly to the whole plane $\C$; more precisely
\begin{equation}\label{eq:behlambdainfty}
\|\lambda\|_{L^\infty(\C\setminus \Gamma_0)}=\Boh(\ee^{-\sad}),\quad \|\lambda_\pm\|_{L^\infty(\Gamma_0)}=\Boh(\ee^{-\sad}),\quad \sad\to +\infty.
\end{equation}
 
We use this function $\lambda$ and set
\begin{multline}\label{eq:localparPsiinfty}
\bm P_0(\zeta)\deff \bm \Psi^\bes_{\alpha,0}(\zeta)\left(\bm I+\lambda(\zeta)\bm E_{12}\right)\bm \zeta^{\alpha\sp_3/2}(\bm I+\msf a(\zeta)\bm E_{12})\\
\times (\bm I-(\chi_{\mcal S_+}(\zeta)\ee^{\pi \ii\alpha}-\chi_{\mcal S_-}(\zeta)\ee^{-\pi \ii \alpha})(1+\ee^{-\msf h_\infty(\zeta)})\bm E_{21}), \quad \zeta\in  D_\varepsilon.
\end{multline}
Next, we define
\begin{equation}\label{deff:Rbesselasymp}
\bm R(\zeta)\deff
\begin{cases}
\bm \Psi_\infty(\zeta)\bm \Psi_\alpha^\bes(\zeta)^{-1}, & \zeta\in \C\setminus (\Gamma\cup \overline{D}_\varepsilon), \\
\bm \Psi_\infty(\zeta)\bm P_0(\zeta)^{-1}, & \zeta\in D_\varepsilon\setminus \Gamma.
\end{cases}
\end{equation}
It is clear that $\bm R$ is analytic on $\C\setminus (\Gamma\cup \partial D_\varepsilon)$. Using the jump properties of $\msf a$ and $\lambda$ given in \eqref{eq:jumpafactor} and \eqref{eq:jumpslambda}, a cumbersome but straightforward calculation shows that
$$
\bm P_{0,+}(\zeta)=\bm P_{0,-}(\zeta)\bm J_\infty(\zeta),\quad \zeta\in \Gamma\cap D_\varepsilon,
$$
that is, $\bm P_0$ and $\bm \Psi_\infty$ satisfy the same jumps for $\zeta\in \Gamma\cap D_\varepsilon$, implying in particular that $\bm R$ has an isolated singularity at $\zeta= 0$.

Also, using \eqref{eq:localparPsiinfty} and Proposition~\ref{Prop:behmodelproblemorigin} with $\bm \Psi=\bm\Psi_\infty$, it follows that
$$
\bm R(\zeta)=\bm\Psi_{\infty,0}(\zeta)(1+\ee^{-\msf h_\infty(\zeta)})^{\sp_3/2}\left(\bm I+\left[\zeta^{\alpha}\msf a(\zeta)\left(\frac{1}{1+\ee^{-\msf h_\infty(\zeta)}}-1\right)-\lambda(\zeta)\right]\bm E_{12}\right) \bm \Psi_{\alpha,0}^\bes(\zeta)^{-1},\quad |\zeta|< \varepsilon.
$$
In virtue of \eqref{eq:lambdabeh}, this shows that 
\begin{equation}\label{eq:behLocalParOrigin}
\bm R(\zeta)=\boh(\zeta^{-1}), \; \zeta\to 0, \quad \text{as well as}\quad \bm R_\pm(\zeta)=\boh(\zeta^{-1}), \; \zeta\to 0 \text{ along }\Gamma,
\end{equation}
and therefore the singularity at $\zeta=0$ is removable. 

In summary, setting
$$
\Gamma_{\bm R}\deff \left(\Gamma\setminus D_\varepsilon\right)\cup \partial D_\varepsilon,
$$
where $\partial D_\varepsilon$ is oriented clockwise, we conclude that $\bm R$ satisfies the following RHP.

\begin{rhp}\label{RHP:RmatrixModelBesselanalysis}
Find a $2\times 2$ matrix-valued function $\bm R:\C\setminus \Gamma_{\bm R}\to \C^{2\times 2}$ with the following properties.
\begin{enumerate}[(i)]
\item $\bm R$ is analytic on $\C\setminus \Gamma_{\bm R}$.
\item The entries of $\bm R_\pm$ are continuous along $\Gamma_\bm R$, and satisfy the jump relation $\bm R_+(\zeta)=\bm R_-(\zeta)\bm J_{\bm R}(\zeta)$ with 
$$
\bm J_{\bm R}(\zeta)\deff
\begin{dcases}
\bm I+\ee^{-\msf h_\infty(\zeta)}\ee^{\pm\pi \ii \alpha} \bm \Psi_{\alpha,-}^\bes(\zeta)\bm E_{21}\bm \Psi_{\alpha,-}^\bes(\zeta)^{-1}, & \zeta\in \Gamma_\pm\setminus \overline{D}_\varepsilon, \\
\bm I-\frac{\ee^{-\msf h_\infty(\zeta)}}{1+\ee^{-\msf h_\infty(\zeta)}}\bm \Psi_{\alpha,-}^\bes(\zeta)\bm E_{11}\bm \Psi_{\alpha,-}^\bes(\zeta)^{-1}
+\ee^{-\msf h_\infty(\zeta)}\bm \Psi_{\alpha,-}^\bes(\zeta)\bm E_{22}\bm \Psi_{\alpha,-}^\bes(\zeta)^{-1}, & \zeta\in \Gamma_0\setminus \overline{D}_\varepsilon, \\
\bm P_0(\zeta)\bm \Psi_\alpha^\bes(\zeta)^{-1}, & \zeta\in \partial D_\varepsilon,
\end{dcases}
$$
and where we orient $\partial D_\varepsilon$ in the clockwise direction.

\item As $\zeta\to \infty$, we have
$$
\bm R(\zeta)=\bm I+\Boh(\zeta^{-1}).
$$
\item The entries of $\bm R_\pm-\bm I$ belong to $L^2_{\mathrm{loc}}(\Gamma,|\dd s|)$.
\end{enumerate}
\end{rhp}

Indeed, the properties (i)--(iii) are immediate from the corresponding properties of the RHPs satisfied by $\bm \Psi_\infty$ and $\bm \Psi^\bes_\alpha$, and property (iv) follows from \eqref{eq:behLocalParOrigin} and the continuity of $\bm P_{0}, \bm \Psi_\alpha^\bes, \bm \Psi_{\alpha,-}^\bes$ away from the origin.

To conclude the asymptotic analysis, we now analyze the behavior of $\bm J_{\bm R}$ as $\sad\to +\infty$. For the sake of clarity, we split this analysis into the next few lemmas. For their statements, we recall that the matrix norm notations were introduced in Section~\ref{sec:notation}.

\begin{lemma}\label{lem:estR1}
The estimate
$$
\|\bm J_{\bm R}-\bm I\|_{L^1\cap L^\infty(\partial D_\varepsilon)}=\Boh\left(\ee^{-\sad}\right)
$$
holds as $\sad\to +\infty$.
\end{lemma}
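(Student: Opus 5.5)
We compute $\bm J_{\bm R}-\bm I$ on $\partial D_\varepsilon$ explicitly and show it is proportional to $\lambda$. On $\partial D_\varepsilon$ we have $\bm J_{\bm R}=\bm P_0\,(\bm\Psi^\bes_\alpha)^{-1}$. Apply Proposition~\ref{Prop:behmodelproblemorigin} to $\bm\Psi^\bes_\alpha$, i.e.\ with data $\msf h=+\infty$, so that $\sigma\equiv1$. This gives
$$
\bm\Psi^\bes_\alpha(\zeta)=\bm\Psi^\bes_{\alpha,0}(\zeta)\zeta^{\alpha\sp_3/2}(\bm I+\msf a(\zeta)\bm E_{12})\bigl(\bm I-(\chi_{\mcal S_+}(\zeta)\ee^{\pi\ii\alpha}-\chi_{\mcal S_-}(\zeta)\ee^{-\pi\ii\alpha})\bm E_{21}\bigr).
$$
Comparing with \eqref{eq:localparPsiinfty}, the two expressions differ only in two factors: the factor $(\bm I+\lambda\bm E_{12})$ inserted after $\bm\Psi^\bes_{\alpha,0}$, and the extra $(1+\ee^{-\msf h_\infty})$ in the rightmost lower-triangular factor. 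The rightmost factors combine as
$$
\bigl(\bm I-c\,(1+\ee^{-\msf h_\infty})\bm E_{21}\bigr)\bigl(\bm I+c\,\bm E_{21}\bigr)=\bm I-c\,\ee^{-\msf h_\infty}\bm E_{21},\qquad c=\chi_{\mcal S_+}\ee^{\pi\ii\alpha}-\chi_{\mcal S_-}\ee^{-\pi\ii\alpha}.
$$
Hence on $\partial D_\varepsilon$,
$$
\bm J_{\bm R}=\bm\Psi^\bes_{\alpha,0}(\bm I+\lambda\bm E_{12})\,\bm M\,(\bm I-c\,\ee^{-\msf h_\infty}\bm E_{21})\,\bm M^{-1}\,\bm\Psi^\bes_{\alpha,0}(\zeta)^{-1},\qquad \bm M\deff\zeta^{\alpha\sp_3/2}(\bm I+\msf a\bm E_{12}).
$$

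The bound then follows term by term. The function $\bm\Psi^\bes_{\alpha,0}$ is analytic on a neighbourhood of $\overline D_\varepsilon$ and independent of $\sad$, and $\det\bm\Psi^\bes_{\alpha,0}=1$, so it and its inverse are uniformly bounded on the circle. The factors $\zeta^{\pm\alpha/2}$ and $\msf a$ are bounded on $\partial D_\varepsilon$, since the circle stays away from the origin; their boundary values on the two sides of $\Gamma_0\cap\partial D_\varepsilon$ are also bounded. By \eqref{eq:behlambdainfty}, $\|\lambda\|_{L^\infty}=\Boh(\ee^{-\sad})$. For the other factor, $|\ee^{-\msf h_\infty(\zeta)}|=\ee^{-\sad}\ee^{-\msf u\re((-1)^m\zeta^m)}\le \ee^{-\sad}\ee^{\msf u\varepsilon^m}$ on the circle. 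Expanding the product, every term other than $\bm I$ contains a factor $\lambda$ or $\ee^{-\msf h_\infty}$, and the remaining factors are bounded. Therefore $\|\bm J_{\bm R}-\bm I\|_{L^\infty(\partial D_\varepsilon)}=\Boh(\ee^{-\sad})$. The $L^1$ bound follows because $\partial D_\varepsilon$ has finite length $2\pi\varepsilon$.

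The main point to check is that $\lambda$ is $\Boh(\ee^{-\sad})$ uniformly on the whole circle, including its intersection with $\Gamma_0$, where $\lambda$ jumps. This is exactly the content of \eqref{eq:behlambdainfty}, which bounds $\lambda$ on $\C\setminus\Gamma_0$ together with the boundary values $\lambda_\pm$. If one prefers a direct argument at the point $-\varepsilon$, the explicit bound \eqref{eq:boundlambdabessanalysis} gives $\Boh(\ee^{-\sad})$ away from $\Gamma_0$. Near $-\varepsilon$, one can instead deform the contour of integration in \eqref{eq:lambdainfty} slightly off the real axis, using that $\ee^{-\msf h_\infty}|s|^\alpha$ continues analytically near $-\varepsilon$ with exponential decay. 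This yields the same $\Boh(\ee^{-\sad})$ bound for $\lambda$ and its boundary values $\lambda_\pm$.
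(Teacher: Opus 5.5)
Your proposal is correct and is essentially the paper's proof. Merging the lower-triangular factors of $\bm P_0$ and $(\bm\Psi^\bes_\alpha)^{-1}$ gives exactly the paper's factorization of $\bm J_{\bm R}$ on $\partial D_\varepsilon$, and both arguments then bound term by term: $\lambda=\Boh(\ee^{-\sad})$ and $\ee^{-\msf h_\infty}=\Boh(\ee^{-\sad})$ on the circle, while the $\sad$-independent factors $\bm\Psi^\bes_{\alpha,0}$, $\zeta^{\pm\alpha/2}$ and $\msf a$ stay bounded there.

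Your local contour-deformation argument for $\lambda_\pm$ near the crossing point $-\varepsilon$ is a useful addition. The paper's whole-plane bound on $\lambda$ cannot hold near the origin when $\alpha\le 0$, because of the local behavior of $\lambda$ at $\zeta=0$. Only the bound on the circle is needed here, and that is exactly what you establish.
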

\begin{proof}
Using the definition of $\bm P_0$ and Proposition~\ref{Prop:behmodelproblemorigin} with $\bm \Psi=\bm \Psi^\bes_\alpha$, we have that for $\zeta\in \partial D_\varepsilon$,
\begin{align}\label{eq:jumpRbessboundary}
\bm J_{\bm R}(\zeta) & =
%
\bm \Psi^\bes_{\alpha,0}(\zeta)(\bm I+\lambda(\zeta)\bm E_{12})\zeta^{\alpha\sp_3/2}(\bm I+\msf a(\zeta)\bm E_{12}) 
\nonumber \\
& ~~~ \times \left[ \bm I-\ee^{-\msf h_\infty(\zeta)}\left(\chi_{\mcal S_+}(\zeta)\ee^{\alpha\pi \ii}-\chi_{\mcal S_-}(\zeta)\ee^{-\alpha\pi\ii}\right)\bm E_{21}\right]
\left(\bm I-\msf a(\zeta)\bm E_{12}\right)\zeta^{-\alpha\sp_3/2}\bm\Psi^\bes_{\alpha,0}(\zeta)^{-1}
\nonumber \\ 
& = 
    \bm I+\bm \Psi^\bes_{\alpha,0}(\zeta)\zeta^{\alpha\sp_3/2}\left[ \lambda(\zeta)\zeta^{-\alpha}\bm E_{12} -\ee^{-\msf h_\infty(\zeta)}\left(\chi_{\mcal S_+}(\zeta)\ee^{\alpha\pi \ii}-\chi_{\mcal S_-}(\zeta)\ee^{-\alpha\pi\ii}\right) \right.
    \nonumber \\
    &~~~ \left.  \times (\bm I+(\msf a(\zeta)+\lambda(\zeta)\zeta^{-\alpha})\bm E_{12})\bm E_{21}\left(\bm I-\msf a(\zeta)\bm E_{12}\right) \right]\zeta^{-\alpha\sp_3/2}\bm \Psi^\bes_{\alpha,0}(\zeta)^{-1}.
\end{align}
It is clear that $\ee^{-\msf h_\infty}=\Boh(\ee^{-\sad})$ uniformly in compacts of $\C$ as $\sad\to +\infty$. Also, $\msf a$ is independent of $\sad$. Next, we already observed the convergence $\lambda= \Boh(\ee^{-\sad})$ as $\sad\to +\infty$, uniformly in compacts (see \eqref{eq:behlambdainfty}). The result then follows once we notice that $\bm\Psi^\bes_{\alpha,0}$ and its inverse are both independent of $\sad$ and continuous functions of $\zeta$, and the contour $\partial D_\varepsilon$ is obviously bounded.
\end{proof}

\begin{lemma}\label{lem:estR2}
The estimate
$$
\|\bm J_{\bm R}-\bm I\|_{L^1\cap L^\infty(\Gamma_+\cup\Gamma_-\setminus \overline D_\varepsilon)}=\Boh\left(\ee^{-\sad}\right)
$$
holds as $\sad\to +\infty$.
\end{lemma}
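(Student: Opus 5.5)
On $\Gamma_\pm\setminus\overline D_\varepsilon$ the jump is explicit:
$$
\bm J_{\bm R}(\zeta)-\bm I=\ee^{-\msf h_\infty(\zeta)}\ee^{\pm\pi\ii\alpha}\,\bm\Psi^\bes_{\alpha}(\zeta)\bm E_{21}\bm\Psi^\bes_{\alpha}(\zeta)^{-1},
$$
where the boundary value subscript can be dropped, since $\bm\Psi^\bes_\alpha$ is continuous up to $\Gamma_\pm$ from either side away from the origin. The plan is to factor out $\ee^{-\sad}$ and show that what remains is integrable and bounded, independently of $\sad$. Since $\msf h_\infty(\zeta)=\sad+\msf u(-1)^m\zeta^m$, we have
$$
|\ee^{-\msf h_\infty(\zeta)}|=\ee^{-\sad}\ee^{-\msf u(-1)^m\re(\zeta^m)}\le \ee^{-\sad}\ee^{-\msf u|\zeta|^m/2},\qquad \zeta\in\Gamma_\pm,
$$
by \eqref{eq:ineqzetam}; the lower bound there is attained exactly on $\Gamma_\pm$, which lie in the closure of $\mcal S_\pm$. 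So it suffices to show that $\bm\Psi^\bes_\alpha\bm E_{21}(\bm\Psi^\bes_\alpha)^{-1}$ grows at most like $\ee^{c|\zeta|^{1/2}}$ along $\Gamma_\pm$ for some $c>0$. It is also continuous on $\Gamma_\pm\setminus D_\varepsilon$ and independent of $\sad$.

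To get that growth bound, use the asymptotics \eqref{eq:asymptRHPpsimodel} for $\bm\Psi^\bes_\alpha$. Near $\infty$,
$$
\bm\Psi^\bes_\alpha(\zeta)\bm E_{21}\bm\Psi^\bes_\alpha(\zeta)^{-1}=(\bm I+\Boh(\zeta^{-1}))\zeta^{-\sp_3/4}\bm U_0\,\ee^{-4\zeta^{1/2}}\bm E_{21}\,\bm U_0^{-1}\zeta^{\sp_3/4}(\bm I+\Boh(\zeta^{-1})),
$$
because $\ee^{2\zeta^{1/2}\sp_3}\bm E_{21}\ee^{-2\zeta^{1/2}\sp_3}=\ee^{-4\zeta^{1/2}}\bm E_{21}$. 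On $\Gamma_\pm$ we have $\arg\zeta^{1/2}=\pm\theta_m/2\in(\pi/3,\pi/2)$, so $\re\zeta^{1/2}>0$ and $|\ee^{-4\zeta^{1/2}}|\le1$. The conjugation by $\zeta^{\pm\sp_3/4}$ contributes at most a factor $|\zeta|^{1/2}$. Hence the matrix is $\Boh(|\zeta|^{1/2})$ as $\zeta\to\infty$ along $\Gamma_\pm$, which is even better than exponential growth.

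It follows that $|\bm J_{\bm R}-\bm I|\le C\ee^{-\sad}(1+|\zeta|)^{1/2}\ee^{-\msf u|\zeta|^m/2}$ on $\Gamma_\pm\setminus\overline D_\varepsilon$, with $C$ independent of $\sad$. The right-hand side is bounded and lies in $L^1$ of arc length, which gives the $L^1\cap L^\infty$ estimate $\Boh(\ee^{-\sad})$.

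The only point that needs care is uniformity on the compact part of $\Gamma_\pm$, between $\partial D_\varepsilon$ and the region where the large-$\zeta$ asymptotics take over. There it is enough that $\bm\Psi^\bes_\alpha$ is continuous with $\det\equiv1$ and does not depend on $\sad$, so the conjugated matrix is bounded by a constant. The main check is therefore the sign of $\re\zeta^{1/2}$ on $\Gamma_\pm$. This holds because $\theta_m<\pi$, so the exponential factor coming from the Bessel asymptotics decays rather than grows.
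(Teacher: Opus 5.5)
Your argument is correct and is essentially the paper's proof: you factor out $\ee^{-\sad}$, bound $\bm\Psi^\bes_\alpha\bm E_{21}(\bm\Psi^\bes_\alpha)^{-1}$ by $\Boh(|\zeta|^{1/2}\ee^{-4\re\zeta^{1/2}})$ via the large-$\zeta$ asymptotics with $\re\zeta^{1/2}>0$ on $\Gamma_\pm$, and use the $\sad$-independent decay of $\ee^{-\msf h_\infty+\sad}$ there (the paper records both decaying factors, whereas you only need the second). One small correction: the boundary values $\bm\Psi^\bes_{\alpha,\pm}$ do differ across $\Gamma_\pm$, and the real reason you may drop the subscript is that the jump $\bm I+\ee^{\pm\pi\ii\alpha}\bm E_{21}$ commutes with $\bm E_{21}$, so the conjugated matrix has no jump there.
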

\begin{proof}
The asymptotic behavior in \eqref{eq:asymptRHPpsimodel} for $\bm \Psi=\bm\Psi^\bes_\alpha$ gives that as $\zeta\to \infty$ along $\Gamma_+\cup\Gamma_-$,
$$
 \bm \Psi_{\alpha,-}^\bes(\zeta)\bm E_{21}\bm \Psi_{\alpha,-}^\bes(\zeta)^{-1}=\Boh(\ee^{-4\zeta^{1/2}}\zeta^{1/2}),
$$
with all the terms in the identity above being independent of $\sad$. Along $\Gamma_-\cup\Gamma_+$, we have $\re \zeta^{1/2}= |\zeta|^{1/2}\cos \frac{\theta_m}{2}>0$, with $\theta_m$ as in \eqref{deff:thetambasiccontours}, and therefore the term on the right-hand side above has finite $L^1$ and $L^\infty$ norms along $\Gamma_+\cup\Gamma_-$. Likewise, the function $\ee^{(-1)^{m+1}\msf u\zeta^m}=\ee^{-\msf h_\infty(\zeta)+\msf s}$ has finite $L^1$ and $L^\infty$ norms along $\Gamma_+\cup\Gamma_-$, which are independent of $\sad$ as well. From the inequality
$$
\|\bm J_{\bm R}-\bm I\|_{L^1\cap L^\infty(\Gamma_+\cup\Gamma_-)}\leq \ee^{-\sad} \|\ee^{-\msf h_\infty+\msf s\pm \pi \ii \alpha}\bm \Psi_{\alpha,-}^\bes\bm E_{21}(\bm \Psi_{\alpha,-}^\bes)^{-1}\|_{L^1\cap L^\infty(\Gamma_+\cup\Gamma_-)},
$$
the lemma then follows.
\end{proof}

\begin{lemma}\label{lem:estR3}
The estimate
$$
\|\bm J_{\bm R}-\bm I\|_{L^1\cap L^\infty(\Gamma_0\setminus \overline D_\varepsilon)}=\Boh\left(\ee^{-\sad}\right)
$$
holds as $\sad\to +\infty$.
\end{lemma}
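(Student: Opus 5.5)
The plan mirrors the proof of Lemma~\ref{lem:estR2}: factor out $\ee^{-\sad}$ and bound what remains by a $\sad$-independent function that is integrable and bounded on $\Gamma_0\setminus\overline D_\varepsilon$. On $\Gamma_0$ we have $\msf h_\infty(\zeta)=\sad+\msf u|\zeta|^m$, which is real, so
$$
\left|\frac{\ee^{-\msf h_\infty(\zeta)}}{1+\ee^{-\msf h_\infty(\zeta)}}\right|\leq \ee^{-\msf h_\infty(\zeta)}=\ee^{-\sad}\ee^{-\msf u|\zeta|^m},\qquad \zeta<0.
$$
Both correction terms in $\bm J_{\bm R}-\bm I$ are therefore bounded by $\ee^{-\sad}\ee^{-\msf u|\zeta|^m}$ times the matrices $\bm\Psi^\bes_{\alpha,-}\bm E_{jj}(\bm\Psi^\bes_{\alpha,-})^{-1}$, $j=1,2$. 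These matrices do not depend on $\sad$.

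Next we bound the growth of $\bm\Psi^\bes_{\alpha,-}(\zeta)\bm E_{jj}\bm\Psi^\bes_{\alpha,-}(\zeta)^{-1}$ as $\zeta\to-\infty$. From \eqref{eq:asymptRHPpsimodel} applied to $\bm\Psi^\bes_\alpha$, the boundary value on the negative axis has the form
$$
\bm\Psi^\bes_{\alpha,-}(\zeta)=(\bm I+\Boh(\zeta^{-1}))\,\zeta_-^{-\sp_3/4}\bm U_0\,\ee^{2\zeta_-^{1/2}\sp_3}.
$$
Here $\zeta_-^{1/2}=-\ii|\zeta|^{1/2}$, so the exponential factor is purely oscillatory and bounded. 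The left factor $(\bm I+\Boh(\zeta^{-1}))$ and its inverse are bounded. The conjugation by $\zeta_-^{-\sp_3/4}$ produces at most a factor $|\zeta|^{1/2}$. Since $\det\bm\Psi^\bes_\alpha\equiv1$, the inverse has the same structure. Hence
$$
\bm\Psi^\bes_{\alpha,-}\bm E_{jj}(\bm\Psi^\bes_{\alpha,-})^{-1}=\Boh(|\zeta|^{1/2}),\qquad \zeta\to-\infty,
$$
uniformly and independently of $\sad$.

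On the compact piece $[-M,-\varepsilon]$ we use only the continuity of $\bm\Psi^\bes_{\alpha,-}$ away from the origin, which follows from the explicit Bessel construction in Appendix~\ref{sec:BesselParametrix}. This gives a $\sad$-independent bound there.

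Combining the two regimes gives
$$
|\bm J_{\bm R}(\zeta)-\bm I|\leq C\,\ee^{-\sad}(1+|\zeta|^{1/2})\,\ee^{-\msf u|\zeta|^m},\qquad \zeta\in\Gamma_0\setminus\overline D_\varepsilon,
$$
with $C$ independent of $\sad$. Because $\msf u>0$, the function on the right has finite $L^1$ and $L^\infty$ norms over $(-\infty,-\varepsilon]$, and this yields the claimed $\Boh(\ee^{-\sad})$ estimate.

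The only step needing care is the polynomial growth bound for the conjugated Bessel matrices along the negative axis. Unlike on $\Gamma_\pm$, there is no exponential decay from $\ee^{2\zeta^{1/2}\sp_3}$ here, so all the decay has to come from $\ee^{-\msf u|\zeta|^m}$. This is enough because the Bessel factor grows only polynomially.
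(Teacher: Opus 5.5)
Your proof is correct and matches the paper's argument. The two ingredients are the same: the bound $|\ee^{-\msf h_\infty}/(1+\ee^{-\msf h_\infty})|\leq \ee^{-\msf h_\infty}=\ee^{-\sad}\ee^{-\msf u|\zeta|^m}$, which uses that $\msf h_\infty$ is real on $\Gamma_0$, and the $\sad$-independent bound $\Boh(|\zeta|^{1/2})$ on $\bm\Psi^\bes_{\alpha,-}\bm E_{jj}(\bm\Psi^\bes_{\alpha,-})^{-1}$ from \eqref{eq:asymptRHPpsimodel}, combined with continuity on compacts as in Lemma~\ref{lem:estR2}. As a minor simplification, $\bm E_{jj}$ is diagonal, so the factor $\ee^{2\zeta^{1/2}\sp_3}$ cancels exactly in the conjugation and you never need its boundedness.
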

\begin{proof}
Using again the asymptotic behavior in \eqref{eq:asymptRHPpsimodel} for $\bm \Psi=\bm\Psi^\bes_\alpha$, we obtain that as $\zeta\to \infty$ along $\Gamma_0$, and for $j=1,2$
\begin{equation}\label{eq:esthPsiBessGamma0sinfty}
 \ee^{-\msf h_\infty(\zeta)+\sad}\bm \Psi_{\alpha,-}^\bes(\zeta)\bm E_{jj}\bm \Psi_{\alpha,-}^\bes(\zeta)^{-1}=\Boh(\ee^{-\msf u|\zeta|^m}\zeta^{1/2}),
\end{equation}
with all the terms in the identity above being independent of $\sad$. Using that $\msf h_\infty$ is real along the real axis, we obtain the inequality $|1/(1+\ee^{-\msf h_\infty})|\leq 1$ for $\zeta<0$. Proceeding now as in the proof of Lemma~\ref{lem:estR2}, these estimates are sufficient to conclude the proof.
\end{proof}

We finally conclude
\begin{prop}\label{prop:RLpestimatesBesselasympt}
The estimates
$$
\|\bm R-\bm I\|_{L^\infty(\C\setminus \Gamma_{\bm R})}=\Boh(\ee^{-\sad}),\quad \|\bm R_\pm-\bm I\|_{L^1\cap L^2\cap L^\infty(\Gamma_{\bm R})}=\Boh(\ee^{-\sad})
$$
are valid as $\sad\to +\infty$.
\end{prop}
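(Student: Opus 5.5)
Combining Lemmas~\ref{lem:estR1}, \ref{lem:estR2} and \ref{lem:estR3}, we first record
$$
\|\bm J_{\bm R}-\bm I\|_{L^1\cap L^\infty(\Gamma_{\bm R})}=\Boh(\ee^{-\sad}),\quad \sad\to+\infty.
$$
By interpolation, $\|\bm J_{\bm R}-\bm I\|_{L^2(\Gamma_{\bm R})}\le \|\bm J_{\bm R}-\bm I\|_{L^1}^{1/2}\|\bm J_{\bm R}-\bm I\|_{L^\infty}^{1/2}$, so the same bound holds in $L^2$. We then apply standard small norm theory for RHP~\ref{RHP:RmatrixModelBesselanalysis}. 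The contour $\Gamma_{\bm R}$ is a finite union of rays and a circle, hence a Carleson curve, and the Cauchy operator $C_-$ is bounded on $L^2(\Gamma_{\bm R})$ with a norm independent of $\sad$. Since $\Gamma_{\bm R}$ does not depend on $\sad$, the operator $f\mapsto C_-(f(\bm J_{\bm R}-\bm I))$ has norm $\Boh(\ee^{-\sad})$. Therefore $1-C_{\bm J_{\bm R}}$ is invertible for $\sad$ large, and $\bm R_--\bm I=(1-C_{\bm J_{\bm R}})^{-1}C_-(\bm J_{\bm R}-\bm I)$ satisfies $\|\bm R_--\bm I\|_{L^2}=\Boh(\ee^{-\sad})$. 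Uniqueness of $\bm R$, and hence its identification with this construction, follows from condition (iv) together with $\det\bm R\equiv1$, in the usual way.

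For the $L^\infty$ bound off the contour we use the representation
$$
\bm R(\zeta)=\bm I+\frac{1}{2\pi\ii}\int_{\Gamma_{\bm R}}\frac{\bm R_-(s)(\bm J_{\bm R}(s)-\bm I)}{s-\zeta}\,\dd s.
$$
For $\zeta$ at distance $\ge\delta$ from $\Gamma_{\bm R}$, Cauchy--Schwarz and the $L^1$, $L^2$ bounds on $\bm J_{\bm R}-\bm I$ give $\Boh(\ee^{-\sad})$. Near the contour, the jump matrices extend analytically off $\Gamma_{\bm R}$, since they are built from $\bm\Psi^\bes_\alpha$, $\ee^{-\msf h_\infty}$ and $\bm P_0$. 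We therefore deform the rays and the circle slightly, changing $\bm R$ by explicit analytic factors that are close to $\bm I$, so every point becomes a distance $\ge\delta$ from some admissible contour. The deformed jumps obey the same $\Boh(\ee^{-\sad})$ estimates, because the proofs of the three lemmas rely only on uniform exponential decay along the rays. This gives the first estimate and, taking boundary values, the $L^\infty(\Gamma_{\bm R})$ estimate for $\bm R_\pm$. The $L^2$ estimate for $\bm R_+$ follows from $\bm R_+-\bm I=(\bm R_--\bm I)\bm J_{\bm R}+(\bm J_{\bm R}-\bm I)$ and boundedness of $\bm J_{\bm R}$.

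The $L^1$ bound needs extra care, since $\Gamma_{\bm R}$ is unbounded. We write $\bm R_\pm-\bm I=C_\pm(\bm R_-(\bm J_{\bm R}-\bm I))$, but Cauchy transforms do not preserve $L^1$. Instead we use the jump identity $\bm R_+-\bm R_-=\bm R_-(\bm J_{\bm R}-\bm I)$, which is $\Boh(\ee^{-\sad})$ in $L^1$ by the $L^\infty$ bound on $\bm R_-$. The remaining control on the unbounded part comes from the decay of $\bm R-\bm I=\Boh(\zeta^{-1})$, which is not integrable on its own. So we estimate the $L^1$ norm only on the circle $\partial D_\varepsilon$ and on the bounded pieces, and on the rays we aim to show $|\bm R_\pm(\zeta)-\bm I|\le C\ee^{-\sad}|\zeta|^{-1}$ combined with the exponential decay of the jump. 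This is the step I expect to be the main obstacle. If a direct $L^1$ bound on the rays fails, I would fall back on interpreting the $L^1$ claim for the quantities actually used later, namely $\bm R_-(\bm J_{\bm R}-\bm I)$, which is exponentially decaying in $L^1$.
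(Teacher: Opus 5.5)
Your $L^2$ and $L^\infty$ arguments are the standard small-norm theory that the paper's one-line proof invokes. You interpolate the $L^1\cap L^\infty$ jump bounds of Lemmas~\ref{lem:estR1}--\ref{lem:estR3} to $L^2$, invert $1-C_{\bm J_{\bm R}}$ on $L^2(\Gamma_{\bm R})$, and use the Cauchy representation, deforming the contour (the jumps are analytic) to make the $L^\infty$ bound uniform up to $\Gamma_{\bm R}$. That part is fine, and more explicit than the paper.

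The step you flag for the $L^1$ bound cannot be completed, by any route. The jump $\bm J_{\bm R}-\bm I$ decays exponentially along the rays and continues analytically off them. Hence the expansion $\bm R(\zeta)=\bm I+\bm R_1\zeta^{-1}+\Boh(\zeta^{-2})$ holds uniformly up to the contour, for both boundary values, with $\bm R_1=-\frac{1}{2\pi\ii}\int_{\Gamma_{\bm R}}\bm R_-(s)(\bm J_{\bm R}(s)-\bm I)\,\dd s$.

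The decay of the jump controls only the difference $\bm R_+-\bm R_-$, not each $\bm R_\pm-\bm I$ separately. So for large $|\zeta|$ on the rays, $|\bm R_\pm(\zeta)-\bm I|\geq\tfrac12|\bm R_1|\,|\zeta|^{-1}$, which is not integrable.

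Moreover $\bm R_1\neq\bm 0$ in general. Combining \eqref{deff:pqr}, \eqref{eq:Psiinfty1R1Psibes1} and \eqref{eq:Psiinftyk1} gives
$$
(\bm R_1)_{12}=\frac{\xad}{2\ii}\Bigl(\msf p+\frac{4\alpha^2-1}{8\xad}\Bigr),
$$
which is exactly the nontrivial correction to $\msf p$ computed in Theorem~\ref{thm:TWtyperepres}. Hence $\int_{\Gamma_{\bm R}}|\bm R_\pm-\bm I|\,|\dd s|=\infty$. The $L^1(\Gamma_{\bm R})$ part of the proposition is false as literally stated, and the paper's appeal to ``standard perturbation theory'' does not supply it either.

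What is true, and all that is used afterwards, is the following pair of bounds. First, the $L^1$ bound for $\bm R_\pm-\bm I$ on bounded portions of $\Gamma_{\bm R}$. Second, $\|\bm R_-(\bm J_{\bm R}-\bm I)\|_{L^1(\Gamma_{\bm R})}=\Boh(\ee^{-\sad})$, which follows from your $L^\infty$ bound on $\bm R_-$ and Lemmas~\ref{lem:estR1}--\ref{lem:estR3}. This second bound is what the proof of Proposition~\ref{prop:estimatesBessellimitmodelpphi} needs to get $\bm R_1=\Boh(\ee^{-\sad})$. So drop the ray-by-ray $L^1$ attempt and make your fallback the actual claim.
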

\begin{proof}
The estimates on $L^1\cap L^\infty$ from Lemmas~\ref{lem:estR1}--\ref{lem:estR3} extend to estimates valid on $L^2\cap L^\infty$ as well, and the current lemma then follows from standard perturbation theory for RHPs.
\end{proof}

For later, we will need some consequences of the asymptotic analysis just carried out. We start stating a rough estimate for $\bm\Psi_\infty$, valid uniformly in $s$.

\begin{prop}\label{prop:estPsiinfty1}
Fix $\sad_0\in \R$. 
There exist constants $M>0$ and $\eta>0$ such that the inequalities
$$
\left|\bm \Psi_{\infty,-}(\zeta)\bm E_{21}\bm \Psi_{\infty,-}(\zeta)^{-1} \right|\leq M \ee^{-\eta |\zeta|^{1/2}},\quad \zeta\in (\Gamma_+\cup\Gamma_-)\setminus D_\varepsilon,
$$
and
$$
\left| \bm \Psi_{\infty,-}(\zeta)\bm E_{kk}\bm \Psi_{\infty,-}(\zeta)^{-1} \right|\leq M |\zeta|^{1/2},\quad \zeta\in \Gamma_0\setminus D_\varepsilon, \quad k=1,2,
$$
are valid for any $\sad\geq \sad_0$.
\end{prop}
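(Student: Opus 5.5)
The plan is to use the representation $\bm\Psi_\infty=\bm R\,\bm\Psi^\bes_\alpha$, valid outside $\overline D_\varepsilon$ by \eqref{deff:Rbesselasymp}. Since all the points in question lie on $\Gamma\setminus D_\varepsilon$, taking boundary values gives $\bm\Psi_{\infty,-}=\bm R_-\bm\Psi^\bes_{\alpha,-}$. Hence
$$
\bm \Psi_{\infty,-}\bm E_{jk}\bm \Psi_{\infty,-}^{-1}=\bm R_-\left(\bm \Psi^\bes_{\alpha,-}\bm E_{jk}(\bm \Psi^\bes_{\alpha,-})^{-1}\right)\bm R_-^{-1},
$$
and the job splits into two parts: uniform control of $\bm R_-^{\pm1}$ for $\sad\geq\sad_0$, and the ($\sad$-independent) behavior of the conjugated Bessel matrices.

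For the first part, Proposition~\ref{prop:RLpestimatesBesselasympt} gives $\|\bm R_\pm-\bm I\|_{L^\infty(\Gamma_{\bm R})}=\Boh(\ee^{-\sad})$. So there is $\sad_1$ such that $|\bm R_-|,|\bm R_-^{-1}|\leq 2$ for $\sad\geq\sad_1$; the bound on the inverse uses $\det\bm R\equiv1$. It remains to handle the compact range $\sad\in[\sad_0,\sad_1]$, which I expect to be the main obstacle. Here I would argue by continuity in $\sad$. The RHP for $\bm R$ has jumps depending analytically on $\sad$, and at each $\sad$ they satisfy the same decay estimates as in Lemmas~\ref{lem:estR1}--\ref{lem:estR3}, with constants uniform on compact $\sad$-sets; note these estimates are not small on this range. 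By Theorem~\ref{thm:solvability} the problem is solvable for every $\sad$, so the associated singular integral operator $1-C_{\bm J_{\bm R}}$ is invertible on $L^2(\Gamma_{\bm R})$. Its inverse depends continuously on $\sad$ in operator norm, because $\bm J_{\bm R}-\bm I$ depends continuously on $\sad$ in $L^2\cap L^\infty$. This gives a uniform $L^2$ bound on $\bm R_--\bm I$ over $[\sad_0,\sad_1]$. To upgrade this to an $L^\infty$ bound on $\Gamma_{\bm R}$, I would deform the contour slightly: $\bm J_{\bm R}$ extends analytically off $\Gamma\setminus D_\varepsilon$, since $\msf h_\infty$ and $\bm\Psi^\bes_\alpha$ are analytic there. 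Then $\bm R_-$ on the original contour is an interior value of the solution on the deformed contour, and the Cauchy integral estimate yields the sup bound. Alternatively, if it is simpler, one can use the pointwise bound on the Cauchy transform together with $L^1\cap L^2$ control of $(\bm R_--\bm I)(\bm J_{\bm R}-\bm I)$.

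For the second part, I would use the Bessel asymptotics \eqref{eq:asymptRHPpsimodel} for $\bm\Psi^\bes_\alpha$, together with its continuity on compact subsets of $\Gamma\setminus D_\varepsilon$. On $\Gamma_\pm$, the proof of Lemma~\ref{lem:estR2} already gives $\bm\Psi^\bes_{\alpha,-}\bm E_{21}(\bm\Psi^\bes_{\alpha,-})^{-1}=\Boh(\ee^{-4\zeta^{1/2}}\zeta^{1/2})$. Since $\re\zeta^{1/2}=|\zeta|^{1/2}\cos(\theta_m/2)$ there, this is bounded by $M\ee^{-\eta|\zeta|^{1/2}}$ for any $\eta<4\cos(\theta_m/2)$, where the $\zeta^{1/2}$ factor is absorbed by lowering $\eta$. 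On $\Gamma_0$, the exponentials $\ee^{\pm2\zeta^{1/2}\sp_3}$ are oscillatory, with modulus one, so conjugating $\bm E_{kk}$ gives $\Boh(|\zeta|^{1/2})$ from the factors $\zeta^{\mp\sp_3/4}$. This is precisely \eqref{eq:esthPsiBessGamma0sinfty} without the exponential factor. Combining this with the uniform bounds on $\bm R_-^{\pm1}$ yields both inequalities with constants independent of $\sad\geq\sad_0$.
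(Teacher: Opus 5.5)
Your proposal is correct and follows the paper's route. You write $\bm\Psi_{\infty,-}=\bm R_-\bm\Psi^\bes_{\alpha,-}$ on $\Gamma\setminus D_\varepsilon$, control $\bm R_-^{\pm1}$ uniformly using Proposition~\ref{prop:RLpestimatesBesselasympt} and the analyticity of the jumps, and read off both bounds from the Bessel asymptotics \eqref{eq:asymptRHPpsimodel}. Your continuity argument for the compact range $\sad\in[\sad_0,\sad_1]$ (invertibility of $1-C_{\bm J_{\bm R}}$ at each $\sad$, together with operator-norm continuity in $\sad$) fills in a step the paper leaves implicit, since its proof only treats the limit $\sad\to+\infty$.
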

\begin{proof}
Since the jumps of $\bm R$ are analytic, standard arguments on the asymptotic analysis of $\bm R$ as $\sad\to +\infty$ ensure that
$$
\bm \Psi_{\infty,-}(\zeta)=(\bm I+\Boh(\ee^{-\sad}))\bm \Psi_{\alpha,-}^\bes(\zeta),
$$
uniformly along $\Gamma \setminus D_\varepsilon$ as $\sad \to +\infty$. The result now follows from the asymptotics \eqref{eq:asymptRHPpsimodel} for $\bm \Psi=\bm \Psi_{\alpha}^\bes(\zeta)$ and the fact that $\bm \Psi_{\alpha,-}^\bes$ is continuous along $\Gamma\setminus D_\varepsilon$ and hence bounded on compact subsets of $\Gamma\setminus D_\varepsilon$.
\end{proof}

Next, we obtain estimates for quantities related to the integrable systems studied in Section~\ref{sec:modelIntSys}. 
It is worth mentioning that the estimate in (ii) below is not optimal, but sufficient for our purposes.

\begin{prop}\label{prop:estimatesBessellimitmodelpphi}
With the functions $\msf p=\msf p(\sad,\xad)$ and $\Phi(\zeta)=\Phi(\zeta\mid \sad,\xad)$ introduced in \eqref{deff:pqr} and \eqref{eq:PhiPhikUpsilonk}, respectively, we have the following estimates. 
\begin{enumerate}
    \item [\rm{(i)}] The estimate
    $$
    \msf p(\sad,\xad)=-\frac{4\alpha^2-1}{8\xad }+\Boh(\ee^{-s})
    $$
    is valid as $\sad\to +\infty$, uniformly for $\xad>0$ in compacts of $(0,+\infty)$.
    \item [\rm{(ii)}]
    There exists $\delta>0$ and $M>0$ such that the estimate
    \begin{multline*}
         |\Phi(\zeta\mid \sad,\xad)|+ |\partial_\xad \Phi(\zeta\mid \sad,\xad)| 
         \\
         \leq M\left(|\zeta|^{1/4}\chi_{(-\infty,-\delta)}(\zeta)+(\delta_0(\alpha)(\log|\zeta|)^2+|\zeta|^{-|\alpha|/2})\chi_{(-\delta,0)}(\zeta)\right),\quad \zeta<0,
    \end{multline*}
    is valid as $\sad\to +\infty$, uniformly for $\xad>0$ in compacts of $(0,+\infty)$.
\end{enumerate}
\end{prop}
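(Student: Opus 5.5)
The plan is to transfer the asymptotics of $\bm R$ from Proposition~\ref{prop:RLpestimatesBesselasympt} to $\bm\Phi_\infty$, through the relation \eqref{eq:PsiinftytoPhiinfty} between $\bm\Phi_\infty$ and $\bm\Psi_\infty$ and the relation \eqref{eq:relationPhi1inftyPsi1infty} between their residue coefficients. Throughout we use $\xad=2\msf u^{-1/(2m)}$; for $\xad$ in a compact of $(0,\infty)$, $\msf u$ ranges over a compact of $(0,\infty)$. All bounds in Lemmas~\ref{lem:estR1}--\ref{lem:estR3} depend continuously on $\msf u$, through $\ee^{-\msf u|\zeta|^m}$ and the fixed $\varepsilon$. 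Hence Proposition~\ref{prop:RLpestimatesBesselasympt} holds uniformly for $\msf u$ in compacts, and so do its consequences below.

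For (i), write $\bm\Psi_\infty=\bm R\bm\Psi^\bes_\alpha$ outside $D_\varepsilon$. Expanding at infinity gives $\bm\Psi_{\infty,1}=\bm\Psi^\bes_{\alpha,1}+\bm R_1$, where
$$
\bm R_1=-\frac{1}{2\pi\ii}\int_{\Gamma_{\bm R}}\bm R_-(\xi)(\bm J_{\bm R}(\xi)-\bm I)\,\dd\xi .
$$
By Lemmas~\ref{lem:estR1}--\ref{lem:estR3} and Proposition~\ref{prop:RLpestimatesBesselasympt}, $\bm R_1=\Boh(\ee^{-\sad})$. It remains to compute $(\bm\Psi^\bes_{\alpha,1})_{12}$ from the explicit Bessel parametrix of Appendix~\ref{sec:BesselParametrix}. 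Using the large-argument expansion of $I_\alpha,K_\alpha$ with Hankel coefficient $(4\alpha^2-1)/8$, and normalizing $\ee^{2\zeta^{1/2}}$, I expect $(\bm\Psi^\bes_{\alpha,1})_{12}=\ii(4\alpha^2-1)/16$ up to the sign conventions. Then $\msf p=\frac{2\ii}{\xad}(\bm\Psi_{\infty,1})_{12}$ from \eqref{deff:pqr} gives $-\frac{4\alpha^2-1}{8\xad}+\Boh(\ee^{-\sad})$. The only real work in this step is tracking the scalings in \eqref{eq:relationPhi1inftyPsi1infty}.

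For (ii), set $\Phi=(\bm\Phi_\infty)_{11}$, so $\partial_\xad\Phi=-\ii\bm\Upsilon_{21}=-\ii((\bm\Phi_\infty)_{21}+\ii\msf p\Phi)$. It therefore suffices to bound the first column of $\bm\Phi_{\infty,+}$ on $(-\infty,0)$, together with $\msf p$, which is bounded by (i). For $\zeta<-\delta$ we rescale to $\zeta/\msf u^{1/m}\in\Gamma_0\setminus D_\varepsilon$, taking $\delta$ larger than $\varepsilon\msf u^{1/m}$ over the compact range. There $\bm\Psi_{\infty,\pm}=\bm R_\pm\bm\Psi^\bes_{\alpha,\pm}$ with $\bm R_\pm$ uniformly bounded. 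The Bessel asymptotics give entries of size $\Boh(|\zeta|^{1/4})$ on the negative axis, since the exponential there is oscillatory, and continuity handles the intermediate compact range. The lower-triangular factor in \eqref{eq:PsiinftytoPhiinfty} is bounded because $\sigma_{\bm\Phi}^{-1}\le 1+\ee^{-\sad_0}$ for $\zeta<0$ and $\sad\ge\sad_0$.

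For $-\delta<\zeta<0$ we use the local representation $\bm\Psi_\infty=\bm R\bm P_0$ with $\bm P_0$ from \eqref{eq:localparPsiinfty}. Here $\bm R$ is uniformly bounded and $\bm\Psi^\bes_{\alpha,0}$ is fixed and analytic. The factor $\zeta^{\alpha\sp_3/2}(\bm I+\msf a\bm E_{12})$ produces entries of size $|\zeta|^{\pm\alpha/2}$, with an extra $\log|\zeta|$ when $\alpha\in\Z_{\ge0}$. The factor $\lambda$ satisfies $\lambda=\Boh(\ee^{-\sad})$ by \eqref{eq:behlambdainfty}, but only after contour deformation. I expect this to be the main obstacle: the near-origin bound must be uniform in $\sad$ while $\lambda$ has the singular behaviour \eqref{eq:lambdabeh}. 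The cure is to use the bound \eqref{eq:boundlambdabessanalysis} directly, which gives $|\lambda(\zeta)|\lesssim \ee^{-\sad}(1+|\zeta|^{\alpha}+|\log|\zeta||)$ uniformly. After conjugation by $\zeta^{\alpha\sp_3/2}$, this yields at worst $|\zeta|^{-|\alpha|/2}$ times logarithms, which is within the stated crude bound $\delta_0(\alpha)(\log|\zeta|)^2+|\zeta|^{-|\alpha|/2}$.

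Finally, uniformity in $\xad$ also requires controlling $\partial_\xad$. We avoid differentiating $\bm R$ by using the Lax relation \eqref{eq:LaxeqtionPsi} as above, so that only $\bm\Phi_\infty$ itself and $\msf p$ enter.
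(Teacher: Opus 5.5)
Your proposal is correct and follows essentially the same route as the paper. Part (i) comes from $\bm\Psi_{\infty,1}=\bm R_1+\bm\Psi^\bes_{\infty,1}(\alpha)$ with $\bm R_1=\Boh(\ee^{-\sad})$ and the Bessel coefficient \eqref{eq:Psiinftyk1}. Part (ii) comes from the representations $\bm\Psi_\infty=\bm R\bm\Psi^\bes_\alpha$ away from the origin and $\bm\Psi_\infty=\bm R\bm P_0$ near it. Two of your steps make explicit what the paper leaves implicit: the Lax-pair treatment of $\partial_\xad\Phi$, and the use of $\bm R$ on $\zeta<-\delta$ to get uniformity in $\sad$, where the paper cites the pointwise-in-$\sad$ asymptotics \eqref{eq:asymptPhizetainftyminus}.

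The obstacle you anticipate with $\lambda$ is moot. In $\mcal S_+$ the lower-triangular factor in \eqref{eq:PsiinftytoPhiinfty} cancels the last factor of $\bm P_0$. Hence near the origin the first column of $\bm\Phi_{\infty,+}$ is $\msf u^{-\sp_3/(4m)}\bm R(w)\bm\Psi^\bes_{\alpha,0}(w)\,w_+^{\alpha/2}\bm e_1$ with $w=\zeta\msf u^{-1/m}$, and neither $\lambda$ nor $\msf a$ enters. This matters, because the majorant \eqref{eq:boundlambdabessanalysis} diverges on the cut and so would not by itself give the uniform control of $\lambda_+$ you were planning to use.
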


\begin{proof}
The RHP \ref{RHP:RmatrixModelBesselanalysis} for $\bm R$ is equivalent to the following integral equation:
%
$$
\bm R(\zeta)=\bm I+\frac{1}{2\pi \ii}\int_{\Gamma_{\bm R}}\bm R_-(s)(\bm J_{\bm R}(s)-\bm I)\frac{\dd s}{s-\zeta},\quad \zeta\in \C\setminus \Gamma_{\bm R}.
$$
From this identity, condition (iii) in RHP~\ref{RHP:RmatrixModelBesselanalysis} may be improved to
$$
\bm R(\zeta)=\bm I+\frac{1}{\zeta}\bm R_{1}+\Boh(\zeta^{-2}),\quad \zeta\to \infty,\quad \text{with}\quad \bm R_{1}\deff-\frac{1}{2\pi \ii}\int_{\Gamma_{\bm R}}\bm R_-(s)(\bm J_{\bm R}(s)-\bm I)\dd s.
$$
Thanks to Proposition~\ref{prop:RLpestimatesBesselasympt}, $\bm R_{1}=\Boh(\ee^{-\sad})$ as $\sad\to +\infty$. On the other hand, from \eqref{deff:Rbesselasymp} and \eqref{eq:asympexpPsibes} we get the relation
\begin{equation}\label{eq:Psiinfty1R1Psibes1}
\bm \Psi_{\infty,1}=\bm R_{1}+\bm\Psi_{\infty,1}^\bes(\alpha).
\end{equation}
Combining with \eqref{deff:pqr} and \eqref{eq:Psiinftyk1}, the result about $\msf p$ follows.

The result about $\Phi$ follows in a similar manner, as we now outline. The claimed estimate for $\zeta<0$ and away from the origin follows from \eqref{eq:asymptPhizetainftyminus}. From relations~\eqref{eq:PsiinftytoPhiinfty}, \eqref{eq:PhiPhikUpsilonk} and \eqref{eq:uxrelation}, it follows that
\begin{align}\label{eq:estPhiPsiBessel1}
\Phi( 4\zeta/\xad^2 ) & =
\left(\frac{\xad}{2}\right)^{1/2}\left[\bm\Psi_\infty(\zeta)\left(\bm I+\chi_{\mcal S_+}(\zeta)\ee^{\pi\ii\alpha}(1+\ee^{-\msf h_\infty(\zeta)})\bm E_{21}\right)  \right]_{11} 
\nonumber
\\ 
& =
\begin{dcases}
\left(\dfrac{\xad}{2}\right)^{1/2}  \left(\bm\Psi_{\infty}\right)_{11}(\zeta), & \zeta>0,\\
\left(\dfrac{\xad}{2}\right)^{1/2}
\left[ \left(\bm\Psi_{\infty}\right)_{11}(\zeta)+\ee^{\pi\ii\alpha}(1+\ee^{-\msf h_\infty(\zeta)})\left(\bm\Psi_{\infty}\right)_{12}(\zeta) \right],&  \zeta<0.
\end{dcases}
\end{align}
For $-\varepsilon<\zeta<0$, we use  \eqref{eq:localparPsiinfty} and \eqref{deff:Rbesselasymp}, together with Proposition~\ref{prop:RLpestimatesBesselasympt}, to express
$$
\bm\Psi_{\infty,+}(\zeta)=\left(\bm I+\Boh(\ee^{-\sad})\right)\bm\Psi_{\alpha,0}^\bes(\zeta)(\bm I+\lambda_+(\zeta)\bm E_{12})\zeta^{\alpha\sp_3/2}_+(\bm I+\msf a_+(\zeta)\bm E_{12})(\bm I-\ee^{\pi \ii\alpha}(1+\ee^{-\msf h_\infty(\zeta)})\bm E_{21}).
$$
The term $\bm\Psi_{\alpha,0}^\bes$ is independent of $\sad$ and analytic in $\zeta$, hence bounded for $\zeta$ in a neighborhood of the origin. From \eqref{eq:hatUpsilonasyfactor}, $\msf a_+(\zeta)=\Boh(1+\delta_0(\alpha)\log|\zeta|)$. From \eqref{eq:lambdabeh}, we also learn that $(\bm I+\lambda_+(\zeta)\bm E_{12})\zeta^{\alpha\sp_3/2}_+=\Boh(|\zeta|^{-|\alpha|/2}+\delta_0(\alpha)\log |\zeta|)$.
Finally, from \eqref{def:hinfty}, we see that $\ee^{-\msf h_\infty(\zeta)}=\Boh(\ee^{-\sad})$ as $\sad\to +\infty$, uniformly for $\zeta<0$. The result now follows from \eqref{eq:estPhiPsiBessel1}.
\end{proof}

\subsection{Asymptotic analysis of $\bm \Psi_\infty$ as $\msf u\to +\infty$}\label{sec:Psiinftylargeuanalysis}\hfill 

We also need to perform the asymptotic analysis of $\bm \Psi_\infty$ as $\msf u\to +\infty$. The latter limit corresponds to $\xad \to 0^+$, and this asymptotic analysis will provide (singular) boundary conditions for the functions $\msf p$ and $\Phi$.

Our starting point in this asymptotic analysis will already be the function $\bm R$ from \eqref{deff:Rbesselasymp}, which is the solution to RHP~\ref{RHP:RmatrixModelBesselanalysis}. The remaining challenge here is the analysis of the jump for $\bm R$ as $\msf u\to+\infty$, as opposed to $\sad\to +\infty$ provided in the previous section. In other words, we need the analogues of Lemmas~\ref{lem:estR1}--\ref{lem:estR3}, which we now provide.

\begin{lemma}\label{lem:R1uanalysis}
    The estimate
    $$
    \|\bm J_{\bm R}-\bm I\|_{\bm L^1\cap L^\infty(\partial D_\varepsilon)}=\Boh(\msf u^{-(1+\alpha)/m}),\quad \msf u \to +\infty,
    $$
    holds uniformly for $\sad\geq \sad_0$ with fixed $\sad_0\in \R$.
\end{lemma}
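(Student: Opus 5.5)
We start from the explicit formula \eqref{eq:jumpRbessboundary} for $\bm J_{\bm R}$ on $\partial D_\varepsilon$ obtained in the proof of Lemma~\ref{lem:estR1}. There, $\bm J_{\bm R}-\bm I$ is written as $\bm\Psi^\bes_{\alpha,0}(\zeta)\zeta^{\alpha\sp_3/2}[\cdots]\zeta^{-\alpha\sp_3/2}\bm\Psi^\bes_{\alpha,0}(\zeta)^{-1}$. The outer factors, together with $\msf a$, are independent of $\sad$ and $\msf u$ and bounded on the fixed circle $\partial D_\varepsilon$. The bracket is a sum of terms, each of which is linear in $\lambda(\zeta)\zeta^{-\alpha}$ or in $\ee^{-\msf h_\infty(\zeta)}$, with bounded coefficients. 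It therefore suffices to prove two bounds on $\partial D_\varepsilon$, uniformly for $\sad\geq \sad_0$:
\[
\sup_{\zeta\in\partial D_\varepsilon}|\lambda(\zeta)|=\Boh(\msf u^{-(1+\alpha)/m}),\qquad \sup_{\zeta\in\partial D_\varepsilon\cap(\mcal S_+\cup\mcal S_-)}|\ee^{-\msf h_\infty(\zeta)}|=\Boh(\msf u^{-(1+\alpha)/m}).
\]
Terms quadratic in $\lambda$, such as the one involving $(\msf a+\lambda\zeta^{-\alpha})$, are harmless, because $\lambda$ is small. The $L^1$ norm then follows from the $L^\infty$ norm, since the contour is bounded.

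For the exponential term, note that the characteristic functions in the bracket restrict it to $\mcal S_\pm$. On that set, \eqref{eq:ineqzetam} gives $(-1)^m\re\zeta^m\geq \frac12|\zeta|^m=\frac12\varepsilon^m$. Hence
\[
|\ee^{-\msf h_\infty(\zeta)}|=\ee^{-\sad}\ee^{-\msf u(-1)^m\re \zeta^m}\leq \ee^{-\sad_0}\ee^{-\msf u\varepsilon^m/2},
\]
which decays exponentially in $\msf u$ and is in particular $\Boh(\msf u^{-(1+\alpha)/m})$.

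For $\lambda$, start from \eqref{eq:boundlambdabessanalysis}. The difficulty is that $\partial D_\varepsilon$ meets $\Gamma_0$ at $\zeta=-\varepsilon$, where $|\zeta-s|$ can vanish. We handle this in the same way the paper did for \eqref{eq:behlambdainfty}. Split the integral at $s=-\varepsilon/2$. On $(-\varepsilon/2,0)$ we have $|\zeta-s|\geq \varepsilon/2$ for $\zeta\in\partial D_\varepsilon$, and the substitution $s=-\msf u^{-1/m}v$ gives
\[
\frac{\ee^{-\sad}}{2\pi}\cdot\frac{2}{\varepsilon}\int_0^{\infty}|s|^\alpha \ee^{-\msf u|s|^m}\dd |s| =C\,\ee^{-\sad_0}\,\msf u^{-(1+\alpha)/m},
\]
using $\alpha>-1$. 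On $(-\infty,-\varepsilon/2)$, the integrand density $|s|^\alpha\ee^{-\msf u|s|^m}$ is $\Boh(\ee^{-\msf u(\varepsilon/2)^m/2})$, and it is also analytic in $s$ (with $|s|^\alpha=(-s)^\alpha$). To avoid the singularity at $-\varepsilon$ we would deform this part of the contour slightly into $\mcal S_\pm$, or equivalently bound the resulting Cauchy integral via its boundary values. Near the real axis, $\re(-1)^m s^m$ stays comparable to $|s|^m$, so the deformed integral is exponentially small in $\msf u$, uniformly in $\zeta\in\partial D_\varepsilon$ and including the $\pm$ boundary values at $-\varepsilon$. We expect this second piece to be the main, though minor, obstacle. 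A cleaner alternative is to perturb the radius $\varepsilon$, which is allowed since $\varepsilon$ was fixed only small. Combining the two pieces gives the bound on $\lambda$, and inserting both bounds into \eqref{eq:jumpRbessboundary} yields the lemma, uniformly for $\sad\geq \sad_0$.
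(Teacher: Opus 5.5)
Your argument is correct and follows the paper's proof. Like the paper, you bound the $\msf u$-independent outer factors in \eqref{eq:jumpRbessboundary} and obtain $\lambda=\Boh(\msf u^{-(1+\alpha)/m})$ from the scaling $s=-\msf u^{-1/m}v$, with a contour deformation to reach the boundary values at $\zeta=-\varepsilon$. You then make the $\ee^{-\msf h_\infty}$ terms exponentially small via \eqref{eq:ineqzetam}; your split at $s=-\varepsilon/2$ simply spells out the paper's ``standard argument''.

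Two small corrections. First, the ``cleaner alternative'' of perturbing the radius does not help, because every circle $\partial D_\varepsilon$ meets $\Gamma_0$ at $-\varepsilon$. Second, the deformation must be applied to the actual density $-(-s)^\alpha\ee^{-\msf h_\infty(s)}/(1+\ee^{-\msf h_\infty(s)})$, not to the absolute-value bound; its denominator stays away from zero in a thin sector around $\Gamma_0$ once $\msf u$ is large.
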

\begin{proof}
    We analyze $\bm J_{\bm R}$ along $\partial D_\varepsilon$ through its explicit expression from \eqref{eq:jumpRbessboundary}, which for convenience of the reader we recall to be
\begin{multline*}
\bm J_{\bm R}(\zeta) =
    \bm I+\bm \Psi^\bes_{\alpha,0}(\zeta)\zeta^{\alpha\sp_3/2}\left[ \lambda(\zeta)\zeta^{-\alpha}\bm E_{12} -\ee^{-\msf h_\infty(\zeta)}\left(\chi_{\mcal S_+}(\zeta)\ee^{\alpha\pi \ii}-\chi_{\mcal S_-}(\zeta)\ee^{-\alpha\pi\ii}\right) \right. \\
    \left. \times (\bm I+(\msf a(\zeta)+\lambda(\zeta)\zeta^{-\alpha})\bm E_{12})\bm E_{21}\left(\bm I-\msf a(\zeta)\bm E_{12}\right) \right]\zeta^{-\alpha\sp_3/2}\bm \Psi^\bes_{\alpha,0}(\zeta)^{-1}.
\end{multline*}
We next estimate the right-hand side above term by term.
    
    The function $\bm\Psi_{\alpha,0}^\bes$ and its matrix inverse $(\bm\Psi_{\alpha,0}^\bes)^{-1}$ are analytic in $\zeta$ and independent of $\msf u$, hence remain uniformly bounded for $\zeta\in \partial D_\varepsilon$. The inequality \eqref{eq:boundlambdabessanalysis}, together with a straightforward asymptotic analysis of the right-hand side through the classical saddle point method, unravels the rough estimate
    $$
    \lambda(\zeta)=\Boh\left(\msf u^{-(1+\alpha)/m}\right),\quad \msf u\to +\infty,
    $$
    valid uniformly for $\sad\geq \sad_0$, and uniformly for $\zeta$ in compacts of $\C\setminus \Gamma_0$. A standard argument using deformation of contours shows that the same estimate is also valid down to the boundary values $\lambda_\pm(\zeta)$ for $\zeta$ along $\Gamma_0$, and therefore it is valid uniformly for $\zeta\in \partial D_\varepsilon$. The factors $\zeta^{\alpha\sp_3/2}$ and $\msf a$ are continuous functions of $\zeta \in \C\setminus \Gamma_0$, with continuous boundary values along $\Gamma_0$, and independent of $\msf u$. Hence, they remain bounded for $\zeta\in \partial D_\varepsilon$.

    Finally, from the explicit form of $\msf h_\infty$ from \eqref{def:hinfty} and \eqref{eq:ineqzetam}, we see that
    $$
    |\ee^{-\msf h_\infty(\zeta)}|\leq \ee^{-\sad -\msf u|\zeta|^{m}/2},\quad \zeta\in \mcal S_+\cup\mcal S_-.
    $$
    Hence, for some $\eta>0$, the estimate $\ee^{-\msf h(\zeta)}\chi_{\mcal S^\pm}(\zeta)=\Boh(\ee^{-\eta \msf u})$ holds uniformly for $\zeta\in \partial D_\varepsilon$. 

    Plugging in all these estimates into \eqref{eq:jumpRbessboundary}, we obtain the lemma in the $L^\infty$ norm, and the $L^1$ claim follows from it and compactness of $\partial D_\varepsilon$.
\end{proof}

\begin{remark}\label{rk:JRexp}
    The proof of Lemma~\ref{lem:R1uanalysis} actually shows a slightly stronger result, namely that
    $$
    \bm J_{\bm R}(\zeta)=\bm I+\lambda(\zeta)\bm \Psi^\bes_{\alpha,0}(\zeta)\bm E_{12}\bm \Psi^\bes_{\alpha,0}(\zeta)^{-1}+\Boh(\ee^{-\eta \msf u}),\quad \msf u\to \infty,
    $$
    uniformly for $\zeta\in \partial D_\varepsilon$. Performing the change of variables $\msf u^{1/m}s=v$ in \eqref{eq:lambdainfty}, we obtain the asymptotic expansion
    \begin{equation}\label{eq:lambdaexpansionuinfty}
    \lambda(\zeta)\sim \sum_{k=0}^\infty \frac{\lambda_k}{\ii \msf u^{\frac{\alpha+1+k}{m}}\zeta^{k+1}},\quad \lambda_k=\lambda_k(\sad)\deff \frac{(-1)^k}{2\pi }\int_{-\infty}^0 \frac{|v|^{\alpha+k}\ee^{-\sad -(-1)^mv^m}}{1+\ee^{-\sad -(-1)^mv^m}} \dd v, \quad k\in \mathbb{Z}_{\geq 0}.
    \end{equation}
    This, together with the analyticity of $\bm\Psi_{\alpha,0}^\bes$ near the origin, allows us to compute an asymptotic expansion of $\bm J_\bm R$ of the form
    \begin{equation}\label{eq:expansionJRuinfty}
    \bm J_{\bm R}(\zeta)\sim\bm I+\sum_{k=1}^\infty \frac{1}{\msf u^{\frac{\alpha+k}{m}}}\bm J_k(\zeta),\quad \msf u\to +\infty,\qquad \text{with}\quad  \bm J_k(\zeta)\deff \frac{\lambda_{k-1}}{\ii \zeta^k}\bm \Psi^\bes_{\alpha,0}(\zeta)\bm E_{12}\bm \Psi^\bes_{\alpha,0}(\zeta)^{-1},
    \end{equation}
    valid uniformly for $\zeta\in \partial D_\varepsilon$. Notice that $\bm J_k$ has a pole of order at most $k$ at $\zeta=0$.
\end{remark}

\begin{lemma}\label{lem:R2uanalysis}
    Given $\sad_0\in \R$, there exists $\eta>0$ such that the estimate
    $$
    \|\bm J_{\bm R}-\bm I\|_{L^1\cap L^\infty((\Gamma_+\cup\Gamma_-)\setminus \overline D_\varepsilon)}=\Boh(\ee^{-\eta \msf u}),\quad \msf u\to +\infty,
    $$
    is valid uniformly for $\sad\geq \msf s_0$.
\end{lemma}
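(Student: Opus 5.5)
On $(\Gamma_+\cup\Gamma_-)\setminus\overline D_\varepsilon$ the jump is explicit:
$$
\bm J_{\bm R}(\zeta)-\bm I=\ee^{-\msf h_\infty(\zeta)}\ee^{\pm\pi\ii\alpha}\,\bm \Psi_{\alpha,-}^\bes(\zeta)\bm E_{21}\bm \Psi_{\alpha,-}^\bes(\zeta)^{-1}.
$$
The plan is to bound the two factors separately. The Bessel factor is independent of both $\msf u$ and $\sad$. The scalar factor will supply the decay $\ee^{-\eta\msf u}$, uniformly in $\sad\geq\sad_0$. This is the same structure as in the proof of Lemma~\ref{lem:estR2}, except that the small parameter now comes from $\msf u$ rather than from $\ee^{-\sad}$.

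\textbf{Bessel factor.} As recorded in the proof of Lemma~\ref{lem:estR2}, the asymptotics \eqref{eq:asymptRHPpsimodel} for $\bm\Psi^\bes_\alpha$ give
$$
\bm \Psi_{\alpha,-}^\bes\bm E_{21}(\bm \Psi_{\alpha,-}^\bes)^{-1}=\Boh(\ee^{-4\zeta^{1/2}}\zeta^{1/2})
$$
as $\zeta\to\infty$ along $\Gamma_\pm$, with $\re\zeta^{1/2}=|\zeta|^{1/2}\cos(\theta_m/2)>0$. The boundary values are also continuous on the closed rays away from $D_\varepsilon$. Hence this factor is bounded by a constant $C$ on $(\Gamma_+\cup\Gamma_-)\setminus\overline D_\varepsilon$, and $C$ does not depend on $\msf u$ or $\sad$. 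Its growth is in fact decaying, so polynomial weights are harmless.

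\textbf{Scalar factor.} Since $\Gamma_\pm\subset\overline{\mcal S_+\cup\mcal S_-}$, the inequality \eqref{eq:ineqzetam} extends by continuity to the rays, with equality in the lower bound there. It gives
$$
|\ee^{-\msf h_\infty(\zeta)}|=\ee^{-\sad-\msf u(-1)^m\re\zeta^m}\leq \ee^{-\sad_0}\ee^{-\msf u|\zeta|^m/2}.
$$
For $|\zeta|\geq\varepsilon$ and $\msf u\geq1$ we split the exponent as $\msf u|\zeta|^m/2\geq \msf u\varepsilon^m/4+|\zeta|^m/4$. This yields
$$
|\ee^{-\msf h_\infty(\zeta)}|\leq \ee^{-\sad_0}\ee^{-\eta\msf u}\ee^{-|\zeta|^m/4},\qquad \eta\deff\varepsilon^m/4.
$$

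\textbf{Conclusion.} Multiplying the two bounds, the $L^\infty$ norm of $\bm J_{\bm R}-\bm I$ is at most $C\ee^{-\sad_0}\ee^{-\eta\msf u}$. The $L^1$ norm is at most $C\ee^{-\sad_0}\ee^{-\eta\msf u}\int_{\Gamma_\pm}\ee^{-|\zeta|^m/4}|\dd\zeta|$, and the integral is finite. Both bounds are uniform in $\sad\geq\sad_0$, which proves the lemma.

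The only point needing care is the splitting of the exponent. It extracts the uniform factor $\ee^{-\eta\msf u}$ while keeping enough decay in $|\zeta|$ for integrability, and this works precisely because the contour stays a fixed distance $\varepsilon$ away from the origin. I expect no serious obstacle.
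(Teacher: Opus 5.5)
Your proof is correct and follows essentially the same route as the paper's. The paper likewise bounds $\bm J_{\bm R}-\bm I$ by the $L^\infty$ norm of $\bm\Psi_{\alpha,-}^\bes\bm E_{21}(\bm\Psi_{\alpha,-}^\bes)^{-1}$, which is independent of $\msf u$ and $\sad$, times the $L^1\cap L^\infty$ norm of $\ee^{-\msf h_\infty}$, and controls the latter via \eqref{eq:ineqzetam}. Your explicit splitting $\msf u|\zeta|^m/2\geq \msf u\varepsilon^m/4+|\zeta|^m/4$ simply spells out the step the paper leaves implicit.
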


\begin{proof}
    Proceeding as in the proof of Lemma~\ref{lem:estR2}, we see that
    $$
    \|\bm J_{\bm R}-\bm I\|_{L^1\cup L^\infty((\Gamma_+\cap\Gamma_-)\setminus \overline D_\varepsilon)}\leq \|\bm\Psi_{\alpha,-}^\bes \bm E_{21}(\bm\Psi_{\alpha,-}^\bes)^{-1}\|_{{L^\infty((\Gamma_+\cup\Gamma_-)\setminus \overline D_\varepsilon)}}\|\ee^{-\msf h_\infty}\|_{{L^1\cap L^\infty((\Gamma_+\cup\Gamma_-)\setminus \overline D_\varepsilon)}},
    $$
    and the norm of $\bm\Psi_{\alpha,-}^\bes \bm E_{21}(\bm\Psi_{\alpha,-}^\bes)^{-1}$ above is finite. From \eqref{eq:ineqzetam}, the norms of $\ee^{-\msf h_\infty}$ above are $\Boh(\ee^{-\eta \msf u})$, and the result follows.
\end{proof}

\begin{lemma}\label{lem:R3uanalysis}
    Given $\sad_0\in \R$, there exists $\eta>0$ such that the estimate
    $$
    \|\bm J_{\bm R}-\bm I\|_{L^1\cap L^\infty(\Gamma_0\setminus \overline D_\varepsilon)}=\Boh(\ee^{-\eta \msf u}),\quad \msf u\to +\infty,
    $$
    is valid uniformly for $\sad\geq \msf s_0$.
\end{lemma}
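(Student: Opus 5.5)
The argument mirrors the proofs of Lemma~\ref{lem:estR3} and Lemma~\ref{lem:R2uanalysis}, now exploiting the decay of $\ee^{-\msf h_\infty}$ along $\Gamma_0\setminus\overline D_\varepsilon$ as $\msf u\to+\infty$. For $\zeta\in\Gamma_0\setminus\overline D_\varepsilon$ we use the explicit form of the jump,
$$
\bm J_{\bm R}(\zeta)-\bm I=-\frac{\ee^{-\msf h_\infty(\zeta)}}{1+\ee^{-\msf h_\infty(\zeta)}}\bm\Psi^\bes_{\alpha,-}(\zeta)\bm E_{11}\bm\Psi^\bes_{\alpha,-}(\zeta)^{-1}+\ee^{-\msf h_\infty(\zeta)}\bm\Psi^\bes_{\alpha,-}(\zeta)\bm E_{22}\bm\Psi^\bes_{\alpha,-}(\zeta)^{-1}.
$$
The aim is to separate this into a $\msf u$-independent Bessel factor and a scalar factor that carries all the $\msf u$-dependence.

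First, the Bessel factors are independent of $\msf u$ and $\sad$. By the asymptotics \eqref{eq:asymptRHPpsimodel} for $\bm\Psi^\bes_\alpha$ together with continuity of $\bm\Psi^\bes_{\alpha,-}$ on $\Gamma_0\setminus\overline D_\varepsilon$, there is $C>0$ with
$$
|\bm\Psi^\bes_{\alpha,-}(\zeta)\bm E_{jj}\bm\Psi^\bes_{\alpha,-}(\zeta)^{-1}|\leq C(1+|\zeta|)^{1/2},\qquad \zeta\in\Gamma_0\setminus\overline D_\varepsilon,\quad j=1,2.
$$
(No exponential factors survive here, since $\ee^{\pm 2\zeta^{1/2}\sp_3}$ is oscillatory on $\Gamma_0$.) Second, $\msf h_\infty$ is real on $\Gamma_0$, so $|1/(1+\ee^{-\msf h_\infty})|\leq 1$ there. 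Moreover, for $\zeta<0$ we have $\msf h_\infty(\zeta)=\sad+\msf u|\zeta|^m$, hence
$$
|\ee^{-\msf h_\infty(\zeta)}|\leq \ee^{-\sad_0}\ee^{-\msf u|\zeta|^m}\leq \ee^{-\sad_0}\ee^{-\msf u\varepsilon^m/2}\ee^{-\msf u|\zeta|^m/2},\qquad |\zeta|\geq\varepsilon,\ \sad\geq\sad_0.
$$

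Combining the two bounds gives
$$
|\bm J_{\bm R}(\zeta)-\bm I|\leq 2C\ee^{-\sad_0}\ee^{-\msf u\varepsilon^m/2}(1+|\zeta|)^{1/2}\ee^{-\msf u|\zeta|^m/2}.
$$
For $\msf u\geq 1$, the function $(1+|\zeta|)^{1/2}\ee^{-\msf u|\zeta|^m/2}$ is bounded by $(1+|\zeta|)^{1/2}\ee^{-|\zeta|^m/2}$, which lies in $L^1\cap L^\infty(\Gamma_0)$ with norm independent of $\msf u$ and $\sad$. This yields the claim with $\eta=\varepsilon^m/2$, uniformly for $\sad\geq\sad_0$.

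No step is a real obstacle. The only point needing care is uniformity in $\sad$: for $\sad$ bounded below the factor $\ee^{-\sad}\leq\ee^{-\sad_0}$ is harmless, and the denominator $1+\ee^{-\msf h_\infty}$ is at least $1$ on the real axis, so it causes no trouble for any $\sad$.
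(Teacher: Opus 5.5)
Your proof is correct and is essentially the paper's argument. The paper bounds $|\bm J_{\bm R}-\bm I|$ by $|\ee^{-\msf h_\infty}|$ times the two conjugated projectors $\bm\Psi^\bes_{\alpha,-}\bm E_{jj}(\bm\Psi^\bes_{\alpha,-})^{-1}$ (using $|1+\ee^{-\msf h_\infty}|^{-1}\leq 1$ on $\Gamma_0$), controls these by $\Boh(|\zeta|^{1/2})$ via \eqref{eq:esthPsiBessGamma0sinfty}, and extracts $\ee^{-\eta\msf u}$ from $\ee^{-\msf u|\zeta|^m}$ with $|\zeta|\geq\varepsilon$. You make explicit what it leaves implicit, namely $\eta=\varepsilon^m/2$ and a $\msf u$-independent integrable majorant for $\msf u\geq 1$; one small correction is that $\ee^{\pm 2\zeta^{1/2}\sp_3}$ drops out exactly because it commutes with $\bm E_{jj}$, not merely because it oscillates on $\Gamma_0$.
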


\begin{proof}
    From the very definition of $\bm J_\bm R$ in RHP~\ref{RHP:RmatrixModelBesselanalysis}, it follows that, for $\zeta\in \Gamma_0\setminus \overline D_\varepsilon)$,
    $$
    |\bm J_{\bm R}(\zeta)-\bm I|\leq |\ee^{-\msf h_\infty(\zeta)}|\left(|\bm\Psi_{\alpha,-}^\bes(\zeta)\bm E_{11}\bm\Psi_{\alpha,-}^\bes(\zeta)^{-1}|+|\bm\Psi_{\alpha,-}^\bes(\zeta)\bm E_{22}\bm\Psi_{\alpha,-}^\bes(\zeta)^{-1}|\right).
    $$
    The result now follows from \eqref{eq:esthPsiBessGamma0sinfty}.
\end{proof}

The asymptotic analysis is concluded with the next result.

\begin{prop}\label{prop:estRuinfty}
    The estimates
    $$
    \|\bm R-\bm I\|_{L^\infty(\C\setminus \Gamma_{\bm R})}=\Boh(\msf u^{-(1+\alpha)/m}) \quad \text{and}\quad \|\bm R_\pm-\bm I\|_{L^1\cap L^2\cap L^\infty(\Gamma_{\bm R})}=\Boh(\msf u^{-(1+\alpha)/m})
    $$
    are valid as $\msf u\to +\infty$, uniformly for $\sad\geq \sad_0$ with fixed $\sad_0\in \R$.
\end{prop}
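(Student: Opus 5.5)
The plan mirrors the proof of Proposition~\ref{prop:RLpestimatesBesselasympt}, with $\msf u\to+\infty$ in place of $\sad\to+\infty$. By Lemmas~\ref{lem:R1uanalysis}, \ref{lem:R2uanalysis} and \ref{lem:R3uanalysis}, the bound
$$
\|\bm J_{\bm R}-\bm I\|_{L^1\cap L^\infty(\Gamma_{\bm R})}=\Boh(\msf u^{-(1+\alpha)/m})
$$
holds uniformly for $\sad\geq\sad_0$. The contributions from $\Gamma\setminus\overline D_\varepsilon$ are exponentially small, so the dominant part comes from $\partial D_\varepsilon$. Interpolation then gives the same bound in $L^2$, since $\|f\|_{L^2}^2\leq \|f\|_{L^1}\|f\|_{L^\infty}$.

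Next I apply standard small norm theory to RHP~\ref{RHP:RmatrixModelBesselanalysis}. Let $C_-$ be the minus boundary-value Cauchy operator on $\Gamma_{\bm R}$, and write $C_{\bm J}f\deff C_-(f(\bm J_{\bm R}-\bm I))$. Then $\|C_{\bm J}\|_{L^2\to L^2}\leq \|C_-\|\,\|\bm J_{\bm R}-\bm I\|_{L^\infty}=\Boh(\msf u^{-(1+\alpha)/m})$. Here $\Gamma_{\bm R}$ is a fixed contour, independent of $\msf u$ and $\sad$, and it consists of finitely many rays and a circle. It is therefore a Carleson contour and $C_-$ is bounded on $L^2$. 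For $\msf u$ large, $\bm I-C_{\bm J}$ is invertible, and $\bm\mu\deff\bm R_--\bm I$ solves
$$
(\bm I-C_{\bm J})\bm\mu=C_-(\bm J_{\bm R}-\bm I),
$$
so $\|\bm\mu\|_{L^2}=\Boh(\msf u^{-(1+\alpha)/m})$. By uniqueness this recovers $\bm R$. In particular, condition (iv) of RHP~\ref{RHP:RmatrixModelBesselanalysis} is compatible with this $L^2$ solution, so the constructed function coincides with $\bm R$.

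For the $L^\infty$ bounds off the contour, I use
$$
\bm R(\zeta)=\bm I+\frac{1}{2\pi\ii}\int_{\Gamma_{\bm R}}(\bm I+\bm\mu(s))(\bm J_{\bm R}(s)-\bm I)\frac{\dd s}{s-\zeta}.
$$
For $\zeta$ at distance at least $\varepsilon/2$ from $\Gamma_{\bm R}$, Cauchy--Schwarz and the $L^1\cap L^2$ bounds give the estimate. For $\zeta$ near the contour, I use the analyticity of the jumps. The jumps on $\Gamma_\pm$, $\Gamma_0$ and $\partial D_\varepsilon$ all extend analytically to a fixed neighborhood, with the same estimates by the arguments of the three lemmas: $\ee^{-\msf h_\infty}$, $\lambda$ and $\bm\Psi^\bes_\alpha$ continue analytically. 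So the contour can be deformed slightly, and the point $\zeta$ then lies at positive distance from the new contour. Comparing boundary values across the deformation gives the $L^\infty$ bounds for $\bm R_\pm-\bm I$ on $\Gamma_{\bm R}$. Near $\zeta=0$ there is no contour, since $\bm R$ is analytic inside $D_\varepsilon$. The $L^1$ bound for $\bm R_\pm-\bm I$ then follows from $\bm R_\pm-\bm I=\bm\mu$ together with $\bm\mu=(\bm R_+-\bm I)-\bm R_-(\bm J_{\bm R}-\bm I)+\ldots$ type identities. On the unbounded rays it instead follows from the integral representation, using that $\bm J_{\bm R}-\bm I$ decays exponentially there.

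The main obstacle is uniformity. Every constant must be independent of $\sad\geq\sad_0$, and the unbounded contours must be controlled. On $\Gamma_0$, uniformity is immediate because $|1/(1+\ee^{-\msf h_\infty})|\leq1$ and $\ee^{-\sad}\leq\ee^{-\sad_0}$. On $\partial D_\varepsilon$, the bound on $\lambda$ from \eqref{eq:boundlambdabessanalysis} already carries the factor $\ee^{-\sad}\leq\ee^{-\sad_0}$, so uniformity holds there as well. Uniformity of the analytic extension near $\Gamma_0$ is the point I would check most carefully. For $\sad$ near $\sad_0$, the function $1+\ee^{-\msf h_\infty}$ could vanish close to the real axis, which would break the analytic continuation. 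A thin neighborhood of opening $\Boh(\msf u^{-1}|\zeta|^{1-m})$ still avoids these zeros. If that neighborhood is too thin for the $L^\infty$ argument, I would fall back on the Hölder continuity of the jumps instead of analytic continuation.
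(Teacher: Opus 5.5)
Your route is the same as the paper's, whose proof only invokes standard small-norm theory on top of Lemmas~\ref{lem:R1uanalysis}--\ref{lem:R3uanalysis}. The Neumann-series bound $\|\bm R_--\bm I\|_{L^2(\Gamma_{\bm R})}=\Boh(\msf u^{-(1+\alpha)/m})$ on the fixed contour is fine, as are the $L^\infty$ bound off the contour, the $L^2\cap L^\infty$ bounds on $\Gamma_{\bm R}$, and the $L^1$ bound on bounded subarcs.

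\textbf{The step that fails} is the $L^1$ bound on the unbounded rays $\Gamma\setminus\overline D_\varepsilon$. Exponential decay of $\bm J_{\bm R}-\bm I$ there does not pass to $\bm R_\pm-\bm I$, because the Cauchy integral of the compactly supported density on $\partial D_\varepsilon$ decays only like $1/\zeta$. Concretely, along the rays
\[
\bm R_\pm(\zeta)-\bm I=\frac{\bm R_1}{\zeta}+\Boh(\zeta^{-2}),\qquad \bm R_1=-\frac{1}{2\pi\ii}\int_{\Gamma_{\bm R}}\bm R_-(s)(\bm J_{\bm R}(s)-\bm I)\,\dd s,
\]
and $\bm R_1\neq\bm 0$. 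Indeed, by Remark~\ref{rk:JRexp} and the residue computation in the proof of Proposition~\ref{prop:pPhibehsmallx}, $(\bm R_1)_{12}$ is a nonzero constant times $\lambda_0\,\msf u^{-(1+\alpha)/m}$ plus smaller terms, with $\lambda_0>0$. This is exactly what produces the $\xad^{2\alpha+1}$ term in $\msf p$.

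Hence $\|\bm R_\pm-\bm I\|_{L^1(\Gamma_{\bm R})}=\infty$ for large $\msf u$, and no argument can give the $L^1$ part of the estimate over all of $\Gamma_{\bm R}$. The paper's one-line proof glosses over this, and the same overclaim appears in Propositions~\ref{prop:RLpestimatesBesselasympt} and~\ref{prop:RestmodelPsiinfty}. What you should state and prove instead is:
\begin{itemize}
\item the $L^2\cap L^\infty(\Gamma_{\bm R})$ bound;
\item the $L^1$ bound on bounded subarcs such as $\partial D_\varepsilon$;
\item $\|(\bm R_--\bm I)(\bm J_{\bm R}-\bm I)\|_{L^1(\Gamma_{\bm R})}=\Boh(\msf u^{-2(1+\alpha)/m})$, by Cauchy--Schwarz.
\end{itemize}
These are all that the proof of Proposition~\ref{prop:pPhibehsmallx} actually uses.

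Separately, your worry about zeros of $1+\ee^{-\msf h_\infty}$ near $\Gamma_0$ is unfounded. The scale $\msf u^{-1}|\zeta|^{1-m}$ comes from looking only at $\im\msf h_\infty$. Write $-\zeta=r\ee^{\ii\theta}$ with $r\geq\varepsilon$ and $|\theta|\leq\pi/(4m)$. Then $\re\msf h_\infty(\zeta)=\sad+\msf u r^m\cos(m\theta)\geq\sad_0+\msf u\varepsilon^m/\sqrt2$. So in this fixed sector, for $\msf u$ large and uniformly in $\sad\geq\sad_0$, $|\ee^{-\msf h_\infty}|$ is exponentially small and $|1+\ee^{-\msf h_\infty}|\geq 1/2$. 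Zeros with $|\zeta|\geq\varepsilon$ require $\re\msf h_\infty=0$, so they stay outside this sector. Thus the jump on $\Gamma_0\setminus\overline D_\varepsilon$ continues analytically to a fixed neighborhood with the same $\Boh(\ee^{-\eta\msf u})$ bound, and no H\"older fallback is needed. Zeros inside $D_\varepsilon$, some of which approach the origin at scale $\msf u^{-1/m}$, are harmless, since \eqref{eq:jumpRbessboundary} contains no factor $(1+\ee^{-\msf h_\infty})^{-1}$.
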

\begin{proof}
    With the help of Lemmas~\ref{lem:R1uanalysis}, \ref{lem:R2uanalysis} and \ref{lem:R3uanalysis}, the conclusion is straightforward from perturbation theory for RHPs, and we skip it.
\end{proof}

We now draw the conclusions from this asymptotic analysis, namely asymptotic formulas for the function $\msf p$ and $\Phi$ from \eqref{deff:pqr} and \eqref{eq:PhiPhikUpsilonk}. For the next statement and its proof, recall that $\xad =2\msf u^{-\frac{1}{2m}}$, see \eqref{eq:uxrelation}.

\begin{prop}\label{prop:pPhibehsmallx}
With $\lambda_0$ as in \eqref{eq:lambdaexpansionuinfty}, the estimate
$$
\msf p(\sad,\xad)=-\frac{4\alpha^2-1}{8\xad}+\frac{2\pi\lambda_0}{2^{2\alpha+1}\Gamma(\alpha+1)^2} \xad^{2\alpha+1}+\Boh(\xad^{2\alpha+3}), \quad \xad \to 0^+,
$$
holds.

In addition, the asymptotic formula
\begin{equation}\label{eq:asymptPhismallx}
\Phi(\zeta\mid \sad,\xad) = \sqrt{\pi}\xad^{1/2} I_\alpha(\xad\zeta^{1/2} ) \left(1 + \Boh(\xad^{2+2\alpha}(1+\xad|\zeta|^{1/2})) \right),\quad \zeta\in \R\setminus \{0\},
\end{equation}
is valid as $\xad\to 0^+$, uniformly for $\zeta \in \R\setminus \{0\}$, and uniformly for $\sad\geq \sad_0$ with fixed $\sad_0\in \R$.
\end{prop}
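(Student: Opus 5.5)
The plan is to work entirely with the small-norm problem for $\bm R$ from RHP~\ref{RHP:RmatrixModelBesselanalysis}, now in the regime $\msf u\to+\infty$, i.e.\ $\xad=2\msf u^{-1/(2m)}\to0^+$. By Proposition~\ref{prop:estRuinfty}, Lemmas~\ref{lem:R2uanalysis}--\ref{lem:R3uanalysis} and Remark~\ref{rk:JRexp}, the jump satisfies $\bm J_{\bm R}-\bm I=\msf u^{-(\alpha+1)/m}\bm J_1+\Boh(\msf u^{-(\alpha+2)/m})$ on $\partial D_\varepsilon$, and is exponentially small on the rest of $\Gamma_{\bm R}$. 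Hence the Neumann series for the integral equation gives
$$
\bm R_1=-\frac{1}{2\pi\ii}\oint_{\partial D_\varepsilon}(\bm J_{\bm R}-\bm I)\,\dd s+\Boh(\msf u^{-2(\alpha+1)/m}).
$$
Since $\bm J_1(s)=\frac{\lambda_0}{\ii s}\bm\Psi^\bes_{\alpha,0}(s)\bm E_{12}\bm\Psi^\bes_{\alpha,0}(s)^{-1}$ has a simple pole at $0$ and $\partial D_\varepsilon$ is oriented clockwise, the residue theorem gives
$$
\bm R_1=\frac{\lambda_0}{\ii}\,\msf u^{-(\alpha+1)/m}\,\bm\Psi^\bes_{\alpha,0}(0)\bm E_{12}\bm\Psi^\bes_{\alpha,0}(0)^{-1}+\Boh\bigl(\msf u^{-(\alpha+2)/m}\bigr).
$$
The next term $\bm J_2$ contributes a residue at order $\msf u^{-(\alpha+2)/m}\sim\xad^{2\alpha+4}$. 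This is consistent with the claimed error once the prefactor $2/\xad$ in $\msf p$ is included.

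Then I would use \eqref{eq:Psiinfty1R1Psibes1} and \eqref{deff:pqr}, namely $\msf p=\frac{2\ii}{\xad}(\bm\Psi_{\infty,1})_{12}$. The Bessel part $\bm\Psi^\bes_{\infty,1}$ reproduces $-\frac{4\alpha^2-1}{8\xad}$, via \eqref{eq:Psiinftyk1} exactly as in Proposition~\ref{prop:estimatesBessellimitmodelpphi}. The correction is
$$
\frac{2\ii}{\xad}(\bm R_1)_{12}=\frac{2\lambda_0}{\xad}\,\msf u^{-(\alpha+1)/m}\bigl[(\bm\Psi^\bes_{\alpha,0})_{11}(0)\bigr]^2 .
$$
Here I use $\det=1$, so that $(\bm\Psi_0\bm E_{12}\bm\Psi_0^{-1})_{12}=(\bm\Psi_0)_{11}^2$. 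The key explicit input is the value $(\bm\Psi^\bes_{\alpha,0})_{11}(0)$, read off from the Bessel parametrix in Appendix~\ref{sec:BesselParametrix}. There $(\bm\Psi^\bes_\alpha)_{11}\propto I_\alpha(2\zeta^{1/2})$, and $I_\alpha(2\zeta^{1/2})\zeta^{-\alpha/2}\to1/\Gamma(\alpha+1)$, so one expects $(\bm\Psi^\bes_{\alpha,0})_{11}(0)^2=\pi/\Gamma(\alpha+1)^2$ up to the normalization constant. Substituting $\msf u^{-1/m}=\xad^2/4$ then yields $\frac{2\pi\lambda_0}{2^{2\alpha+1}\Gamma(\alpha+1)^2}\xad^{2\alpha+1}$. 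Checking this constant against the appendix is routine but must be done carefully.

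For $\Phi$, I would use \eqref{eq:estPhiPsiBessel1} together with $\bm\Psi_\infty=\bm R\,\bm\Psi^\bes_\alpha$ outside $D_\varepsilon$ and $\bm\Psi_\infty=\bm R\bm P_0$ inside. Outside the disk, $\Phi(4\zeta/\xad^2)=(\xad/2)^{1/2}[\bm R\bm\Psi^\bes_\alpha(\cdots)]_{11}$, and the Bessel entry gives exactly $\sqrt\pi\xad^{1/2}I_\alpha(\xad\zeta^{1/2})$ after rescaling (note $2(\zeta\xad^2/4)^{1/2}=\xad\zeta^{1/2}$). The error from $\bm R-\bm I=\Boh(\xad^{2+2\alpha})$ multiplies the $(2,1)$ entry, i.e.\ the derivative-type Bessel function. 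Its ratio to $I_\alpha$ grows at most like $1+|\zeta'|^{1/2}$ in the original variable, which produces the factor $(1+\xad|\zeta|^{1/2})$. For $\zeta<0$, the extra $\ee^{-\msf h_\infty}$ terms are exponentially small in $\msf u$. Inside the disk, the factors $\lambda$ and $\ee^{-\msf h_\infty}$ in $\bm P_0$ are $\Boh(\msf u^{-(1+\alpha)/m})$ relative to the Bessel local form, with $\lambda\zeta^{-\alpha}$ handled via \eqref{eq:lambdabeh}. This gives the same relative error.

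The main obstacle is uniformity in $\zeta\in\R\setminus\{0\}$ for the relative error. Near the zeros of the $+$-boundary value of $I_\alpha$ on the negative axis, a relative error is meaningless unless the error term shares the oscillatory structure. I would handle this by writing the correction as $\Boh(\xad^{2+2\alpha})$ times a combination of $I_\alpha$ and its conjugate-type companion, using the large-argument cosine asymptotics \eqref{eq:asymptPhizetainftyminus}, and interpreting the estimate in that sense.
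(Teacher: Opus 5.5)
Your overall route is the paper's: the small-norm problem for $\bm R$ as $\msf u\to+\infty$, the residue of $\bm J_1$, the identity $(\bm\Psi_0\bm E_{12}\bm\Psi_0^{-1})_{12}=(\bm\Psi_0)_{11}^2$, and reading $\Phi$ off $\bm R\bm\Psi^\bes_\alpha$. However, the step you give for the interior of $D_\varepsilon$ is wrong, and it fails exactly in the main regime.

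\textbf{The interior of $D_\varepsilon$.} The relevant points are $w=\xad^2\zeta/4=\msf u^{-1/m}\zeta$, so for $\zeta$ in compacts, $w$ lies at scale $\msf u^{-1/m}$. There $\ee^{-\msf h_\infty(w)}=\ee^{-\sad-(-1)^m\zeta^m}$. The substitution $s=\msf u^{-1/m}v$ in \eqref{eq:lambdainfty} shows that $\lambda(w)w^{-\alpha}$ is a $\msf u$-independent function of $\zeta$. Neither quantity is small, let alone $\Boh(\msf u^{-(1+\alpha)/m})$. That bound from Lemma~\ref{lem:R1uanalysis} holds only for $w$ away from the origin (e.g.\ on $\partial D_\varepsilon$), and \eqref{eq:lambdabeh} has no uniformity in $\msf u$.

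The missing idea is that these quantities drop out identically. Right-multiplying $\bm P_0$ by the factor $\bm I+\chi_{\mcal S_+}\ee^{\pi\ii\alpha}(1+\ee^{-\msf h_\infty})\bm E_{21}$ from \eqref{eq:estPhiPsiBessel1} cancels the last factor in \eqref{eq:localparPsiinfty}. Since $\bm E_{12}\bm e_1=0$, we have $(\bm I+\lambda\bm E_{12})w^{\alpha\sp_3/2}(\bm I+\msf a\bm E_{12})\bm e_1=w^{\alpha/2}\bm e_1$. Hence inside $D_\varepsilon$, exactly,
$$\Phi(\zeta)=(\xad/2)^{1/2}\,w^{\alpha/2}\,[\bm R(w)\bm\Psi^\bes_{\alpha,0}(w)]_{11},$$
with $+$-boundary values for $\zeta<0$. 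Here $w^{\alpha/2}\bm\Psi^\bes_{\alpha,0}(w)\bm e_1$ is the first column of $\bm\Psi^\bes_\alpha$ continued from $\mcal S_0$. This matches the exterior representation, and after it only $\|\bm R-\bm I\|_{L^\infty}=\Boh(\xad^{2+2\alpha})$ from Proposition~\ref{prop:estRuinfty} is needed. This is \eqref{eq:unfoldPhiuconv} in the paper.

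\textbf{The constant in $\msf p$.} By the definition of $\bm\Psi^\bes_{\alpha,0}$ in Appendix~\ref{sec:BesselParametrix} and \eqref{eq:BesParaZero}, $(\bm\Psi^\bes_{\alpha,0})_{11}(0)=\sqrt{2\pi}/\Gamma(\alpha+1)$. Its square is therefore $2\pi/\Gamma(\alpha+1)^2$, not $\pi/\Gamma(\alpha+1)^2$. Only this value makes your formula produce the stated coefficient.

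\textbf{The error term in $\msf p$.} Between your two displays for $\bm R_1$ you dropped the $\Boh(\msf u^{-2(\alpha+1)/m})$ remainder. In $\msf p$ it becomes $\Boh(\xad^{4\alpha+3})$, which dominates $\xad^{2\alpha+3}$ when $-1<\alpha<0$, and it does not vanish. To see this, solve the jump $\bm I+\msf u^{-(\alpha+1)/m}\bm J_1$ exactly, using $\bm N^2=0$ for $\bm N=\bm\Psi^\bes_{\alpha,0}\bm E_{12}(\bm\Psi^\bes_{\alpha,0})^{-1}$. This gives $\bm R_1=c\bm N(0)+c^2\bm N(0)\bm N'(0)+\Boh(c^3)$ with $c=\lambda_0\msf u^{-(\alpha+1)/m}/\ii$. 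Moreover, $(\bm N(0)\bm N'(0))_{12}=a(0)^2(ag'-ga')(0)$, where $(a,g)^T$ is the first column of $\bm\Psi^\bes_{\alpha,0}$. Finally, $(ag'-ga')(0)=2\pi\ii/(\Gamma(\alpha+1)\Gamma(\alpha+2))\neq0$.

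So your remark ``consistent with the claimed error'' is true only for $\alpha\ge0$; in general the argument gives $\Boh(\xad^{2\alpha+3}+\xad^{4\alpha+3})$. The paper's own proof in fact ends with $\Boh(\xad^{4\alpha+3})$, and it overlooks the $\bm J_2$ residue that you correctly identified.

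\textbf{Zeros for $\zeta<0$.} Your concern here is justified, and the paper passes over it. From the representation above, the error is $\Boh(\xad^{2+2\alpha})$ times the second entry of the continued Bessel column. That entry is nonzero wherever $J_\alpha(\xad|\zeta|^{1/2})$ vanishes, at $|\zeta|\asymp\xad^{-2}$. Near those points the estimate can only hold in the additive form you propose.
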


\begin{remark}
    Even though \eqref{eq:asymptPhismallx} is valid for $\zeta\in \R\setminus \{0\}$, it becomes useful only when $|\zeta|\ll \xad^{-(4\alpha+2)}$, as in this case the $\Boh(\cdot)$ term is truly a small error term. Nevertheless, this is a genuine asymptotic formula, in particular, for $\zeta$ in compacts of $\R$.
\end{remark}

\begin{proof}
    From \eqref{deff:pqr}, \eqref{eq:Psiinfty1R1Psibes1} and \eqref{eq:Psiinftyk1}, it follows that
    \begin{equation}\label{eq:pR1identity}
    \msf p(\sad,\xad)=-\frac{4\alpha^2-1}{8\xad}+\frac{2\ii}{\xad}(\bm R_1)_{12},
    \end{equation}
    where we recall that
    $$
    \bm R_1=-\frac{1}{2\pi \ii}\int_{\Gamma_{\bm R}}\bm R_-(s)(\bm J_{\bm R}(s)-\bm I)\dd s=-\frac{1}{2\pi \ii}\int_{\partial D_\varepsilon}\bm R_-(s)(\bm J_{\bm R}(s)-\bm I)\dd s+\Boh(\ee^{-\eta \msf u}), 
    $$
    and the last equality is a consequence of Proposition~\ref{prop:estRuinfty} and Lemmas~\ref{lem:R2uanalysis} and \ref{lem:R3uanalysis}.
Writing $\bm R_-=(\bm R_--\bm I)+\bm I$, and appealing again to Proposition~\ref{prop:estRuinfty} and Remark~\ref{rk:JRexp}, we get
$$
    \bm R_1=-\frac{\xad^{2\alpha+2}}{2^{2\alpha+3}\pi\ii}\int_{\partial D_\varepsilon}\bm J_1(s)\dd s+\Boh(\xad^{4\alpha+4}),\quad \xad\to 0^+,
    $$
    with $\bm J_1$ as in \eqref{eq:expansionJRuinfty}. This coefficient has a simple pole at $\zeta=0$ and no other poles, so a residue calculation provides
    $$
   \int_{\partial D_\varepsilon}\bm J_1(s)\dd s=-2\pi\lambda_0 \bm\Psi_{\alpha,0}^\bes(0)\bm E_{12}\bm \Psi_{\alpha,0}^\bes(0)^{-1},
    $$
    with $\lambda_0$ given in \eqref{eq:lambdaexpansionuinfty}. We update \eqref{eq:pR1identity} to 
\begin{align*}
    \msf p(\sad,\xad) & =-\frac{4\alpha^2-1}{8\xad}+\frac{ \lambda_0}{2^{2\alpha+1}} \xad^{2\alpha+1}\left[\bm\Psi_{\alpha,0}^\bes(0)\bm E_{12}\bm \Psi_{\alpha,0}^\bes(0)^{-1}\right]_{12}+\Boh(\xad^{4\alpha+3})\\ 
    & = -\frac{4\alpha^2-1}{8\xad}+\frac{\lambda_0}{2^{2\alpha+1}} \xad^{2\alpha+1}\left(\left[\bm\Psi_{\alpha,0}^\bes(0)\right]_{11}\right)^{2}+\Boh(\xad^{4\alpha+3})
    \\
     & = -\frac{4\alpha^2-1}{8\xad}+\frac{2\pi\lambda_0}{2^{2\alpha+1}\Gamma(\alpha+1)^2} \xad^{2\alpha+1}+\Boh(\xad^{4\alpha+3}), \quad \xad \to 0^+,
    \end{align*}
    
where we have made use of \eqref{eq:BesParaZero} and \eqref{eq:1stcolumnPsiBesPhiBes} in the last equality. 

Next, for the asymptotics of $\Phi$, we first use \eqref{eq:estPhiPsiBessel1} and \eqref{deff:Rbesselasymp} to express
$$
\Phi(\zeta)=\frac{\xad^{1/2}}{\sqrt{2}}\times 
\begin{cases}
    \left[ \bm R(\xad^2\zeta/4)\bm\Psi_{\alpha}^\bes(\xad^2\zeta/4) \right]_{11}, & \zeta >\frac{4}{\xad^2}\varepsilon \\ 
    (\xad^2\zeta/4)^{\alpha/2}\left[ \bm R(\xad^2\zeta/4)\bm\Psi_{\alpha,0}^\bes(\xad^2\zeta/4) \right]_{11}, & 0<\zeta< \frac{4}{\xad^2}\varepsilon.
\end{cases}
$$

The term in the second line above depends solely on the first column of $\bm\Psi_{\alpha,0}^\bes$. In view of \eqref{eq:PsibesPsi0bes11}, \eqref{eq:PsibesPsi0bes21} and \eqref{eq:1stcolumnPsiBesPhiBes}, we are then able to re-express this identity in any of the two equivalent and uniform ways,
\begin{equation}
\label{eq:unfoldPhiuconv}
\Phi(\zeta)= \frac{\xad^{1/2}}{\sqrt{2}} (\xad^2\zeta/4)^{\alpha/2}\left[ \bm R(\xad^2\zeta/4)\bm\Psi_{\alpha,0}^\bes(\xad^2\zeta/4) \right]_{11}=\frac{\xad^{1/2}}{\sqrt{2}}\left[ \bm R(\xad^2\zeta/4)\bm\Psi_{\alpha}^\bes(\xad^2\zeta/4) \right]_{11}, \quad \zeta>0. 
\end{equation}

Applying Proposition~\ref{prop:estRuinfty}, we obtain
\begin{align*}
\Phi(\zeta)& =\frac{\xad^{1/2}}{\sqrt{2}}\left[ \left(\bm I+\Boh(\xad^{2+2\alpha})\right)\bm\Psi_\alpha^\bes(\xad^2\zeta/4) \right]_{11} \\
& = \frac{\xad^{1/2} }{\sqrt{2}}\left(\bm\Psi_\alpha^\bes(\xad^2\zeta/4 )\right)_{11}\left(1 + \Boh(\xad^{2+2\alpha})+\frac{[\bm\Psi_\alpha^\bes(\xad^2\zeta/4 )]_{21}}{[\bm\Psi_\alpha^\bes(\xad^2\zeta/4)]_{11}}\Boh(\xad^{2+2\alpha}) \right),
\end{align*}
valid as $\xad\to 0^+$, uniformly for $\zeta>0$. Given a sufficiently large $M>$, we see from \ref{eq:asymptPsiBes1112} that the quotient $(\bm\Psi^\bes_\alpha)_{21}/(\bm\Psi^\bes_\alpha)_{11}$ is $\Boh(\zeta^{1/2}\xad)$ for $\xad^2\zeta\geq M $. On the other hand, this same quotient is an entire function of $\zeta$ (see \eqref{deff:BessParamExplicit}, \eqref{eq:1stcolumnPsiBesPhiBes} and \eqref{eq:identityIalphaFalpha}), and therefore it is $\Boh(1)$ for $0\leq \zeta\xad^2\leq  M $. In summary, we simplify this last asymptotic identity to
\begin{align*}
\Phi(\zeta) = \frac{\xad^{1/2}  }{\sqrt{2}}\left(\bm\Psi_\alpha^\bes(\xad^{2}\zeta/4 )\right)_{11}\left(1 + \Boh(\xad^{2+2\alpha}(1+\xad|\zeta|^{1/2})) \right),\quad \xad \to 0^+.
\end{align*}
The asymptotic claim for $\zeta>0$ now follows from \eqref{eq:1stcolumnPsiBesPhiBes} and \eqref{deff:BessParamExplicit}.

The claim for $\zeta<0$ is very similar: a cumbersome but direct calculation shows that the first identity in \eqref{eq:unfoldPhiuconv} still holds for $\zeta<0$, and the remaining of the arguments remain pretty much the same.
\end{proof}

\subsection{Asymptotic analysis of the model RHP with admissible data}\label{sec:analysisgtinfty}\hfill 

Recall that the matrix-valued function $\bm \Psi=\bm \Psi_\gt$ was introduced in \eqref{deff:Psitauadmissibleh} as the solution to the model problem RHP~\ref{RHP:model}, with admissible data $\msf h=\msf h_\gt$ as introduced in Definition~\ref{def:admissibledata}. As highlighted in \eqref{eq:httohinfty} {\it et seq.}, we expect that $\bm \Psi_\gt\to \bm \Psi_\infty$ as $\gt\to +\infty$, with the latter being the solution to the RHP with data $\msf h_\infty$ given in \eqref{def:hinfty}, and the goal of this section is to show such convergence.

The asymptotic analysis is very similar to the one carried out in the previous section. The first step is the construction of a local parametrix $\bm P_\gt$ in a fixed disk $D_\varepsilon$ near the origin, which is the solution to the following RHP. 

\begin{rhp} \label{rhp:modelformodelanalysis}
Find a $2\times 2$ matrix-valued function $\bm P_\gt:D_\varepsilon\setminus \Gamma\to \C^{2\times 2}$ with the following properties.
\begin{enumerate}[(i)]
\item $\bm P_\gt$ is analytic on $D_\varepsilon\setminus \Gamma$.
\item On $\Gamma\cap D_\varepsilon\setminus \{0\}$, $\bm P_\gt$ satisfies the jump relation
$$
\bm P_{\gt,+}(\zeta)=\bm P_{\gt,-}(\zeta)\bm J_{_\gt}(\zeta),
$$
where we recall that
$$
\bm J_{\gt}(\zeta)=
\begin{dcases}
    \bm I+(1+\ee^{-\msf h_\gt(\zeta)})\ee^{\pm\pi \ii \alpha}\bm E_{21}, & \zeta\in \Gamma_\pm, \\
    \dfrac{1}{1+\ee^{-\msf h_\gt(\zeta)}}\bm E_{12}-(1+\ee^{\msf h_\gt(\zeta)})\bm E_{21}, & \zeta\in \Gamma_0,
\end{dcases}
$$
with $\msf h_\gt$ as in Definition~\ref{def:admissibledata}.
\item As $\gt\to \infty$,
$$
\bm P_\gt(\zeta)=(\bm I+\Boh(\ee^{-\sad}\gt^{-2}))\bm \Psi_\infty(\zeta),
$$
uniformly for $\zeta\in \partial D_\varepsilon$, and also uniformly for $\sad\geq \sad_0$ with any fixed $\sad_0\in \R$.
\end{enumerate}
\end{rhp}

To construct the solution to this RHP, we first introduce the analogue of \eqref{eq:lambdainfty} to our context here, namely the function
\begin{equation}\label{eq:defflambdatau}
\lambda_\gt(\zeta)\deff \frac{1}{2\pi \ii}\int_{-\infty}^0 \left(\frac{1}{1+\ee^{-\msf h_\gt(s)}}-1\right)\frac{|s|^\alpha}{s-\zeta}\dd s, \quad \zeta\in \mathbb{C} \setminus \Gamma_0.
\end{equation}
Using that $\msf h_\gt$ is real-valued along the negative real axis, we obtain the inequality
\begin{equation}\label{eq:ineqfundlambdagt}
\left|\frac{1}{1+\ee^{-\msf h_\gt(\zeta)}}-1\right|\leq \ee^{-\msf h_\gt(\zeta)},\quad \zeta<0.
\end{equation}
Combined with the integrability condition asked in Definition~\ref{def:admissibledata}--(iii), we are ensured that $\lambda_\gt$ is indeed well-defined for $\zeta\in \C\setminus \Gamma_0$. In fact it is the Cauchy transform of a function in $L^1\cap L^\infty(\Gamma_0)$, thus $\lambda_{\gt,\pm}$ belongs to $L^p$ for any $p\in (1,\infty)$. Since $\msf h_\gt$ is analytic on a disk growing with $\gt$, we also see that for any compact $K\subset \Gamma_0$ fixed, there exists $\gt_0=\gt_0(K)$ for which $\lambda_{\gt,\pm}\in L^\infty(K)$ for any $\gt\geq \gt_0$. This latter claim follows from deformation of contours in the definition of $\lambda_\gt$. 

Fix a compact $K\subset \C\setminus \{0\}$. Using the inequality \eqref{eq:ineqfundlambdagt} and again the analyticity of $\msf h_\gt$ in a $\gt$-independent neighborhood of the real axis, standard arguments show that there exists a constant $C_K>0$, depending solely on $K$, for which
$$
|\lambda_\gt(\zeta)|\leq C_K \int_{-\infty}^0 |s|^{\alpha} \ee^{-\msf h_\gt(\zeta)}\dd s,
$$
for every $\zeta\in K$.

Similarly as for $\lambda$, the function $\lambda_\gt$ satisfies the jump condition
\begin{equation}\label{eq:jumplambdat}
\lambda_{\gt,+}(\zeta)-\lambda_{\gt,-}(\zeta)=\left(\frac{1}{1+\ee^{-\msf h_\gt(\zeta)}}-1\right)|\zeta|^\alpha=-\frac{|\zeta|^\alpha \ee^{-\msf h_\gt(\zeta)}}{1+\ee^{-\msf h_\gt(\zeta)}},\quad \zeta<0,
\end{equation}
and the behavior
\begin{equation}\label{eq:lambdataubehorigin}
\lambda_\gt(\zeta)=
\begin{cases}
\Boh(\zeta^\alpha), & -1<\alpha<0, \\
\Boh(\log\zeta), & \alpha =0, \\
\Boh(1), & \alpha>0,
\end{cases}
\end{equation}
as $\zeta\to 0$.

Using the function $\lambda_\gt$, we construct a solution to RHP~\ref{rhp:modelformodelanalysis} in the form
\begin{multline}\label{eq:localparPsit}
\bm P_\gt(\zeta)\deff \bm E_\gt(\zeta)\left(\bm I+\lambda_\gt(\zeta)\bm E_{12}\right)\bm \zeta^{\alpha\sp_3/2}(\bm I+\msf a(\zeta)\bm E_{12})\\
\times \left[\bm I- (\chi_{\mcal S_+}(\zeta)\ee^{\pi \ii \alpha}- \chi_{\mcal S_-}(\zeta)\ee^{-\pi \ii \alpha})(1+\ee^{-\msf h_\gt(\zeta)})\bm E_{21}\right], \quad \zeta\in  D_\varepsilon\setminus \Gamma, 
\end{multline}
where we recall that $\chi_{\mcal S_\pm}$ is the characteristic function of the set $\mcal S_\pm$ previously introduced in \eqref{def:mcalSpm}, the function $\msf a$ is as in \eqref{eq:hatUpsilonasyfactor}, and with the term $\bm \Psi_{\infty,0}=\bm \Psi_0$ being the matrix-valued analytic function obtained when we apply Proposition~\ref{Prop:behmodelproblemorigin} to $\bm\Psi=\bm\Psi_\infty$, we used the matrix prefactor
\begin{equation}\label{eq:Egterror}
\bm E_\gt(\zeta)\deff \bm \Psi_{\infty,0}(\zeta)\left(1+\ee^{-\msf h_\infty(\zeta)}\right)^{\sp_3/2}\left(\bm I-\left(\frac{\ee^{-\msf h_\gt(\zeta)}\msf a(\zeta)\zeta^{\alpha}}{1+\ee^{-\msf h_\gt(\zeta)}}+\lambda_\gt(\zeta)\right)\bm E_{12}\right), \quad \zeta\in D_\varepsilon.
\end{equation}

We now verify that \eqref{eq:localparPsit} is indeed a solution to RHP~\ref{rhp:modelformodelanalysis}.

First of all, the factor $\bm E_\gt$ is analytic near the origin. Indeed, all the terms involved in the definition of $\bm E_\gt$ are analytic except across the negative axis. For $\zeta<0$, we compute
$$
\bm E_{\gt,-}(\zeta)^{-1}\bm E_{\gt,+}(\zeta)=\bm I-\left(\lambda_{\gt,+}(\zeta)-\lambda_{\gt,-}(\zeta)+\frac{\ee^{-\msf h_\gt(\zeta)}}{1+\ee^{-\msf h_\gt(\zeta)}}\left(\msf a_+(\zeta)\ee^{\pi \ii \alpha}-\msf a_-(\zeta)\ee^{-\pi \ii \alpha}\right)|\zeta|^{\alpha}\right)\bm E_{12},
$$
and using \eqref{eq:jumplambdat} and \eqref{eq:jumpafactor} the right-hand side above simplifies to $\bm I$. This shows that $\bm E_\gt$ is analytic across the negative axis as well, so it has an isolated singularity at the origin. From the definition of $\bm E_\gt$,  \eqref{eq:hatUpsilonasyfactor} and \eqref{eq:lambdataubehorigin}, we see that $\bm E_\gt(\zeta)=\boh(\zeta^{-1})$ as $\zeta\to 0$, so this singularity is in fact removable.

We now come back to $\bm P_\gt$. Every factor in the right-hand side of \eqref{eq:localparPsit} is analytic on $\C\setminus \Gamma$, so the same holds for $\bm P_\gt$, that is, RHP~\ref{rhp:modelformodelanalysis}--(i) is satisfied.

All the factors on the first line of the right-hand side of \eqref{eq:localparPsit} are analytic across $\Gamma_\pm$, and the verification of RHP~\ref{rhp:modelformodelanalysis}--(ii) on $\Gamma_\pm$ is immediate. After a cumbersome but straightforward calculation using \eqref{eq:jumplambdat} and \eqref{eq:hatUpsilonasyfactor}, we verify that RHP~\ref{rhp:modelformodelanalysis}--(ii) is also satisfied along $\Gamma_0$.

Finally, we now verify RHP~\ref{rhp:modelformodelanalysis}--(iii). Using the definition of $\bm E_\gt$ and Proposition~\ref{Prop:behmodelproblemorigin} applied to $\bm \Psi_{\infty}$, we express for $\zeta\in \partial D_\varepsilon$, 
\begin{multline}\label{eq:PtPsiinftyboundary}
\bm P_\gt(\zeta)\bm \Psi_{\infty}(\zeta)^{-1}=
\bm \Psi_{\infty,0}(\zeta)\zeta^{\frac{\alpha}{2}\sp_3}\left(\bm I+\msf a(\zeta)\bm E_{12}\right)(1+\ee^{-\msf h_\infty(\zeta)})^{\sp_3/2}\\
\times \left[ \bm I- \left( \frac{1}{1+\ee^{-\msf h_\gt(\zeta)}}-\frac{1}{1+\ee^{-\msf h_\infty(\zeta)}} \right)\msf a(\zeta)\bm E_{12} \right] 
\left( \bm I-\left( \chi_{\mcal S_+}(\zeta)\ee^{\pi \ii\alpha}-\chi_{\mcal S_-}(\zeta)\ee^{-\pi \ii \alpha} \right)(\ee^{-\msf h_\gt(\zeta)}-\ee^{-\msf h_\infty(\zeta)}) \bm E_{21}\right) \\
\times (1+\ee^{-\msf h_\infty(\zeta)})^{-\sp_3/2}(\bm I-\msf a(\zeta)\bm E_{12})\zeta^{-\frac{\alpha}{2}\sp_3}\bm \Psi_{\infty,0}(\zeta)^{-1}.
\end{multline}

The term on the last line of \eqref{eq:PtPsiinftyboundary} is the inverse of the term on the right-hand side of the first line, and these terms are independent of $\gt$. From the explicit expression for $\msf h_\infty$ (recall \eqref{eq:httohinfty}), it follows that we can choose $\varepsilon>0$ sufficiently small and independent of $\sad\in \R$ in such a way that such poles remain within a positive distance from $D_\varepsilon$. As such, the fraction $(1+\ee^{-\msf h_\infty(\zeta)})^{-1}$ is bounded on $D_\varepsilon$ uniformly for $\sad\geq \sad_0$ with any fixed $\sad_0\in \R$. All in all, it means that the terms on the first and last lines of \eqref{eq:PtPsiinftyboundary} are bounded functions of $\zeta \in \partial D_\varepsilon$, uniformly in $\sad\geq \sad_0$. Hence, to conclude RHP~\ref{rhp:modelformodelanalysis}--(iii) it suffices to prove that the terms on the second line of \eqref{eq:PtPsiinftyboundary} are of the form $\bm I+\Boh(\ee^{-\sad}\gt^{-2})$.

From the convergence in \eqref{eq:httohinfty}, we have
\begin{equation}\label{eq:bounddiffehgtinfty}
\ee^{-\msf h_\gt(\zeta)}-\ee^{-\msf h_\infty(\zeta)}= \ee^{-\msf h_\infty(\zeta)}(1-\ee^{\msf h_\infty(\zeta)-\msf h_\gt(\zeta)})=\Boh(\gt^{-2}\ee^{-\sad}),
\end{equation}
as $\gt\to \infty$, uniformly for $\zeta$ in compacts (and in particular for $\zeta \in \partial D_\varepsilon$) and also uniformly for $\sad\geq \sad_0$ with any fixed $\sad_0 \in \mathbb{R}$. This estimate takes care of the second term in the second line of \eqref{eq:PtPsiinftyboundary}. 

We already observed that $(1+\ee^{-\msf h_\infty(\zeta)})^{-1}$ remains uniformly bounded for $\zeta\in \partial D_\varepsilon$, and from the uniform convergence \eqref{eq:httohinfty} we learn that $(1+\ee^{-\msf h_\gt(\zeta)})^{-1}$ remains too uniformly bounded for $\zeta\in D_\varepsilon$ and $\sad\in \R$, and also uniformly as $\gt\to +\infty$. This way, we now bound
\begin{equation}\label{eq:convsigmatausigmainfty}
\left|
 \frac{1}{1+\ee^{-\msf h_\gt(\zeta)}}-\frac{1}{1+\ee^{-\msf h_\infty(\zeta)}}
\right|
=
\frac{|\ee^{-\msf h_\gt(\zeta)}-\ee^{-\msf h_{\infty}(\zeta)}|}{|1+\ee^{-\msf h_\gt(\zeta)}||1+\ee^{-\msf h_\infty(\zeta)}|} =\Boh(\ee^{-\sad }\gt^{-2}),\quad \gt\to \infty,
\end{equation}
uniformly for $\zeta$ in compacts (and again in particular for $\zeta\in \partial D_\varepsilon$), where we used again \eqref{eq:bounddiffehgtinfty}. This is the required estimate for the first term in the second line of \eqref{eq:PtPsiinftyboundary}, and concludes the proof that $\bm P_\gt$ from \eqref{eq:localparPsit} solves RHP~\ref{rhp:modelformodelanalysis}--(iii).

To complete the asymptotic analysis, we use transform
\begin{equation}\label{eq:RPsiadm}
\bm R(\zeta)\deff
\begin{cases}
\bm \Psi_\gt(\zeta)\bm \Psi_\infty(\zeta)^{-1},\quad \zeta\in \C\setminus (\Gamma\cup \overline D_\varepsilon), \\
\bm \Psi_\gt(\zeta)\bm P_\gt(\zeta)^{-1},\quad \zeta\in D_\varepsilon\setminus \Gamma.
\end{cases}
\end{equation}

Recall that
$$
\Gamma_{\bm R} \deff \Gamma\cup \partial D_\varepsilon\setminus D_\varepsilon,
$$
we have that $\bm R$ satisfies the following RHP.
\begin{rhp} \label{rhp:R}
Find a $2\times 2$ matrix-valued function $\bm R:\C\setminus \Gamma_{\bm R}\to \C^{2\times 2}$ with the following properties.
\begin{enumerate}[(i)]
\item $\bm R$ is analytic on $\C\setminus \Gamma_{\bm R}$.
\item The entries of $\bm R_\pm$ are continuous along $\Gamma_\bm R$, and satisfy the jump relation $\bm R_+(\zeta)=\bm R_-(\zeta)\bm J_{\bm R}(\zeta)$ with 
$$
\bm J_{\bm R}(\zeta)\deff
\begin{dcases}
\bm I+\left(\ee^{-\msf h_\gt(\zeta)}-\ee^{-\msf h_\infty(\zeta)}\right)\ee^{\pm\pi \ii \alpha} \bm \Psi_{\infty,-}(\zeta)\bm E_{21}\bm \Psi_{\infty,-}(\zeta)^{-1}, & \zeta\in \Gamma_\pm\setminus \overline{D}_\varepsilon, \\
\bm I+\left(\ee^{-\msf h_\infty(\zeta)}-\ee^{-\msf h_\gt(\zeta)}\right)\bm \Psi_{\infty,-}(\zeta)\left(\frac{\bm E_{11}}{1+\ee^{-\msf h_\gt(\zeta)}}-\frac{\bm E_{22}}{1+\ee^{-\msf h_\infty(\zeta)}}\right)\bm\Psi_{\infty,-}(\zeta)^{-1}, & \zeta\in \Gamma_0\setminus \overline{D}_\varepsilon, \\
\bm P_\gt(\zeta)\bm \Psi_\infty(\zeta)^{-1}, & \zeta\in \partial D_\varepsilon,
\end{dcases}
$$
and where we orient $\partial D_\varepsilon$ in the clockwise direction.

\item As $\zeta\to \infty$, we have 
$$
\bm R(\zeta)=\bm I+\Boh\left(\zeta^{-1}\right).
$$
\end{enumerate}
\end{rhp}

Next, we estimate the jumps of $\bm R$ to conclude the asymptotic analysis.

\begin{lemma}\label{lem:jumpRPsiRHPanalysis}
Fix $\sad_0\in \R$. For any $\kappa\in (0,2)$, the estimate
$$
\| \bm J_{\bm R}-\bm I\|_{L^1\cap L^\infty(\Gamma_{\bm R})}=\Boh(\ee^{-\sad} \gt^{-\kappa})
$$
holds uniformly for $\sad\geq \sad_0$ as $\gt \to \infty$.
\end{lemma}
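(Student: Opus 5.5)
We split $\Gamma_{\bm R}$ into its three pieces: $\partial D_\varepsilon$, $(\Gamma_+\cup\Gamma_-)\setminus\overline D_\varepsilon$, and $\Gamma_0\setminus\overline D_\varepsilon$. We estimate $\bm J_{\bm R}-\bm I$ separately on each piece.

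On $\partial D_\varepsilon$ nothing new is needed. Condition (iii) of RHP~\ref{rhp:modelformodelanalysis}, already verified above via \eqref{eq:PtPsiinftyboundary}, \eqref{eq:bounddiffehgtinfty} and \eqref{eq:convsigmatausigmainfty}, gives $\bm J_{\bm R}-\bm I=\bm P_\gt\bm\Psi_\infty^{-1}-\bm I=\Boh(\ee^{-\sad}\gt^{-2})$ uniformly on the compact curve $\partial D_\varepsilon$. This yields both the $L^\infty$ and the $L^1$ bound, even with $\kappa=2$.

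On the unbounded pieces, the jump is a scalar factor times a conjugated matrix. We use Proposition~\ref{prop:estPsiinfty1} to control the conjugated matrices uniformly in $\sad\geq\sad_0$:
\begin{itemize}
\item $\bm\Psi_{\infty,-}\bm E_{21}\bm\Psi_{\infty,-}^{-1}$ is $\Boh(\ee^{-\eta|\zeta|^{1/2}})$ on $\Gamma_\pm\setminus D_\varepsilon$;
\item $\bm\Psi_{\infty,-}\bm E_{kk}\bm\Psi_{\infty,-}^{-1}$ is $\Boh(|\zeta|^{1/2})$ on $\Gamma_0\setminus D_\varepsilon$.
\end{itemize}
On $\Gamma_0$ the scalar prefactors $(1+\ee^{-\msf h_\gt})^{-1}$ and $(1+\ee^{-\msf h_\infty})^{-1}$ are bounded by $1$, since $\msf h_\gt$ and $\msf h_\infty$ are real there. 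So everything reduces to showing
$$
\bigl\| (1+|\zeta|^{1/2})\bigl(\ee^{-\msf h_\gt(\zeta)}-\ee^{-\msf h_\infty(\zeta)}\bigr)\bigr\|_{L^1\cap L^\infty(\Gamma\setminus D_\varepsilon)}=\Boh(\ee^{-\sad}\gt^{-\kappa}).
$$

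We factor out $\ee^{-\sad}$ and split $\Gamma\setminus D_\varepsilon$ at $|\zeta|=\gt^{\beta}$, with a small $\beta>0$ to be chosen in terms of $\kappa$.

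\emph{Inner region $\varepsilon\le|\zeta|\le\gt^\beta$.} Here $\zeta/\gt^2$ lies in the analyticity disk $D_\delta$ of $\msf H$, and Definition~\ref{def:admissibledata}(ii) gives
$$
\gt^{2m}\msf H(\zeta/\gt^2)-(-1)^m\msf u\zeta^m=\Boh(|\zeta|^{m+1}\gt^{-2}).
$$
Whenever $|\zeta|^{m+1}\gt^{-2}\ll1$, we then have
$$
\ee^{-\msf h_\gt}-\ee^{-\msf h_\infty}=\ee^{-\msf h_\infty}\,\Boh(|\zeta|^{m+1}\gt^{-2}).
$$
Two facts make this usable. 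First, by \eqref{eq:ineqzetam} on $\Gamma_\pm$, and trivially on $\Gamma_0$, we have $|\ee^{-\msf h_\infty}|\le \ee^{-\sad}\ee^{-\msf u|\zeta|^m/2}$. Second, the polynomial factor $|\zeta|^{m+3/2}$ is absorbed by this Gaussian-type decay. The inner contribution is therefore $\Boh(\ee^{-\sad}\gt^{-2})$ in $L^1\cap L^\infty$, provided $\beta(m+1)<2$.

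\emph{Outer region $|\zeta|\ge\gt^\beta$.} Here we bound each exponential separately. For the limit term, $|\ee^{-\msf h_\infty}|\le\ee^{-\sad}\ee^{-\msf u|\zeta|^m/2}$. For the finite-$\gt$ term, Definition~\ref{def:admissibledata}(iii) gives
$$
|\ee^{-\msf h_\gt(\zeta)}|=\ee^{-\sad}\ee^{-\gt^{2m}\re\msf H(\zeta/\gt^2)}\le \ee^{-\sad}\ee^{-\Lambda|\zeta|^m}.
$$
Both bounds, even multiplied by $|\zeta|^{1/2}$, are $\Boh(\ee^{-\sad}\ee^{-c\gt^{m\beta}})$ in $L^1\cap L^\infty(\{|\zeta|\ge\gt^\beta\})$. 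This is smaller than any power of $\gt$.

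Combining the three pieces gives the bound $\Boh(\ee^{-\sad}\gt^{-\kappa})$ for every $\kappa<2$. Any $\kappa<2$ is permitted; the loss below $2$ is a safety margin, since the inner-region constant involves the $\Boh(|\zeta|^{m+1}\gt^{-2})$ remainder near the cutoff. All constants are uniform in $\sad\ge\sad_0$, because $\sad$ enters only through the factored-out $\ee^{-\sad}$ and through Proposition~\ref{prop:estPsiinfty1}, which is itself uniform.

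The main obstacle is the inner-region estimate on $\Gamma_\pm$. There we need the Taylor control of $\msf H$ on the growing disk and the decay of $\ee^{-\msf h_\infty}$ simultaneously. The point to check is that the cutoff $\gt^\beta$ stays inside $\gt^2D_\delta$, which holds because $\beta<2$. At the same time, the cutoff must be large enough that the tail bound from (iii) gives super-polynomial smallness, which holds for any $\beta>0$.
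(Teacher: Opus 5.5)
Your proof is correct and follows essentially the paper's route. The paper uses the same three-way split of $\Gamma_{\bm R}$: the circle $\partial D_\varepsilon$, handled by RHP~\ref{rhp:modelformodelanalysis}--(iii); an inner region $|\zeta|\le\gt^{\nu}$, controlled by the Taylor expansion of $\msf H$ together with $|1-\ee^{w}|\le|w|\ee^{|w|}$; and an outer region, where each exponential is bounded separately via Definition~\ref{def:admissibledata}--(iii). Proposition~\ref{prop:estPsiinfty1} controls the conjugated matrices throughout, as in your argument.

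The one difference is in the inner region. The paper bounds $|\msf h_\gt-\msf h_\infty|\le M\gt^{-\kappa}$ uniformly on $D_{\gt^\nu}$ with $\nu=(2-\kappa)/(m+1)$. You keep the pointwise remainder $\Boh(|\zeta|^{m+1}\gt^{-2})$ and absorb the polynomial factor into $\ee^{-\msf u|\zeta|^m/2}$. This actually yields the rate $\gt^{-2}$, so your remark about a ``safety margin'' below $\kappa=2$ is unnecessary.
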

\begin{proof}
For a given $\kappa\in (0,2)$, set $\nu\deff (2-\kappa)/(m+1)\in (0,2/(m+1))$. We will make the estimates in three separate subsets of $\Gamma_{\bm R}$, namely on $\partial D_\varepsilon$, $(\Gamma_{\bm R}\cap D_{\gt^\nu})\setminus \partial D_\varepsilon$ and $\Gamma_{\bm R}\setminus D_{\gt^\nu}$.

From RHP~\ref{rhp:modelformodelanalysis}--(iii) we immediately obtain
\begin{equation}\label{eq:RestModelNorm1}
\| \bm J_{\bm R}-\bm I\|_{L^1\cap L^\infty(\partial D_\varepsilon)}=\Boh(\ee^{-\sad} \gt^{-2}).
\end{equation}

Next, we move to the required estimates on $\Gamma_{\bm R}\cap D_{\gt^\nu}\setminus \partial D_\varepsilon$. We learn from \eqref{eq:httohinfty} and Definition~\ref{def:admissibledata}--(ii) that there exists $M>0$, independent of $\sad,\gt $, for which
$$
|\msf h_\infty(\zeta)-\msf h_\gt(\zeta)|\leq \frac{M}{\gt^{2-(m+1)\nu}}=\frac{M}{\gt^\kappa},\quad \zeta\in D_{\gt^\nu}.
$$
In particular $\msf h_\gt-\msf h_\infty\to 0$ uniformly for $\zeta\in D_{\gt^\nu}$ as $\gt\to \infty$. From the inequality $|1-\ee^w|\leq |w|\ee^{|w|}$, valid for any $w\in \C$, and the fact that $|\ee^{-\msf h_\infty(\zeta)+\sad}|$ is independent of $\sad$ and bounded by $\ee^{-\eta |\zeta|^m}$ along $\Gamma$ for some $\eta>0$, we learn
\begin{equation}\label{eq:boundhgtpointwise}
|\ee^{-\msf h_\infty(\zeta)}-\ee^{-\msf h_\gt(\zeta)}|\leq |\ee^{-\msf h_\infty(\zeta)}||\msf h_\infty(\zeta)-\msf h_\gt(\zeta)|\ee^{|\msf h_\infty(\zeta)-\msf h_\gt(\zeta)|}
\leq \frac{M\ee^{-\sad}\ee^{-\eta |\zeta|^m}}{\gt^{\kappa}},\quad \zeta\in \Gamma\cap D_{\gt^\nu},
\end{equation}
for a possibly different constant $M>0$, but still independent of $\sad,\gt$. Because $\msf h_\gt,\msf h_\infty$ are real-valued along $(-\infty,0)$, we have $|1+\ee^{-\msf h_\gt}|^{-1},|1+\ee^{-\msf h_\infty}|^{-1}\leq 1$. Then, using \eqref{eq:boundhgtpointwise} and Proposition~\ref{prop:estPsiinfty1}, we conclude from $\bm J_{\bm R}$ in RHP~\ref{rhp:R}--(ii) that
\begin{equation}\label{eq:RestModelNorm2}
\|\bm J_{\bm R}-\bm I\|_{L^{1}\cap L^\infty((\Gamma_{\bm R}\cap D_{\gt^\nu})\setminus \partial D_\varepsilon)}\leq \frac{M\ee^{-\sad}}{\gt^\kappa} \|\zeta^{1/2}\ee^{-\eta |\zeta|^m}\|_{L^{1}\cap L^\infty((\Gamma_{\bm R}\cap D_{\gt^\nu})\setminus \partial D_\varepsilon)}=\Boh(\ee^{-\sad} \gt^{-\kappa}).
\end{equation}

Finally, we move to obtaining the appropriate bounds on $\Gamma_{\bm R}\setminus D_{\gt^\nu}$. The very definition of $\msf h_\infty$ in \eqref{def:hinfty} already shows that for some $\eta_1>0$,
$$
|\ee^{-\msf h_\infty(\zeta)}|\leq \ee^{-\sad} \ee^{-2\eta_1 |\zeta|^m}\leq \ee^{-\sad-\eta_1 \gt^{m\nu} } \ee^{-\eta_1 |\zeta|^m}, \quad \zeta\in \Gamma_{\bm R}\setminus D_{\gt^\nu}.
$$
Next, from Definition~\ref{def:admissibledata}--(iii) we learn that
$$
|\ee^{-\msf h_\gt(\zeta)}|\leq \ee^{-\sad} \ee^{-\Lambda \gt^\nu/2}\ee^{-\Lambda |\zeta|^\nu/2},\quad \zeta\in \Gamma_{\bm R}\setminus D_{\gt^\nu}.
$$
Combining these two inequalities, we obtain the estimate
\begin{equation}\label{eq:boundhgtpointwise02}
|\ee^{-\msf h_\gt(\zeta)}-\ee^{-\msf h_\infty(\zeta)}|\leq M\ee^{-\sad-\eta \gt^{\nu}-\eta|\zeta|^\nu},\quad \zeta\in \Gamma_{\bm R}\setminus D_{\gt^\nu}
\end{equation}
for yet new constants $\eta>0,M>0$, both independent of $\sad$. Therefore, from the definition of $\bm J_\bm R$ and using again Proposition~\ref{prop:estPsiinfty1}, we obtain
\begin{equation}\label{eq:RestModelNorm3}
\|\bm J_{\bm R}-\bm I\|_{L^{1}\cap L^\infty(\Gamma_{\bm R}\setminus D_{\gt^\nu})}\leq M\ee^{-\sad-\eta\gt^\nu} \|\zeta^{1/2}\ee^{-\eta |\zeta|^\nu}\|_{L^{1}\cap L^\infty(\Gamma_{\bm R}\setminus D_{\gt^\nu})}=\Boh(\ee^{-\sad-\eta\gt^\nu}).
\end{equation}

Estimates \eqref{eq:RestModelNorm1}, \eqref{eq:RestModelNorm2} and \eqref{eq:RestModelNorm3} combined conclude the proof.
\end{proof}

\begin{remark}\label{rm:estexphtauhinfty}
    For later reference, observe that \eqref{eq:boundhgtpointwise} and \eqref{eq:boundhgtpointwise02} may be combined into the simpler estimate
    $$
    \ee^{-\msf h_\gt(\zeta\mid \sad)}-\ee^{-\msf h_\infty(\zeta\mid \sad)}=\Boh\left( \frac{\ee^{-\eta |\zeta|^\alpha -\sad}}{\gt^\kappa}\right),\quad \gt\to\infty,
    $$
    for some $\alpha>0$ and $\eta>0$, and valid uniformly for $\zeta\in \R$.
\end{remark}

We conclude the asymptotic analysis with the next statement.
\begin{prop}\label{prop:RestmodelPsiinfty}
Fix $\sad_0\in \R$. For any $\kappa\in (0,2)$ the estimates
$$
\|\bm R-\bm I\|_{L^\infty(\C\setminus \Gamma_{\bm R})}=\Boh(\ee^{-\sad}\gt^{-\kappa}), \quad \|\bm R_\pm-\bm I\|_{L^\infty(\Gamma_{\bm R})}=\Boh(\ee^{-\sad}\gt^{-\kappa}),\quad \|\bm R_\pm-\bm I\|_{L^1\cap L^2(\Gamma_{\bm R})}=\Boh(\ee^{-\sad}\gt^{-\kappa})
$$
are valid as $\gt\to \infty$, uniformly for $\sad\geq \sad_0$.
\end{prop}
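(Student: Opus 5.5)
The plan is to follow the standard small-norm theory for RHPs. RHP~\ref{rhp:R} is equivalent to the singular integral equation
$$
\bm R_-=\bm I+\mcal C_-\bigl(\bm R_-(\bm J_{\bm R}-\bm I)\bigr)\quad\text{on }\Gamma_{\bm R},
$$
where $\mcal C_-$ is the minus boundary value of the Cauchy operator on $\Gamma_{\bm R}$. Write $\bm\Delta\deff\bm J_{\bm R}-\bm I$ and $\mcal C_{\bm\Delta}f\deff \mcal C_-(f\bm\Delta)$.

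The contour $\Gamma_{\bm R}$ is a finite union of rays and a circle, all fixed and independent of $\gt$ and $\sad$. Hence $\mcal C_-$ is bounded on $L^2(\Gamma_{\bm R})$, with a norm depending only on the geometry. This gives
$$
\|\mcal C_{\bm\Delta}\|_{L^2\to L^2}\leq C\|\bm\Delta\|_{L^\infty}=\Boh(\ee^{-\sad}\gt^{-\kappa}),
$$
by Lemma~\ref{lem:jumpRPsiRHPanalysis}, uniformly for $\sad\geq\sad_0$ (note $\ee^{-\sad}\leq\ee^{-\sad_0}$). So for $\gt$ large, $I-\mcal C_{\bm\Delta}$ is invertible by a Neumann series, with inverse bounded uniformly in $\sad$. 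The solution is $\bm R_--\bm I=(I-\mcal C_{\bm\Delta})^{-1}\mcal C_-(\bm\Delta)$. Since $\|\bm\Delta\|_{L^2}\leq \|\bm\Delta\|_{L^1}^{1/2}\|\bm\Delta\|_{L^\infty}^{1/2}=\Boh(\ee^{-\sad}\gt^{-\kappa})$, this yields $\|\bm R_--\bm I\|_{L^2}=\Boh(\ee^{-\sad}\gt^{-\kappa})$. The estimate for $\bm R_+$ follows from $\bm R_+-\bm I=(\bm R_--\bm I)\bm J_{\bm R}+\bm\Delta$ and the uniform boundedness of $\bm J_{\bm R}$.

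For the $L^\infty$ bound off the contour, use
$$
\bm R(\zeta)-\bm I=\frac{1}{2\pi\ii}\int_{\Gamma_{\bm R}}\frac{\bm\Delta(s)+(\bm R_-(s)-\bm I)\bm\Delta(s)}{s-\zeta}\,\dd s.
$$
For $\zeta$ at distance at least some $d>0$ from $\Gamma_{\bm R}$, bound the integral by Cauchy--Schwarz and the $L^1$ norm of $\bm\Delta$. Near the contour, this direct bound fails. The standard remedy, which I would use here, is analyticity. On $\Gamma_\pm\setminus\overline D_\varepsilon$ and $\Gamma_0\setminus\overline D_\varepsilon$ the jump is built from $\ee^{-\msf h_\gt}$ and $\bm\Psi_{\infty,\pm}$. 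These extend analytically to a fixed-width neighbourhood: $\msf h_\gt$ is analytic on growing discs, and for the region beyond $D_{\gt^\nu}$ one can use the estimates of Lemma~\ref{lem:jumpRPsiRHPanalysis} on slightly rotated rays. This allows the contour to be deformed locally, so that every $\zeta$ ends up at distance $\geq d$ from the deformed contour. On the deformed contours the same $L^1\cap L^\infty$ bounds hold by the same proof. The jump on $\partial D_\varepsilon$ is analytic in an annulus, so moving the circle slightly handles points near it.

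Applying the same representation on deformed contours, the boundary values $\bm R_\pm$ are values of $\bm R$ off the deformed contour. This gives $\|\bm R_\pm-\bm I\|_{L^\infty}=\Boh(\ee^{-\sad}\gt^{-\kappa})$. For $L^1$, bound the Cauchy integral near infinity, using the decay of $\bm\Delta$ (the factor $\ee^{-\eta|\zeta|^\nu}$ from the proof of Lemma~\ref{lem:jumpRPsiRHPanalysis}), together with $L^2$ on bounded parts.

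I expect the main obstacle to be the deformation step. $\msf H$ is only assumed $C^\infty$ away from $D_\delta$, not analytic, so $\ee^{-\msf h_\gt}$ on the far part of $\Gamma$ cannot be continued analytically. If deformation is unavailable there, I would instead take the $L^\infty$ and $L^1$ bounds for $\bm R_\pm$ on the far rays directly from the integral equation. That requires Hölder or smoothness control of $\bm\Delta$ in a weighted $C^{0,\gamma}$ norm, which follows since $\bm\Delta$ is $C^\infty$ with derivatives obeying the same exponential bounds. Using $C^{0,\gamma}$-boundedness of the Cauchy operator then gives uniform bounds up to the contour.
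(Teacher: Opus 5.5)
Your $L^2$ and $L^\infty$ arguments are the standard small-norm theory that the paper's one-line proof invokes: a Neumann series for $I-\mcal C_{\bm\Delta}$ on $L^2(\Gamma_{\bm R})$, the Cauchy representation away from the contour, and local deformation to reach the contour. You are more careful than the paper in noticing that for $|\zeta|\geq\delta\gt^2$ the jump is only $C^\infty$, so a H\"older-type argument must replace deformation there.

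The genuine gap is the $L^1$ step, and it cannot be closed. Along the unbounded rays of $\Gamma_{\bm R}$,
$\bm R_\pm(\zeta)-\bm I=\bm R_1\zeta^{-1}+\Boh(\zeta^{-2})$, where
$\bm R_1=-\frac{1}{2\pi\ii}\int_{\Gamma_{\bm R}}\bm R_-(s)\bm\Delta(s)\,\dd s=\bm\Psi_{\gt,1}-\bm\Psi_{\infty,1}$.
However fast the density $\bm R_-\bm\Delta$ decays, its Cauchy transform decays only like $1/\zeta$ unless the density has zero total integral. Nothing forces $\bm R_1=0$: it is the difference of the $1/\zeta$ coefficients of two model problems with different data. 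For the analogous $\bm R$ of Section~\ref{sec:boundsPsiinfty}, \eqref{eq:pR1identity} gives $(\bm R_1)_{12}=\frac{\xad}{2\ii}\bigl(\msf p+\frac{4\alpha^2-1}{8\xad}\bigr)$, which is nonzero for small $\xad$ by Theorem~\ref{thm:TWtyperepres}. Hence $\bm R_\pm-\bm I$ is in general not in $L^1(\Gamma_{\bm R})$ at all. Your decay argument ``near infinity'' cannot give the claimed bound.

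The $L^1$ clause here, and likewise in Propositions~\ref{prop:RLpestimatesBesselasympt} and~\ref{prop:estRuinfty}, is an overstatement. The paper's appeal to standard perturbation theory does not justify it either. What you can prove is the $L^1$ bound on bounded portions $\Gamma_{\bm R}\cap D_M$, which follows from your $L^2$ bound by Cauchy--Schwarz with an $M$-dependent constant, or a weighted version. This suffices for every later use; for instance, estimating $\bm R_1$ only needs $\|\bm\Delta\|_{L^1}+\|\bm R_--\bm I\|_{L^2}\|\bm\Delta\|_{L^2}$.

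Two smaller points.
\begin{itemize}
\item Weighted H\"older smallness of $\bm\Delta$ on $|\zeta|\geq\delta\gt^2$ needs growth control on the derivatives of $\msf H$ at infinity. Definition~\ref{def:admissibledata} does not provide this; it only asks for smoothness and $\re\msf H\geq\Lambda|\zeta|^m$. Either add it as a hypothesis, or use the freedom in choosing the $C^\infty$ extension of $\msf H^Q$ in Section~\ref{sec:localparamhardedge}.
\item $\bm P_\gt\bm\Psi_\infty^{-1}$ is not analytic in an annulus around $\partial D_\varepsilon$: it jumps across $\Gamma$ because $\bm J_\gt\neq\bm J_\infty$. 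Instead of deforming the circle, vary its radius. The matching in RHP~\ref{rhp:modelformodelanalysis}--(iii) holds uniformly for radii in a fixed compact interval, so this works.
\end{itemize}
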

\begin{proof}
Once again, the result follows from standard perturbation theory of RHPs and the $L^1\cap L^\infty$ estimates provided by Lemma~\ref{lem:jumpRPsiRHPanalysis}.
\end{proof}

\begin{remark}\label{rmk:expansioninftymodeladm}
    For admissible $\bm\Psi=\bm\Psi_\gt$, the asymptotic expansion in RHP~\ref{RHP:model}--(iii) is valid as $\zeta\to \infty$ in some neighborhood of $\infty$ which in principle may depend on $\gt$. Nevertheless, thanks to Proposition~\ref{prop:RestmodelPsiinfty}, $\bm\Psi_\gt$ may be well-modeled at $\zeta$ by $\bm \Psi_\infty$, and consequently we may choose a uniform (in $\gt$) neighborhood of $\zeta=\infty$ on which RHP~\ref{RHP:model}--(iii) is valid for every $\gt$ sufficiently large.
\end{remark}

Proposition~\ref{prop:RestmodelPsiinfty} finishes the asymptotic analysis of $\bm\Psi_\gt$, and in the next section we draw the main consequences that will be useful for later.

\section{Consequences of the asymptotic analysis for the model problem}\label{sec:conseqmodelproblem}

We now summarize the consequences of the asymptotic analysis $\bm \Psi_\gt\to \bm \Psi_\infty$ provided by Proposition~\ref{prop:RestmodelPsiinfty}, and results surrounding it.

\subsection{Convergence of the kernel}\hfill 

Using Cauchy's integral formula, we write for any $\xi,\zeta\in \C$,
$$
\bm R(\xi)^{-1}\bm R(\zeta)=\bm I+\bm R(\xi)^{-1}\left(\bm R(\zeta)-\bm R(\xi)\right)=\bm I +\bm R(\xi)^{-1}\frac{\zeta-\xi}{2\pi \ii}\int_\gamma \frac{\bm R(u)}{(u-\zeta)(u-\xi)} \dd u, 
$$
where $\gamma$ is any contour encircling both $\zeta$ and $\xi$ in the counterclockwise direction. Applying Proposition~\ref{prop:RestmodelPsiinfty}, we obtain
$$
\bm R(\xi)^{-1}\bm R(\zeta)=\bm I+\Boh(\ee^{-\sad}\gt^{-\kappa}(\zeta-\xi)),\quad \gt\to \infty,
$$
uniformly in $\zeta,\xi$ and $\sad\geq \sad_0$, for any $\kappa\in (0,2)$.

We unfold \eqref{eq:RPsiadm}, concluding that
$$
\bm \Psi_\gt(\xi)^{-1}\bm \Psi_\gt(\zeta)=\bm \Psi_\infty(\xi)^{-1}(\bm I+\Boh(\ee^{-\sad }\gt^{-\kappa}(\xi-\zeta)))\bm \Psi_\infty(\zeta), \quad \gt\to \infty,
$$
also uniformly in $\zeta,\xi$ and $\sad \geq \sad_0$, for any $\kappa\in (0,2)$.
Since $\bm\Psi_\infty$ is bounded on compact subsets of $\C\setminus \{0\}$, and using \eqref{eq:convsigmatausigmainfty}, we conclude that
\begin{multline*}
\frac{1}{\xi-\zeta}\left[\left(\bm I+\frac{\ee^{-\pi\ii \alpha}}{\sigma_\gt(\xi)}\bm E_{21}\right)\bm \Psi_\gt(\xi)^{-1}\bm \Psi_\gt(\zeta)\left(\bm I-\frac{\ee^{-\pi\ii \alpha}}{\sigma_\gt(\zeta)}\bm E_{21}\right)\right]_{21,+}
=\\
\frac{1}{\xi-\zeta}\left[\left(\bm I+\frac{\ee^{-\pi\ii \alpha}}{\sigma_\infty(\xi)}\bm E_{21}\right)\bm \Psi_\infty(\xi)^{-1}\bm \Psi_\infty(\zeta)\left(\bm I-\frac{\ee^{-\pi\ii \alpha}}{\sigma_\infty(\zeta)}\bm E_{21}\right)\right]_{21,-}+\Boh(\ee^{-\sad}\gt^{-\kappa}), \quad \gt\to \infty,
\end{multline*}
valid uniformly in $\zeta,\xi$ in compacts of $(-\infty,0)$ and $\sad\geq \sad_0$, for any $\kappa\in (0,2)$, where recall that $\sigma_{\gt/\infty}(\xi)=(1+\ee^{-\msf{h}_{\gt/\infty}(\zeta)})^{-1}$. A comparison with \eqref{eq:KalphaRHPrepr} using \eqref{eq:sigmaphiinfity} reveals the next result.

\begin{theorem}\label{thm:kernelconvPsigt}
    For any $\kappa\in (0,2)$, the estimate
    $$
    \frac{1}{\zeta-\xi}\left[\left(\bm I+\frac{\ee^{-\pi\ii \alpha}}{\sigma_\gt(\xi)}\bm E_{21}\right)\bm \Psi_\gt(\xi)^{-1}\bm \Psi_\gt(\zeta)\left(\bm I-\frac{\ee^{-\pi\ii \alpha}}{\sigma_\gt(\xi)}\bm E_{21}\right)\right]_{21,-}=\frac{2\pi\ii \ee^{-\pi\ii \alpha}\msf K_\alpha(-\zeta,-\xi)}{\sqrt{\sigma_\infty(\xi)}\sqrt{\sigma_\infty(\zeta)}}+ \Boh(\ee^{-\sad}\gt^{-\kappa})
    $$
    is valid as $\gt\to \infty$, uniformly for $\zeta,\xi$ in compacts of $(-\infty,0)$ and uniformly for $\sad\geq -\sad_0$ with any fixed $\sad_0 \in \mathbb{R}$.
\end{theorem}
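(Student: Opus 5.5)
The plan is to combine the small-norm estimate of Proposition~\ref{prop:RestmodelPsiinfty} with the RHP representation \eqref{eq:KalphaRHPrepr} of $\msf K_\alpha$. The argument has three steps.

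\emph{Step 1: comparing the model problems on $(-\infty,0)$.} Fix compacts of $(-\infty,0)$ containing $\zeta,\xi$, and choose $\varepsilon$ so small that these compacts lie outside $\overline D_\varepsilon$. There \eqref{eq:RPsiadm} gives $\bm\Psi_\gt=\bm R\,\bm\Psi_\infty$, with $-$-boundary values on $\Gamma_0$. Hence
\[
\bm\Psi_{\gt,-}(\xi)^{-1}\bm\Psi_{\gt,-}(\zeta)=\bm\Psi_{\infty,-}(\xi)^{-1}\bm R_-(\xi)^{-1}\bm R_-(\zeta)\bm\Psi_{\infty,-}(\zeta).
\]
To estimate $\bm R_-(\xi)^{-1}\bm R_-(\zeta)-\bm I$ I would not use a Cauchy formula for $\bm R$ itself, since $\bm R$ jumps on $\Gamma_0$. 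Instead I would work with $\bm R_-$ extended analytically from the lower side. The jump $\bm J_{\bm R}$ on $\Gamma_0\setminus\overline D_\varepsilon$ is analytic, because it is built from $\msf h_\gt$, $\msf h_\infty$ and $\bm\Psi_\infty$. So I can deform the portion of $\Gamma_0$ near the compact slightly into the upper half plane. This makes $\bm R_-$ analytic in a fixed neighborhood of the compact. The deformed jump still satisfies the $\Boh(\ee^{-\sad}\gt^{-\kappa})$ bound: this follows from Remark~\ref{rm:estexphtauhinfty}, analyticity of $\msf h_\gt$ in a growing disk, and continuity of $\bm\Psi_\infty$ near the compact.

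Then Cauchy's formula on a small circle, together with the $L^\infty$ bound $\bm R-\bm I=\Boh(\ee^{-\sad}\gt^{-\kappa})$ from Proposition~\ref{prop:RestmodelPsiinfty}, gives
\[
\bm R_-(\xi)^{-1}\bm R_-(\zeta)=\bm I+\Boh\left(\ee^{-\sad}\gt^{-\kappa}|\zeta-\xi|\right),
\]
uniformly for $\sad\ge\sad_0$. Inserting this and using that $\bm\Psi_{\infty,\pm}$ and their inverses are bounded on the compact, I get
\[
\bm\Psi_{\gt,-}(\xi)^{-1}\bm\Psi_{\gt,-}(\zeta)=\bm\Psi_{\infty,-}(\xi)^{-1}\bm\Psi_{\infty,-}(\zeta)+\Boh\left(\ee^{-\sad}\gt^{-\kappa}|\zeta-\xi|\right).
\]
The delicate point is uniformity of the bounds on $\bm\Psi_{\infty,\pm}$ in $\sad\ge\sad_0$. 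For this I would invoke the $\sad\to+\infty$ analysis, namely $\bm\Psi_{\infty,-}=(\bm I+\Boh(\ee^{-\sad}))\bm\Psi^\bes_{\alpha,-}$ from the proof of Proposition~\ref{prop:estPsiinfty1}, combined with continuity in $\sad$ on compact $\sad$-ranges.

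\emph{Step 2: replacing the $\sigma$ factors.} Next I replace the triangular factors involving $\sigma_\gt$ by those involving $\sigma_\infty$. By \eqref{eq:convsigmatausigmainfty} we have $1/\sigma_\gt-1/\sigma_\infty=\ee^{-\msf h_\gt}-\ee^{-\msf h_\infty}=\Boh(\ee^{-\sad}\gt^{-2})$. This change does not come with a factor $\zeta-\xi$, so dividing by $\zeta-\xi$ is not automatically harmless. I would handle it by writing the $(2,1)$ entry as $F(\xi,\zeta)$, where
\[
F(\xi,\zeta)=\left[\left(\bm I+c(\xi)\bm E_{21}\right)\bm M(\xi,\zeta)\left(\bm I-c(\zeta)\bm E_{21}\right)\right]_{21}
\]
with $c=\ee^{-\pi\ii\alpha}/\sigma$ and $\bm M(\xi,\zeta)=\bm\Psi(\xi)^{-1}\bm\Psi(\zeta)$. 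Since $\bm M(\zeta,\zeta)=\bm I$, the function $F$ vanishes on the diagonal for both the $\gt$ and the $\infty$ data. Writing $F(\xi,\zeta)=\int_\xi^\zeta\partial_2F(\xi,w)\,\dd w$ converts the difference quotient into an average of derivatives. The derivatives of $\bm\Psi_\gt$ and $\sigma_\gt$ on compacts are controlled with the same errors by Cauchy estimates, again using analyticity in a neighborhood after the contour deformation of Step~1. With this, the $\Boh(\ee^{-\sad}\gt^{-\kappa})$ bound holds uniformly for $\zeta,\xi$ in the compact, including near $\zeta=\xi$.

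\emph{Step 3: identification.} Finally I identify the limit. In \eqref{eq:KalphaRHPrepr} and \eqref{eq:sigmaphiinfity}, change variables $\zeta\mapsto-\zeta$, $\xi\mapsto-\xi$, and note $\sigma_{\bm\Phi}(-4\cdot/\xad^2)$ evaluated at $-\zeta$ together with the rescaling $\zeta\mapsto\zeta/\msf u^{1/m}$ with $\xad=2\msf u^{-1/(2m)}$. This shows that the $\infty$-expression equals $2\pi\ii\,\ee^{-\pi\ii\alpha}\msf K_\alpha(-\zeta,-\xi)/\sqrt{\sigma_\infty(\xi)\sigma_\infty(\zeta)}$. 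This last step is bookkeeping of signs and of the normalization $\xad\leftrightarrow\msf u$.

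The main obstacle I expect is Step~2, specifically uniformity of the error near the diagonal $\zeta=\xi$. I would resolve it by the derivative averaging described there.
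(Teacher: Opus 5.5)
Your proposal is correct and follows the paper's own route. You write $\bm\Psi_\gt=\bm R\bm\Psi_\infty$ via \eqref{eq:RPsiadm}, use a Cauchy-formula argument with Proposition~\ref{prop:RestmodelPsiinfty} to get $\bm R(\xi)^{-1}\bm R(\zeta)=\bm I+\Boh(\ee^{-\sad}\gt^{-\kappa}(\zeta-\xi))$, swap $\sigma_\gt$ for $\sigma_\infty$ via \eqref{eq:convsigmatausigmainfty}, and identify the limit through \eqref{eq:KalphaRHPrepr} and \eqref{eq:sigmaphiinfity}. Your two extra precautions are continuing $\bm R_-$ analytically across $\Gamma_0$ (where $\bm R$ genuinely jumps) before applying Cauchy's formula, and using a Cauchy estimate on $\ee^{-\msf h_\gt}-\ee^{-\msf h_\infty}$ to keep the $\sigma$-replacement uniform near $\zeta=\xi$; both fill in points the paper passes over, and you rightly use $\sigma_\gt(\zeta)$ in the right triangular factor, as in \eqref{eq:KalphaRHPrepr}, rather than the $\sigma_\gt(\xi)$ printed in the statement.
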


\subsection{Asymptotics for a relevant integral}\hfill 

Fix $\zeta_0>0$ and introduce
\begin{equation}\label{deff:Itauint}
\msf I_\gt(\sad)\deff
\int_\sad^\infty
\int_{-\gt^2\zeta_0}^0
\frac{\ee^{-\msf h_\gt(\zeta\mid u)}}{(1+\ee^{-\msf h_\gt(\zeta\mid u)})^2}
\left[\bm\Delta_\zeta\left[\bm\Psi_\gt(\zeta)\left(\bm I-\frac{\ee^{-\pi\ii\alpha}}{\sigma_\gt(\zeta)}\bm E_{21}\right)\right]\right]_{21,-}
\dd \zeta \dd u,
\end{equation}
where we recall that $\bm\Delta_\zeta$ is as explained in \eqref{deff:Deltax}. This double integral will give the leading contribution to the multiplicative statistics \eqref{eq:MultStats}.

If we apply Theorem~\ref{thm:kernelconvPsigt} with $\xi\to \zeta$ (which is possible due to the uniformity of the error term therein) and \eqref{eq:KalphaRHPreprdiag}, we obtain as a consequence
$$
\left[\bm\Delta_\zeta\left[\bm\Psi_\gt(\zeta\mid \sad)\left(\bm I-\frac{\ee^{-\pi\ii\alpha}}{\sigma_\gt(\zeta\mid \sad)}\bm E_{21}\right)\right]\right]_{21,-}=\frac{2\pi\ii \ee^{-\pi\ii\alpha}}{\sigma_\infty(\zeta)}\msf K_\alpha(-\zeta,-\zeta\mid \sad)+\boh(1),\quad \gt\to \infty,
$$
for $\zeta$ in $(-\infty,0)$ pointwise. But we want to have a more precise quantitative error estimate, valid also for $\zeta$ near $0$ to be able to perform the integration in \eqref{deff:Itauint} when $\gt \to \infty$, and for that we split into two cases, namely $\zeta<-\varepsilon$ and $-\varepsilon<\zeta<0$. 

For $\zeta<-\varepsilon$, a combination of \eqref{eq:RPsiadm} with Proposition~\ref{prop:RestmodelPsiinfty} yields
$$
\bm\Psi_{\gt,-}(\zeta\mid \sad)=\left(\bm I+\Boh\left(\frac{\ee^{-\sad}}{\gt^{\kappa}}\right)\right)\bm\Psi_{\infty,-}(\zeta\mid \sad),\quad \gt\to \infty,
$$
uniformly for $\zeta<-\varepsilon$ and $\sad\geq \sad_0$ with any $\sad_0\in \R$, and where $\kappa\in (0,2)$ is arbitrary. Using this estimate in combination with Remark~\ref{rm:estexphtauhinfty}, 
we obtain
\begin{multline*}
\bm\Psi_{\gt,-}(\zeta\mid \sad)\left(\bm I-\ee^{-\pi\ii\alpha}(1+\ee^{-\msf h_\gt(\zeta\mid \sad)})\bm E_{21}\right) = \\
\left(\bm I+\Boh\left(\frac{\ee^{-\sad}}{\gt^{\kappa}}\right)\right)\bm\Psi_{\infty,-}(\zeta\mid \sad)\left(\bm I-\ee^{-\pi\ii\alpha}\left(1+\ee^{-\msf h_\infty(\zeta\mid \sad)} +\Boh\left(\frac{\ee^{-\sad-\eta|\zeta|^{\alpha}}}{\gt^{\kappa}}\right)\right)\bm E_{21}\right),\quad \gt\to \infty,
\end{multline*}
with uniform error term as before. This identity may be differentiated term by term, and after a cumbersome but straightforward calculation it implies 
\begin{multline*}
\bm\Delta_\zeta\left[\bm\Psi_{\gt,-}(\zeta\mid \sad)\left(\bm I-\ee^{-\pi\ii\alpha}(1+\ee^{-\msf h_\gt(\zeta\mid \sad)})\bm E_{21}\right)\right] 
= \\ 
\bm\Delta_\zeta\left[\bm\Psi_{\infty,-}(\zeta\mid \sad)\left(\bm I-\ee^{-\pi\ii\alpha}(1+\ee^{-\msf h_\infty(\zeta\mid \sad)})\bm E_{21}\right)\right] 
+\Boh\left(\frac{\ee^{-\sad}}{\gt^{\kappa}}\right)
\bm E_{21}\left[\bm I+\bm\Delta_\zeta\bm\Psi_{\infty,-}(\zeta)\left(\bm I+\Boh(1)\bm E_{21}\right)\right].
\end{multline*}

From \eqref{eq:asymptRHPpsimodel} we get the rough estimate $\bm\Psi_{\infty,-}(\zeta)=\Boh(\zeta_-^{1/4})$, valid for $\zeta\leq -\varepsilon$. Taking the $(2,1)$-entry and using \eqref{eq:KalphaRHPreprdiag} we update the above to
\begin{multline}\label{eq:Psigtasymplocalout}
\bm\Delta_\zeta\left[\bm\Psi_\gt(\zeta\mid \sad)\left(\bm I-\ee^{-\pi\ii\alpha}(1+\ee^{-\msf h_\gt(\zeta\mid \sad)})\bm E_{21}\right)\right]_{21,-} 
=  \\ 
\frac{2\pi\ii \ee^{-\pi\ii\alpha}}{\sigma_\infty(\zeta)}\msf K_\alpha(-\zeta,-\zeta\mid \sad)
+\Boh\left(\frac{\ee^{-\sad}\zeta_-^{1/4}}{\gt^{\kappa}}\right),\quad \gt\to\infty,
\end{multline}
valid, as before, uniformly for $\zeta\leq -\varepsilon$, $\sad\geq \sad_0$ and for any $\kappa\in (0,2)$.

Next, we consider $-\varepsilon\leq \zeta<0$. In this case, using \eqref{eq:localparPsit}, \eqref{eq:RPsiadm} and Proposition~\ref{prop:RestmodelPsiinfty}, we obtain
\begin{multline}\label{eq:Psigtasymptlocal}
\bm\Psi_{\gt,-}(\zeta)\left(\bm I-\ee^{-\pi\ii\alpha}(1+\ee^{-\msf h_\gt(\zeta\mid \sad)})\bm E_{21}\right)= \\ 
\left(\bm I+\Boh\left(\frac{\ee^{-\sad}}{\gt^{\kappa}}\right)\right) \bm E_\gt(\zeta)\left(\bm I+\lambda_{\gt,-}(\zeta)\bm E_{21}\right)\zeta^{\alpha\sp_3/2}_-\left(\bm I+\msf a_-(\zeta)\bm E_{12}\right),\quad \gt\to \infty,
\end{multline}
valid uniformly for $-\varepsilon\leq \zeta<0$ and $\sad\geq \sad_0$, and for any $\kappa\in (0,2)$. For the record, we recall that $\bm E_\gt$ is as in \eqref{eq:Egterror}, $\lambda_\gt$ is as in \eqref{eq:defflambdatau}, and $\msf a$ is as in \eqref{eq:hatUpsilonasyfactor}.

Using the explicit expression for $\bm E_\gt$ in \eqref{eq:Egterror} and Proposition~\ref{Prop:behmodelproblemorigin} for $\bm\Psi=\bm\Psi_\infty$, we obtain the representation
\begin{equation*}
\bm E_{\gt}(\zeta)\left(\bm I+\lambda_{\gt,-}(\zeta)\bm E_{12}\right)\zeta^{-\alpha\sp_3/2}\left(\bm I+\msf a_-(\zeta)\bm E_{12}\right)=  
\bm \Psi_{\infty,-}(\zeta)\left(\bm I-\ee^{-\pi\ii\alpha}(1+\ee^{-\msf h_\infty(\zeta\mid \sad)})\bm E_{21}\right)
\left(\bm I+(\ast)\bm E_{12}\right),
\end{equation*}
which is valid algebraically (not asymptotically!) for any $\zeta<0$, and where $(\ast)$ represents a scalar that does not affect what comes next. 

Returning this identity in \eqref{eq:Psigtasymptlocal}, we obtain
\begin{multline}\label{eq:Psigtasymptlocal02}
\bm\Psi_{\gt,-}(\zeta)\left(\bm I-\ee^{-\pi\ii\alpha}(1+\ee^{-\msf h_\gt(\zeta\mid \sad)})\bm E_{21}\right)= 
\left(\bm I+\Boh\left(\frac{\ee^{-\sad}}{\gt^\kappa}\right)\right) \\ \times 
\bm \Psi_{\infty,-}(\zeta)\left(\bm I-\ee^{-\pi\ii\alpha}(1+\ee^{-\msf h_\infty(\zeta\mid \sad)})\bm E_{21}\right)
\left(\bm I+(\ast)\bm E_{12}\right)
,\quad \gt\to \infty,
\end{multline}
valid uniformly for $-\varepsilon\leq \zeta<0$ and $\sad\geq \sad_0$ as before.

The matrix term of the form $\bm I+(\ast)\bm E_{21}$ on the right-most side does not affect the final result of applying $\bm\Delta_\zeta$ and taking the $(2,1)$-entry in this identity. Thus, from this identity and again Theorem~\ref{thm:kernelconvPsigt} we obtain
\begin{multline*}
\left[\bm\Delta_\zeta\left[\bm\Psi_{\gt}(\zeta\mid \sad)\left(\bm I-\ee^{-\pi\ii\alpha}(1+\ee^{-\msf h_\gt(\zeta\mid \sad)})\bm E_{21}\right)\right] 
\right]_{21,-}
=\frac{2\pi\ii \ee^{-\pi\ii\alpha}}{\sigma_\infty(\zeta)}\msf K_\alpha(-\zeta,-\zeta\mid \sad)
\\
+\left[
\left(\bm I+\frac{\ee^{-\pi\ii\alpha}}{\sigma_\infty(\zeta)}\bm E_{21}\right)\bm \Psi_{\infty}(\zeta)^{-1}
\Boh\left(\frac{\ee^{-\sad}}{\gt^\kappa}\right)
\bm \Psi_{\infty}(\zeta)\left(\bm I-\frac{\ee^{-\pi\ii\alpha}}{\sigma_\infty(\zeta)} \bm E_{21}\right)
\right]_{21,-},\quad \gt\to \infty.
\end{multline*}

Again thanks to Proposition~\ref{Prop:behmodelproblemorigin}, we see that
\begin{align*}
\bm \Psi_{\infty}(\zeta)\left(\bm I-\frac{\ee^{-\pi\ii\alpha}}{\sigma_\infty(\zeta)} \bm E_{21}\right) & =
\bm\Psi_{\infty,0}(\zeta)\sigma_\infty(\zeta)^{-\sp_3/2}\zeta^{\alpha\sp_3/2}\left(\bm I+\msf a(\zeta)\sigma_\infty(\zeta)\bm E_{12}\right) \\
& =
\Boh\left(1\right)\zeta^{\alpha\sp_3/2}\left(\bm I+\msf a(\zeta)\sigma_\infty(\zeta)\bm E_{12}\right).
\end{align*}

Using this estimate, the previous estimate simplifies to
\begin{equation}\label{eq:Psigtasymplocalin}
\left[\bm\Delta_\zeta\left[\bm\Psi_{\gt}(\zeta\mid \sad)\left(\bm I-\ee^{-\pi\ii\alpha}(1+\ee^{-\msf h_\gt(\zeta\mid \sad)})\bm E_{21}\right)\right] 
\right]_{21,-}
=\frac{2\pi\ii \ee^{-\pi\ii\alpha}}{\sigma_\infty(\zeta)}\msf K_\alpha(-\zeta,-\zeta\mid \sad)+\Boh\left(\frac{\ee^{-\sad} \zeta^\alpha}{\gt^{\kappa}}\right),
\end{equation}
valid as $\gt\to\infty$, uniformly for $-\varepsilon\leq \zeta<0$, and uniformly for $\sad\geq \sad_0$ with any $\sad_0\in \R$, and where $\kappa\in (0,2)$ is arbitrary. 

Finally, we now insert \eqref{eq:Psigtasymplocalout} and \eqref{eq:Psigtasymplocalin} in \eqref{deff:Itauint} and use also the uniform decay of $\msf h_\gt$ as $\zeta\to -\infty$ which is ensured by Definition~\ref{def:admissibledata}--(iii), the convergence from Remark~\ref{rmk:expansioninftymodeladm}, 
and arrive at the following result.

\begin{theorem}\label{thm:Itayasympt}
    For any $\kappa\in (0,2)$ and any $\sad_0\in \R$, the estimate
    $$
    \msf I_\gt(s)=2\pi\ii \ee^{-\pi\ii\alpha}\int_\sad^\infty \int_{-\infty}^0 \msf K_\alpha(-\zeta,-\zeta\mid u) \frac{\ee^{-\msf h_\infty(\zeta\mid u)}}{1+\ee^{-\msf h_\infty(\zeta\mid u)}}\dd \zeta\dd u+\Boh\left(\frac{\ee^{-\sad}}{\gt^{\kappa}}\right),\quad \gt\to\infty,
    $$
    holds uniformly for $\sad\geq \sad_0$.
\end{theorem}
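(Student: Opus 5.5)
The plan is to substitute the two pointwise estimates \eqref{eq:Psigtasymplocalout} (valid for $\zeta\leq-\varepsilon$) and \eqref{eq:Psigtasymplocalin} (valid for $-\varepsilon\leq\zeta<0$) into the definition \eqref{deff:Itauint} of $\msf I_\gt$, and then to control three error sources: the error terms themselves, the replacement of the weight $\ee^{-\msf h_\gt}/(1+\ee^{-\msf h_\gt})^2$ by its limit, and the truncation at $-\gt^2\zeta_0$.

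First, the weight. By Remark~\ref{rm:estexphtauhinfty}, together with \eqref{eq:convsigmatausigmainfty} extended to the real axis (where $|1+\ee^{-\msf h}|\geq 1$), we have
$$
\frac{\ee^{-\msf h_\gt}}{(1+\ee^{-\msf h_\gt})^2}=\frac{\ee^{-\msf h_\infty}}{(1+\ee^{-\msf h_\infty})^2}+\Boh\bigl(\ee^{-u-\eta|\zeta|^{\alpha'}}\gt^{-\kappa}\bigr),
$$
uniformly for $\zeta<0$, where $\alpha'>0$ denotes the exponent from that remark. The limiting weight equals $\sigma_\infty\,\ee^{-\msf h_\infty}/(1+\ee^{-\msf h_\infty})$. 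Its factor $\sigma_\infty$ cancels the $1/\sigma_\infty$ in front of $\msf K_\alpha$, and this produces exactly the integrand in the statement. Both weights are bounded by $\ee^{-u}\ee^{-\eta|\zeta|^{\alpha'}}$: for $\msf h_\gt$ this uses Definition~\ref{def:admissibledata}--(iii), and for $\msf h_\infty$ it is explicit.

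Second, the products. The bounds on the kernel factor are as follows. For $\zeta<-\varepsilon$ it is $\Boh(|\zeta|^{1/2})$, from $\bm\Psi_\infty=\Boh(\zeta^{1/4})$, or from Proposition~\ref{prop:estimatesBessellimitmodelpphi}--(ii) via \eqref{deff:Kalpha}, uniformly in $u\geq\sad_0$. For $-\varepsilon<\zeta<0$ it is $\Boh(|\zeta|^{-|\alpha|}+\log^2|\zeta|)$, again by Proposition~\ref{prop:estimatesBessellimitmodelpphi}--(ii), which is integrable since $\alpha>-1$. Multiplying the main term by the weight difference, and the error terms $\Boh(\ee^{-u}|\zeta|^{1/4}\gt^{-\kappa})$ and $\Boh(\ee^{-u}|\zeta|^{\alpha}\gt^{-\kappa})$ by the bounded weight, each product is dominated by
$$
\ee^{-2u}\gt^{-\kappa}\,g(\zeta),\qquad g\in L^1(-\infty,0)\ \text{independent of } u,\gt .
$$
Integrating over $\zeta\in(-\gt^2\zeta_0,0)$ and $u\in(\sad,\infty)$ then gives $\Boh(\ee^{-\sad}\gt^{-\kappa})$, uniformly for $\sad\geq\sad_0$.

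Third, the truncation. The tail $\int_{-\infty}^{-\gt^2\zeta_0}$ of the limiting integrand is $\Boh(\ee^{-u}\ee^{-\eta\gt^{2\alpha'}})$. This is negligible against $\gt^{-\kappa}$, so we may extend the $\zeta$-integral to $(-\infty,0)$ at no cost.

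The main obstacle is uniformity. The estimates \eqref{eq:Psigtasymplocalout}--\eqref{eq:Psigtasymplocalin} and the bounds on $\msf K_\alpha$ must hold uniformly in $u\in[\sad,\infty)$ and over the growing range $|\zeta|\leq\gt^2\zeta_0$. This is where Remark~\ref{rmk:expansioninftymodeladm} is needed: it supplies a $\gt$-uniform neighborhood of infinity, so the $\zeta_-^{1/4}$ error bound persists for large $|\zeta|$. Near $\zeta=0$ the delicate case is $-1<\alpha<0$, where we must check that the $|\zeta|^{\alpha}$ singularity really stays integrable. Once a uniform $L^1$ majorant $g$ is secured, dominated convergence completes the argument.
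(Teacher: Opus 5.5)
Your route is the paper's. Its proof consists of inserting \eqref{eq:Psigtasymplocalout} and \eqref{eq:Psigtasymplocalin} into \eqref{deff:Itauint}, with the decay from Definition~\ref{def:admissibledata}--(iii) and the $\gt$-uniform expansion of Remark~\ref{rmk:expansioninftymodeladm}. You are more careful than the paper on three points it leaves implicit: replacing the weight $\ee^{-\msf h_\gt}/(1+\ee^{-\msf h_\gt})^2$ by its limit via Remark~\ref{rm:estexphtauhinfty}, the cancellation of $\sigma_\infty$, and the truncation at $-\gt^2\zeta_0$.

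One step is wrong as written: the bound on the cross term $(w_\gt-w_\infty)\,\msf K_\alpha$ on $(-\varepsilon,0)$.
\begin{itemize}
\item The majorant $\Boh(|\zeta|^{-|\alpha|}+\log^2|\zeta|)$ is not integrable at $0$ when $\alpha\geq 1$, so ``integrable since $\alpha>-1$'' fails.
\item Proposition~\ref{prop:estimatesBessellimitmodelpphi}--(ii) bounds only $\Phi$ and $\partial_\xad\Phi$. The diagonal of \eqref{deff:Kalpha} is a confluent limit, so it also involves their $\zeta$-derivatives.
\end{itemize}

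The correct bound comes from the factorization at the origin already used to derive \eqref{eq:Psigtasymplocalin}. Write
$$
\bm\Psi_{\infty,-}(\zeta)\Bigl(\bm I-\tfrac{\ee^{-\pi\ii\alpha}}{\sigma_\infty(\zeta)}\bm E_{21}\Bigr)=\bm N(\zeta)\,\zeta_-^{\alpha\sp_3/2}\bigl(\bm I+\msf a_-(\zeta)\sigma_\infty(\zeta)\bm E_{12}\bigr),\qquad \bm N\deff\bm\Psi_{\infty,0}\sigma_\infty^{-\sp_3/2},
$$
where $\bm N$ is analytic near $0$. The $(2,1)$-entry of $\bm\Delta_\zeta$ of this product is exactly $\zeta_-^{\alpha}[\bm N^{-1}\bm N']_{21}$. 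The diagonal factor contributes $\zeta^\alpha$ to the $(2,1)$-entry, and the triangular factor adds nothing to it. By \eqref{eq:KalphaRHPreprdiag} this gives
$$
\msf K_\alpha(-\zeta,-\zeta\mid u)=\Boh(|\zeta|^\alpha).
$$
The bound is uniform in $u\geq\sad_0$, using the uniform bound on $\bm\Psi_{\infty,0}$, the identity $\det\bm\Psi_{\infty,0}=1$, and Cauchy estimates for $\bm N'$. It is integrable for every $\alpha>-1$, and with it your estimate closes.

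A smaller slip: the cross term carries only one factor $\ee^{-u}$, since $\msf K_\alpha$ does not decay in $u$ (it tends to $\msf J_\alpha$). Its majorant is therefore $\ee^{-u}\gt^{-\kappa}g(\zeta)$ rather than $\ee^{-2u}\gt^{-\kappa}g(\zeta)$. The final $\Boh(\ee^{-\sad}\gt^{-\kappa})$ is unaffected.
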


This result finishes the asymptotic analysis necessary on the model problem.

\subsection{Proof of main theorems on integrable equations}\hfill

In this section we complete the proof of our main results that connect the limiting probabilistic quantities to integrable equations. They are essentially a recollection of several calculations that we already performed.

\begin{proof}[Proof of Theorem~\ref{thm:intsysPhi}]

Define $\Phi$ as in \eqref{eq:PhiPhikUpsilonk}. The nonlocal equation \eqref{eq:nonlocalPDEintro} for it was already obtained in Theorem~\ref{thm:Phinonlocalcore}. The asymptotic behavior of $\Phi$ as $\zeta\to \pm\infty$ are given in \eqref{eq:asymptPhizetainftyplus} and \eqref{eq:asymptPhizetainftyminus}. The asymptotic behavior as $\xad\to 0^+$, in turn, was obtained in Proposition~\ref{prop:pPhibehsmallx}.
\end{proof}

\begin{proof}[Proof of Theorem~\ref{thm:TWtyperepres}]

Define $\msf p$ as in \eqref{deff:pqr}. Then, identity \eqref{deff:pcoreint} shows that this definition coincides with the one given in \eqref{eq:deffpintro}, and Proposition~\ref{prop:pPhibehsmallx} yields its asymptotic behavior as $\xad\to 0^+$ as claimed by Theorem~\ref{thm:TWtyperepres}.

It remains to show the connection with $\msf L_\alpha^\bes$. For that, we start from \eqref{eq:KalphaRHPreprdiag} and use \eqref{eq:PsiinftytoPhiinfty} and \eqref{eq:UpsilonPhi} to express
$$
\msf K_\alpha(-\zeta,-\zeta\mid \sad,\xad)=\frac{\ee^{\pi\ii\alpha} \uad^{-1/m}\sigma_\infty(\zeta)}{2\pi\ii}\left[\bm\Upsilon(\uad^{1/m}\zeta)^{-1}\bm\Upsilon'(\uad^{1/m}\zeta) \right]_{21,-}.
$$
Performing the change of variables $\xi=-\uad^{1/m}\zeta=-\frac{4}{\xad^2}\zeta$ in \eqref{deff:Halpha}, we obtain from \eqref{eq:sigmaphiinfity} that
\begin{equation}\label{eq:LBessdoubleint}
-\log\msf L_\alpha^\bes(\sad,\xad)=
\frac{\ee^{\pi\ii\alpha}}{2\pi\ii}\int_\sad^\infty \int_{-\infty}^0\left[\bm\Upsilon(\xi\mid \sad=u,\xad)^{-1}\bm\Upsilon'(\xi\mid \sad=u,\xad)\right]_{21,-} \partial_\sad \sigma_{\bm\Phi}(\xi\mid \sad=u) \dd\xi\dd u.
\end{equation}
Observe that $\sigma_{\bm\Phi}$ is independent of $\xad$, recall \eqref{deff:sigmaPhi}. A direct calculation from the jump of $\bm\Upsilon$ (which coincides with the jump of $\bm\Phi_\infty$ in \eqref{eq:jumpPhiInfty}) shows that $[\bm\Upsilon^{-1}\bm\Upsilon']_{21,-}=\ee^{-2\pi\ii\alpha}[\bm\Upsilon^{-1}\bm\Upsilon']_{21,+}$. Using then \eqref{eq:Upsilonrowrelation}, \eqref{eq:PhiPhikUpsilonk} and \eqref{eq:jumpphiffction}, we get 
$$
\left[\bm\Upsilon^{-1}\bm\Upsilon'\right]_{21,-}=\ii \ee^{-2\pi\ii\alpha}\left[\Phi\partial_\xad\Phi'-\Phi'\partial_\xad\Phi\right].
$$
Taking one $\xad$-derivative and using \eqref{eq:PhikpartialPhik}, we get that the $\xad$-derivative of the right-hand side of the above formula is given by $\ii\ee^{-2\pi\ii\alpha}\Phi^2$.
Using this relation in \eqref{eq:LBessdoubleint}, we see that
$$
-\partial_\xad\left(\log\msf L_\alpha^\bes(\sad,\xad)\right)=\frac{\ee^{-\pi\ii\alpha}}{2\pi}\int_\sad^\infty \int_{-\infty}^0\Phi(\xi\mid \sad=u,\xad)^2 \partial_\sad \sigma_{\bm\Phi}(\xi\mid \sad=u) \dd\xi\dd u.
$$
When obtaining this identity, we exchanged the derivatives with the integrals, a calculation which is justified by the bounds obtained in \eqref{prop:estimatesBessellimitmodelpphi}--(ii) and the exponential decay of $\sigma_{\bm\Phi}$. From the definition of $\msf p$ in \eqref{eq:deffpintro} it follows that this identity is equivalent to \eqref{eq:TWrepr}, concluding the proof of Theorem~\ref{thm:TWtyperepres}.
\end{proof}

\begin{proof}[Proof of Theorem~\ref{thm:m1PDE}]

Suppose now that $m=1$. In particular, from \eqref{deff:sigmaPhi} we see
that 
$
\sigma_{\bm \Phi}(\zeta)=1/(1+\ee^{-\sad + \zeta})
$.
By setting 
\begin{equation}
    \widehat{\bm \Phi}_\infty(\zeta)\deff \begin{pmatrix}
        1 & 0
        \\
        -\ii \frac{\sad \xad}{2} & 1 
    \end{pmatrix}{\bm \Phi}_\infty(\zeta+\sad),
\end{equation}
it then follows from RHP \ref{rhp:Phiinfty}, \eqref{deff:pqr} and direct calculations that $\widehat{\bm \Phi}_\infty$ satisfies the following RHP.
\begin{rhp}\label{rhp:hatPhiinfty}
Find a $2\times 2$ matrix-valued function $ \widehat{\bm \Phi}_\infty:\C\setminus (-\infty,-\sad]\to \C^{2\times 2}$ with the following properties.
\begin{enumerate}[(i)]
\item $\widehat{\bm \Phi}_\infty$ is analytic on $\C\setminus (-\infty,-\sad]$.
\item The entries of $\widehat{\bm \Phi}_{\infty,\pm}$ are continuous along $(-\infty,-\sad)$ and satisfy
\begin{equation}\label{eq:jumphatPhiInfty}
\widehat{\bm \Phi}_{\infty,+}(\zeta)=\widehat{\bm \Phi}_{\infty,-}(\zeta)
\begin{pmatrix}
    \ee^{\pi\ii\alpha} &  \frac{1}{1+\ee^{\zeta}}
    \\
   0 &\ee^{-\pi\ii\alpha}
\end{pmatrix},
\qquad \zeta<0.
\end{equation}
\item As $\zeta\to \infty$, we have 
\begin{equation}\label{eq:hatPhiinftyinfty}
    \widehat{\bm \Phi}_\infty(\zeta)=\left( \bm I+\frac{\widehat{\bm \Phi}_{\infty,1}}{\zeta}+\Boh(\zeta^{-2}) \right)\zeta^{-\sp_3/4}\bm U_0\ee^{\xad\zeta^{1/2}\sp_3}\left( \bm I+(\ee^{\pi \ii \alpha}\chi_{\mcal S_+}(\zeta)-\ee^{-\pi \ii \alpha}\chi_{\mcal S_-}(\zeta))\bm E_{21} \right),
\end{equation}
where
\begin{equation}\label{eq:hatphiinfty1}
\widehat{\bm \Phi}_{\infty,1}=
\begin{pmatrix}
        1 & 0
        \\
        -\ii \frac{\sad \xad}{2} & 1 
    \end{pmatrix}
    \begin{pmatrix}
    \msf q-\frac{\sad}{4} & -\ii \msf p \\ \ii \msf r & -\msf q+\frac{\sad}{4}
\end{pmatrix}
    \begin{pmatrix}
        1 & 0
        \\
        \ii \frac{\sad \xad}{2} & 1 
    \end{pmatrix}+
    \begin{pmatrix}
        \frac{\sad^2\xad^2}{8} & -\ii \frac{\sad \xad}{2}
        \\
        -\frac{\ii}{24}\sad^2\xad(\sad\xad^2+3) & -\frac{\sad^2\xad^2}{8}
    \end{pmatrix}.
\end{equation}

\item As $\zeta \to -\sad$, we have 
\begin{equation}\label{eq:hatPhiinfty0}
    \widehat{\bm \Phi}_{\infty}(\zeta)=\widehat{\bm \Phi}_{\infty,0}(\zeta)(\zeta+\sad)^{\alpha\sp_3/2}\left(\bm I+\msf a(\zeta+\sad)\bm E_{12}\right)(1+\ee^{\zeta})^{\sp_3/2},
\end{equation}
where
\begin{equation}\label{eq:hatphiinfty0}
    \widehat{\bm \Phi}_{\infty,0}(\zeta)=\begin{pmatrix}
        1 & 0
        \\
        -\ii \frac{\sad \xad}{2} & 1 
    \end{pmatrix}\bm \Phi_{\infty,0}(\zeta+\sad)
\end{equation}
is analytic near $\zeta=-\sad$ and 
%
$\msf a$ is as in \eqref{eq:hatUpsilonasyfactor}.
\end{enumerate}
\end{rhp}

Note that $(\widehat{\bm \Phi}_{\infty,1})_{12}=-\ii(\msf p+\frac{\sad\xad}{2})$, we define, similar to \eqref{eq:UpsilonPhi},
\begin{equation}\label{eq:hatUpsilonPhi}
\widehat{\bm \Upsilon}(\zeta) \deff \left(\bm I+\ii \left(\msf p +\frac{\sad\xad}{2}\right)\bm E_{21}\right)\widehat{\bm \Phi}_\infty(\zeta).
\end{equation}
Thus,  $\widehat{\bm \Upsilon}$ satisfies the ODE
\begin{equation}\label{eq:LaxeqtionhatUpsilon}
\partial_\xad \widehat{\bm \Upsilon}(\zeta)=
\widehat{\bm B}(\zeta)
\widehat{\bm\Upsilon}(\zeta),
\end{equation}
where
\begin{equation}\label{def:hatB}
    \widehat{\bm B}(\zeta)=
\begin{pmatrix}
0 & -\ii
\\
\ii(\zeta+2\partial_\xad\msf p+\sad) & 0
\end{pmatrix}.
\end{equation}

Since the jump matrix of $\widehat{\bm \Phi}_{\infty}$ in \eqref{eq:jumphatPhiInfty} is independent of $\sad$ as well, we obtain from \eqref{eq:hatphiinfty0} that 
\begin{equation}\label{eq:LaxeqtionhatUpsilons}
\partial_\sad \widehat{\bm \Upsilon}(\zeta)=
\widehat{\bm A}(\zeta)
\widehat{\bm\Upsilon}(\zeta),
\end{equation}
where
\begin{equation}\label{def:hatA}
\widehat{\bm A}(\zeta)=\begin{pmatrix}
    0 & 0
    \\
    \ii(\partial_\sad\msf p+\frac{\xad}{2}) & 0
\end{pmatrix}
+
\frac{1}{\zeta+\sad}\begin{pmatrix}
    \aad & \bad
    \\
    \cad & -\aad
\end{pmatrix}
\end{equation}
for some functions $\aad$, $\bad$ and $\cad$ depending on $\xad$ and $\sad$.  The compatibility condition $\partial^2 _{\sad  \xad} \widehat{\bm \Upsilon}=\partial ^2_{\xad  \sad} \widehat{\bm \Upsilon} $ yields the zero curvature equation
$$
\partial_\sad \widehat{\bm B}-\partial_\xad \widehat{\bm A}=\widehat{\bm A}\widehat{\bm B}-\widehat{\bm B}\widehat{\bm A}.
$$
Inserting \eqref{def:hatB} and \eqref{def:hatA} into the above equation and comparing the coefficients of $\Boh(1)$ and $\Boh(1/(\zeta+s))$ terms, we arrive at
\begin{equation}\label{eq:abc}
    \bad=-\ii \left(\partial_{\sad}\msf p+\frac{\xad}{2}\right), \quad \aad=-\frac12\left(\partial^2_{\sad\xad}\msf p+\frac 12\right), \quad
    \cad=\ii \partial_\xad\aad-2\bad\partial_{\xad}\msf p,
\end{equation}
and 
\begin{equation}\label{eq:cderivative}
\partial_{\xad}\cad=4\ii\aad\partial_{\xad}\msf p. 
\end{equation}
From \eqref{eq:abc}, it is readily seen that
$$
\cad=\ii\left(
-\frac12\partial^3_{\sad\xad\xad}\msf p+2\partial_{\xad}\msf p\partial_{\sad}\msf p+\xad \partial_{\xad}\msf p
\right).
$$
This, together with \eqref{eq:cderivative}, implies the PDE for $\msf p$ claimed in Theorem~\ref{thm:m1PDE}.
\end{proof}

\section{Asymptotic analysis of the RHP for OPs}\label{sec:RHPOPsanalysis}

Our effort so far was the in-depth study of the model problem introduced in Section~\ref{sec:modelproblem}. Now we come back to our original model, that is, extracting large $n$ information on the point process with joint distribution \eqref{eq:Lagdef}. In this section, we carry out the asymptotic analysis of the associated polynomials, and in this analysis the solution $\bm \Psi_\gt$ of the model problem with admissible data, as well as its limit $\bm\Psi_\infty$, will play a central role.

\subsection{Equilibrium measures and related quantities}\label{sec:eqmeasure}\hfill

In the course of the asymptotic analysis, we will need certain properties of equilibrium measures that we now briefly review. The properties we now mention can be found, e.g., in \cite{Saff_book, deift_kriecherbauer_mclaughlin}, see also \cite{AlvesSilva2024} for the specific structure of Cauchy transforms of log equilibrium measures with hard edge.

The {\it equilibrium measure} of the interval $[0,+\infty)$ in the external field $V$ is the unique probability measure supported on $[0,+\infty)$ that minimizes the functional
$$
\mu\mapsto \iint \log\frac{1}{|x-y|}\dd\mu(x)\dd\mu(y)+\int V(x)\dd\mu(x)
$$
over all Borel probability measures $\mu$ supported on $[0,+\infty)$. This measure is uniquely characterized by the {\it Euler-Lagrange variational conditions:} there exists a constant $\ell_V\in \R$ for which
\begin{equation}\label{eq:EulerLagrange}
\int \log\frac{1}{|x-y|}\dd\mu_V(y)+\frac{1}{2}V(x)+\ell_V
\begin{cases}
    =0, & x\in \supp\mu_V, \\ 
    \geq 0, & x\in [0,+\infty).
\end{cases}
\end{equation}

Under our assumptions, the equilibrium measure $\mu_V$ is absolutely continuous with respect to the Lebesgue measure, say $\dd\mu_V(x)=\mu_V'(x)\dd x$, for some function $\mu'_V$. The regularity conditions on $\mu_V$ that were mentioned in Assumption~\ref{deff:condQ} translate into the following precise conditions.

\begin{assumption}[One-cut regularity condition]\label{assump:onecutregular} We assume that the potential $V$ is such that the following holds.
\begin{enumerate}[(i)]
    \item For some $a>0$, $\supp\mu_V=[0,a]$.
    \item The inequality in \eqref{eq:EulerLagrange} is strict, that is, 
    \begin{equation}\label{eq:EulerLagrangestrict}
    \int \log\frac{1}{|x-y|}\dd\mu_V(y)+\frac{1}{2}V(x)+\ell_V>0,\quad x\in (a,\infty).
    \end{equation}
    \item $\mu'_V(x)>0$ for $x\in (0,a)$.
    \item For some constants $\kappa_0,\kappa_a>0$, we have
$$
\mu'_V(x)=\frac{\kappa_0}{\sqrt{x}}(1+\Boh(x)),\; x\searrow 0,\qquad \text{and}\qquad \mu'_V(x)=\kappa_a\sqrt{a-x}(1+\Boh(x-a)),\; x \nearrow a.
$$
\end{enumerate}    
\end{assumption}

For the next transformation, we need to consider quantities related to the equilibrium measure. For
$$
C^{\mu_V}(z)\deff \int \frac{\dd \mu_V(s)}{s-z},\quad z\in \C\setminus [0,a],
$$
being the Cauchy transform of the equilibrium measure, we associate to it its $\phi$ function,
\begin{equation}\label{deff:phifction}
\phi(z)\deff \int_0^z\left(C^{\mu_V}(s)+\frac{1}{2}V'(s)\right)\dd s,\quad z\in \C\setminus (-\infty,a],
\end{equation}
with the path of integration starting on the upper half-plane. Under Assumption~\ref{assump:onecutregular}, the Cauchy transform satisfies an algebraic equation of the form
$$
\left(C^{\mu_V}(z)+\frac{V'(z)}{2}\right)^2=\frac{z-a}{z}h_V(z)^2,
$$
for some polynomial $h_V$ with $h_V(0),h_V(a)\neq 0$. Standard arguments show that
\begin{equation}\label{eq:jumpsphi}
\phi_+(z)+\phi_-(z)=0, \; 0<z<a, \quad \phi_+(z)-\phi_-(z)=2\pi \ii \mu_V([0,z])\in \ii \R_+, \quad 0<z<a.
\end{equation}
In particular, $h_V(x)<0$ on $(0,a)$, and 
$$
\mu'_V(x)=\frac{1}{\pi}\sqrt{\frac{a-x}{x}}|h_V(x)|, \quad 0<x<a.
$$
Furthermore, a combination of \eqref{eq:jumpsphi}, Cauchy-Riemann equations and the inequality \eqref{eq:EulerLagrangestrict} yields that
\begin{equation}\label{eq:ineqphi}
\phi(x)>0 \; \text{for }x>a,\qquad \text{and}\qquad \re \phi(z)<0 \text{ for }z \text{ slightly above or below }(0,a).
\end{equation}

A local analysis based on conformal properties of $\phi$ shows that for some $\delta>0$,
$$
\re \phi(z)<0 \text{ for } z\in D_\delta\setminus [0,\delta)\qquad \text{and}\qquad \phi(z)<0 \text{ for } -\delta<z<0.
$$
Consequently, the function
$$
\psi(z)\deff \frac{1}{4}\phi(z)^2
$$
is a conformal map in a neighborhood of the origin, and the behavior near the origin in Assumption~\ref{assump:onecutregular}--(iv) ensures that
\begin{equation}\label{eq:conformalmap}
\psi(z)=-\msf c_V z(1+\Boh(z)),\quad z\to 0,\quad \text{for }\msf c_V\deff \pi^2\kappa_0^2,
\end{equation}
and where we recall that $\kappa_0>0$ is as in Assumption~\ref{assump:onecutregular}--(iv). This definition of $\msf c_V$ is consistent with \eqref{eq:msfcVintro}.

Finally, a direct calculation shows that, as $z\to \infty$, 
\begin{equation}\label{eq:asymptphi}
\phi(z)=\frac{V(z)}{2}+\ell_V-\log z+\frac{\phi_\infty}{z}+\Boh(z^{-2}),
\end{equation}
for some $\phi_\infty\in \R$, where $\ell_V$ is as in Assumption~\ref{assump:onecutregular}--(ii).

\subsection{The RHP for OPs}\hfill

We start with the RHP for the corresponding OPs.
\begin{rhp}\label{rhp:YOPS}
Find a $2\times 2$ matrix-valued function $\bm Y:\C\setminus [0,+\infty)\to \C^{2\times 2}$ with the following properties.
\begin{enumerate}[\rm (i)]
\item $\bm Y:\C\setminus [0,\infty)\to \C^{2\times 2}$ is analytic.
\item The matrix $\bm Y$ has continuous boundary values $\bm Y_\pm$ along $(0,+\infty)$, and they are related by $\bm Y_+(x)=\bm Y_-(x)\bm J_{\bm Y}(x)$, $x>0$, with
$$
\bm J_{\bm Y}(x)\deff \bm I+\omega_n(x)\bm E_{12},\quad x>0.
$$
\item As $z\to \infty$,
$$
\bm Y(z)=\left(\bm I+\frac{1}{z}\bm Y_1+\Boh(z^{-2})\right)z^{n\sp_3}.
$$
\item As $z\to 0$,
$$
\bm Y(z)=
\Boh\begin{pmatrix}
1 & h_\alpha(z) \\ 1 & h_\alpha(z)
\end{pmatrix}, 
$$
where the $\mathcal O$-term is to be read entry-wise and here and from now on we set
\begin{equation}\label{deff:hfctionlocalbeh}
h_\alpha(z) \deff
\begin{cases}
z^\alpha, & \alpha<0,\\
\log z, & \alpha=0,\\
1, & \alpha>0.
\end{cases}
\end{equation}
\end{enumerate}
\end{rhp}

The weight $\omega_n(x)=\omega_n(x\mid \sad)$ was introduced explicitly in \eqref{eq:deffweight} and depends on $\sad$, and consequently so does $\bm Y(z)=\bm Y(z\mid \sad)$ and any other quantity associated to the orthogonal polynomials, but sometimes we omit this dependence from our notation.

As it is now classical (see \cite{Fokas1992}), the solution to this RHP is related to the monic OP $\msf P_j$ of degree $j$ for the weight $\omega_n$ via
$$
\bm Y_{11}(z)=\msf P_n(z)\quad \text{and}\quad \bm Y_{21}(z)=-2\pi\ii (\gamma_{n-1}^{(n)})^2\msf P_{n-1}(z).
$$
As a consequence of the Christoffel-Darboux formula,
\begin{equation}\label{eq:relKnRHPY}
\msf K_n(x,y)=\frac{1}{2\pi \ii(x-y)} \bm e_2^T\bm Y_+(y)^{-1}\bm Y_+(x)\bm e_1, \; x\neq y,\quad \text{and}\quad \msf K_n(x,x)=\frac{1}{2\pi \ii}\bm e_2^T \bm Y_+(x)^{-1}\bm Y_+'(x)\bm e_1, 
\end{equation}
where $\bm e_1\deff (1,0)^T$ and $\bm e_2\deff (0,1)^T$ are the canonical base vectors for $\R^2$. Using the explicit form
$$
\partial_u \omega_n(x\mid u)=
\ee^{-u-n^{2m}Q(x)}\sigma_n(x\mid u)\omega_n(x\mid u)=\omega_n(x\mid u)\partial_u \log\sigma_n(x\mid u),
$$
the identity given by \eqref{eq:deffformula} updates to
\begin{align}
\log \msf L_n^Q(\sad) 
 & =-\frac{1}{2\pi\ii}
\int_{\sad}^{\infty} \int_0^\infty \left[ \bm Y(x\mid u)^{-1}\bm Y'(x\mid u)\right]_{21,+}x^\alpha \ee^{-nV(x)}\partial_u \sigma_n(x\mid u) \dd x \dd u \nonumber \\
& =-\frac{1}{2\pi\ii}
\int_{\sad}^{\infty} \int_0^\infty \left[ \bm Y(x\mid u)^{-1}\bm Y'(x\mid u)\right]_{21,+}\omega_n(x\mid u)\partial_u \log \sigma_n(x\mid u) \dd x \dd u.
\label{eq:defformulamain}
\end{align}
This deformation formula will be the starting point for our asymptotic analysis of $\msf L_n^Q(\sad)$, but first we need to obtain asymptotics for $\bm Y$.

To obtain asymptotics for $\bm Y$ we once again employ the Deift-Zhou nonlinear steepest descent method. The transformations needed in this case are now canonical, so we go over them briefly. When compared with more standard situations the only major obstacle is the construction of a local parametrix near the origin, which appears in virtue of the factor $\sigma_n$ which scales singularly near the origin. In this construction, we will use the model problem that we previously studied in Sections~\ref{sec:modelproblemfull}--\ref{sec:conseqmodelproblem}.

\subsection{First transformation: introduction of the $g$-function}\hfill 

The first transformation consists in introducing the $g$-function, which here is replaced by the $\phi$-function from \eqref{deff:phifction}.

Transform
$$
\bm T(z)\deff \ee^{-n\ell_V\sp_3}\bm Y(z)\ee^{n(\phi(z)-V(z)/2)\sp_3},\quad z\in \C\setminus [0,\infty).
$$
Then $\bm T$ satisfies the following RHP.
\begin{rhp} Find a $2\times 2$ matrix-valued function $\bm Y:\C\setminus [0,+\infty)\to \C^{2\times 2}$ with the following properties.
\begin{enumerate}[(i)]
\item $\bm T:\C\setminus [0,\infty)\to \C^{2\times 2}$ is analytic.
\item The matrix $\bm T$ has continuous boundary values $\bm T_\pm$ along $(0,+\infty)$, and they are related by $\bm T_+(x)=\bm T_-(x)\bm J_{\bm T}(x)$, $x>0$, with
$$
\bm J_{\bm T}(x)\deff
\begin{pmatrix}
\ee^{n(\phi_+(x)-\phi_-(x))} & \sigma_n(x)x^\alpha\ee^{-n(\phi_+(x)+\phi_-(x))} \\
0 & \ee^{-n(\phi_+(x)-\phi_-(x))}
\end{pmatrix},
\quad x>0.
$$
\item As $z\to \infty$,
$$
\bm T(z)=\bm I+\frac{1}{z}\bm T_1+\Boh(z^{-2}).
$$
with
$$
\bm T_1\deff \ee^{-n\ell_V\sp_3}\bm Y_1\ee^{n\ell_V\sp_3}+n\phi_\infty\sp_3.
$$
\item As $z\to 0$,
$$
\bm T(z)=
\Boh\begin{pmatrix}
1 & h_\alpha(z) \\ 1 & h_\alpha(z)
\end{pmatrix}, 
$$
where we recall that $h_\alpha$ is given in \eqref{deff:hfctionlocalbeh}.
\end{enumerate}
\end{rhp}

\subsection{Second transformation: opening of lenses}\hfill 

\begin{figure}
\begin{tikzpicture}[>=latex]
	\draw[-] (-7,0)--(7,0);
	\draw[-] (-3,-3)--(-3,3);
	\draw[fill] (-3,0) circle (0.1cm);
	\draw[fill] (3,0) circle (0.1cm);
	
	\node[above] at (-3.25,0) {\large 0};	
	\node[above] at (3.1,0.05) {\large $a$};	
	\node[above] at (-0.4,0.3) {\large $\mathcal L^+$};	
	\node[below] at (-0.4,-0.3) {\large $\mathcal L^-$};	
		
	\draw[->, ultra thick] (3,0) -- (5,0);
	\draw[->, ultra thick] (-3,0) -- (0,0);
	\draw[-, ultra thick] (-3,0) -- (7,0);

	\draw[-, ultra thick] (-3,0) to [out=60, in=135] (3,0);
	\draw[-, ultra thick] (-3,0) to [out=-60, in=225] (3,0);

	\draw[->, ultra thick] (-0.4,1.38) to (-0.2,1.38);
	\draw[->, ultra thick] (-0.4,-1.38) to (-0.2,-1.38);
	
\end{tikzpicture}
\caption{The contour $\Gamma_{\bm S}$ for RHP \ref{rhp:S}.} \label{fig:lenses}
\end{figure}
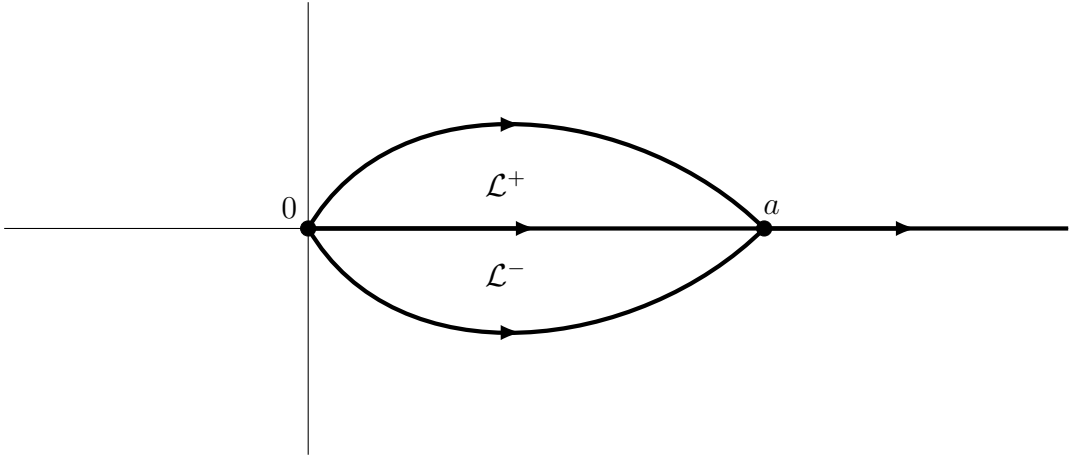

The next step is to open lenses $\mcal L^\pm$ around the interval $[0,a]$. For that, we modify
$$
\bm S(z)\deff
\begin{dcases}
\bm T(z)
\left(\bm I\mp \frac{z^{-\alpha}\ee^{2n\phi(z)}}{\sigma_n(z)}\bm E_{21}\right),& z\in \mcal L^\pm, \\
\bm T(z), & \text{otherwise}.
\end{dcases}
$$
In the above and what follows, $z^\alpha$ is always the principal branch, with branch cut on $(-\infty,0]$.
Introduce the new contour
$$
\Gamma_{\bm S}\deff [0,\infty)\cup \partial \mcal L^+\cup \partial\mcal L^-.
$$
Using the properties of the function $\phi$ discussed in Section~\ref{sec:eqmeasure}, in particular the jump relations \eqref{eq:jumpsphi}, it follows that 
the matrix $\bm S$ solves the following RHP.
\begin{rhp}\label{rhp:S} Find a $2\times 2$ matrix-valued function $\bm S:\C\setminus \Gamma_{\bm S}\to \C^{2\times 2}$ with the following properties.
\begin{enumerate}[(i)]
\item $\bm S:\C\setminus \Gamma_{\bm S}\to \C^{2\times 2}$ is analytic.
\item The matrix $\bm S$ has continuous boundary values $\bm S_\pm$ along the open subarcs of $\Gamma_{\bm S}$, and they are related by $\bm S_+(z)=\bm S_-(z)\bm J_{\bm S}(z)$, $z\in \Gamma_{\bm S}\setminus \{0,a\}$, with
$$
\bm J_{\bm S}(z)\deff
\begin{dcases}
z^\alpha\sigma_n(z)\bm E_{12}-\frac{z^{-\alpha}}{\sigma_n(z)}\bm E_{21}, & 0<z<a, \\
\bm I+\frac{z^{-\alpha}}{\sigma_n(z)}\ee^{2n\phi(z)}\bm E_{21}, & z\in \partial \mathcal L^\pm\setminus \R, \\
\bm I+\sigma_n(z)z^\alpha\ee^{-2n\phi(z)}\bm E_{12}, & z>a.
\end{dcases}
$$
\item As $z\to \infty$,
$$
\bm S(z)=\bm I+\frac{1}{z}\bm S_1+\Boh(z^{-2}).
$$
with $\bm S_1\deff \bm T_1$.
\item As $z\to 0$, the matrix $\bm S$ has the behavior 
\begin{align*}
\bm S(z) = 
\begin{cases}
\Boh\begin{pmatrix}
1 & h_\alpha(z) \\ 1 & h_\alpha(z)
\end{pmatrix}, & \text{outside the lens,} \\
\Boh\begin{pmatrix}
h_{-\alpha}(z) & h_\alpha(z) \\ h_{-\alpha}(z) & h_\alpha(z)
\end{pmatrix}, & \text{inside the lens,} \\
\end{cases}
\end{align*}
with $h_\alpha$ given in \eqref{deff:hfctionlocalbeh}.
\end{enumerate}
\end{rhp}

Thanks to \eqref{eq:ineqphi}, we 
have that $\bm J_{\bm S}(z)\to \bm I$ for $z\in \Gamma_{\bm S}\setminus [0,a]$ in an appropriate sense, and we now proceed to the construction of global and local parametrices. These parametrices are matrix-valued functions that satisfy the jumps for $\bm S$ in the interval $(0,a)$ and in neighborhoods of $z=0,a$, and will produce good approximations for $\bm S$ near these sets.

\subsection{The global parametrix}\hfill 

The global parametrix RHP is obtained after neglecting the exponentially small jumps of $\bm S$. Concretely, the global parametrix $\bm G$ is the solution to the following RHP.
\begin{rhp} Find a $2\times 2$ matrix-valued function $\bm G:\C\setminus [0,a]\to \C^{2\times 2}$ with the following properties.
\begin{enumerate}[(i)]
\item $\bm G:\C\setminus [0,a]\to \C^{2\times 2}$ is analytic.
\item The matrix $\bm G$ has continuous boundary values $\bm G_\pm$ along $[0,a]$, and they are related by 
$$
\bm G_+(z)=\bm G_-(z)\left(z^{\alpha}\sigma_n(z)\bm E_{12}-\frac{z^{-\alpha}}{\sigma_n(z)}\bm E_{21}\right),\quad 0<z<a.
$$
\item As $z\to \infty$,
$$
\bm G(z)=\bm I+\Boh(z^{-1}).
$$
\item The matrix $\bm G$ has at worse $L^2$-integrable singularities at $z=0,a$.
\end{enumerate}
\end{rhp}

The construction of the solution $\bm G$ is standard, and we now describe it. As we will see, $\bm G$ depends on $\sigma_n$, so in fact $\bm G=\bm G_n$, and we will also need to understand the behavior of $\bm G_n$ as $n\to \infty$, which we also describe in this section.

The solution $\bm G$ is explicitly given by
\begin{equation}\label{deff:globalparametrix}
\bm G(z)\deff \ee^{-\msf p_0\sp_3}\msf D_\infty^{\sp_3}\bm M(z)\msf D(z)^{-\sp_3}\ee^{\msf p(z)\sp_3},\quad z\in \C\setminus [0,a],
\end{equation}
where each of the quantities is as follows. The function $\msf D$ is analytic on $\C\setminus [0,a]$, and satisfies
$$
\msf D_+(z)\msf D_-(z)=z^\alpha \text{ for }z\in (0,a), \quad \text{and}\quad \msf D(z)\to \msf D_\infty\neq 0 \text{ as }z\to \infty.
$$
With standard methods, we find
$$
\msf D(z)\deff \frac{z^{\alpha/2}}{\msf d(z)^{\alpha/2}},\quad z\in \C\setminus [0,a],
$$
where $\msf d(z)$ is the conformal map from $\C\setminus [0,a]$ to the exterior of the unit disk, normalized to satisfy $\lim_{x\to +\infty} x^{-1}\msf d(x)\revdeff \msf d_\infty>0$. A simple calculation shows that
$$
\msf d(z)=\frac{2}{a}z-1+\frac{2}{a}z^{1/2}\left(z-a\right)^{1/2},\quad z\in \C\setminus [0,a],
$$
with principal branch of the root, so that $\msf d(z)=4z/a+\Boh(1)$ as $z\to \infty$ and therefore $\msf d_\infty=4/a$. Thus, $\msf D_\infty=2^{-\alpha} a^{\alpha/2}$.

The function $\msf p(z)$ solves the scalar RHP
$$
\msf p_+(z)+\msf p_-(z)=-\log\sigma_n(z), \quad 0<z<a,
$$
with behavior
$$
\msf p(z)=\Boh(1),\quad z\to 0,a,\infty.
$$
With standard methods, we find
\begin{equation}\label{deff:msfp}
\msf p(z)\deff \frac{(z(z-a))^{1/2}}{2\pi }\int_0^a \frac{\log(\sigma_n(x))}{\sqrt{|x(x-a)|}}\frac{\dd x}{x-z},\quad z\in \C\setminus [0,a],
\end{equation}
with principal branch of the square root.
The constant $\msf p_0$ is determined from
$$
\msf p(z)=\msf p_0+\frac{\msf p_1}{z}+\Boh(z^{-2}),\quad z\to\infty,
$$
from which we find
$$
\msf p_0\deff -\frac{1}{2\pi}\int_0^a \frac{\log(\sigma_n(x))}{\sqrt{|x(x-a)|}} \dd x,\quad
\msf p_1\deff -\frac{1}{2\pi}\int_0^a \frac{x\log(\sigma_n(x))}{\sqrt{|x(x-a)|}} \dd x+ \frac{a}{2}\msf p_0.
$$
For
$$
\bm U_0\deff \frac{1}{\sqrt{2}}
\begin{pmatrix}
1 & \ii \\ \ii & 1
\end{pmatrix},\quad \msf g(z)\deff \frac{z-a}{z},
$$
the matrix $\bm M$ is
\begin{equation}\label{eq:deffMglobalparamOP}
\bm M(z)=\bm U_0\msf g(z)^{\sp_3/4}\bm U_0^{-1},
\end{equation}
which solves the jump
$$
\bm M_+(z)=\bm M_-(z)(\bm E_{12}-\bm E_{21}).
$$
The function $\msf p=\msf p(\cdot\mid n)$ and the constant $\msf p_0=\msf p_0(n)$ both vary with $n$, but we mostly omit this dependence in our notation. It is however important to establish its behavior as $n\to\infty$. Using Proposition~\ref{prop:intLogestimate} we estimate these terms, the result is summarized in the next proposition.
\begin{prop}\label{prop:boundspn}
Set
$$
\msf p_\infty(\sad)\deff \frac{a^{-1/2}\tad^{-1/(2m)}}{2\pi m}\msf F_{\frac{1}{2m}-1}(\sad),
$$
where $\msf F_\beta(\sad)=-\beta\Gamma(\beta)\Li_{\beta+2}(-\ee^{-\sad})$ with $\Li_{\beta}$ being the polylogarithms. For any $\sad_0\in \R$ fixed, the estimate
$$
\msf p_0=\frac{1}{n}\msf p_\infty(\sad)+\Boh(n^{-1-2m}),\quad n\to \infty,
$$
holds uniformly for $\sad \geq \sad_0$, and the estimate
$$
\msf p(z)=\left(\frac{z-a}{z}\right)^{1/2}\frac{1}{n}\msf p_\infty(\sad)+\Boh(n^{-1-2m}),\quad n\to \infty
$$
holds uniformly for $\sad \geq \sad_0$, and also uniformly for $z$ in compacts of $\C\setminus \{0,a\}$, where for $z\in (0,a)$ it should be understood for boundary values $\msf p_{\pm}(z),\msf p_{\infty,\pm}$.
\end{prop}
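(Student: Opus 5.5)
The plan is to reduce both estimates to one asymptotic evaluation of the integral
$$
\msf A_n(\sad)\deff\int_0^a \frac{\log\sigma_n(x)}{\sqrt{|x(x-a)|}}\,\dd x=-\int_0^a \frac{\log\left(1+\ee^{-\sad-n^{2m}Q(x)}\right)}{\sqrt{x(a-x)}}\,\dd x ,
$$
which I expect is what Proposition~\ref{prop:intLogestimate} provides. Since $\msf p_0=-\msf A_n/(2\pi)$, the claim for $\msf p_0$ is exactly the asymptotics of $\msf A_n$. For $\msf p(z)$ with $z$ in a compact set $K\subset\C\setminus\{0,a\}$, I will use the algebraic splitting
$$
\frac{1}{x-z}=-\frac1z+\frac{x}{z(x-z)} .
$$
The first term gives exactly $\left(\frac{z-a}{z}\right)^{1/2}\msf p_0$, because $(z(z-a))^{1/2}/z=((z-a)/z)^{1/2}$ with the principal branches. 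The second term is handled by the same localization argument as for $\msf A_n$, but it carries an extra factor $x$.

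The key steps are as follows. First, \emph{localization}: use the elementary bound $0\le \log(1+\ee^{-\sad-y})\le \ee^{-\sad-y}$ for $y\ge0$. By Assumption~\ref{asu:Q} we have $Q(x)\ge c\min(x^m,1)$ on $[0,a]$. Hence the part of the integral over $[\delta,a]$ is $\Boh(\ee^{-\sad-cn^{2m}})$, uniformly for $\sad\ge\sad_0$. Second, on $[0,\delta]$, expand $(a-x)^{-1/2}=a^{-1/2}(1+\Boh(x))$ and $n^{2m}Q(x)=\tad(n^2x)^m(1+\Boh(x))$, and substitute $x=v\,n^{-2}\tad^{-1/m}$. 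The leading term becomes
$$
-\frac{a^{-1/2}\tad^{-1/(2m)}}{n}\int_0^\infty v^{-1/2}\log\left(1+\ee^{-\sad-v^m}\right)\dd v .
$$
Setting $w=v^m$, this equals $-\frac{a^{-1/2}\tad^{-1/(2m)}}{nm}\int_0^\infty w^{\beta}\log\left(1+\ee^{-\sad-w}\right)\dd w$ with $\beta=\frac1{2m}-1$. Expanding the logarithm in a power series and integrating term by term gives $-\Gamma(\beta+1)\Li_{\beta+2}(-\ee^{-\sad})=\msf F_\beta(\sad)$. Multiplying by $-1/(2\pi)$ then yields exactly $\msf p_\infty(\sad)/n$.

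Third, the \emph{error terms}. The relative corrections $\Boh(x)$ become $\Boh(n^{-2}v)$ after rescaling. They are integrated against $v^{-1/2}\ee^{-\sad-cv^m}$, which produces an error of the form $\ee^{-\sad}\,\Boh(n^{-3})$. The correction coming from $Q$ enters through $\log(1+\ee^{-\sad-y})$, whose $y$-derivative is bounded by $\ee^{-\sad-y}$. This keeps everything uniform in $\sad\ge\sad_0$. The precise power $n^{-1-2m}$ in the statement should come from the finer bookkeeping in Proposition~\ref{prop:intLogestimate}, which I would invoke rather than redo. For the second term in the splitting of $\msf p(z)$, the factor $x/(z(x-z))$ is $\Boh(x)$ uniformly for $x\in[0,\delta]$ and $z\in K$, once $\delta$ is smaller than the distance from $K$ to $0$. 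So that term is of higher order by the same computation, and the region $[\delta,a]$ is again exponentially small.

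The main obstacle is the uniformity for $z\in(0,a)$, that is, for the boundary values $\msf p_\pm$, where the Cauchy kernel is singular on the integration interval. My first approach is to split $\log\sigma_n$ into a part supported on $[0,\delta]$ and a remainder. For $z$ in a compact subset of $(0,a)$ bounded away from $[0,\delta]$, the first part produces a Cauchy integral with no singularity near $z$, and the argument above applies directly. The remainder is $\Boh(\ee^{-\sad-cn^{2m}})$ in a H\"older norm on $[\delta,a]$, because $Q$ is analytic and positive there. Its Cauchy transform is then controlled, boundary values included, by standard Plemelj--Privalov estimates, which makes it negligible.
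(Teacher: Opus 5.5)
Your leading-order computation is correct and follows the paper's route. The paper applies Proposition~\ref{prop:intLogestimate}(ii) with $\beta=-\tfrac12$, $t=n^{2m}$, $q=Q$ and $f(x)=|x-a|^{-1/2}$ (resp.\ $f(x)=|x-a|^{-1/2}(x-z)^{-1}$). Your splitting of the Cauchy kernel and your Plemelj--Privalov treatment of the boundary values are reasonable substitutes for its brief remarks on complex $f$ and on $z\in(0,a)$.

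The gap is the step where you defer the exponent $n^{-1-2m}$ to ``finer bookkeeping'' in that proposition. For $m\ge 2$ no bookkeeping can produce it, because the $\Boh(n^{-3})$ you found is sharp. Take $Q(x)=\tad x^m$, which satisfies Assumption~\ref{asu:Q}. The first-order term of $(a-x)^{-1/2}=a^{-1/2}\bigl(1+\tfrac{x}{2a}+\Boh(x^2)\bigr)$ contributes to $\msf p_0$ the amount
\[
\frac{a^{-3/2}}{4\pi}\int_0^\delta x^{1/2}\log\bigl(1+\ee^{-\sad-n^{2m}\tad x^m}\bigr)\dd x
=\frac{a^{-3/2}\tad^{-3/(2m)}}{4\pi n^{3}}\int_0^\infty v^{1/2}\log\bigl(1+\ee^{-\sad-v^m}\bigr)\dd v ,
\]
up to an exponentially small error. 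This is strictly positive. All remaining corrections (the $\Boh(x^2)$ terms and the region $[\delta,a]$) are $\Boh(n^{-5})$ or exponentially small.

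So $\msf p_0-\frac1n\msf p_\infty(\sad)$ is of exact order $n^{-3}$. Likewise, $\msf p(z)-\bigl(\tfrac{z-a}{z}\bigr)^{1/2}\frac1n\msf p_\infty(\sad)$ has an $n^{-3}$ term with coefficient proportional to $\tfrac{1}{2a}+\tfrac1z$; the $\tfrac1z$ part comes from your second piece. Hence the claimed $\Boh(n^{-1-2m})$ can hold only for $m=1$, where it coincides with your $\Boh(n^{-3})$.

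Proposition~\ref{prop:intLogestimate} cannot supply the missing precision either, and the paper's proof inherits this problem. The relative error $\Boh(1/t)$ there rests on the bound $|R(u)|\le R_0u^{(\beta+1)/m}$, which is valid only for $m=1$. In general $\wt q(u)=(u/q_0)^{1/m}\bigl(1+\Boh(u^{1/m})\bigr)$ and $f(\wt q(u))=f(0)\bigl(1+\Boh(u^{1/m})\bigr)$, so one only has $|R(u)|\le R_0u^{(\beta+2-m)/m}$. For instance, with $q(x)=x^m$ and $f(x)=1+x$ the remainder is exactly $\tfrac1m u^{(\beta+2-m)/m}$. This gives a relative error $\Boh(t^{-1/m})$, that is, $\Boh(n^{-2})$ relative and $\Boh(n^{-3})$ absolute in the present setting.

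So what your argument actually proves, namely both estimates with error $\Boh(\ee^{-\sad}n^{-3})$ uniformly for $\sad\ge\sad_0$, is the correct version of the proposition for general $m$. State it that way rather than patching it by citing that proposition.

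A minor point: for $\sad<0$ the power series of $\log(1+\ee^{-\sad-w})$ diverges for small $w$. There the polylogarithm formula for $\msf F_\beta$ needs analytic continuation in $\sad$, although the asymptotics themselves only use the integral definition of $\msf F_\beta$.
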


\begin{proof}
We may estimate $\msf p_0$ and $\msf p$ using Proposition \ref{prop:intLogestimate}--(ii), once we identify $\beta=-\frac12$, $Q(x)=t q(x)$, and respectively $f(x)=|x-a|^{-1/2}$ and $f(x)=|x-a|^{-1/2} (x-z)^{-1}$ (or rather we look at the real and imaginary part of $f$). In the case that $z\in (0,a)$, the reader may verify that the (half) residue contribution corresponding to $x=z$ to the integral defining $\msf p(z)$ does not contribute to the dominant order, and one may effectively apply Proposition \ref{prop:intLogestimate}--(ii) for some $\delta<z$. We obtain the stated result after invoking Proposition \ref{prop:intLogestimate}--(i).
\end{proof}

As a consequence, we have the following proposition.
\begin{prop}\label{prop:boundglobalparam}
The global parametrix $\bm G$ defined in \eqref{deff:globalparametrix} satisfies
$$
\bm G(z)=\left(\bm I+\Boh(n^{-1-2m})\right)\msf D_\infty^{\sp_3}\bm M(z)\msf D(z)^{-\sp_3},\quad n\to \infty,
$$
uniformly for $z$ in compacts of $\C\setminus \{ a,0\}$, where for $z\in (0,a)$ this relation should be understood as valid (uniformly) for either of the $\pm$-boundary values. In particular, the boundary values $\bm G_\pm$ remain uniformly bounded on compacts of $(0,a)$.
\end{prop}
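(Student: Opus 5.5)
The plan is to compare $\bm G$ with the reference matrix $\bm G^{(0)}(z)\deff\msf D_\infty^{\sp_3}\bm M(z)\msf D(z)^{-\sp_3}$ by computing the ratio $\bm G(\bm G^{(0)})^{-1}$ explicitly from \eqref{deff:globalparametrix}, and then estimating it with Proposition~\ref{prop:boundspn}. Since $\msf D(z)^{-\sp_3}$ and $\ee^{\msf p(z)\sp_3}$ are both diagonal, they commute. This gives the exact algebraic identity
\begin{equation*}
\bm G(z)\bm G^{(0)}(z)^{-1}=\ee^{-\msf p_0\sp_3}\msf D_\infty^{\sp_3}\,\bm M(z)\ee^{\msf p(z)\sp_3}\bm M(z)^{-1}\,\msf D_\infty^{-\sp_3}.
\end{equation*}
The factors $\msf D_\infty^{\pm\sp_3}$ do not depend on $n$. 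By \eqref{eq:deffMglobalparamOP}, $\bm M^{\pm1}$ are bounded on compacts of $\C\setminus\{0,a\}$, including both boundary values on compacts of $(0,a)$. Hence the whole claim reduces to estimating the scalar exponents $\msf p_0$ and $\msf p(z)$ uniformly in $\sad\geq\sad_0$.

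Next I will insert the expansions of Proposition~\ref{prop:boundspn}:
\begin{equation*}
\msf p_0=\tfrac1n\msf p_\infty(\sad)+\Boh(n^{-1-2m}),\qquad \msf p(z)=\msf g(z)^{1/2}\tfrac1n\msf p_\infty(\sad)+\Boh(n^{-1-2m}).
\end{equation*}
I will then expand $\ee^{\pm(\cdot)\sp_3}=\bm I\pm(\cdot)\sp_3+\Boh((\cdot)^2)$. The terms of exact order $n^{-1-2m}$ are harmless, since they are multiplied by bounded matrices. What remains is to collect the explicit contribution proportional to $\tfrac1n\msf p_\infty(\sad)$, namely
\begin{equation*}
\tfrac1n\msf p_\infty(\sad)\,\msf D_\infty^{\sp_3}\Big(\bm M(z)\msf g(z)^{1/2}\sp_3\bm M(z)^{-1}-\sp_3\Big)\msf D_\infty^{-\sp_3}.
\end{equation*}
I will also collect the quadratic remainders, which are of size $\Boh(n^{-2})$.

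The main obstacle is precisely this order-$1/n$ term, together with the $\Boh(n^{-2})$ remainders. To reach the stated $\Boh(n^{-1-2m})$ rate, both must be absorbed. My first attempt will be to compute $\bm M\msf g^{1/2}\sp_3\bm M^{-1}$ directly, using $\bm M=\bm U_0\msf g^{\sp_3/4}\bm U_0^{-1}$ and the relation $\bm U_0^{-1}\sp_3\bm U_0=\sp_3\bm U_0^2$. I will then check whether it reduces to $\sp_3$, which is exactly what is needed for the bracket above to vanish. If it does not vanish identically, I expect the computation to show a non-vanishing $\Boh(1/n)$ matrix correction. In that case the honest conclusion available from this route is that the error term should be read as $\Boh(n^{-1})$ rather than $\Boh(n^{-1-2m})$. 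The final sentence of the statement follows in either case: uniform boundedness of $\bm G_\pm$ on compacts of $(0,a)$ comes from the boundedness of $\bm G^{(0)}_\pm$ there and the uniform smallness of the prefactor.
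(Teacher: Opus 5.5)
Your reduction is the paper's own route (explicit formula \eqref{deff:globalparametrix} plus Proposition~\ref{prop:boundspn}). The check you postpone is decisive, and it comes out negative. Write $\bm M=\begin{pmatrix} c & -\ii d\\ \ii d & c\end{pmatrix}$ with $c=\frac12(\msf g^{1/4}+\msf g^{-1/4})$ and $d=\frac12(\msf g^{1/4}-\msf g^{-1/4})$. Then
\begin{equation*}
\bm M\sp_3\bm M^{-1}=\frac12\begin{pmatrix}\msf g^{1/2}+\msf g^{-1/2} & \ii(\msf g^{1/2}-\msf g^{-1/2})\\ \ii(\msf g^{1/2}-\msf g^{-1/2}) & -\msf g^{1/2}-\msf g^{-1/2}\end{pmatrix},\qquad \msf g^{1/2}\bm M\sp_3\bm M^{-1}-\sp_3=-\frac{a}{2z}\begin{pmatrix}1 & \ii\\ \ii & -1\end{pmatrix}.
\end{equation*}
Hence, with your $\bm G^{(0)}=\msf D_\infty^{\sp_3}\bm M\msf D^{-\sp_3}$,
\begin{equation*}
\bm G(z)\bm G^{(0)}(z)^{-1}=\bm I-\frac{a\,\msf p_\infty(\sad)}{2nz}\begin{pmatrix}1 & \ii\msf D_\infty^{2}\\ \ii\msf D_\infty^{-2} & -1\end{pmatrix}+\Boh(n^{-2}).
\end{equation*}
The coefficient $\msf p_\infty(\sad)$ is a positive multiple of $\msf F_{\frac{1}{2m}-1}(\sad)=\int_0^\infty x^{\frac{1}{2m}-1}\log(1+\ee^{-\sad-x})\,\dd x$, which is strictly positive. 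So the $1/n$ term is present for every $\sad$, and $\bm G(\bm G^{(0)})^{-1}-\bm I$ is of exact order $n^{-1}$. The stated $\Boh(n^{-1-2m})$ is therefore false, and no proof can deliver it.

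The paper's proof reaches that rate only by asserting $\ee^{-\msf p_0\sp_3}=\bm I+\Boh(n^{-1-2m})$ and $\ee^{\msf p(z)\sp_3}=\bm I+\Boh(n^{-1-2m})$. This discards the leading terms $\frac1n\msf p_\infty(\sad)$ and $\msf g(z)^{1/2}\frac1n\msf p_\infty(\sad)$, which Proposition~\ref{prop:boundspn} itself provides. The computation above shows that these terms do not cancel.

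So take your fallback as the result. The proposition holds with $\Boh(n^{-1})$ in place of $\Boh(n^{-1-2m})$, uniformly for $z$ in compacts of $\C\setminus\{0,a\}$ (both boundary values on $(0,a)$) and for $\sad\geq\sad_0$. Your argument proves exactly this, together with the uniform boundedness of $\bm G_\pm$ on compacts of $(0,a)$.

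The weaker rate costs nothing later. The matchings \eqref{eq:matchingPo} and \eqref{eq:matchingPa}, and Theorem~\ref{thm:estRforOP}, are only claimed at $\Boh(n^{-1})$. Moreover, the prefactors $\bm E_1$ and $\bm E_1^{(a)}$ are built from $\msf D_\infty^{\sp_3}\bm M\msf D^{-\sp_3}$ without the factors $\ee^{-\msf p_0\sp_3}$ and $\ee^{\msf p\sp_3}$. So the corrected estimate $\bm G=(\bm I+\Boh(n^{-1}))\bm G^{(0)}$ is precisely what those matchings require.
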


\begin{proof}
Proposition~\ref{prop:boundspn} ensures that as $n\to \infty$,
$$
\ee^{-\msf p_0\sp_3}=\bm I+\Boh(n^{-1-2m}), \quad \text{as well as}\quad \ee^{\msf p(z)\sp_3}=\bm I+\Boh(n^{-1-2m}),
$$
with the latter being valid uniformly for $z\in \C\setminus \{0,a\}$ in compacts. Because $\msf D_\infty,\msf D$ and $\bm M$ are independent of $n$ and are bounded on compacts of $\C\setminus \{0,a\}$, we then have that
$$
\msf D_\infty^{\sp_3}\bm M(z)\msf D(z)^{-\sp_3}\ee^{\msf p(z)\sp_3}=\left(\bm I+\Boh(n^{-1-2m})\right)\msf D_\infty^{\sp_3}\bm M(z)\msf D(z)^{-\sp_3},
$$
and the result follows from the explicit expression \eqref{deff:globalparametrix}.
\end{proof}

With the global parametrix now constructed, we move forward to the local parametrices.

\subsection{The local parametrix near the hard edge}\label{sec:localparamhardedge}\hfill \\
The local parametrix $\bm P^{(0)}$, constructed in a neighborhood $U_0$ of $z=0$, reads as follows.
\begin{rhp} Find a $2\times 2$ matrix-valued function $\bm P^{(0)}: U_0\setminus \Gamma_{\bm S}\to \C^{2\times 2}$ with the following properties
\begin{enumerate}[(i)]
\item $\bm P^{(0)}$ is analytic on $U_0\setminus \Gamma_{\bm S}$.
\item It has jump $\bm P^{(0)}_+(z)=\bm P^{(0)}_-(z)\bm J_{\bm S}(z)$, $z\in \Gamma_{\bm S}\cap U_0$.
\item For $z\in \partial U_0$, it behaves as
$$
\bm P^{(0)}(z)=(\bm I+\boh(1))\bm G(z),\quad n\to \infty.
$$
%
%
\end{enumerate}
\end{rhp}

Making the change
\begin{equation}\label{eq:transfLP}
\bm L(z)=\bm P^{(0)}(z)\ee^{-n\phi(z)\sp_3}(-z)^{\alpha \sp_3/2},\quad z\in U_0\setminus \Gamma_{\bm S},
\end{equation}
with the principal branch of the root $(\cdot)^{\alpha/2}$, it follows that $\bm L$ should satisfy the following RHP.

\begin{rhp} Find a $2\times 2$ matrix-valued function $\bm L:U_0\setminus \Gamma_{\bm S}\to \C^{2\times 2}$ with the following properties.
\begin{enumerate}[(i)]
\item $\bm L$ is analytic on $U_0\setminus \Gamma_{\bm S}$.
\item It has jump $\bm L_+(z)=\bm L_-(z)\bm J_{\bm L}(z)$, $z\in \Gamma_{\bm S}\cap U_0$, with
$$
\bm J_{\bm L}(z)\deff
\begin{dcases}
\sigma_n(z)\bm E_{12}-\frac{1}{\sigma_n(z)}\bm E_{21}, & z>0, \\
\bm I+\frac{\ee^{\mp \pi \alpha \ii}}{\sigma_n(z)}\bm E_{21}, & z\in (\partial \mcal L^\pm\cap U_0)\setminus [0,\infty), \\
\end{dcases}
$$
\item For $z\in \partial U_0$, it behaves as
$$
\bm L(z)=(\bm I+\boh(1))\bm G(z)(-z)^{\alpha \sp_3/2}\ee^{-n\phi(z)\sp_3},\quad n\to \infty.
$$
%
\end{enumerate}
\end{rhp}

The next step is to use the conformal map $\psi$ from \eqref{eq:conformalmap} to translate the RHP in the $z$-plane to a RHP in the transformed plane. For that, we look at
$$
\zeta\deff n^{2}\psi(z), \quad \text{that is,} \quad z=\psi^{-1}\left(\frac{\zeta}{n^2}\right),
$$
as a conformal change of variables. This correspondence satisfies
$$
z<0 \quad \text{ if, and only if, \quad } \zeta>0.
$$
This way, it follows that $2\zeta^{1/2}=2(n^2\psi(z))^{1/2}=n(\phi(z)^2)^{1/2}=\pm n \phi(z)$ for the principal branch of the log and some choice of the sign. Because $\phi(z)<0$ for $z<0$, we have in fact
$$
2\zeta^{1/2}=-n\phi(z).
$$
We also have to account for the transformation of the jump from the $z$-plane to the $\zeta$-plane. For that, for $\psi^{-1}$ being the inverse of $\psi$, we take $\delta>0$ such that $\psi^{-1}$ is analytic on the disk $D_{2\delta}(0)$ in the $w$ plane, and such that $\psi(U_0)= D_\delta(0)$. We also make sure that the lips of the lenses are mapped to the contours $\Gamma^\pm$ from \eqref{deff:thetambasiccontours}. 

Introduce the new function
$$
\msf H^Q(w)\deff Q(\psi^{-1}(w)),\quad |w|<\delta,
$$
which is an analytic function on $D_\delta(0)$ that satisfies
$$
\msf H^Q(w)=(-1)^m\frac{\tad}{\msf c_V^m}w^{m}(1+\Boh(w)),\quad w\to 0.
$$
In particular, the function
\begin{equation}\label{eq:sigmantohzeta}
\msf h_n(\zeta)\deff \sad +n^{2m}Q(z)=\sad+n^{2m}Q\left(\psi^{-1}\left(\frac{\zeta}{n^{2}}\right)\right)=\sad+n^{2m}\msf H^Q\left(\frac{\zeta}{n^{2}}\right),
\end{equation}
is analytic on $|\zeta|\leq \delta n^{2}$ and satisfies
$$
\msf h_n(\zeta)=\sad+(-1)^m\uad \zeta^m+\Boh(n^{-2}),\quad n\to\infty,\qquad \uad\deff \frac{\tad}{\msf c_V^m}>0,
$$
uniformly for $\zeta$ in compacts of $\C$, or also the more refined estimate
$$
\msf h_n(\zeta)=\sad +(-1)^m \msf u\zeta^m+\Boh(\zeta^{m+1}n^{-2}),\quad n\to\infty,
$$
which is valid for $|\zeta|\leq \delta n^{2}$.

We extend $\msf H^Q$ in a $C^\infty$ way to a neighborhood of $\Gamma$, so that with the identification $\tau=n$, this function $\msf h_n$ is admissible in the sense of Definition~\ref{def:admissibledata}. In particular, from the considerations above and \eqref{eq:convsigmatausigmainfty},
\begin{equation}\label{eq:convgsigmanhinfty}
\left|\sigma_n(x)-\sigma_\infty(\zeta(x))\right|=\Boh(\ee^{-\sad }n^{-2}),\quad n \to \infty,\qquad \sigma_\infty(\zeta)\deff \frac{1}{1+\ee^{-\msf h_\infty(\zeta)}},
\end{equation}
uniformly for $\zeta$ in compacts of $\C$, where we recall that $\msf h_\infty(\zeta)=\sad+(-1)^m\msf u\zeta^m$ is as in \eqref{def:hinfty}.

Moving forward with the construction for $\bm L$, set 
\begin{equation}\label{eq:deffEn}
\bm E_n(z)\deff \bm E_1(z)n^{\sp_3/2},\quad \bm E_1(z)\deff \msf D_\infty^{\sp_3}\bm M(z)(-z)^{\alpha\sp_3/2}\msf D(z)^{-\sp_3}\bm U_0(\psi(z))^{\sp_3/4},\quad z\in U_\delta.
\end{equation}
A direct calculation from its definition shows that $\bm E_1$ is analytic near $z=0$.

With all these definitions, we construct the solution $\bm L$ in the form
\begin{equation}\label{eq:LEPhi}
\bm L(z)=\bm E_n(z)\wh{\bm \Psi}_n(z),\qquad \wh{\bm \Psi}_n(z)=\sp_3\bm \Psi_n(\zeta)\sp_3, \quad \zeta=\zeta(z)=n^{2}\psi(z),
\end{equation}
where $\bm\Psi_n$ should satisfy the following RHP.
\begin{rhp}\label{rhp:localparamPsin}
Find a $2\times 2$ matrix-valued function $\bm \Psi_n:\C\setminus \Gamma\to \C^{2\times 2}$ with the following properties.
\begin{enumerate}[(i)]
\item $\bm \Psi_n$ is analytic on $\C\setminus \Gamma$.
\item The jump matrix $\bm J_n$ for $\bm \Psi_n$ is
$$
\bm J_n(\zeta)\deff
\begin{dcases}
\bm I+(1+\ee^{-\msf h_n(\zeta)})\ee^{\pi \alpha\ii}\bm E_{21}, & \zeta\in \Gamma_1, \\
\frac{1}{1+\ee^{-\msf h_n(\zeta)}}\bm E_{12}-(1+\ee^{-\msf h_n(\zeta)})\bm E_{21}, & \zeta\in \Gamma_2, \\
\bm I+(1+\ee^{-\msf h_n(\zeta)})\ee^{-\pi \alpha\ii}\bm E_{21}, & \zeta\in \Gamma_3,
\end{dcases}
$$

\item As $\zeta\to \infty$,
\begin{equation}\label{eq:asympmodelprobl}
\bm \Psi_n(\zeta)=\zeta^{-\sp_3/4}\bm U_0\left(\bm I+\Boh(\zeta^{-1/2})\right)\ee^{2\zeta^{1/2}\sp_3},
\end{equation}
where the error term is valid uniformly for $|\zeta|>R$, for some $R>0$ independent of $n$.
\end{enumerate}
\end{rhp}

A direct comparison shows that, with the identification $n=\gt$, the solution $\bm \Psi_n$ of the model RHP~\ref{RHP:model} is a solution $\bm\Psi_n$ of this RHP. In fact, RHP~\ref{rhp:localparamPsin} (i) and (ii) are in direct correspondence with RHP~\ref{RHP:model}, and a direct calculation shows that RHP~\ref{RHP:model}--(iii) implies RHP~\ref{rhp:localparamPsin}--(iii) (see also Remark~\ref{rmk:expansioninftymodeladm}). From now on, we refer to this choice $\bm \Psi_n$ as being the solution to RHP~\ref{rhp:localparamPsin}, even though RHP~\ref{rhp:localparamPsin} as posed may admit more than one solution\footnote{To ensure unique solution, we would have to specify the behavior as $\zeta\to 0$, but for clarity of presentation we opt for the approach presented without this condition a priori, only a posteriori}.

In particular, Proposition~\ref{Prop:behmodelproblemorigin} ensures that
\begin{multline}\label{eq:Phinlocalbeh}
\bm \Psi_n(\zeta)=\bm \Psi^{(n)}_0(\zeta)\zeta^{\alpha\sp_3/2}(\bm I+\msf a(\zeta)\bm E_{12})\left( \bm I-(\chi_{\mcal S_+}(\zeta)\ee^{\pi \ii \alpha}-\chi_{\mcal S_-}(\zeta)\ee^{-\pi\ii \alpha})\bm E_{21} \right)
\\
\times (1+\ee^{-\msf h_n(\zeta)})^{\sp_3/2}, \; \zeta\to 0,
\end{multline}
where $\bm \Psi_0^{(n)}$ is analytic near the origin. 

With the found $\bm L$, we now recover $\bm P^{(0)}$ from \eqref{eq:transfLP}. We now collect some properties that will play some role later on.

\begin{prop}\label{lem:fundpropertiesE1Phin}
The following properties hold.
\begin{enumerate}
\item [\rm{(i)}] The matrix $\bm E_1$ is independent of $n$, and it is analytic in a neighborhood of $z=0$. 
\item [\rm{(ii)}]For any $\sad_0\in \R$, the function $\bm P^{(0)}$ satisfies
\begin{equation}\label{eq:matchingPo}
\bm P^{(0)}(z)=\left(\bm I+\Boh(n^{-1})\right)\bm G(z),\quad n\to \infty,
\end{equation}
uniformly for $z\in \partial U_\delta$, and uniformly for $\sad\geq \sad_0$.

\item [\rm{(iii)}] The entries of the matrix $\bm \Psi_{n,\pm}(\zeta)^{\pm 1}$ remain bounded as $n\to\infty$, uniformly for $\zeta\in \R$ with $1/M\leq |\zeta|\leq M$ and $\sad\geq \sad_0$ for any $\sad_0\in \R$ and any $M>0$.

\item [\rm{(iv)}]
For any $\sad_0\in \R$ fixed, there exist $n_0>0$ and $M>0$, for which the matrix $\bm \Psi_0^{(n)}$ is analytic and uniformly bounded on $|\zeta|<1/M$, for any $n\geq n_0$ and any $\sad\geq \sad_0$.

\item [\rm{(v)}] For any $\sad_0>0$ fixed, there exist $n_0>0$ and $M>0$ for which the expansion \eqref{eq:asympmodelprobl} is valid on $|\zeta|>M$, uniformly for $n\geq n_0$ and $\sad\geq \sad_0$.
\end{enumerate}
\end{prop}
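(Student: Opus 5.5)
We treat the five items in turn; each reduces either to an explicit computation with the factors in \eqref{eq:deffEn} or to the asymptotic analysis $\bm\Psi_\gt\to\bm\Psi_\infty$ of Section~\ref{sec:analysisgtinfty} with $\gt=n$.

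For (i), independence of $n$ is clear, since $\msf D,\msf D_\infty,\bm M,\psi$ do not involve $n$ or $\sigma_n$. For analyticity we compute the jumps of $\bm E_1$ on $(-\delta,\delta)$ and check they cancel. On $(0,\delta)$, $\bm M$ has jump $\bm E_{12}-\bm E_{21}$ and $\msf D_+\msf D_-=z^\alpha$. The factor $(-z)^{\alpha\sp_3/2}$ gives $\ee^{\mp\pi\ii\alpha\sp_3/2}$ boundary values. Moreover $\psi<0$ there, so $\psi^{\sp_3/4}$ jumps by $\ee^{\pi\ii\sp_3/2}$. Conjugating $\bm U_0$ by this diagonal jump gives exactly $(\bm E_{12}-\bm E_{21})^{-1}$, and the jumps cancel. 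On $(-\delta,0)$ all factors are analytic except $(-z)^{\alpha\sp_3/2}$ and $\psi^{1/4}$; since $-z>0$ and $\psi>0$ there, there is no jump. At $z=0$ the entries are $\Boh(|z|^{-1/2})$, because $\bm M=\Boh(z^{-1/4})$ and $\psi^{\sp_3/4}=\Boh(z^{-1/4})$ while the $\alpha$-powers cancel. Hence the singularity is removable. This is the standard Bessel-parametrix computation of \cite{Vanlessen2007}.

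For (ii), on $\partial U_\delta$ we have $|\zeta|=\delta n^2$. We write
$$\bm P^{(0)}\bm G^{-1}=\bm E_n\,\sp_3\bm\Psi_n(\zeta)\sp_3\,(-z)^{-\alpha\sp_3/2}\ee^{n\phi\sp_3}\bm G^{-1}.$$
Into this we insert \eqref{eq:asympmodelprobl}, valid uniformly for large $|\zeta|$ by (v), together with $2\zeta^{1/2}=-n\phi$. The exponentials then cancel, and $\zeta^{-\sp_3/4}=n^{-\sp_3/2}\psi^{-\sp_3/4}$ combines with $\bm E_n$. The $\Boh(\zeta^{-1/2})=\Boh(n^{-1})$ error is conjugated by $n^{\sp_3/2}$-type factors, but the product still gives $\bm I+\Boh(n^{-1})$, because the expansion in RHP~\ref{RHP:model}--(iii) has the structured form $(\bm I+\bm\Psi_1/\zeta+\dots)\zeta^{-\sp_3/4}$. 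Finally, Proposition~\ref{prop:boundglobalparam} replaces $\bm G$ by its $n$-independent leading part with error $\Boh(n^{-1-2m})$. \emph{This is the main obstacle}: one must check that the $\zeta^{-1/2}$ correction does not blow up under the conjugation and that it is uniform in $\sad\ge\sad_0$. I would use the uniform expansion from Remark~\ref{rmk:expansioninftymodeladm} and item (v) to handle both.

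For (iii) and (v), Proposition~\ref{prop:RestmodelPsiinfty} gives $\bm\Psi_n=\bm R\bm\Psi_\infty$ outside $D_\varepsilon$, with $\bm R=\bm I+\Boh(\ee^{-\sad}n^{-\kappa})$ uniformly in $\sad\ge\sad_0$. So it suffices to bound $\bm\Psi_{\infty,\pm}^{\pm1}$ on compact subsets of $\R\setminus\{0\}$ uniformly in $\sad\ge\sad_0$. Proposition~\ref{prop:RLpestimatesBesselasympt} (comparison with $\bm\Psi^\bes_\alpha$ as $\sad\to+\infty$) gives this for large $\sad$, and continuity in $\sad$ covers compact $\sad$-ranges. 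Inside $D_\varepsilon$ we instead use $\bm\Psi_n=\bm R\bm P_n$ with the explicit factor \eqref{eq:localparPsit}. Since $\det=1$, bounds on the inverses follow from bounds on the matrices themselves. For (v) we use the same representation $\bm\Psi_n=\bm R\bm\Psi_\infty$ for large $|\zeta|$, where $\bm R$ has a uniform expansion at infinity, combined with the large-$\zeta$ expansion of $\bm\Psi_\infty$, which is uniform in $\sad\ge\sad_0$ by the same comparison with $\bm\Psi^\bes_\alpha$.

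For (iv), comparing \eqref{eq:Phinlocalbeh} with $\bm\Psi_n=\bm R\bm P_n$ and \eqref{eq:localparPsit} gives the identity
$$\bm\Psi_0^{(n)}=\bm R\,\bm E_n(\bm I+\lambda_n\bm E_{12})\cdots$$
that is, $\bm\Psi_0^{(n)}=\bm R\bm E_\gt$ up to bounded analytic factors. Here $\bm E_\gt$ from \eqref{eq:Egterror} is built from $\bm\Psi_{\infty,0}$, which is bounded near $0$ uniformly in $\sad$ by the same comparison argument, and from $\lambda_\gt$. The latter is uniformly bounded on a fixed small circle; the maximum principle then bounds $\bm E_\gt$ inside, since $\bm E_\gt$ is analytic there.
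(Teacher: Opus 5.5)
Your proposal is correct and follows the same route as the paper's much terser proof. You check the jumps and the isolated singularity of $\bm E_1$ explicitly, and you obtain the matching (ii) from the large-$\zeta$ expansion \eqref{eq:asympmodelprobl} together with Proposition~\ref{prop:boundglobalparam}. For (iii)--(v) you transfer the properties from $\bm\Psi_\infty$ to $\bm\Psi_n$ through $\bm\Psi_n=\bm R\bm\Psi_\infty$ (resp.\ $\bm R\bm P_\gt$ in $D_\varepsilon$), with the comparison to $\bm\Psi^\bes_\alpha$ and continuity in $\sad$ supplying uniformity in $\sad\geq\sad_0$.

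The obstacle you flag in (ii) is not actually present. In the form \eqref{eq:asympmodelprobl} one has $n^{\sp_3/2}\zeta^{-\sp_3/4}=\psi^{-\sp_3/4}$ and $\sp_3\bm U_0\sp_3=\bm U_0^{-1}$. Hence the $\Boh(\zeta^{-1/2})=\Boh(n^{-1})$ error is conjugated only by $n$-independent matrices that are bounded on $\partial U_\delta$.
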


\begin{proof}
From its explicit expression, it is immediate that $\bm E_1$ is independent of $n$ and it is analytic on $U_\delta\setminus [0,a]$. A straightforward calculation shows that, in addition, it has no jumps along $(0,a)$, so its singularity at $z=0$ is isolated. From the behavior of its entries, we see that in fact this singularity is removable, and the analyticity of $\bm E_1$ follows.

The estimate on $\bm L$ claimed in (ii) follows from the correspondence $\zeta=n^2\psi(z)$, the RHP satisfied by $\bm \Psi_n$, namely RHP~\ref{RHP:model}--(iii), and Proposition~\ref{prop:boundglobalparam}. We skip details.

The claims (iii) and (iv) are true if we replace $\bm \Psi_n$ by $\bm\Psi_\infty$. Since $\bm\Psi_n\to\bm \Psi_\infty$ uniformly on compacts, standard arguments show that they are true for $\bm\Psi_n$ as well, as long as we take $n$ sufficiently large.

Finally, claim (v) is the same as Remark~\ref{rmk:expansioninftymodeladm} for $\bm\Psi_\gt=\bm\Psi_n$.
\end{proof}

These considerations finish the construction of the local parametrix.

\subsection{The local parametrix near the soft edge}\hfill \\
The local parametrix $\bm P^{(a)}$ near the soft edge $z=a$ is constructed in the usual way. Its main building block is the RHP for the Airy function that we describe next.

Set
$$
\Gamma_{\bm A}\deff \R\cup (\ee^{2\pi \ii/3}\infty,0]\cup (\ee^{-2\pi \ii /3}\infty,0],
$$
with the orientation of the arcs on real axis to be the natural orientation induced by $\R$, and with the non-real arcs oriented from $\infty$ to $0$. The Airy parametrix $\bm A$ is the solution to the following RHP.

\begin{rhp}\label{rhp:Airy}
Find a $2\times 2$ matrix-valued function $\bm A:\C\setminus \Gamma_{\bm A}\to \C^{2\times 2}$ with the following properties.
\begin{enumerate}[(i)]
\item $\bm A$ is analytic on $\C\setminus \Gamma_{\bm A}$.
\item The jump condition $\bm A_+(\zeta)=\bm A_-(\zeta)\bm J_{\bm A}(\zeta)$, $\zeta\in \Gamma_{\bm A}$ holds, where the jump matrix $\bm J_{\bm A}$ for $\bm A$ is
$$
\bm J_{\bm A}(\zeta)\deff
\begin{dcases}
\bm I+\bm E_{12}, & \zeta\in (0,\infty),\\
\bm I+\bm E_{21}, & \zeta\in (\ee^{\pm 2\pi \ii/3},0),\\
\bm E_{12}-\bm E_{21}, & \zeta\in (-\infty,0).
\end{dcases}
$$
\item As $\zeta\to \infty$,
\begin{equation}\label{eq:expAiryparaminfinity}
\bm A(\zeta)=\bm \zeta^{-\sp_3/4}\bm U_0\left(\bm I+\Boh(\zeta^{-3/2})\right) \ee^{-\frac{2}{3}\zeta^{3/2}\sp_3}.
\end{equation}
\end{enumerate}
\end{rhp}

The solution $\bm A$ can be constructed explicitly using Airy functions \cite{deift_book}, but we will not need its explicit expression. We now explain how to use it in our framework to construct the local parametrix near $z=a$.

Since we are assuming $Q$ to be analytic on a neighborhood of the positive real axis, the function $\sigma_n$ admits an analytic continuation to a neighborhood of $z=a$. Furthermore, its poles are the solutions to
$$
Q(z)=-\frac{\sad}{n^{2m}}+\frac{(2k+1)\pi \ii}{n^{2m}}.
$$
In particular, assuming that $\sad \geq -\sad_0$, with $\sad_0>0$ fixed, we see that there are no poles of $\sigma_n$ near $z=a$. Furthermore, $Q(a)>0$, and therefore for the same range of values of $\sad$,
$$
\sad +n^{2m}\re Q(z)\geq \eta n^{2m},
$$
for some $\eta>0$, on any (small enough) neighborhood $D_\delta(a)$ of $a$. In particular, this means that the functions $\sigma_n$ and $\sigma_n^{-1}$ both admit an analytic square root on a neighborhood $D_\delta(a)$ independent of $n$ and $\sad$, and for this square root,
$$
\sigma_n(z)^{\sp_3/2}=\bm I+\Boh(\ee^{-\eta n^{2m}}),\quad n\to \infty,
$$
uniformly for $z\in \overline{D}_\delta(a)$ and also uniformly for $\sad\geq -\sad_0$.

Let us also introduce
\begin{equation}\label{deff:tildephi}
\wt\phi(z)\deff \phi(z)+\pi \ii \quad \text{and}\quad \wt\psi(z)\deff \left(\frac{3}{2}\wt\phi(z)\right)^{2/3}.
\end{equation}
With standard arguments, we see that $\wt\psi$ is a conformal map from a neighborhood of $z=a$ to a neighborhood of the origin, and it maps $(a,a+\delta)$ to the positive real axis. With this in mind, the local parametrix near $z=a$ takes the form
\begin{equation}\label{eq:localparamAiry}
\bm P^{(a)}(z)\deff \bm E^{(a)}_n(z)\bm A(n^{2/3}\wt\psi(z)) \ee^{n\wt\phi(z)\sp_3}\sigma_n(z)^{-\sp_3/2}z^{-\alpha\sp_3/2},
\end{equation}
with
\begin{equation}\label{eq:matchingfactorAiryparam}
\bm E^{(a)}_n(z)\deff \bm E^{(a)}_1(z)n^{\sp_3/6},\quad \bm E^{(a)}_1(z)\deff \msf D_\infty^{\sp_3}\bm M(z)\msf D(z)^{-\sp_3}z^{\alpha\sp_3/2}\bm U_0^{-1}\wt\psi(z)^{\sp_3/4}.
\end{equation}

Also, it is straightforward to show that this factor $\bm E^{(a)}_1$ is analytic near $z=a$, and that the matching
\begin{equation}\label{eq:matchingPa}
\bm P^{(a)}(z)=(\bm I+\Boh(n^{-1}))\bm G(z),\quad n\to \infty,\quad z\in \partial U_a\deff D_\delta(a),
\end{equation}
holds uniformly for $\sad\geq -\sad_0$ with any fixed $\sad_0>0$.

\subsection{Conclusion of the asymptotic analysis}\hfill 

We are ready to conclude the asymptotic analysis. Set
$$
\Gamma_{\bm R}\deff \Gamma_{\bm S}\setminus \left(U_0\cup U_a\cup (0,a)\right),
$$
and orient the parts $\partial U_0$ and $\partial U_0$ of $\Gamma_{\bm R}$ in the clockwise direction. Let $\bm P^{(a)}$ and $\bm P^{(0)}$ be the local parametrices near $z=a$ and $z=0$, respectively, and set
$$
\bm P(z)\deff
\begin{cases}
\bm P^{(0)}(z), & z\in U_0, \\
\bm P^{(a)}(z), & z\in U_a, \\
\bm G(z), & z\in \C\setminus \left(\Gamma_{\bm S}\cup \overline U_0\cup \overline U_a\right).
\end{cases}
$$
Make the transformation
$$
\bm R(z)\deff \bm S(z)\bm P(z)^{-1},\quad z\in \C\setminus \Gamma_{\bm R}.
$$
Then $\bm R$ satisfies the following RHP.
\begin{rhp} \label{rhp:RfinalOP}
Find a $2\times 2$ matrix-valued function $\bm R:\C\setminus \Gamma_{\bm R}\to \C^{2\times 2}$ with the following properties.
\begin{enumerate}[(i)]
\item $\bm R$ is analytic on $\C\setminus \Gamma_{\bm R}$.
\item The jump condition $\bm R_+(\zeta)=\bm R_-(\zeta)\bm J_{\bm R}(\zeta)$, $\zeta\in \Gamma_{\bm A}$ holds, where 
$$
\bm J_{\bm R}(z)\deff
\begin{dcases}
\bm G(z)\bm J_{\bm S}(z)\bm G(z)^{-1},& z\in \Gamma_{\bm R}\setminus (\partial U_0\cup \partial U_a), \\
\bm P^{(0)}(z)\bm G(z)^{-1},& z\in \partial U_0, \\
\bm P^{(a)}(z)\bm G(z)^{-1}, & z\in \partial U_a.
\end{dcases}
$$
\item As $z\to \infty$,
$$
\bm R(z)=\bm I+\Boh(z^{-1}),
$$
where the error term is valid uniformly for $|\zeta|>R$ with some $R>0$ independent of $n$.
\end{enumerate}
\end{rhp}

With standard arguments, in particular thanks to \eqref{eq:matchingPo} and \eqref{eq:matchingPa}, we obtain the next result.
\begin{theorem}\label{thm:estRforOP}
The matrix $\bm R$ satisfies
$$
\bm R(z)=\bm I+\Boh(n^{-1}),\quad n\to \infty,
$$
uniformly for $z\in \C\setminus \Gamma_{\bm R}$. Furthermore, this estimate holds for boundary values $\bm R_\pm$ along $\Gamma_{\bm R}$, in $L^\infty$ and $L^1$ norms, uniformly for $\sad \geq -\sad_0$ with any fixed $\sad_0>0$.
\end{theorem}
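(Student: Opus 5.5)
The plan is to run the standard small-norm argument for RHP~\ref{rhp:RfinalOP}. We show that $\|\bm J_{\bm R}-\bm I\|_{L^1\cap L^2\cap L^\infty(\Gamma_{\bm R})}=\Boh(n^{-1})$ uniformly for $\sad\geq -\sad_0$. Then we invoke the usual perturbation theory: the singular integral operator $f\mapsto C_-(f(\bm J_{\bm R}-\bm I))$ has norm $\Boh(n^{-1})$ on $L^2(\Gamma_{\bm R})$, uniformly in $n$. The integral representation $\bm R=\bm I+C(\bm R_-(\bm J_{\bm R}-\bm I))$ then gives the pointwise bound away from $\Gamma_{\bm R}$. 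The bounds for $\bm R_\pm$ in $L^\infty$ and $L^1$ follow by deforming contours slightly, which is allowed because all jumps are analytic. Some care is needed to keep the constants uniform: $\Gamma_{\bm R}$ is independent of $n$ and $\sad$, and the Cauchy operator norm depends only on the contour, so uniformity reduces to uniformity of the jump estimates.

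We split $\Gamma_{\bm R}$ into four pieces and bound the jump on each.

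\emph{(a) $\partial U_0$.} By \eqref{eq:matchingPo} in Proposition~\ref{lem:fundpropertiesE1Phin}--(ii), $\bm J_{\bm R}-\bm I=\Boh(n^{-1})$ uniformly in $\sad\geq\sad_0$. The underlying inputs are Proposition~\ref{prop:boundglobalparam}, the uniform expansion \eqref{eq:asympmodelprobl} on $|\zeta|\geq \delta n^2$ (from Proposition~\ref{lem:fundpropertiesE1Phin}--(v) and Remark~\ref{rmk:expansioninftymodeladm}), and the fact that $\bm E_n=\bm E_1n^{\sp_3/2}$ makes the error of order $n\cdot|\zeta|^{-1/2}=\Boh(n^{-1})$.

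\emph{(b) $\partial U_a$.} Use \eqref{eq:matchingPa}.

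\emph{(c) Lips of the lenses outside $U_0\cup U_a$.} Here $\bm J_{\bm R}-\bm I=\bm G\,\frac{z^{-\alpha}}{\sigma_n}\ee^{2n\phi}\bm E_{21}\bm G^{-1}$. We use \eqref{eq:ineqphi} to get $\re\phi\leq -c<0$ there. The factor $\sigma_n^{-1}=1+\ee^{-\sad-n^{2m}Q}$ is bounded away from the poles, uniformly for $\sad\geq-\sad_0$, because $\re Q>0$ near $(0,a)$ away from the origin (shrink the lenses if necessary). Together with Proposition~\ref{prop:boundglobalparam} for $\bm G$, this gives an $\Boh(\ee^{-cn})$ bound.

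\emph{(d) $(a+\delta,\infty)$.} We use $\phi(x)>0$ from \eqref{eq:ineqphi}, the bound $0<\sigma_n\leq 1$, and the growth \eqref{eq:asymptphi} to obtain exponential decay in $L^1\cap L^\infty$.

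I expect the main difficulty to be uniformity in $\sad$ combined with the $n$-dependence of $\bm G$ and of the local parametrix at $0$. Both bounds must hold simultaneously for all $\sad\geq -\sad_0$, and this requires checking that no pole of $\sigma_n$ approaches the contours. That check rests on the positivity of $Q$ away from $0$ and on the admissibility of $\msf h_n$.
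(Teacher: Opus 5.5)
Your proposal is correct and is the argument the paper has in mind when it proves Theorem~\ref{thm:estRforOP} ``with standard arguments, in particular thanks to \eqref{eq:matchingPo} and \eqref{eq:matchingPa}''. That argument is small-norm theory on the fixed contour $\Gamma_{\bm R}$, with $\Boh(n^{-1})$ jumps on $\partial U_0\cup\partial U_a$ and exponentially small jumps on the lens lips and on $(a+\delta,\infty)$, all uniform in $\sad$. Two small corrections: in (a), since $|\zeta|=\delta n^2$ on $\partial U_0$, your count $n\cdot|\zeta|^{-1/2}$ is $\Boh(1)$ (the right count is $n\cdot|\zeta|^{-1}$, from the $\Boh(\zeta^{-1})$ left error in RHP~\ref{RHP:model}--(iii) conjugated by $n^{\sp_3/2}$, or equivalently $|\zeta|^{-1/2}$ from the middle error in \eqref{eq:asympmodelprobl}, where $n^{\sp_3/2}$ is absorbed by $\zeta^{-\sp_3/4}$); and the $L^1$ bound for $\bm R_\pm-\bm I$, which the statement also claims, can only hold on bounded portions of $\Gamma_{\bm R}$, because along $(a+\delta,\infty)$ the difference $\bm R(z)-\bm I$ decays only like $z^{-1}$ and its $z^{-1}$ coefficient is generically a nonzero $\Boh(n^{-1})$ matrix.
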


This last result concludes the asymptotic analysis, and now we move on to deriving its consequences.

\section{Conclusion of main results}\label{sec:conclusionmainresults}

Now that the asymptotic analysis for orthogonal polynomials is complete, we are ready to draw its consequences, in particular proving our main results.

For reference throughout this section, we trace back all the transformations of the RH analysis for OPs, which yields
\begin{equation}\label{eq:unfoldYR}
\bm Y(z)=
\ee^{n\ell_V\sp_3}\bm R(z)\bm P(z)\left(\bm I+\left(\chi_{\mcal L^+}(z)-\chi_{\mcal L^-}(z)\right)\frac{z^{-\alpha}\ee^{2n\phi(z)}}{\sigma_n(z)}\bm E_{21}\right)\ee^{-n(\phi(z)-V(z)/2)\sp_3},\quad z\in \C,
\end{equation}
where we recall that $\mcal L^\pm$ are the lenses used in the transformation $\bm T\mapsto\bm S$ (see Figure~\ref{fig:lenses}), and $\phi$ is as in \eqref{deff:phifction}.

\subsection{Asymptotics for the kernel: proof of Theorem~\ref{thm:kernelintro}}\hfill 

To obtain the asymptotic behavior of the kernel, we follow a standard route in RHPs. However, in the case considered here, the local parametrix $\bm \Psi_n$ depends on $n$ in a nontrivial manner, so for completeness we decided to present the detailed argument.

Thanks to \eqref{eq:relKnRHPY}, the correlation kernel \eqref{deff:CorrKernel} can be expressed in terms of the solution of the RHP~\ref{rhp:YOPS} for OPs as
\begin{align*}
    \widehat{\msf K}_n(x,y \mid  s)
    = \frac{\sqrt{\omega_n(x \mid s) \omega_n(y \mid s)}}{2\pi \ii(x-y)}\bm e_2^T
    \bm Y_+(y)^{-1} \bm Y_+(x)\bm e_1.
\end{align*}
Next, stressing that we are interested in $x,y\in (0,\infty)$ in a neighborhood of the hard edge, we plug \eqref{eq:unfoldYR} into this identity. Thanks to Theorem~\ref{thm:estRforOP}, standard arguments show that
\begin{align*}
    \bm R(y)^{-1} \bm R(x) = \bm I + \mathcal O\left(\frac{x-y}{n}\right), 
\end{align*}
as $n\to\infty$, uniformly for $x,y$ in compact sets of $\C$ (in particular, the uniform convergence on $\mathbb C\setminus \Gamma_{\bm R}$ can be extended to compact subsets of $\mathbb C$). Thus,
\begin{multline*}
\wh{\msf K}_n(x,y)=\frac{(xy)^{\alpha/2}\sqrt{\sigma_n(x)}\sqrt{\sigma_n(y)} \ee^{-n(\phi(x)+\phi(y))} }{2\pi\ii (x-y)} \\
\times\left[
\left(\bm I-\frac{y^{-\alpha}\ee^{2n\phi(y)}}{\sigma_n(y)}\bm E_{21}\right)\bm P^{(0)}(y)^{-1}\left(\bm I+\Boh\left(\frac{x-y}{n}\right)\right)\bm P^{(0)}(x)\left(\bm I+\frac{x^{-\alpha}\ee^{2n\phi(x)}}{\sigma_n(x)}\bm E_{21}\right)
\right]_{21,+},
\end{multline*}
valid uniformly for $x,y$ in compacts of $(0,\infty)\cap U_0$, where we recall that $U_0$ is the neighborhood where the local parametrix $\bm P=\bm P^{(0)}$ was constructed in Section~\ref{sec:localparamhardedge}.

Using now the very construction of $\bm P^{(0)}$ from \eqref{eq:LEPhi} and \eqref{eq:transfLP}, we simplify this last expression to
\begin{multline*}
\wh{\msf K}_n(x,y)=-\frac{\ee^{\pi\ii \alpha}\sqrt{\sigma_n(x)}\sqrt{\sigma_n(y)}  }{2\pi\ii (x-y)} 
\Bigg[
\left(\bm I+\frac{\ee^{-\pi \ii \alpha}}{\sigma_n(y)}\bm E_{21}\right)\bm\Psi_n(\zeta(y))^{-1}\sp_3 n^{-\sp_3/2}\bm E_1(y)^{-1}\\ 
\times
\left(\bm I+\Boh\left(\frac{x-y}{n}\right)\right)\bm E_1(x)n^{\sp_3/2}\sp_3\bm\Psi_n(\zeta(x))\left(\bm I-\frac{\ee^{-\pi\ii \alpha}}{\sigma_n(x)}\bm E_{21}\right)
\Bigg]_{21,+},
\end{multline*}

The function $\bm E_1$ is analytic near the origin and independent of $n$. Standard arguments show that
$$
\bm E_1(y)^{-1}\bm E_1(x)=\bm I+\Boh(x-y).
$$
Hence, setting from now on
$$
\zeta\deff \zeta(x)=n^2\psi(x),\quad \xi\deff \zeta(y)=n^2\psi(y),
$$
we update the expression above to
\begin{multline*}
\wh{\msf K}_n(x,y)=-\frac{\ee^{\pi\ii \alpha}\sqrt{\sigma_n(x)}\sqrt{\sigma_n(y)}  }{2\pi\ii (x-y)} \\
\times 
\Bigg[
\left(\bm I+\frac{\ee^{-\pi \ii \alpha}}{\sigma_n(y)}\bm E_{21}\right)\bm\Psi_n(\xi)^{-1}
\left(\bm I+\Boh\left(x-y\right)\right)\bm\Psi_n(\zeta)\left(\bm I-\frac{\ee^{-\pi\ii \alpha}}{\sigma_n(x)}\bm E_{21}\right)
\Bigg]_{21,-}.
\end{multline*}
To obtain the formula above, we moved from a $+$-boundary value to a $-$-boundary value in virtue of the change of orientation coming from $x\mapsto \zeta$; see \eqref{eq:conformalmap}.
With $\msf c_V$ being the constant in \eqref{eq:conformalmap}, let us scale
$$
x=\frac{u}{\msf c_V n^2},\quad y=\frac{v}{\msf c_{V} n^2},
$$
with $u,v>0$, so that, from \eqref{eq:conformalmap},
$$
\zeta(x)=-u +\Boh(n^{-2})\quad \text{and}\quad \xi =\zeta(y)=-v+\Boh(n^{-2}).
$$
For $u,v$ in compact subsets of the positive axis, the values $\zeta=\zeta(x)$ and $\xi=\zeta(y)$ remain uniformly in compact subsets of the negative axis, where the model problem $\bm\Psi_{n,-}$ is uniformly bounded in $n$ (see Proposition~\ref{lem:fundpropertiesE1Phin}--(iii)). Recalling also \eqref{eq:convgsigmanhinfty}, we update this estimate to
\begin{multline*}
\frac{1}{\msf c_V n^2}\wh{\msf K}_n\left(\frac{u}{\msf c_{V}n^2},\frac{v}{\msf c_{V}n^2}\right)=\frac{\ee^{\pi\ii \alpha}\sqrt{\wh\sigma_n(-u)}\sqrt{\wh\sigma_n(-v)}  }{2\pi\ii (v-u)} \\
\times 
\Bigg[
\left(\bm I+\frac{\ee^{-\pi \ii \alpha}}{\wh\sigma_n(-v)}\bm E_{21}\right)\bm\Psi_\gt(-v)^{-1}
\bm\Psi_\gt(-u)\left(\bm I-\frac{\ee^{-\pi\ii \alpha}}{\wh\sigma_n(-u)}\bm E_{21}\right)
\Bigg]_{21,-}+\Boh(n^{-2}).
\end{multline*}
where $\bm\Psi_\gt=\bm\Psi_n$ and $\wh\sigma_n(\zeta)=\sigma_n(x(\zeta))$ is admissible and such that $\wh\sigma_n(u)\to \sigma_\infty(u)=(1+\ee^{-\msf h_\infty(u)})^{-1}$.

Recalling that $\sigma_\infty(\zeta)= (1+\ee^{-\msf h_\infty(\zeta)})^{-1}$ and \eqref{eq:convgsigmanhinfty}, and applying Theorem~\ref{thm:kernelconvPsigt} we conclude the proof of Theorem~\ref{thm:kernelintro}.

\subsection{Asymptotics for the multiplicative statistics: proof of Theorem~\ref{thm:asympLn}}\hfill\\
In this section, $\partial$ denotes the derivative with respect to the parameter $\sad$, whereas $'$ will be used for the derivative with respect to the spectral (the RHP) variable, so that for instance
$$
\partial \sigma_n(x\mid \sad)=\frac{\partial \sigma_n}{\partial \sad}(x\mid \sad)\quad \text{and} \quad \sigma'_n(x\mid \sad)=\frac{\partial \sigma_n}{\partial x}(x\mid \sad).
$$

Let us choose the neighborhoods $U_0$ and $U_a$ where the local parametrices were constructed so that, to simplify  notation,
$$
U_0\cap (0,\infty)=(0,\delta)\quad \text{and}\quad U_a\cap (0,\infty)=(a-\delta,a+\delta).
$$

We split \eqref{eq:defformulamain} as
\begin{equation}\label{eq:defformulamain2}
\log\msf L_n^Q(\sad)=-\frac{1}{2\pi \ii}\int_\sad^\infty \left( \msf J_{\rm h}(u)+\msf J_{\rm b}(u)+\msf J_{\rm s}(u)+\msf J_{\rm g}(u)\right)\dd u,
\end{equation}
where the terms $\msf J_{\rm h},\msf J_{\rm b},\msf J_{\rm s},\msf J_{\msf g}$ correspond to the $x$-integrals over the neighborhood $(0,\delta)$ of the hard edge region, over the bulk region $(\delta,a-\delta)$, over the neighborhood $(a-\delta,a+\delta)$ of the soft edge, and over the gap interval $(a+\delta,\infty)$ away from the support of $\mu_V$, respectively. We split the analysis of each such term in different sections below. 

\subsubsection{Contributions near the hard edge}\hfill

In this section we analyze the integral
\begin{equation}\label{deff:Jh}
\int_{\sad}^\infty \msf J_{\rm h}(u)\dd u,\quad \text{with}\quad 
\msf J_{\rm h}(\sad)\deff 
\int_0^\delta \left[ \bm Y(x\mid \sad)^{-1}\bm Y'(x\mid \sad)\right]_{21,+}\omega_n(x\mid \sad)\partial \log\sigma_n(x\mid \sad) \dd x,
\end{equation}
which will turn out to give the leading contribution to \eqref{eq:defformulamain2}.

The idea is standard: we unwrap the transformations $\bm Y\mapsto \cdots \mapsto \bm R$ of the RHP analysis, with careful considerations along the way. When doing so, we will ultimately approximate $\bm Y$ by the local parametrix $\bm P^{(0)}$ near the hard edge, and this parametrix will give us the leading contribution.

For $x\in (0,\delta]$, that is, in the neighborhood where we constructed the hard edge parametrix, we start from \eqref{eq:unfoldYR} and use \eqref{eq:transfLP} and \eqref{eq:LEPhi} to write
$$
\bm Y_+(x)=\ee^{n\ell_V\sp_3}\bm R_+(x)\bm E_{n}(x)\wh{\bm \Psi}_{n,+}(x)\left(\bm I+ \frac{\ee^{- \pi \ii\alpha}\bm E_{21}}{\sigma_n(x)}\right)x^{-\alpha\sp_3/2}\ee^{nV(x)\sp_3/2}\ee^{\pi\ii \alpha\sp_3/2},
$$
where $\bm E_n$ and $\wh{\bm \Psi}_n$ are as in \eqref{eq:LEPhi} and \eqref{eq:deffEn}. 

With the notation $\bm\Delta_x,\bm\Delta_\zeta$ introduced in \eqref{deff:Deltax}--\eqref{deff:Deltaxzeta} in mind, we compute
\begin{multline}\label{eq:J1h00}
[\bm Y(x)^{-1}\bm Y'(x)]_{21,+}=-\ee^{\pi\ii\alpha+nV(x)}x^{-\alpha}\left[
\bm\Delta_x\left(\bm\Psi_n(\zeta(x))\left(\bm I-\frac{\ee^{-\pi\ii\alpha}}{\sigma_n(x)}\bm E_{21}\right)\right)
\right]_{21,+}\\ 
- \ee^{\pi\ii\alpha+nV(x)}x^{-\alpha}
\left[
\left(\bm I+\frac{\ee^{-\pi\ii\alpha}}{\sigma_n(x)}\bm E_{21}\right)
\bm R_{\rm h}(x) 
\left(\bm I-\frac{\ee^{-\pi\ii\alpha}}{\sigma_n(x)}\bm E_{21}\right)
\right]_{21,+},
\end{multline}
%
%
which is valid for $x\in (0,\delta)$ and where we have set
\begin{equation*}
\bm R_{\rm h}(x)=\bm R_{\rm h}(x\mid \sad )\deff
    {\bm\Psi}_n(\zeta(x))^{-1} \sp_3 n^{-\sp_3/2}\left(\bm\Delta_x\bm E_1(x) +\bm E_1(x)^{-1}\bm\Delta_x\bm R(x)\bm E_1(x) \right) n^{\sp_3/2}\sp_3 {\bm\Psi}_n(\zeta(x)).
\end{equation*}
Plugging the result into \eqref{deff:Jh}, we obtain the identity
\begin{multline}\label{eq:J1h1}
\msf J_{\rm h}(u)=-\ee^{\pi\ii \alpha}\int_0^\delta \zeta'(x)
\left[
\bm\Delta_\zeta\left(\bm\Psi_n(\zeta)\left(\bm I-\frac{\ee^{-\pi\ii\alpha}}{\sigma_n(x(\zeta)\mid u)}\bm E_{21}\right)\right)
\right]_{21,+} \partial \sigma_n(x\mid u)
\dd x \\
- \ee^{\pi \ii \alpha} \int_0^\delta 
\left[\left(\bm I+ \frac{\ee^{- \pi \ii \alpha}}{\sigma_n(x\mid u)}\bm E_{21}\right)\bm R_{{\rm h},+}(x)\left(\bm I- \frac{\ee^{- \pi \ii \alpha}}{\sigma_n(x\mid u)}\bm E_{21}\right)\right]_{21} 
\partial\sigma_n(x\mid u) 
\dd x.
\end{multline}
We now obtain a bound for the integral on the second line. From Proposition~\ref{lem:fundpropertiesE1Phin}--(i) and Theorem~\ref{thm:estRforOP} we estimate
\begin{equation}
n^{-\sp_3/2}\bm \Delta_x\bm E_1(x)n^{\sp_3/2}=\Boh(n),\qquad \text{and}\qquad 
n^{-\sp_3/2}\bm E_1(x)^{-1}\bm\Delta_x\bm R(x)\bm E_1(x)n^{\sp_3/2}=\Boh(1),
\end{equation}
as $n\to \infty$, so that
\begin{equation}\label{eq:estHE1}
\bm R_{\rm h}(x)={\bm\Psi}_n(\zeta(x))^{-1} \Boh(n){\bm\Psi}_n(\zeta(x)), \quad n\to \infty,
\end{equation}
where the error term is valid uniformly for $x\in (0,\delta)$ and uniformly for $\sad\geq -\sad_0$ with any fixed $\sad_0>0$. 

To estimate the withstanding term $\bm \Psi_n$, we explore the change of variables $\zeta=n^2\psi(x)$, which maps the interval $[0,\delta]$ to an interval of the form $[-R_n,0]$, with $R_n=\Boh(n^2)$. We split such interval in three pieces
$$
[-R_n,0]=[-R_n,-M]\cup [-M,-1/M]\cup [-1/M,0],
$$
where $M>0$ is independent of $n$ and chosen so that Proposition~\ref{lem:fundpropertiesE1Phin}--(iii),(iv),(v) are all valid. In what follows, recall also that $1/\sigma_n(z)=1+\ee^{\msf h_n(\zeta)}$, with $\msf h_n=\msf h_\gt$ admissible in the sense of Definition~\ref{def:admissibledata}.

In the compact interval $[-M,-1/M]$ the convergence $\ee^{-\msf h_n(\zeta)}\to \ee^{-\sad +(-1)^{m+1}\msf u\zeta^m}$ takes place uniformly, and this limit is bounded in this same interval, uniformly also for $\sad\geq -\sad_0$. Combining with Proposition~\ref{lem:fundpropertiesE1Phin}--(iii) and \eqref{eq:estHE1}, we obtain
\begin{equation}\label{eq:estHE2}
\left[
\left(\bm I+\frac{\ee^{-\pi\ii\alpha}}{\sigma_n(x)}\bm E_{21}\right)
\bm R_{\rm h}(x) 
\left(\bm I-\frac{\ee^{-\pi\ii\alpha}}{\sigma_n(x)}\bm E_{21}\right)
\right]_{21,+}=\Boh(n), \quad n\to\infty,
\end{equation}
valid uniformly for $\zeta(x)\in [-M,-1/M]$ and uniformly for $\sad\geq -\sad_0$. 

On the interval $[-1/M,0]$, we use \eqref{eq:Phinlocalbeh} and Proposition~\ref{lem:fundpropertiesE1Phin}--(iv) and write
\begin{equation*}
{\bm \Psi}_{n,+}(\zeta(x))\left(\bm I-\frac{\ee^{-\pi\ii\alpha}}{\sigma_n(x)}\bm E_{21}\right)= 
\Boh(1)\zeta^{\frac{\alpha}{2}\sp_3}_-(\bm I+\msf a_-(\zeta)\bm E_{12})\sigma(x)^{\sp_3/2},\quad n\to\infty,
\end{equation*}
and therefore using again \eqref{eq:estHE1},
\begin{equation}\label{eq:estHE3}
\left[
\left(\bm I+\frac{\ee^{-\pi\ii\alpha}}{\sigma_n(x)}\bm E_{21}\right)
\bm R_{\rm h}(x) 
\left(\bm I-\frac{\ee^{-\pi\ii\alpha}}{\sigma_n(x)}\bm E_{21}\right)
\right]_{21,+}=\Boh\left(n^{1+2\alpha}|x|^\alpha\right),\quad n\to \infty,
\end{equation}
valid uniformly for $\zeta=\zeta(x)\in [-1/M,0]$ and $\sad\geq -\sad_0$, and where we used that $\zeta=n^2\psi(x)$, $\psi$ is conformal near the origin, and $\sigma_n$ is bounded along the real axis.

Finally, on the interval $[-R_n,-M]$ we now use Proposition~\ref{lem:fundpropertiesE1Phin}--(v) and conclude that
\begin{multline*}
{\bm \Psi}_{n,+}(\zeta(x))\left(\bm I-\frac{\ee^{-\pi\ii\alpha}}{\sigma_n(x)}\bm E_{21}\right)= \\
\Boh(1)\zeta^{-\frac{1}{4}\sp_3}_- \bm U_0 (\bm I-(1+\ee^{-\msf h_n(\zeta)})\ee^{-\pi\ii\alpha-4\ii|\zeta|^{1/2}}\bm E_{21})\ee^{-2\ii |\zeta|^{1/2}\sp_3},\quad \zeta(x)\in [-R_n,-M], \quad n\to\infty,
\end{multline*}
also with uniform error term for $\sad\geq -\sad_0$, and emphasizing that this error term depends on the fixed value $M$ but it is independent of $R_n$. As said before, $\msf h_n$ is admissible, and Definition~\ref{def:admissibledata}--(iii) implies in particular that $(1+\ee^{-\msf h_n})$ is bounded along the negative axis, uniformly for $\sad \geq -\sad_0$ and uniformly in $n$ as well. With this observation in mind, the last estimate and \eqref{eq:estHE1} together imply
\begin{equation}\label{eq:estHE4}
\left[
\left(\bm I+\frac{\ee^{-\pi\ii\alpha}}{\sigma_n(x)}\bm E_{21}\right)
\bm R_{\rm h}(x) 
\left(\bm I-\frac{\ee^{-\pi\ii\alpha}}{\sigma_n(x)}\bm E_{21}\right)
\right]_{21,+}=
\Boh(n^2(1+|x|^{1/2})),\quad n \to \infty,
\end{equation}
uniformly for $\zeta=\zeta(x)\in [-R_n,-M]$ and $\sad\geq -\sad_0$. In fact, $x$ is bounded in the corresponding set, and the error term above is effectively $\Boh(n^2)$.

Again having in mind that $\zeta=\zeta(x)=n^2\psi(x)$ and that $\psi(x)$ is an $n$-independent conformal map near the origin, we use \eqref{eq:estHE2},\eqref{eq:estHE3} and \eqref{eq:estHE4}, obtaining that there exists a constant $C>0$ for which the integral on the second line of \eqref{eq:J1h1} is bounded by
\begin{equation}
Cn
\left(
n^{2\alpha}\int_0^{\zeta^{-1}(-1/M)} \partial\sigma_n(x\mid u) x^\alpha \dd x+ \int_{\zeta^{-1}(-1/M)}^{\zeta^{-1}(M)} \partial\sigma_n(x\mid u)\dd x+n\int_{\zeta^{-1}(M)}^\delta \partial \sigma_n(x\mid u) \dd x
\right),
\end{equation}
where the additional factor $n$ in the last term comes from the factor $\zeta^{\sp_3/4}=(n^2\psi(x))^{\sp_3/4}$ in \eqref{eq:estHE4}. An explicit calculation shows that
$$
0\leq \partial\sigma_n(x\mid \sad)=\frac{\ee^{-\sad-n^{2m}Q(x)}}{(1+\ee^{-\sad-n^{2m}Q(x)})^2}\leq \ee^{-\sad -n^{2m}Q(x)}.
$$
Having in mind that $M>0$ is fixed and $R_n=\Boh(n^2)$, we obtain that \eqref{eq:estHE4} - and thus the second line in \eqref{eq:J1h1} - is bounded by
\begin{equation}\label{eq:estHE5}
Cn \ee^{-u } \left( n^{2\alpha}\int_{0}^{1/\eta} x^{\alpha}\ee^{-n^{2m}Q(x)}\dd x+ n\int_{1/\eta}^\eta \ee^{-n^{2m}Q(x)}\dd x\right)
\end{equation}
for some $\eta,C>0$, which we emphasize may depend on $\sad_0>0$ fixed but it is independent of $\sad\geq -\sad_0$. Under our conditions on $Q$ (recall Assumption~\ref{asu:Q}), it is immediate that 
$$
n\int_{1/\eta}^\infty \ee^{-n^{2m}Q(x)} \dd x=\Boh(\ee^{-\eta'n^{2m}}),
$$
for some $\eta'>0$ independent of $\sad$, and the change of variables $n^2x=y$ yields the estimate
$$
n^{2\alpha}\int_{0}^{1/\eta} x^{\alpha}\ee^{-n^{2m}Q(x)}\dd x=\Boh(n^{-2}).
$$
Using these estimates in \eqref{eq:J1h1}, we finally arrive at the estimate
\begin{equation}
\msf J_{\rm h}(u)=-\ee^{\pi\ii \alpha}\int_0^\delta \zeta'(x)
\left[
\bm\Delta_\zeta\left(\bm\Psi_n(\zeta)\left(\bm I-\frac{\ee^{-\pi\ii\alpha}}{\sigma_n(x(\zeta)\mid u)}\bm E_{21}\right)\right)
\right]_{21,+} \partial \sigma_n(x\mid u)
\dd x
+ \Boh\left(\frac{\ee^{-u}}{n}\right),
\end{equation}
valid as $n\to\infty$, uniformly for $u\geq -\sad_0$, for any $\sad_0>0$. We now change variables $x\mapsto \zeta$ in the integral. For that, notice that
$$
\sigma_n(x(\zeta)\mid u)=\frac{1}{1+\ee^{-\msf h_n(\zeta\mid u)}},
$$
where $\msf h_n=\msf h_\gt$ is admissible in the sense of Definition~\ref{def:admissibledata}. With $R=R_n$ determined by relation
$$
-n^2R_n=\zeta(\delta),
$$
we know that $R_n>0$ is bounded from above, and also from below away $0$, and this estimate becomes
\begin{equation}\label{eq:finalestimateJh}
\msf J_{\rm h}(u)=\ee^{\pi\ii \alpha}\int_{-n^2R}^0 
\left[
\bm\Delta_\zeta\left[\bm\Psi_n(\zeta)\left(\bm I-\ee^{-\pi\ii\alpha}\left(1+\ee^{-\msf h_n(\zeta\mid u)}\right)\bm E_{21}\right)\right]
\right]_{21,-} \frac{\ee^{-\msf h_n(\zeta\mid u)}}{(1+\ee^{-\msf h_n(\zeta\mid u)})^2}
\dd \zeta
+ \Boh\left(\frac{\ee^{-u}}{n}\right),
\end{equation}
valid as $n\to\infty$, uniformly for $u\geq -\sad_0$, for any $\sad_0>0$ fixed.

Using now Theorem~\ref{thm:Itayasympt}, we obtain
\begin{equation}\label{eq:estJhfinal}
\int_\sad^\infty\msf J_{\rm h}(u) \dd u= 2\pi\ii 
\int_\sad^\infty \int_{-\infty}^0 \msf K_\alpha(-\zeta,-\zeta\mid u) \frac{\ee^{-\msf h_\infty(\zeta\mid u)}}{1+\ee^{-\msf h_\infty(\zeta\mid u)}}\dd \zeta\dd u+\Boh\left(\frac{\ee^{-\sad}}{n}\right),\quad n\to\infty,
\end{equation}
uniformly for $\sad\geq \sad_0$.

\subsubsection{Contributions away from the support}\hfill

Returning to the analysis of \eqref{eq:defformulamain2}, 
we now estimate
\begin{equation}\label{deff:Jg}
\int_{\sad}^\infty \msf J_{\rm g}(u)\dd u,\quad \text{with}\quad 
\msf J_{\rm g}(\sad)\deff 
\int_{a+\delta}^\infty \left[\bm \Delta_x \bm Y(x\mid \sad)\right]_{21,+}\omega_n(x\mid \sad)\partial \log\sigma_n(x\mid \sad) \dd x.
\end{equation}
In the interval $[a+\delta,\infty)$, the unwrap of the transformations yields
$$
\bm Y(z)=\ee^{n\ell_V\sp_3}\bm R(z)\bm G(z)\ee^{-n(\phi(x)-V(x)/2)\sp_3}.
$$
Therefore, using the explicit expression for $\bm G$ in \eqref{deff:globalparametrix} we obtain
\begin{equation*}\label{eq:DeltaYboundsg01}
\left[\bm\Delta_x\bm Y(x)\right]_{21,+}=
\frac{\ee^{-2n\phi(x)+nV(x)+2\msf p(x)}}{\msf D(x)^2}
\left[
\bm \Delta_x \bm M(z)+
\bm M(x)^{-1}\msf D_\infty^{\sp_3}\ee^{\msf p_0\sp_3}\bm \Delta_x\bm R_+(x)\ee^{-\msf p_0}\msf D_\infty^{-\sp_3}\bm M(x)
\right]_{21}.
\end{equation*}

A direct calculation from \eqref{eq:deffMglobalparamOP} reveals that
\begin{equation}
\left[\bm\Delta_x\bm M(x)\right]_{21}=\frac{\ii}{4}\frac{\msf g'(x)}{\msf g(x)}=\frac{\ii}{4}\frac{ a}{x(x-a)},
\end{equation}
which is clearly bounded on $[a+\delta,\infty)$. Likewise, all the entries of $\bm M$ are bounded in this same interval, because so is $\msf g$. The term $\msf D$ is discussed after \eqref{deff:globalparametrix}, and it is continuous and nonzero on the interval $[a+\delta,\infty)$, and converges to $\msf D_\infty\neq 0$ at $\infty$. Therefore the factor $\msf D(x)^{-2}$ is also bounded on $[a+\delta,\infty)$. The terms $\msf D_\infty$ and $\msf p_0$ are constant in $x$, with the former being independent of $n$ as well while the latter is uniformly bounded in $n$ thanks to Proposition~\ref{prop:boundspn}. Finally, thanks to Theorem~\ref{thm:estRforOP} we know that the factor $\bm\Delta_x\bm R_+$ is $\Boh(n^{-1})$.

All in all, we conclude that, with the exception of the exponential component, all terms on the right hand side of \eqref{eq:DeltaYboundsg01} are bounded, and therefore for some $M>0$,
$$
\left|\left[\bm \Delta_x \bm Y(x\mid\sad)\right]_{21,+}\omega_n(x\mid \sad)\partial \log\sigma_n(x\mid \sad)\right|
\leq 
\frac{\ee^{-\sad-n^{2m}Q(x)}}{(1+\ee^{-\sad-n^{2m}Q(x)})^2} x^\alpha\ee^{-2n\phi(x)+2\msf p(x)},\quad x\geq a+\delta,
$$
where we used the explicit expressions for $\omega_n$ and $\sigma_n$ from \eqref{eq:deffweight} and \eqref{def:sigman}.
The function $\msf p$ does depend on both $n$ and $\sad$, but thanks to Proposition~\ref{prop:boundspn} it remains bounded uniformly both for $\sad\geq -\sad_0$ and $x\geq a+\delta$ as $n\to \infty$. Therefore, for a new constant $M>0$ the right-hand side above is bounded by
$$
M x^\alpha \ee^{-\sad-n^{2m}Q(x)-2n\phi(x)}.
$$
We integrate this bound and use that $Q(x)>0$ on the positive axis, concluding that
$$
\left|\int_\sad^\infty \msf J_{\rm g}(u)\dd u\right|
\leq M \ee^{-\sad}\int_{a+\delta}^\infty x^\alpha \ee^{-2n\phi(x)}\dd x.
$$

The function $\phi$ is given in \eqref{deff:phifction}, and as observed from \eqref{eq:asymptphi} it grows polynomially at $\infty$. A simple estimate then shows that this remaining integral is $\Boh(\ee^{-\eta n})$, for some $\eta>0$. 

We just concluded that
\begin{equation}\label{eq:estJgfinal}
\int_{\sad}^\infty \msf J_{\rm g}(u)\dd u=\Boh(\ee^{-\sad -\eta n}),\quad n\to\infty,
\end{equation}
uniformly for $\sad\geq -\sad_0$.

\subsubsection{Contributions near the soft edge}\hfill

We now estimate
\begin{equation}\label{deff:Js}
\int_{\sad}^\infty \msf J_{\rm s}(u)\dd u,\quad \text{with}\quad 
\msf J_{\rm s}(\sad)\deff 
\int_{a-\delta}^{a+\delta} \left[\bm \Delta_x \bm Y(x\mid \sad)\right]_{21,+}\omega_n(x\mid \sad)\partial \log\sigma_n(x\mid \sad) \dd x.
\end{equation}

Let $\chi_0=\chi_{(0,a)}$ be the characteristic function of the interval $(0,a)$. Recalling \eqref{eq:unfoldYR}, near the soft edge at $z=a$, the unwrap of the transformations gives the relation
$$
\bm Y_+(x)=\ee^{n\ell_V\sp_3}\bm R(x)\bm P_+(x)\left(\bm I+\frac{\ee^{2n\phi_+(x)}}{x^\alpha \sigma_n(x)}\chi_0(x)\bm E_{21}\right)\ee^{-n(2\phi_+(x)-V(x))\sp_3/2},\quad a-\delta\leq x\leq a+\delta,
$$
where $\bm P=\bm P^{(a)}$ for the rest of this section. Therefore, in the same interval,
\begin{multline}\label{eq:YinverseYsoftedge}
\left[\bm\Delta_x\bm Y(x)\right]_{21,+}=
\ee^{-n(2\phi_+(x)-V(x))}\left(\frac{\ee^{2n\phi_+(x)}}{x^\alpha \sigma_n(x)}\right)'\chi_0(x)+
\ee^{-n(2\phi_+(x)-V(x))}
\Bigg[
\left(\bm I-\frac{\chi_0(x)\ee^{2n\phi_+(x)}}{x^\alpha\sigma_n(x)}\bm E_{21}\right) \\
\times
\left(\bm\Delta_x\bm P_+(x)+\bm P_+(x)^{-1}\bm\Delta_x\bm R(x)\bm P_+(x)\right)
\left(\bm I+\frac{\chi_0(x)\ee^{2n\phi_+(x)}}{x^\alpha\sigma_n(x)}\bm E_{21}\right)
\Bigg]_{21}.
\end{multline}

Using \eqref{eq:localparamAiry} and \eqref{eq:matchingfactorAiryparam}, as well as Theorem~\ref{thm:estRforOP} and the analyticity of $\bm E^{(a)}_1$, we write
\begin{multline*}
\bm\Delta_x\bm P_+(x)+\bm P_+(x)^{-1}\bm\Delta_x\bm R(x)\bm P_+(x)=
\bm\Delta_x\left[ (\sigma_n(x)x^\alpha)^{-\sp_3/2}\ee^{n\wt\phi_+(x)\sp_3}\right]
+
(\sigma_n(x)x^\alpha)^{\sp_3/2}\ee^{-n\wt\phi_+(x)\sp_3} \\ 
\times \left[\bm\Delta_x\left(\bm A_+(n^{2/3}\wt\psi(x))\right)+
\bm A_+(n^{2/3}\wt\psi(x))^{-1}\Boh(n^{1/3})\bm A_+(n^{2/3}\wt\psi(x))\right]\ee^{n\wt\phi_+(x)\sp_3}
(\sigma_n(x)x^\alpha)^{-\sp_3/2}.
\end{multline*}

To bound the terms involving $\bm A$ we proceed in a way similar to what we did near $z=0$. For a number $R>0$ to be taken sufficiently large but fixed, define $y_n=y_n(R),x_n=x_n(R)\in (a-\delta,a+\delta)$ through the relation
$$
[y_n,x_n]=\{x\in [a-\delta,a+\delta]\mid n^{2/3}|\wt\psi(x)|\leq R \}.
$$
Choose $R>0$ in such a way that the expansion \eqref{eq:expAiryparaminfinity} holds for $|\zeta|\geq R$. This is done in such a way that, on $[a-\delta,a+\delta]\setminus [y_n,x_n]$ we may use this expansion \eqref{eq:expAiryparaminfinity} to obtain bounds for $\bm A$, whereas along $[y_n,x_n]$ we may simply use that $\bm A$ remains bounded. 

Following this idea, we obtain the same bound in both intervals, namely
$$
\bm\Delta_x\bm P_+(x)+\bm P_+(x)^{-1}\bm\Delta_x\bm R(x)\bm P_+(x)=\Boh(n),\quad x\in [a-\delta,a+\delta],
$$
where we remark that we used that $\sigma_n(x)x^\alpha$ remains bounded in any small neighborhood of $x=a$ and the identity $n\wt\phi=\frac{2}{3}(n^{2/3}\wt\psi)^{3/2}$.

We use this last estimate to bound the term in the right-most side of \eqref{eq:YinverseYsoftedge}. In addition, we bound the derivative term multiplying $\xi_0$ by observing that $\phi_+$ is purely imaginary on $(0,a)$. As a result, we obtain the overall bound
\begin{equation}\label{eq:boundDeltaYsoftedge}
\left[\bm\Delta_x\bm Y(x)\right]_{21,+}=
\Boh\left(n\ee^{-n(2\phi_+(x)-V(x))}\right),\quad n\to\infty, 
\end{equation}
valid uniformly for $|x-a|\leq \delta$ and $\sad\geq \sad_0$.

Using this estimate in \eqref{deff:Js}, we obtain that $\msf J_{\rm s}(\sad)$ is bounded by a uniform constant times
$$
n\int_{a-\delta}^{a+\delta} x^\alpha \frac{\ee^{-\sad-n^{2m}Q(x)}}{(1+\ee^{-\sad-n^{2m}Q(x)})^2}\ee^{-n\phi_+(x)}\dd x\leq n\ee^{-\sad}\int_{a-\delta}^{a+\delta} x^\alpha \ee^{-\sad-n^{2m}Q(x)}\ee^{-n\phi_+(x)}\dd x.
$$
In the remaining integral, both functions $Q(x)$ and $\phi_+(x)$ are continuous on the interval of integration, with $Q$ being strictly positive there. In particular, this observation implies that the whole integral is $\Boh(\ee^{-\eta n^{2m}})$, for some $\eta>0$. Integrating this result now in the $\sad$-variable, we obtain the estimate
\begin{equation}\label{eq:estJsfinal}
\int_\sad^\infty \msf J_{\rm s}(u)\dd u=\Boh\left(\ee^{-\sad -\eta n^{2m}}\right),\quad n\to \infty,
\end{equation}
for some (new) value $\eta>0$, uniformly for $\sad\geq \sad_0$.

\subsubsection{Contributions from the bulk}\hfill 

To finalize the analysis of the terms in \eqref{eq:defformulamain2}, we now analyze
\begin{equation}\label{deff:Jb}
\int_{\sad}^\infty \msf J_{\rm b}(u)\dd u,\quad \text{with}\quad 
\msf J_{\rm b}(\sad)\deff 
\int_{\delta}^{a-\delta} \left[\bm \Delta_x \bm Y(x\mid \sad)\right]_{21,+}\omega_n(x\mid \sad)\partial \log\sigma_n(x\mid \sad) \dd x.
\end{equation}

For $\delta<x<a-\delta$, the unwrap of the transformations now unravels as 
$$
\bm Y_+(x)=
\ee^{n\ell_V\sp_3}\bm R_+(x)\bm G_+(z)\left(\bm I+\frac{x^{-\alpha}\ee^{2n\phi_+(x)}}{\sigma_n(z)}\bm E_{21}\right)\ee^{-n(\phi_+(z)-V(x)/2)\sp_3},
$$
see \eqref{eq:unfoldYR}. 

The function $\phi_+$ is purely imaginary along $(0,a)$. The function $\bm R$, as well as its derivative, remains bounded as $n\to \infty$. As it can be seen from Proposition~\ref{prop:boundglobalparam}, the global parametrix (as well as its derivative) remains bounded as $n\to \infty$. This means that, when applying the identity above to estimate $\bm\Delta_x \bm Y_+$, possibly growing terms come only from the exponential part $\ee^{-n(\pi_+(x)-V(x)/2)\sp_3}$ and its derivative. With this observation in mind, we see immediately that
$$
\left[ \bm\Delta_x\bm Y(x)\right]_{21,+}=\Boh(n\ee^{nV(x)}),\quad n\to\infty,
$$
uniformly for $x\in (\delta,a-\delta)$ and uniformly for $\sad \geq \sad_0$. This estimate is not necessarily optimal, but sufficient for our purposes.

Proceeding in exactly the same way as we did from \eqref{eq:boundDeltaYsoftedge} onwards, we obtain
\begin{equation}\label{eq:estJbfinal}
\int_\sad^\infty \msf J_{\rm b}(u)\dd u=\Boh\left(\ee^{-\sad -\eta n^{2m}}\right),\quad n\to \infty,
\end{equation}
for some $\eta>0$, uniformly for $\sad\geq \sad_0$.

\subsubsection{Summary of contributions and conclusion of the proof of Theorem~\ref{thm:asympLn}} \hfill

Summarizing, we combine \eqref{eq:estJhfinal}, \eqref{eq:estJgfinal}, \eqref{eq:estJsfinal} and \eqref{eq:estJbfinal} into \eqref{eq:defformulamain2}, obtaining
$$
\log\msf L_n^Q(\sad)=
-\int_\sad^\infty \int_{-\infty}^0 \msf K_\alpha(-\zeta,-\zeta\mid u) \frac{\ee^{-\msf h_\infty(\zeta\mid u)}}{1+\ee^{-\msf h_\infty(\zeta\mid u)}}\dd \zeta\dd u+\Boh\left(\frac{\ee^{-\sad}}{n}\right),\quad n\to\infty,
$$
where the error term is uniform for $\sad\geq \sad_0$ with any fixed $\sad_0\in \R$. The proof of Theorem~\ref{thm:asympLn} is completed by making the change of variables $\zeta\mapsto -\zeta$ and using that $(1+\ee^{-\msf h_\infty(\zeta\mid \sad)})^{-1}=\sigma_{\bm\Phi}\left(\frac{4}{\xad^2}\zeta\mid \sad\right)$.

\appendix

\section{The Bessel parametrix}\label{sec:BesselParametrix}

The Bessel parametrix $\bm \Phi_{\alpha}^\bes$ is a solution to a canonical RHP which depends parametrically on $\alpha>-1$, and which was first introduced in \cite{KuijlaarsMcLaughlinVanAsscheVanlessen2004} and since then has been used widely as a model problem in the asymptotic analysis of RHPs. 

Set $\Gamma_\pm \deff (\ee^{\pm 2\pi\ii/3}\infty,0]$, $\Gamma_0\deff (-\infty,0]$ and $\Gamma\deff \Gamma_0\cup\Gamma_+\cup\Gamma_-$, which is consistent with  \eqref{deff:thetambasiccontours} for the choice $m=1$. Introduce

\begin{equation}\label{deff:BessParamExplicit}
\bm \Phi_{\alpha}^{(\text{Bes})}(z)\deff
\begin{pmatrix}
	I_{\alpha}(2z^{1/2})& \dfrac{\ii}{\pi} K_{\alpha}(2z^{1/2})\\
	2\pi \ii z^{1/2}I_{\alpha}'(2z^{1/2}) & -2z^{1/2}K_{\alpha}'(2z^{1/2})
	\end{pmatrix}
    \begin{dcases}
        \bm I, & |\arg z|<\frac{2\pi}{3}, \\ 
        (\bm I \mp \ee^{\pm \pi\ii\alpha}\bm E_{21}), & \frac{2\pi}{3}<\pm \arg z<\pi,
    \end{dcases}
\end{equation}
where $I_{\alpha}(z)$ and $K_{\alpha}(z)$ denote the modified Bessel functions and the principal branch is taken for $z^{1/2}$.

Following \cite[Pages~365--368, in particular Theorem~6.3, and also Equation~(8.4)]{KuijlaarsMcLaughlinVanAsscheVanlessen2004}, the matrix $\bm\Phi_\alpha^\bes$ is the solution to the following RHP.
\begin{rhp}\label{rhp:Bessel}
\hfill
\begin{itemize}
	\item [\rm(i)] $\bm \Phi_{\alpha}^\bes(z)$ is defined and analytic in $\mathbb{C} \setminus (-\infty, 0]$.
	\item [\rm(ii)] For $z\in \Gamma\setminus \{0\}$, we have
	\begin{equation}\label{eq:Besjump}
		\bm \Phi_{\alpha,+}^\bes(z) = \bm \Phi_{\alpha, -}^\bes(z)\bm J^\bes(z),\quad \text{with}\quad 
        \bm J^\bes(z)\deff 
        \begin{dcases}
            \bm I+\ee^{\pm \pi\ii\alpha}\bm E_{21}, & z\in \Gamma^\pm,\\
            \bm E_{12}-\bm E_{21}, & z\in \Gamma_0.
        \end{dcases}
     \end{equation}
	\item [\rm(iii)] With $(\alpha,0):=1$,
    $$
    (\alpha,k)\deff \frac{(4\alpha^2-1)(4\alpha^2-3^2)\cdots (4\alpha^2-(2k-1)^2)}{2^{2k}k!}, \quad k \in \mathbb{Z}_{>0},
    $$
    and
    \begin{equation}\label{def:akbk}
        a_k(\alpha)\deff \frac{(\alpha,k-1)}{4^k k }\left(\alpha^2+\frac{k}{2}-\frac{1}{4}\right),\quad b_k(\alpha)\deff \frac{(\alpha,k-1)}{4^k}\left(k-\frac{1}{2}\right), \quad k\in \mathbb{Z}_{\geq 0}, 
    \end{equation}
    the function $\bm\Phi_\alpha^\bes$ admits the following asymptotic expansion as $z \to \infty$,
\begin{equation}\label{eq:asyBesPhi}
\bm \Phi_\alpha^\bes(z) \sim 
(2\pi)^{-\sp_3/2}z^{-\sp_3/4}\bm U_0\left(\bm I+\sum_{k=1}^\infty \frac{(-1)^k}{z^{k/2}}\left(a_k(\alpha)\bm I+b_k(\alpha)\sp_2\right)\sp_3^k\right)\ee^{2z^{1/2}\sp_3},
\end{equation}
where $\sp_2=\begin{pmatrix}
    0 & -\ii
    \\
    \ii & 0
\end{pmatrix}$ and we choose the principal branches of $z^{1/4}$ and $z^{1/2}$.
	
\item [\rm(iv)] As $z \to 0$ we have
\begin{equation}\label{eq:asy0}
\bm \Phi_{\alpha}^{(\text{Bes})}(z)  =  \bm \Phi_{\alpha,0}^{\bes}(z)z^{\alpha\sp_3/2}\left(\bm I+\msf a(z)\bm E_{12}\right)(\bm I-(\chi_{\mcal S_+}(\zeta)\ee^{\pi \ii \alpha}-\chi_{\mcal S_-}(\zeta)\ee^{-\pi \ii \alpha})\bm E_{21}) ,
\end{equation}
where $ \bm \Phi_{\alpha,0}^{(\text{Bes})}(z)$ is analytic in a neighborhood of the origin, we choose principal branch for $z^{\alpha/2}$, and $\msf a(z)=\msf a_\alpha(z)$ is as in \eqref{eq:hatUpsilonasyfactor}.
\end{itemize}
\end{rhp}

The local behavior in \eqref{eq:asy0} is slightly more precise than the one described in \cite[Equations~(6.19)--(6.21)]{KuijlaarsMcLaughlinVanAsscheVanlessen2004}. Nevertheless, applying Proposition~\ref{Prop:behmodelproblemorigin} with the choice $\sigma\equiv 1$, we get that the behavior from \cite{KuijlaarsMcLaughlinVanAsscheVanlessen2004} implies \eqref{eq:asy0}. Since we also need the value $\bm \Phi_{\alpha,0}^\bes(0)$ at the origin, we now verify the behavior \eqref{eq:asy0} directly from \eqref{deff:BessParamExplicit}.

The modified Bessel function $I_\alpha(w)$ is defined through \cite[(10.25.2)]{NIST:DLMF}
$$
I_\alpha(w)\deff \left(\frac{w}{2}\right)^\alpha F_\alpha(w^2/4),\quad \text{where} \quad F_\alpha(w)\deff \sum_{k=0}^\infty \frac{w^k}{k!\Gamma(\alpha+k+1)}.
$$
Observe that $F_\alpha$ is an entire function for any $\alpha\in \R$. From this expression for $I_\alpha$, we compute
\begin{equation}\label{eq:identityIalphaFalpha}
I_\alpha(2z^{1/2})=z^{\alpha/2}F_\alpha(z)\qquad \text{and}\qquad z^{1/2}I_\alpha'(2z^{1/2})=z^{\alpha/2}\left(\frac{\alpha}{2} F_\alpha(z)+zF_\alpha'(z)\right).
\end{equation}

Using there relations and \eqref{deff:BessParamExplicit} and \eqref{eq:asy0}, we are able to compute the matrix function $\bm \Phi_{\alpha,0}^\bes$ explicitly.

For any $\alpha>-1$, we compute that for $z>0$ the first column of $\bm \Phi_{\alpha,0}^\bes$ is given by
\begin{equation}\label{eq:PsibesPsi0bes11}
\left(\bm \Phi_{\alpha,0}^\bes(z)\right)_{11}=z^{-\alpha/2}\left(\bm \Phi_{\alpha}^\bes(z)\right)_{11}=F_\alpha(z)=z^{-\alpha/2}I_\alpha(2z^{1/2})
\end{equation}
and
\begin{equation}\label{eq:PsibesPsi0bes21}
\left(\bm \Phi_{\alpha,0}^\bes(z)\right)_{21}= z^{-\alpha/2}\left(\bm \Phi_{\alpha}^\bes(z)\right)_{21}=\pi\ii \alpha F_\alpha(z)+2\pi\ii z F'_\alpha(z)=2\pi\ii z^{(1-\alpha)/2}I_\alpha'(2z^{1/2}).
\end{equation}
By analytic continuation, these identities extend from $z>0$ to $z\in \C$.

The second column of $\bm \Phi_{\alpha,0}^\bes$ could be described similarly, but we will not need it so we skip the details.

As a consequence, it is readily seen from \eqref{deff:BessParamExplicit} and \eqref{eq:asy0} that
\begin{equation}\label{eq:BesParaZero}
     \bm \Phi_{\alpha,0}^{\bes}(0)= 
 \begin{dcases}
            \begin{pmatrix}
            \frac{1}{\Gamma(\alpha+1)} & \ii \frac{\Gamma(\alpha)}{2\pi}
            \\
            \frac{\pi\ii }{\Gamma(\alpha)} & \frac{\Gamma(\alpha+1)}{2}
            \end{pmatrix}, &  \alpha \neq 0,\\
            \begin{pmatrix}
            1 & -\frac{\ii}{\pi}\log 2
            \\
            0 & 1
            \end{pmatrix}, &  \alpha = 0.
        \end{dcases}
\end{equation}

For us, we need $\bm\Phi_\alpha^\bes$ to construct the particular solution $\bm\Psi(\cdot\mid \msf h=+\infty)=\bm\Psi_\alpha^\bes$ of the model problem RHP~\ref{RHP:model}. Setting
$$
\bm \Psi_\alpha^\bes(z)\deff (\bm I+\ii(a_1(\alpha)+b_1(\alpha)) \bm E_{21}) (2\pi)^{\sp_3/2}\bm\Phi_\alpha^\bes(z),
$$
where $a_1(\alpha)$ and $b_1(\alpha)$ are given in \eqref{def:akbk}, the conditions in RHP~\ref{rhp:Bessel} imply that indeed $\bm\Psi_\alpha^\bes=\bm\Psi(\cdot\mid \msf h=+\infty)$. For the record, notice that
\begin{equation}\label{eq:1stcolumnPsiBesPhiBes}
\left(\bm\Psi_\alpha^\bes(z)\right)_{j1}=(2\pi)^{(-1)^{j+1}/2}\left(\bm\Phi_{\alpha}^\bes(z)\right)_{j1},\quad j=1,2,
\end{equation}
and we also state that \eqref{eq:asyBesPhi} yields
\begin{equation}\label{eq:asympexpPsibes}
\bm \Psi_\alpha^\bes(z)\sim \left(\bm I+\sum_{k=1}^\infty \frac{1}{z^k}\bm \Psi_{\infty,k}^\bes(\alpha) \right)z^{-\sp_3/4}\bm U_0\ee^{2z^{1/2}\sp_3},\quad z\to \infty,
\end{equation}
with
\begin{equation}\label{eq:Psiinftykgen}
\bm\Psi_{\infty,k}^\bes(\alpha)\deff (\bm I+\ii(a_1(\alpha)+b_1(\alpha)) \bm E_{21})
\begin{pmatrix}
    a_{2k}(\alpha)-b_{2k}(\alpha) & \ii (a_{2k-1}(\alpha)-b_{2k-1}(\alpha)) \\ 
    -\ii (a_{2k+1}(\alpha)+b_{2k+1}(\alpha)) & a_{2k}(\alpha)+b_{2k}(\alpha)
\end{pmatrix},\quad k\geq 1.
\end{equation}
In particular,
\begin{equation}\label{eq:asymptPsiBes1112}
\left(\bm\Psi_\alpha^\bes(z)\right)_{11}=\frac{z^{-1/4}}{\sqrt{2}}\left(1+\Boh(z^{-1})\right)\ee^{2z^{1/2}},\quad \left(\bm\Psi_\alpha^\bes(z)\right)_{21}=\frac{\ii z^{1/4}}{\sqrt{2}}\left(1+\Boh(z^{-1})\right)\ee^{2z^{1/2}},\quad z\to \infty,
\end{equation}
and
\begin{equation}\label{eq:Psiinftyk1}
\left(\bm\Psi_{\infty,1}^\bes(\alpha)\right)_{12}=\frac{\ii (4\alpha^2-1)}{16}.
\end{equation}

Furthermore, the local expansion \eqref{eq:behPsiorigin} holds with, as said, $\sigma(z)\equiv 1$ and
$$
\bm\Psi_0(z)=\bm \Psi_{0,\alpha}^\bes(z)\deff (\bm I +\ii(a_1(\alpha)+b_1(\alpha) )\bm E_{21})(2\pi)^{\sp_3/2}\bm\Phi_{\alpha,0}^\bes(z).
$$


\section{Asymptotic analysis for a class of integrals}

Consider a function $q$ holomorphic on a neighborhood of the origin, with expansion
$$
q(z)=q_0z^m(1+\Boh(z)),\quad z\to 0,
$$
for some $m\geq 1$ and $q_0>0$. Assume in addition that $q>0$ on an interval of the form $[0,\delta]$, $\delta>0$. For a number $\beta>-1$ and a smooth function $f$ on $[0,\delta]$ with $f(0)\neq 0$, introduce
$$
\msf I_t(\sad)\deff \int_0^\delta x^\beta f(x) \log(1+\ee^{-\sad-t q(x)})\dd x,\quad t>0,\quad \sad\in \R.
$$
We assume that $x^\beta f(x)\in L^1[0,\delta]$. We are interested in the behavior of $\msf I_t(\sad)$ when $t\to \infty$ simultaneously with $\sad\to +\infty$. For the next result, set
$$
\msf F_\beta(\sad)\deff \int_0^\infty x^\beta \log(1+\ee^{-\sad -x})\dd x,\quad \sad \in \R,\quad \beta>-1.
$$

\begin{prop}\label{prop:intLogestimate}\hfill
\begin{enumerate}[(i)]
\item [\rm{(i)}] For $\sad>0$ and any $\beta>-1$, the identity
$$
\msf F_\beta(\sad)=-\beta\Gamma(\beta)\Li_{\beta+2}(-\ee^{-\sad})
$$
holds, where $\Li_{\beta}$ stands for the polylogarithms. Furthermore, for any $\sad_0>0$, the estimate
$$
\msf F_\beta(\sad)=\beta\Gamma(\beta)\ee^{-\sad} +\Boh(\ee^{-2\sad}),\quad \sad \to +\infty,
$$
is valid uniformly for $\sad\geq \sad_0$.


\item [\rm{(ii)}] For any $\sad_0\in \R$, the estimate
$$
\msf I_t(\sad)=\frac{f(0)}{mq_0^{\frac{\beta+1}{m}}}\frac{1}{t^{\frac{\beta+1}{m}}}\left(\msf F_{\frac{\beta+1}{m}-1}(\sad)+\Boh\left(\frac{1}{t}\right)\right),\quad t\to \infty,
$$
holds uniformly for $\sad \geq \sad_0 $.
\end{enumerate}
\end{prop}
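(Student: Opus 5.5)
For (i), expand the logarithm as a power series and integrate term by term. For $\sad>0$ and $x\ge 0$ we have $0<\ee^{-\sad-x}<1$, so
$$
\log(1+\ee^{-\sad-x})=\sum_{k\ge1}\frac{(-1)^{k+1}}{k}\ee^{-k\sad}\ee^{-kx}.
$$
The series converges absolutely, dominated by $\sum_k \ee^{-k\sad}\ee^{-kx}/k$, and $\int_0^\infty x^\beta\ee^{-kx}\,\dd x<\infty$ for $\beta>-1$. Fubini therefore gives
$$
\msf F_\beta(\sad)=\Gamma(\beta+1)\sum_{k\ge1}\frac{(-1)^{k+1}\ee^{-k\sad}}{k^{\beta+2}}=-\Gamma(\beta+1)\Li_{\beta+2}(-\ee^{-\sad}).
$$
We then rewrite $\Gamma(\beta+1)=\beta\Gamma(\beta)$, read as the continuous extension at $\beta=0$. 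The $k=1$ term is $\beta\Gamma(\beta)\ee^{-\sad}$. For $\sad\ge\sad_0>0$ the remaining terms are bounded by $\Gamma(\beta+1)\,\ee^{-2\sad}\sum_{k\ge2}\ee^{-(k-2)\sad_0}$, which gives the uniform $\Boh(\ee^{-2\sad})$ bound.

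For (ii), the plan is to straighten $q$ by a conformal change of variables. Since $q(z)=q_0z^m(1+\Boh(z))$ is holomorphic and $q>0$ on $(0,\delta]$, the function $w(z)\deff(q(z)/q_0)^{1/m}$ is conformal near $0$ with $w(z)=z(1+\Boh(z))$. Set $y=t\,q(x)$, so that $x=w^{-1}\big((y/(tq_0))^{1/m}\big)$. With $\gamma\deff(\beta+1)/m$, this substitution writes
$$
x^\beta f(x)\,\dd x=\frac{1}{m q_0^{\gamma}}\,t^{-\gamma}\,y^{\gamma-1}\,g(y/t)\,\dd y,
$$
where $g(u)=f(0)+\Boh(\cdot)$ is a function of $u^{1/m}$ built from $f$, $w^{-1}$ and its derivative. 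Hence
$$
\msf I_t(\sad)=\frac{t^{-\gamma}}{mq_0^\gamma}\int_0^{tq(\delta)} y^{\gamma-1}g(y/t)\log(1+\ee^{-\sad-y})\,\dd y.
$$

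Next, replace $g(y/t)$ by $f(0)$ and extend the integral to $(0,\infty)$. For uniformity in $\sad\ge\sad_0$, use the bound $\log(1+\ee^{-\sad-y})\le \ee^{-\sad_0}\ee^{-y}$. With it, the tail $\int_{tq(\delta)}^\infty$ is $\Boh(\ee^{-ct})$. The remaining error is then bounded by
$$
C\int_0^{tq(\delta)} y^{\gamma-1}\,|g(y/t)-f(0)|\,\ee^{-y}\,\dd y .
$$
If $|g(u)-f(0)|\le Cu$, this error is $\Boh(1/t)$ times a convergent Gamma integral, which yields exactly the stated form with $\msf F_{\gamma-1}(\sad)$.

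The main obstacle is precisely this last bound $g(u)-f(0)=\Boh(u)$. A priori, $g$ expands in powers of $u^{1/m}$, with the first correction of order $u^{1/m}$ coming from $f'(0)$ and from the $\Boh(z)$ term in $q$. So I will first compute the coefficients of $u^{j/m}$ for $1\le j\le m-1$ explicitly in terms of the Taylor data of $f$ and $q$, and check that they vanish in the cases where the proposition is invoked (this needs checking in particular in Proposition~\ref{prop:boundspn}, where $f(x)=|x-a|^{-1/2}$ or $|x-a|^{-1/2}(x-z)^{-1}$ and $q=Q/\tad$). When $m=1$ these coefficients are absent and the argument above is complete. If they do not vanish, the same computation produces an error of order $t^{-1/m}$ rather than $t^{-1}$, and the statement as worded would not hold in that generality. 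In that case I would state explicitly the extra condition on the Taylor data of $f$ and $q$ under which the $\Boh(1/t)$ bound is valid.
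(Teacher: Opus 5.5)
Part (i) is the paper's argument: expand the logarithm, integrate termwise, and isolate $k=1$. Your uniform bound on the remaining terms is fine, and so is reading $\beta\Gamma(\beta)$ as $\Gamma(\beta+1)$ at $\beta=0$.

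For (ii) you also follow the paper's route. You straighten $q$ near $0$, compare the Jacobian with its value at the origin, and reduce to the model integral $\msf F_{\gamma-1}(\sad)$ with $\gamma=(\beta+1)/m$. One small repair is needed. $q$ is not assumed monotone on $[0,\delta]$, so make the substitution only on some $[0,\delta']$ where it is. Then discard $[\delta',\delta]$ using $q\ge q_1>0$ there, which gives $\Boh(\ee^{-\eta t})$ uniformly for $\sad\ge\sad_0$.

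Your doubt about the final step is justified, and it is exactly where the paper's proof goes wrong. The paper asserts that the Jacobian remainder satisfies $|R(u)|\le R_0u^{(\beta+1)/m}$, i.e.\ that it is smaller than the leading term $u^{\gamma-1}$ by a full power of $u$. Writing $q(x)=q_0x^m(1+c_1x+\Boh(x^2))$, one actually gets
\[
R(u)=\frac{g_1}{mq_0^{\gamma+1/m}}\,u^{\gamma-1+1/m}+\Boh\big(u^{\gamma-1+2/m}\big),\qquad g_1\deff f'(0)-\frac{(\beta+2)c_1}{m}f(0).
\]
This contributes $\frac{g_1}{mq_0^{\gamma+1/m}}t^{-\gamma-1/m}\msf F_{\gamma+1/m-1}(\sad)$ to $\msf I_t(\sad)$, and $\msf F_{\gamma+1/m-1}(\sad)>0$.

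For a concrete counterexample, take $m=2$, $\beta=0$, $q(x)=x^2$, $f(x)=1+x$. The substitution $v=tx^2$ gives exactly
\[
\msf I_t(\sad)=\frac{1}{2t^{1/2}}\Big(\msf F_{-1/2}(\sad)+t^{-1/2}\msf F_0(\sad)\Big)+\Boh(\ee^{-\eta t}),
\]
so the error inside the parentheses is of exact order $t^{-1/2}$, not $\Boh(1/t)$. Hence (ii) as worded holds for $m=1$, or when the coefficients of $u^{j/m}$, $1\le j\le m-1$, in $g$ vanish. In general the correct remainder is $\Boh(t^{-1/m})$, which is precisely what your argument proves.

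Do not expect the vanishing check in the application to succeed. In Proposition~\ref{prop:boundspn} one has $\beta=-\tfrac12$, $f(x)=(a-x)^{-1/2}$ and $q=Q/\tad$. Then $g_1=\tfrac12a^{-3/2}-\tfrac{3c_1}{2m}a^{-1/2}$, which is nonzero e.g.\ for $Q(x)=\tad x^m$.

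The right fix is to state (ii) with $\Boh(t^{-1/m})$. For $m\ge2$ the remainders $\Boh(n^{-1-2m})$ in Proposition~\ref{prop:boundspn} then become $\Boh(n^{-3})$. This is harmless downstream: the global parametrix enters the matching conditions and Theorem~\ref{thm:estRforOP} only at accuracy $\Boh(n^{-1})$.
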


Notice that the function $\sad\mapsto \msf F_\beta(\sad)$ is bounded on any interval of the form $[M,\infty)$, so the estimate (ii) indeed singles out the leading term in $\msf I_t(\sad)$ on the range $\sad \in (\sad_0,+\infty)$.

\begin{proof}
For the identity between $\msf F_\beta$ and the polylog, we use the assumption $\sad>0$ to expand the log in series and interchange the order of summation and integration, obtaining
$$
\msf F_\beta(\sad)=-\sum_{k=1}^\infty \frac{\left(-\ee^{-\sad}\right)^k}{k}\int_0^\infty x^\beta \ee^{-kx}\dd x=-\Gamma(\beta+1)\sum_{k=1}^\infty \frac{\left(-\ee^{-\sad}\right)^k}{k^{\beta+2}},
$$
where the last identity follows performing the change of variables $kx=u$ in the integration. This last series is precisely the series expansion of $\Li_{\beta+2}(-\ee^{-\sad})$, and it also yields the claimed asymptotic estimate for $\msf F_\beta$.

To prove (ii), we start with the simpler case when $f(x)\equiv 1$ and $q(x)\equiv x$. In this case, we change variables $tx=v$ and obtain
$$
\int_0^\delta x^\beta \log(1+\ee^{-\sad-tx})\dd x=\frac{1}{t^{\beta+1}}\msf F_\beta(\sad)-\frac{1}{t^{\beta+1}}\int_{\delta t}^\infty v^\beta \log(1+\ee^{-\sad-v})\dd v.
$$
To estimate the last integral, we use the inequality $\log(1+u)\leq u$, which is valid for any $u\geq 0$, and obtain
$$
\int_{\delta t}^\infty v^\beta \log(1+\ee^{-\sad-v})\dd v\leq \ee^{-\sad_0}\int_{t\delta}^\infty v^\beta \ee^{-v}\dd v\leq \ee^{-\sad_0}\int_{0}^\infty v^\beta \ee^{-v}\dd v.
$$
The last integral above is independent of $t$, and shows that
\begin{equation}\label{eq:estLogint1}
\int_0^\delta x^\beta \log(1+\ee^{-\sad-tx})\dd x=\frac{1}{t^{\beta+1}}\msf F_\beta(\sad)+\Boh(\ee^{-\eta t}).
\end{equation}

For the general case, we fix a positive value $\delta'<\delta$ for which $q$ is a smooth bijection on $[0,\delta']$ and $f$ is smooth on $[0,\delta']$, and split the integral
$$
\msf I_t(\sad)=\int_0^{\delta'} x^\beta f(x)\log(1+\ee^{-\sad -tq(x)})\dd x+ \int_{\delta'}^\delta x^\beta f(x)\log(1+\ee^{-\sad -tq(x)})\dd x.
$$
Since $q$ is strictly positive on $[\delta',\delta]$, it has a minimum therein, say $q(x)\geq q_1>0$. Proceeding as before, it is immediate to see that the last integral decays exponentially fast in $t$, uniformly for $\sad\geq \sad_0$, and therefore
$$
\msf I_t(\sad)=\int_0^{\delta'} x^\beta f(x)\log(1+\ee^{-\sad -tq(x)})\dd x+\Boh(\ee^{-\eta t}),\quad t\to \infty,
$$
where $\eta>0$ is uniform for $\sad\geq \sad_0$. For the last integral, we change variables $tq(x)=u$, and with $\wt q\deff q^{-1}$ being the functional inverse of $q$ on $[0,\delta']$ we obtain
$$
\int_0^{\delta'} x^\beta f(x)\log(1+\ee^{-\sad -tq(x)})\dd x=
\int_0^{q(\delta')}\wt q(u)^\beta f\left(\wt q(u)\right)(\wt q)'(u)\log(1+\ee^{-\sad - tu})\dd u.
$$
A local analysis near the origin shows that
$$
\wt q(u)^\beta f\left(\wt q(u)\right)(\wt q)'(u)=\frac{f(0)}{m q_0^{(\beta+1)/m}}u^{(\beta+1-m)/m}+R(u)
$$
where the error term $R$ satisfies $|R(u)|\leq R_0 u^{(\beta+1)/m}$, $0\leq u\leq \delta'$, for some positive constant $R_0$ that depends on $f, \wt q,\beta$ but it is independent of $t,\sad$. Therefore,
\begin{multline*}
\int_0^{\delta'} x^\beta f(x)\log(1+\ee^{-\sad -tq(x)})\dd x=\\ \frac{f(0)}{m q_0^{(\beta+1)/m}} \int_0^{q(\delta')} u^{(\beta+1-m)/m} \log(1+\ee^{-\sad-tu})\dd u+\int_0^{q(\delta')} R(u)\log(1+\ee^{-\sad-tu})\dd u.
\end{multline*}
From \eqref{eq:estLogint1}, we obtain the estimates
$$
\int_0^{q(\delta')} u^{(\beta+1-m)/m} \log(1+\ee^{-\sad-tu})\dd u=\frac{1}{t^{\frac{\beta+1}{m}}}\msf F_{\frac{\beta+1-m}{m}}(\sad)+\Boh(\ee^{-\eta t})
$$
and
$$
\int_0^{q(\delta')} R(u)\log(1+\ee^{-\sad-tu})\dd u=\Boh\left(\frac{1}{t^{\frac{\beta+1}{m}+1}}\msf F_{\frac{\beta+1}{m}}(\sad)+\ee^{-\eta t}\right),
$$
as $t\to\infty$, completing the proof.
\end{proof}


\bibliographystyle{abbrv} 
\bibliography{bibliography.bib}

\end{document}